\crefname{equation}{equation}{equations}
\Crefname{equation}{Equation}{Equations}
\theoremstyle{plain}
\newtheorem{theorem}{Theorem}[section]
\newtheorem{lemma}[theorem]{Lemma}
\newtheorem{fact}[theorem]{Fact}
\newtheorem{proposition}[theorem]{Proposition}
\newtheorem{claim}[theorem]{Claim}
\newtheorem{corollary}[theorem]{Corollary}
\newtheorem{question}[theorem]{Question}
\theoremstyle{definition}
\newtheorem{definition}[theorem]{Definition}
\newtheorem{example}[theorem]{Example}
\theoremstyle{plain}
\newcommand{\ignore}[1]{}
\DeclareMathOperator{\poly}{poly}
\newcommand{\dist}{\mathsf{dist}}
\newcommand{\Ex}[1]{\bE \left[ #1 \right]}
\newcommand{\Exu}[2]{\underset{#1} \bE \left[ #2 \right] }
\newcommand{\Exuc}[3]{\underset{#1} \bE \left[\left. #2 \; \right| \; #3
\right] }
\renewcommand{\Pr}[1]{\bP \left[ #1 \right]} 
\newcommand{\Pru}[2]{\underset{ #1 }\bP \left[ #2 \right]}
\newcommand{\Pruc}[3]{\underset{ #1 }\bP \left[\left. #2 \; \right| \; #3 \right]}
\newcommand{\define}{\vcentcolon=}
\newcommand{\inn}[1]{\langle #1 \rangle}
\newcommand{\ind}[1]{\mathds{1} \left[ #1 \right] }
\newcommand{\zo}{\{0,1\}}
\newcommand{\pmset}{\{\pm 1\}}
\newcommand{\cD}{\ensuremath{\mathcal{D}}}
\newcommand{\cF}{\ensuremath{\mathcal{F}}}
\newcommand{\cI}{\ensuremath{\mathcal{I}}}
\newcommand{\cX}{\ensuremath{\mathcal{X}}}
\newcommand{\bE}{\ensuremath{\mathbb{E}}}
\newcommand{\bP}{\ensuremath{\mathbb{P}}}
\newcommand\footnoteref[1]{\protected@xdef\@thefnmark{\ref{#1}}\@footnotemark}
\newcommand{\norm}[1]{\left\lVert#1\right\rVert}
\newcommand{\Expect}{{\rm I\kern-.3em E}}
\newcommand{\zpm}{\{0,\pm 1\}}
\newcommand{\upset}{\mathsf{Up}}
\newcommand{\R}{\mathbb R}
\newcommand{\N}{\mathbb N}
\newcommand{\eps}{\varepsilon}
\newcommand{\sgn}{\mathrm{sgn}}
\newcommand{\bt}{t}
\newcommand{\bv}{v}
\newcommand{\bx}{x}
\newcommand{\by}{y}
\newcommand{\bz}{z}
\newcommand{\bQ}{Q}
\newcommand{\bT}{T}
\newcommand{\bV}{V}
\newcommand{\bS}{S}
\newcommand{\bX}{X}
\newcommand{\NN}{\mathbb{N}}
\newcommand{\RR}{\mathbb{R}}
\renewcommand{\norm}[1]{\left\lVert #1 \right\rVert}
\newcommand{\Exp}{\EX}
\newcommand{\EX}{\mathbb{E}}
\newcommand{\inlayers}{\mathsf{Inn}} 
\newcommand{\midlayers}{\mathsf{Mid}} 
\newcommand{\outlayers}{\mathsf{Out}}
\newcommand{\conv}{\mathsf{Conv}}
\newcommand{\pr}{\mathbb{P}}
\newcommand{\slab}{\mathsf{Slab}}
\newcommand{\tas}{\mathsf{TAS}}
\newcommand{\Inf}{\mathbb{I}}
\newcommand{\Dyes}{\mathcal{D}_{\mathsf{yes}}}
\newcommand{\Dno}{\mathcal{D}_{\mathsf{no}}}
\newcommand{\Variance}{\mathbf{Var}}
\begin{document}

\title{Testing and Learning Convex Sets in the Ternary Hypercube\footnote{This paper
is an updated and significantly expanded version of an
earlier preprint \cite{BBH23}.}}

\author{Hadley Black\thanks{Supported by NSF award AF:Small 2007682, NSF Award: Collaborative Research Encore 2217033}\\
University of California, Los Angeles\\
{\tt hablack@cs.ucla.edu}
\and
Eric Blais\thanks{Supported by an Ontario Early Researcher Award and an NSERC Discovery grant.} \\
University of Waterloo\\
{\tt eric.blais@uwaterloo.ca}
\and
Nathaniel Harms\thanks{Some of this work was done while the author was a student at the University of Waterloo. Partly supported by an NSERC Graduate Scholarship, an NSERC Postdoctoral Fellowship, and the Swiss State Secretariat for Education, Research and Innovation (SERI) under contract number MB22.00026.} \\
EPFL\\
{\tt nathaniel.harms@epfl.ch}
}

\maketitle
\date{}

\begin{abstract}
    We study the problems of testing and learning high-dimensional discrete convex sets. The simplest high-dimensional discrete domain where convexity is a non-trivial property is the \emph{ternary hypercube}, $\{-1,0,1\}^n$. The goal of this work is to understand structural combinatorial properties of convex sets in this domain and to determine the complexity of the testing and learning problems. We obtain the following results.
    
    \emph{Structural:} We prove nearly tight bounds on the edge boundary of convex sets in $\{0,\pm 1\}^n$, showing that the maximum edge boundary of a convex set is $\widetilde \Theta(n^{3/4}) \cdot 3^n$, or equivalently that every convex set has influence $\widetilde{O}(n^{3/4})$ and a convex set exists with influence $\Omega(n^{3/4})$.
    
    \emph{Learning and sample-based testing:} We prove upper and lower bounds of $3^{\widetilde{O}(n^{3/4})}$ and $3^{\Omega(\sqrt{n})}$ for the task of learning convex sets under the uniform distribution from random examples. The analysis of the learning algorithm relies on our upper bound on the influence. Both the upper and lower bound also hold for the problem of \emph{sample-based testing} with two-sided error. For sample-based testing with one-sided error we show that the sample-complexity is $3^{\Theta(n)}$.
    
    \emph{Testing with queries:} We prove nearly matching upper and lower bounds of $3^{\widetilde{\Theta}(\sqrt{n})}$ for one-sided error testing of convex sets with \emph{non-adaptive queries}. 
\end{abstract}

\thispagestyle{empty}
\newpage

\tableofcontents
\thispagestyle{empty}
\setcounter{page}{0}
\newpage

\section{Introduction}

A subset $S \subseteq [m]^n$ of the
hypergrid is \emph{discrete convex} if it is the intersection of a convex set $C \subseteq \R^n$ with the grid, $S = C \cap [m]^n$, or equivalently if $S = [m]^n \cap \conv(S)$ where $\conv(S)$ is the convex hull of $S$. Discrete convex sets may not even be \emph{connected} (see \cref{fig:convex_example}), which, along with some of their other unpleasant features, makes them difficult to handle algorithmically and analytically, the most famous example
being the difference between linear programming and integer linear programming.

We are interested in testing and learning discrete convex sets. A learning algorithm should output an approximation
of an unknown convex set $S$ by using membership queries to $S$, while a testing algorithm should decide
whether an unknown set $S$ is either convex or $\epsilon$-\emph{far} from convex, meaning that
$\dist(S,T) > \epsilon$ for all convex sets $T$, where $\dist(S, T)$ is the measure of the symmetric
difference.

Convexity is particularly interesting for property testing because it can be defined by a local condition: a set $S \subseteq \R^n$ is convex if and only if for every 3 colinear points $x,y,z$, if $x,z \in S$ then $y \in S$. This means that, to certify the non-convexity of a (continuous) set, it suffices to provide 3 colinear points that violate this condition. Speaking informally, property testing results, especially testing with \emph{one-sided error}, are statements about the difficulty of finding such a certificate of non-membership to the property, when the object $S$ is $\epsilon$-far from satisfying the property. But, the fact that convexity is \emph{defined} by a local condition does not make it easy to find violations of the condition when a set is far from convex. This is particularly evident for \emph{discrete} convex sets where, unlike continuous sets, there may not be \emph{any} lines which witness non-convexity, and one must instead look for up to $n+1$ points that violate Carath{\'e}odory's theorem.

We are aware of no non-trivial algorithms for testing or learning discrete convex sets in high dimensional grids $[m]^d$ when $m$ is small. Prior works on testing and learning convex sets include:
\begin{enumerate}
    \item The analysis of convexity testers, such as the \emph{line tester} and more general \emph{convex hull testers}, which are designed to simply ``spot-check'' for violations of the local conditions that define convexity in $\R^n$~\cite{RV04,BlaisB20}. These works show that these spot-checkers are not very efficient, requiring $2^{\Omega(n)}$ queries to detect sets that are $\Omega(1)$-far from convex.
    \item Testing or learning convex sets in two dimensions, including the continuous square $[0,1]^2$ \cite{Schmeltz92,BermanMR19b} or the discrete grid $[m]^2$ \cite{Raskhodnikova03,BermanMR19a,BermanMR22}.
    \item Testing convexity in high dimensions with \emph{samples}, either in the continuous setting \cite{CFSS17,HY21} or discrete setting \cite{HY21}, and learning convex sets from random examples of the set \cite{RG09} or from Gaussian samples \cite{KlivansOS08}.
\end{enumerate}
When $m \gg \poly(d)$, a ``downsampling'' or ``gridding'' approach  can
reduce to the case $m = \poly(d)$ \cite{CFSS17,HY21}, but once $m$ is small the only known algorithm for testing or learning is brute-force.
So let us see what happens when we make $m$ as small as possible. When $m=2$, testing and learning convex sets in $[m]^n \equiv \zo^n$ is trivial, because every subset of $\zo^d$ is convex and therefore testing is as easy as possible (the tester may simply accept on every input) and learning is as hard as possible (requiring $\Omega(2^n)$ queries).

The story changes significantly when $m=3$, so that $[m]^n$ is equivalent to the ternary hypercube $\zpm^n$, where the difficulties of handling high-dimensional discrete convex sets suddenly become evident. Although this
is the simplest domain where where high-dimensional discrete convex sets are non-trivial, little is known about the structure of discrete convex sets on the ternary hypercube that would help in designing testing and learning algorithms.
In this paper we will give the first results towards understanding testing and learning discrete sets in high dimensions by focusing on the ternary hypercube.

\begin{figure}[t]
    \centering
    \includegraphics{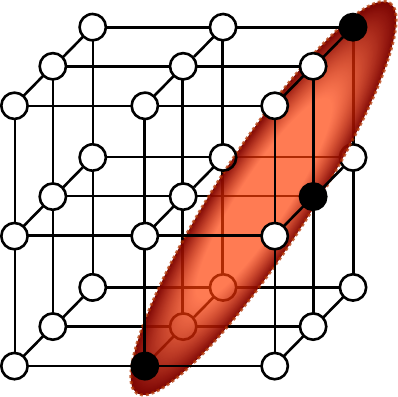}
    \caption{Example of a convex set in $\zpm^3$. The black dots are the set and the convex red ellipsoid contains them. Note that the set may not be ``connected'' on the hypergrid.}
    \label{fig:convex_example}
\end{figure}

\subsection{Results}

For two sets $S, T \subseteq \zpm^n$, we define $\dist(S,T) \define \frac{|S \Delta T|}{3^n}$, where $S \Delta T$ denotes the symmetric difference.
A set $S \subseteq \zpm^n$ is $\varepsilon$-far from convex if for every (discrete) convex set $T \subseteq \zpm^n$, $\dist(S,T) \geq \varepsilon$. Given $\varepsilon > 0$, a \emph{convexity tester} is a randomized algorithm which is given membership oracle access to an input $S \subseteq \zpm^n$ and must satisfy
\begin{enumerate}
    \item If $S$ is convex then the algorithm accepts with probability at least $2/3$.
    \item If $S$ is $\varepsilon$-far from convex then the algorithm rejects with probability at least $2/3$.
\end{enumerate}
The tester is \emph{one-sided} if it must accept convex sets $S$ with probability
1 instead of $2/3$. A tester is \emph{non-adaptive} if it chooses its set of queries before receiving the answers to any of the queries and it is \emph{sample-based} if its queries are independently and uniformly random.

A learning algorithm is given membership oracle access to a convex set $S \subseteq \zpm^n$ and must output (with probability at least $2/3$) a set $T \subseteq \zpm^n$ with $\dist(S,T) < \varepsilon$; it is \emph{proper} if its output $T$ must be convex.

\subsubsection{The Edge Boundary and Influence of Convex Sets}

One of the most important things to know about a set is its \emph{edge boundary}.
The edge set of the ternary hypercube is defined as
\begin{align} \label{eq:edges}
    E = \left\{(x,y) \in (\zpm^n)^2 \colon \sum_{i=1}^n \left|x_i-y_i\right| = 1\right\} \text{.}
\end{align}
Observe that $|E| = 2n \cdot 3^{n-1}$. We will identify a set $S \subseteq \zpm^n$ with its
characteristic function and write $S(x) = 1$ if $x \in S$ and $S(x) = 0$ otherwise. An edge $(x,y)$
is on the boundary of $S$ if $S(x) \neq S(y)$. The influence of a set $S \subseteq \{0,\pm 1\}^n$ is
its normalized boundary size:
\begin{align} \label{eq:influence}
    \Inf(S) := \frac{1}{3^n} \cdot |\{(u,v) \in E \colon S(u) \neq S(v)\}| = \frac{2n}{3} \cdot \pr_{(u,v) \sim E}[S(u) \neq S(v)] \text{.}
\end{align}

Before we state our results, consider some examples. Two important classes of convex sets in
$\zpm^n$ are halfspaces and balls, which often have minimal ``boundary size'' in various
settings.

\begin{example}[Halfspaces]
A halfspace is a set $H = \{ x \in \zpm^n \;:\; \inn{v,x} < \tau \}$ where $v \in \R^n$ and $\tau
\in \R$. To maximize the influence, we want $\tau$ to be small, say $\tau = 0$, and we want $v
\approx \vec 1$. The probability that a random edge $(x,y)$ is on the boundary is at most the
probability that a uniformly random $x \sim \zpm^n$ satisfies $|\inn{\vec 1, x}| \leq 1$, and it is
not difficult to show that this is at most $O\left( \frac{1}{\sqrt n} \right)$, giving an estimate
of $O(\sqrt n)$ for the maximum influence of a halfspace.
\end{example}

\begin{example}[Balls]\label{ex:balls}
A ball is a set $B_r = \{ x \in \zpm^n \;:\; \|x\|_2^2 < r \}$ where $r \in \R$ is the radius. The
average (squared) norm $\Ex{ \|x\|_2^2 }$ for $x \sim \zpm^n$ is the same as the expected number of
nonzero coordinates of $x$, which is $\frac{2}{3} n$, so to maximize the edge boundary we think of
$r \approx \frac{2}{3} n$. Similar to above, the probability that $x \sim \zpm^n$ is close enough to
this threshold to find a boundary edge is $O\left( \frac{1}{\sqrt n} \right)$, again giving an
estimate of $O(\sqrt n)$ for the maximum influence.
\end{example}

Our first result shows that there are convex sets with significantly larger influence, which can be
obtained by taking $S$ to be the intersection of roughly $3^{\Theta(\sqrt n)}$ random halfspaces
with thresholds $\tau = \Theta(n^{3/4})$; we think of these sets as interpolating between the
halfspaces and the ball. Our construction is inspired by \cite{Kane14}, who showed bounds on the
influence of intersections random halfspaces on the hypercube $\zo^n$, and we note that
similar constructions also achieve maximal surface area under the Gaussian distribution on $\R^n$
\cite{Nazarov03}.

\begin{restatable}{theorem}{thminfLB} \label{thm:infLB}
There exists a convex set $S \subseteq \{0,\pm 1\}^n$ with influence $\Inf(S) = \Omega(n^{3/4})$.
\end{restatable}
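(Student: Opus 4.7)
The plan is to use the probabilistic method with a random intersection of halfspaces, in the style of Kane~\cite{Kane14}. Let $v_1, \ldots, v_M \in \R^n$ be i.i.d.\ standard Gaussian vectors, where $M = 3^{\Theta(\sqrt n)}$, and let $\tau = c\, n^{3/4}$ for a constant $c > 0$ chosen so that
$$p := \Pru{x \sim \zpm^n}{\langle v_i, x \rangle > \tau} \approx \frac{1}{M}.$$
Define the random set
$$S_v := \bigcap_{i=1}^M \left\{x \in \zpm^n : \langle v_i, x \rangle \leq \tau\right\},$$
which is (discrete) convex as the intersection of $\zpm^n$ with a polytope. The goal is to show $\Ex{\Inf(S_v)} = \Omega(n^{3/4})$, which by the probabilistic method will give some realization with $\Inf(S_v) = \Omega(n^{3/4})$.

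The key computation is that for a typical edge $(x,y) \in E$ with $\|x\|_2^2 \approx 2n/3$,
$$\Pru{v}{S_v(x) \neq S_v(y)} \;=\; 2\left(\Pru{v}{x \in S_v} - \Pru{v}{x \in S_v \wedge y \in S_v}\right) \;=\; \Omega(n^{-1/4}).$$
Since the halfspaces are independent, $\Pru{v}{x \in S_v} = (1-p)^M \approx e^{-1}$ and $\Pru{v}{x \in S_v \wedge y \in S_v} = q^M$, where $q := \Pru{v_i}{\langle v_i, x \rangle \leq \tau \text{ and } \langle v_i, y \rangle \leq \tau}$. Conditioning on $\|v_i\|_2^2 \approx n$, the CLT gives $\langle v_i, x\rangle \approx N(0, \sigma^2)$ with $\sigma^2 = 2n/3$, so Gaussian tail estimates yield $p \approx \bar\Phi(\alpha)$ for $\alpha := \tau/\sigma = \Theta(n^{1/4})$ (where $\bar\Phi$ is the standard normal tail). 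Writing $\langle v_i, y \rangle = \langle v_i, x \rangle + \Delta v_{i,j}$ with $j$ the flipped coordinate and $\Delta \in \{\pm 1\}$, a Taylor expansion of the Gaussian tail around $\tau$ gives $1 - q = p + \delta$ with $\delta = \Theta(\sigma^{-1}) e^{-\alpha^2/2}$. The crucial ratio is $\delta/p = \Theta(\alpha/\sigma) = \Theta(\tau/n) = \Theta(n^{-1/4})$, so $M\delta = \Theta(n^{-1/4})$. Therefore
$$q^M \;=\; (1 - p - \delta)^M \;\approx\; (1-p)^M(1 - M\delta) \;\approx\; e^{-1}\left(1 - \Theta(n^{-1/4})\right),$$
yielding $\Pru{v}{S_v(x) \neq S_v(y)} \approx 2e^{-1} \cdot \Theta(n^{-1/4}) = \Omega(n^{-1/4})$, and hence $\Ex{\Inf(S_v)} = \frac{2n}{3} \cdot \Omega(n^{-1/4}) = \Omega(n^{3/4})$ by \cref{eq:influence}.

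The main technical obstacle is making the Gaussian approximation rigorous. One needs (i) quantitative CLT-type bounds (e.g., Berry--Esseen) for the sum $\langle v_i, x \rangle = \sum_j x_j v_{i,j}$ in order to justify the tail estimates; (ii) concentration arguments showing that ``atypical'' $x$ with $\|x\|_2^2$ far from $2n/3$ contribute negligibly to the average over $(x,y) \sim E$; and (iii) careful bookkeeping of the lower-order terms in $(1-p-\delta)^M$, since the whole result comes from preserving the small $\Theta(n^{-1/4})$ gap between $\Pru{v}{x \in S_v}$ and $\Pru{v}{x \in S_v \wedge y \in S_v}$. The constants $c$ and $c'$ in $\tau = cn^{3/4}$ and $M = 3^{c'\sqrt n}$ must be balanced so that $Mp = \Theta(1)$, which is the essential tuning condition.
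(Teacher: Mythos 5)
Your proposal is correct and follows essentially the same strategy as the paper: a Kane-style random intersection of $M = 3^{\Theta(\sqrt n)}$ halfspaces at threshold $\tau = \Theta(n^{3/4})$, tuned so that $Mp = \Theta(1)$, with the influence gain coming from the ratio $\delta/p = \Theta(\tau/n) = \Theta(n^{-1/4})$. Your tuning of the parameters matches the paper's.

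There are two meaningful differences, both cosmetic in the end. First, you take Gaussian normals where the paper takes i.i.d.\ uniform $v^{(i)} \in \{\pm 1\}^n$. The Gaussian choice actually \emph{simplifies} the analysis: $\inn{v_i,x}$ is \emph{exactly} $N(0, \|x\|_1)$, so no CLT, Berry--Esseen, or conditioning on $\|v_i\|_2^2$ is needed --- your remark about conditioning on $\|v_i\|_2^2 \approx n$ and invoking the CLT is spurious and should be removed. By contrast, the paper's $\{\pm 1\}$ choice forces it to prove a binomial tail comparison (\cref{lem:halfspace_density_increment_bound}, resting on the Stirling-type \cref{fact:special_case_3/4}), which is the binomial analogue of the Gaussian tail ratio $\bar\Phi(\tau/\sqrt{2n/3+\ell})/\bar\Phi(\tau/\sqrt{2n/3}) = \Theta(1)$ for $\ell = O(\sqrt n)$, $\tau = O(n^{3/4})$ that you need to make your ``typical $x$'' claim uniform. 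You should state and verify that Gaussian version explicitly, since the whole result hinges on $(1-p)^M$ staying $\Theta(1)$ across all of $\midlayers(\sqrt n)$, and this is exactly where $\tau \leq O(n^{3/4})$ becomes load-bearing. Second, your accounting is edge-centric: you compute $\Pr_v[S_v(x) \neq S_v(y)] = (1-p)^M - q^M \approx e^{-1} M\delta$ directly. The paper instead computes $\Inf_{\mathsf{mid}}(H_i) = \Omega(\rho\tau)$ for each single halfspace (\cref{clm:halfspace_middle_influence}) and then uses a union bound to show an $\Omega(1)$-fraction of each halfspace's influential edges survive in the intersection. These are the same calculation organized differently; your expansion $(1-p-\delta)^M \approx (1-p)^M(1-M\delta)$ is the inclusion-exclusion side of the paper's union-bound argument. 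One small imprecision: your identity $\Pr[S_v(x)\neq S_v(y)] = 2(\Pr[x\in S_v] - \Pr[\text{both}])$ requires $\Pr[x\in S_v] = \Pr[y\in S_v]$, which is not exactly true since $\|y\|_1 = \|x\|_1+1$; but the one-sided bound $\Pr[S_v(x)\neq S_v(y)] \geq \Pr[x\in S_v, y\notin S_v] = (1-p)^M - q^M$ suffices for a lower bound, so this does not affect the result.
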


Our main result on the influence of convex sets is that this construction is essentially optimal:
we show a matching upper bound (up to log factors) for any convex set in $\zpm^n$. Due to the
discrete nature of the domain, our proof of this theorem is significantly different from the
previous techniques that have been used to bound the surface area of convex sets in continuous
domains.

\begin{restatable}{theorem}{thminfbound}\label{thm:inf-bound}
If $S \subseteq \zpm^n$ is convex, then $\Inf(S) = O(n^{3/4} \log^{1/4} n)$.
\end{restatable}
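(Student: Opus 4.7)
The plan is to bound $\Inf(S) = \sum_{i=1}^n \Inf_i(S)$ by exploiting the specific combinatorial structure of discrete convex sets in $\{0,\pm 1\}^n$. The central structural observation is that for each direction $i$ and each axis-parallel line $L = \{(x_{-i}, a) : a \in \{-1, 0, 1\}\}$, convexity of $S$ forbids the three-point pattern $101$ on $S \cap L$, because the middle point of any such line is the midpoint of its two endpoints in $\R^n$. This leaves seven patterns, of which $000$ and $111$ contribute no boundary edges, the four ``monotone'' patterns $001, 100, 011, 110$ contribute one edge each, and the ``spike'' pattern $010$ contributes two. Thus $\Inf(S)$ decomposes naturally into a monotone contribution and a spike contribution, summed over all directions and lines.

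For the monotone contributions, I would use a supporting-halfspace covering argument matched to the extremal construction. Every monotone boundary edge $(x,y)$ with $x \in S$, $y \notin S$ witnesses some halfspace $H \supseteq \conv(S)$ with $y$ strictly outside. Discretizing the unit sphere into a net $\mathcal{N}$ of directions of size $N = 3^{\widetilde{O}(\sqrt{n})}$ (the size calibrated to match the $\Omega(n^{3/4})$ extremal construction which uses that many halfspaces), one groups monotone boundary edges by approximate supporting direction. Each single halfspace has influence $O(\sqrt{n})$ in $\{0,\pm 1\}^n$, so a careful union bound of the type used for intersections of halfspaces should yield a total of $O(\sqrt{n}\cdot\sqrt{\log N}) = O(n^{3/4}\log^{1/4} n)$ monotone boundary edges per $3^n$ vertices.

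For the spike contributions, each spike at $x$ in direction $i$ means $x \in S$ but $x \pm e_i \notin S$, an extremely local configuration with no continuous analog. I would charge each spike to the nearby portion of the boundary of $\conv(S)$ that is ``pinched'' around $x$ perpendicular to $e_i$: by the global convexity of $S$, such a pinch forces a collection of nearby monotone boundary edges, and a charging scheme summed over all spikes and directions should reduce the spike contribution to the previously bounded monotone contribution, up to constants.

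The main obstacle is the spike contribution: continuous convex bodies exhibit nothing like the $010$ pattern, so the classical toolkit of surface-area bounds (e.g.\ Nazarov's Gaussian estimate, on which the matching Gaussian construction rests) has no direct analog here. Controlling spike frequencies requires a genuinely discrete argument that propagates the $101$-forbidden constraint globally across $S$, which is likely the ``significantly different'' technique alluded to in the excerpt. The $\log^{1/4} n$ factor then enters from the sphere-net discretization used in the monotone bound, where the net must be fine enough to resolve the $3^{\widetilde\Theta(\sqrt n)}$ essentially distinct supporting halfspaces of $\conv(S)$.
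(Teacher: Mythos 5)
Your proposal diverges substantially from the paper's argument and contains two genuine gaps, one of which I can refute with a concrete counterexample.

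The paper does not decompose into monotone vs.\ spike lines, nor does it discretize supporting directions. Instead it defines a random walk $\bm{X}^{(0)},\dotsc,\bm{X}^{(m)}$ of length $m\approx\sqrt{n/\log n}$ that starts at a random point of the middle layers and repeatedly flips a random $0$-coordinate to $\pm 1$, and shows that (a) the step distribution is within constant factors of uniform over middle-layer edges, and (b) conditioned on the walk, writing $S=\bigcap_v H_v$ with thresholds $\tau(v)$, the walk crosses $\partial S$ at step $s$ iff the one-dimensional process $\bm{M}(s)=\max_v(\inn{v,\bm{X}^{(s)}}-\tau(v))$ changes sign at $s$. The expected number of sign changes of this \emph{max-walk} is bounded by $O(\sqrt m)$ via Sparre Andersen's fluctuation theorem, \emph{uniformly over the number of constituent halfspaces}. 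That last point is exactly what makes the argument go through: the bound never has to count or discretize the set of supporting directions, because the fluctuation theorem does not care how many walks you take a maximum of.

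The first gap in your proposal is the charging of spike lines to monotone lines. You claim each $010$ configuration forces nearby monotone edges, but this is false: take $S=B_r=\{x:\norm{x}_1<r\}$ with $r\approx 2n/3$. For every boundary edge $(x,y)$ of $B_r$ with $x_i=0$, $y_i=\pm1$, the \emph{opposite} neighbor $x-y_ie_i$ has the same $\ell_1$-norm as $y$ and hence is also outside $B_r$, so every boundary line is a spike and there are no monotone lines at all. The ball is an extremal example (it has $\Theta(\sqrt n)\cdot 3^n$ boundary edges), so the spike contribution is not small relative to the monotone one and cannot be charged to it. A separate, genuinely discrete argument for spikes is required, and you have not provided one.

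The second gap is the sphere-net step, which is circular. You discretize directions into a net of size $N=3^{\widetilde O(\sqrt n)}$, justifying this by saying the net ``must be fine enough to resolve the $3^{\widetilde\Theta(\sqrt n)}$ essentially distinct supporting halfspaces of $\conv(S)$.'' But an arbitrary convex set in $\zpm^n$ can have $3^{\Theta(n)}$ extreme points and hence $3^{\Theta(n)}$ supporting halfspaces whose intersections with the middle layers are pairwise quite different; the claim that only $3^{\widetilde O(\sqrt n)}$ of them ``matter'' for the edge boundary is not an input to the argument, it \emph{is} the theorem (the extremal construction tells you this many halfspaces are \emph{achievable}, not that they are \emph{sufficient} for a general $S$). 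Without a separate lemma showing every convex set's boundary edges can be attributed to a net of that size, the Kane-style union bound $O(\sqrt{n\log N})$ has nothing to bind to, and a naive union over supporting halfspaces gives $O(\sqrt n)\cdot N$, which is vacuous. The paper sidesteps this entirely because Sparre Andersen's bound on the max-walk is independent of $|V|$.
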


\subsubsection{Sample-Based Learning and Testing}

As an application of our bounds on the influence, we show using standard Fourier analysis that any set $S \subseteq \zpm^n$ can be approximated with error $\eps$ by a polynomial of degree $\Inf(S)/\eps$. Using \cref{thm:inf-bound} and the "Low-Degree Algorithm" of Linial, Mansour, and Nisan \cite{LinialMN93} then gives us the following upper bound for learning.

\begin{restatable}{theorem}{thmlearner}
\label{thm:learner}
There is a uniform-distribution learning algorithm for convex sets in $\zpm^n$ which achieves error
at most $\eps$ with time and sample complexity $3^{\widetilde{O}(n^{3/4}/\eps)}$. The
$\widetilde{O}(\cdot)$ hides a factor of $\log^{1/4} n$.
\end{restatable}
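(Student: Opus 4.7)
The plan is to follow the standard Fourier-analytic low-degree learning recipe of Linial, Mansour, and Nisan, adapted to the ternary cube, and then invoke \cref{thm:inf-bound}. First I would set up Fourier analysis on $\zpm^n$: fix an orthonormal basis $\{\phi_0, \phi_1, \phi_2\}$ of real-valued functions on $\zpm$ under the uniform measure with $\phi_0 \equiv 1$, chosen to diagonalize the one-dimensional Laplacian on the path $-1 \text{--} 0 \text{--} 1$. Take tensor products: for $\alpha \in \{0,1,2\}^n$ let $\chi_\alpha(x) \define \prod_i \phi_{\alpha_i}(x_i)$. Then $\{\chi_\alpha\}$ is an orthonormal basis for $\R^{\zpm^n}$, every $f$ expands as $f = \sum_\alpha \wh f(\alpha) \chi_\alpha$ with $\wh f(\alpha) \define \Ex{f(X)\chi_\alpha(X)}$ under uniform $X$, and Parseval applies. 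Define $|\alpha| \define |\{i : \alpha_i \neq 0\}|$.

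Next I would prove a Poincar\'e-type inequality of the form $\Inf(f) \geq c \sum_\alpha |\alpha| \wh f(\alpha)^2$ for some absolute constant $c > 0$. By \eqref{eq:influence}, $\Inf(f) = \frac{2n}{3} \cdot \cE(f,f)$ where $\cE(f,f) = \Pr_{(u,v)\sim E}[f(u) \neq f(v)]$ decomposes coordinatewise as $\cE(f,f) = \frac{1}{n} \sum_i \cE_i(f,f)$, with $\cE_i$ acting only in coordinate $i$. Since the basis $\{\chi_\alpha\}$ diagonalizes the corresponding Laplacian, the eigenvalue at $\chi_\alpha$ is $\frac{1}{n}\sum_{i : \alpha_i \neq 0} \mu_{\alpha_i}$, where $\mu_1, \mu_2 > 0$ are the non-zero eigenvalues of the one-dimensional Laplacian on the $3$-vertex path. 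Bounding these below by an absolute constant and combining with Parseval yields the claimed inequality, and then Markov's inequality gives $\sum_{|\alpha| > d} \wh f(\alpha)^2 \leq \Inf(f)/(cd)$.

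With the Fourier tail bound in hand, I would run the LMN algorithm: draw $m$ uniform samples, estimate each $\wh f(\alpha)$ for $|\alpha| \leq d$ by its empirical mean to accuracy $\eta$, form the truncated polynomial $\widetilde P = \sum_{|\alpha| \leq d} \widetilde{\wh f}(\alpha) \chi_\alpha$, and output $g(x) = \id[\widetilde P(x) \geq 1/2]$. Since $f$ is Boolean, $\Pr[g \neq f] \leq 4 \|f - \widetilde P\|_2^2 \leq 4\Inf(f)/(cd) + 4 N \eta^2$, where $N = |\{\alpha : |\alpha| \leq d\}| \leq \binom{n}{\leq d} \cdot 2^d$. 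Setting $d = \Theta(\Inf(f)/\eps)$ and $\eta = \Theta(\sqrt{\eps/N})$ makes the total error at most $\eps$, and a Hoeffding+union bound shows $m = N \cdot \poly(1/\eps, \log N)$ samples suffice. Plugging in $\Inf(f) = O(n^{3/4} \log^{1/4} n)$ from \cref{thm:inf-bound} yields $d = \widetilde O(n^{3/4}/\eps)$ and $N \leq 2^{O(d \log n)} = 3^{\widetilde O(n^{3/4}/\eps)}$, matching the claimed time and sample complexity.

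The only non-routine ingredient is the Poincar\'e inequality: one must verify the one-dimensional spectral gap is a positive absolute constant and that the product structure of the edge distribution on $\zpm^n$ aligns cleanly with the tensor basis. Once this is in place, everything else is the textbook LMN template with the base-$3$ alphabet accounting for the $3^{(\cdot)}$ (rather than $2^{(\cdot)}$) complexity.
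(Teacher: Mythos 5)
Your proposal follows essentially the same route as the paper: Fourier analysis on $\zpm^n$, a Poincar\'e-type inequality relating $\Inf(f)$ to $\sum_\alpha |\alpha|\,\wh f(\alpha)^2$, Markov to get Fourier concentration at degree $d = \Theta(\Inf(f)/\eps)$, and then the Low-Degree Algorithm. Two points of comparison are worth flagging.

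First, your Poincar\'e step goes through the spectral decomposition of the \emph{edge} Laplacian (the path $-1\text{--}0\text{--}1$ in each coordinate), whereas the paper works with the averaging Laplacian $\mathrm{L}_i = I - \mathrm{E}_i$ from O'Donnell's book and then separately relates $\Inf^{\mathsf{Fourier}}$ to the edge-boundary influence by counting non-constant lines (Fact~\ref{fact:inf-def}). Your route is fine: the basis of \cref{def:basis} does diagonalize the path Laplacian (eigenvalues $0$, $1$, $3$), so the spectral gap is $1$ and the inequality holds with an absolute constant. The two arguments are equivalent up to constants; yours is a touch more direct since $\Inf$ is already defined via the edge boundary.

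Second, there is a real oversight in the estimation step. You invoke Hoeffding to estimate each $\wh f(\alpha)$ and assert $m = N \cdot \poly(1/\eps, \log N)$ suffices. But unlike the parities on $\{\pm 1\}^n$, the tensor basis functions $\phi_\alpha$ on $\zpm^n$ are \emph{not} bounded in $[-1,1]$: one has $\max_x|\phi_\alpha(x)| = (\sqrt 2)^{|\alpha|}$, which is $2^{\Theta(d)}$ at the relevant degree $d$, and indeed $|\phi_{\vec 1}(\vec 0)| = (\sqrt 2)^n$. So Hoeffding's inequality applied naively carries an extra factor of $\max|\phi_\alpha|^2 = 2^{O(d)}$ in the sample complexity, which your stated bound omits. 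The paper instead observes that $\mathrm{Var}(f(x)\phi_\alpha(x)) \le 1$ by orthonormality and applies Chebyshev, sidestepping the range issue entirely (\cref{clm:probE}). In your case the extra $2^{O(d)}$ happens to be absorbed by the final $3^{\widetilde O(n^{3/4}/\eps)}$ bound since $N = n^{\Theta(d)}$ already dominates it, so the theorem is not endangered---but as written the intermediate Hoeffding claim is incorrect and you should either quote the range explicitly or switch to the variance/Chebyshev argument.
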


A corollary of \cref{thm:learner} is that the same upper bound on the sample complexity holds for
sample-based testing, due to the testing-by-learning reduction (which is slightly non-standard because the learner is not \emph{proper}, see \Cref{sec:learning->testing}).

\begin{restatable}{corollary}{corUBtwosidedsamples}
\label{cor:UB-2sidedsamples}
There is a sample-based convexity tester for sets in $\zpm^n$ with sample complexity
$3^{\widetilde{O}(n^{3/4}/\eps)}$ where the $\widetilde{O}(\cdot)$ hides a factor of $\log^{1/4} n$.
\end{restatable}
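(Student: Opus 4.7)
The plan is to apply the standard testing-by-learning paradigm to \cref{thm:learner}, modified to handle the fact that this learner is not proper: its hypothesis $h$ is a low-degree Fourier approximation and need not be the indicator of a convex set. In the usual reduction one just estimates $\dist(S,h)$ from fresh samples, which is sound when the learner is proper because then $h$ is convex and $\dist(S,h) \ge \varepsilon$ whenever $S$ is $\varepsilon$-far from convex. In our setting $h$ could in principle be very close to a highly non-convex $S$, so a direct estimate of $\dist(S,h)$ does not certify soundness.

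My proposed fix is to project $h$ onto the set of convex sets before estimating. Concretely, the tester would proceed as follows. First, draw $m = 3^{\widetilde O(n^{3/4}/\varepsilon)}$ uniform samples, label them by $S$, and feed them to the learner of \cref{thm:learner} with error parameter $\varepsilon/4$, obtaining a hypothesis $h$. Second, let $T \subseteq \zpm^n$ be any convex set minimizing $\dist(h,T)$; such a $T$ exists since $\zpm^n$ is finite. Third, draw $O(1/\varepsilon^2)$ fresh uniform samples and form an empirical estimate $\widehat d$ of $\dist(S,T)$ to additive accuracy $\varepsilon/8$. Accept iff $\widehat d \le 5\varepsilon/8$. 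Both phases use only uniform samples, so the tester is sample-based, and the sample complexity is dominated by the first phase.

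For completeness, when $S$ is convex the learner gives $\dist(S,h) \le \varepsilon/4$; since $T$ minimizes the distance from $h$ to any convex set, with $S$ itself a feasible candidate, we have $\dist(T,h) \le \dist(S,h) \le \varepsilon/4$. The triangle inequality then yields $\dist(S,T) \le \varepsilon/2$, so $\widehat d \le 5\varepsilon/8$ holds with high probability by a Chernoff bound. For soundness, $T$ is convex by construction, so if $S$ is $\varepsilon$-far from convex then $\dist(S,T) \ge \varepsilon$ automatically, and $\widehat d \ge 7\varepsilon/8$ with high probability.

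The step I expect to be the main subtlety is (ii): the projection to the nearest convex set is not computationally efficient, and the whole point of the non-standard reduction is that it uses the convexity of $S$ only implicitly, as a feasible witness that $T$ is close to $h$. Since \cref{cor:UB-2sidedsamples} concerns only sample complexity, the inefficiency of this projection is harmless; a more explicit variant would enumerate over a precomputed $(\varepsilon/4)$-cover of convex sets, which by the existence of the learner has size $3^{\widetilde O(n^{3/4}/\varepsilon)}$ and so yields the same sample-complexity bound after a union bound.
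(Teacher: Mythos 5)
Your proof is correct and takes essentially the same approach as the paper: the paper's \cref{lem:learning->testing} also projects the improper hypothesis $h$ onto the nearest convex set $T$ (which has $\dist(T,h)\le\dist(S,h)$ since $S$ itself is a feasible candidate when $S$ is convex), invokes the triangle inequality to get a proper hypothesis with error at most $\varepsilon/2$, and then applies the standard testing-by-learning estimation step that you have inlined.
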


To complement our upper bounds, we prove also a lower bound for sample-based testing. Here we remark
that one of our motivations for studying convex sets in $\zpm^n$ is their similarity (in an informal
sense) to \emph{monotone} functions on $\zo^n$; an analogy between monotone functions on $\zo^n$ and
convex sets in Gaussian space was proposed in \cite{DeNS22} and we are interested in this analogy
for \emph{discrete} convex sets. Our lower bound for sample-based testing discrete convex sets uses
a version of Talagrand's random DNFs, which were used previously to prove lower bounds for testing
monotonicity on $\zo^n$ \cite{BeBl16,Chen17}.

\begin{restatable}{theorem}{thmtwosidedsamplestestingLB} \label{thm:2sidedsamplestestingLB}
For sufficiently small constant $\eps > 0$, every sample-based convexity tester for sets in $\zpm^n$
has sample complexity $3^{\Omega(\sqrt{n})}$.
\end{restatable}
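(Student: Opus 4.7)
The plan is to adapt Talagrand's random DNF construction to the ternary domain, in the spirit of the monotonicity lower bounds of \cite{BeBl16,Chen17}. I would build a pair of distributions $\Dyes, \Dno$ over subsets of $\zpm^n$ such that (i) every $S \in \supp(\Dyes)$ is convex, (ii) a random $S \sim \Dno$ is $\eps$-far from every convex set with probability $1-o(1)$, and (iii) the joint distribution of labeled samples $\{(x_i, S(x_i))\}_{i=1}^s$, where $x_i \sim \zpm^n$ is uniform, is $o(1)$-close in total variation under $\Dyes$ versus $\Dno$ whenever $s = 3^{o(\sqrt n)}$. A standard two-distributions argument then precludes any sample-based tester with $s$ samples.

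The central object is a collection of $M = 3^{\Theta(\sqrt n)}$ random \emph{terms}, where each term is a random subcube of $\zpm^n$ specified by fixing $w = \Theta(\sqrt n)$ coordinates to random values in $\zpm$. A single term has measure $3^{-w}$ and is itself convex (it is a subcube). For $\Dyes$ I would start from a single convex ``base'' set of density near $1/2$ (for instance a halfspace with the appropriate threshold, or an intersection of random halfspaces as in \cref{thm:infLB}) whose behaviour on any single term is structurally indistinguishable from $\Dno$. For $\Dno$ I would perform random modifications inside each term that destroy local convexity, say by removing the midpoints of many planted colinear triples whose endpoints remain in the set, thereby planting a short certificate of non-convexity inside each of the $M$ nearly-disjoint terms.

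The indistinguishability step should be the easier one: since each term has measure $3^{-w} = 3^{-\Theta(\sqrt n)}$, a birthday-style calculation shows that $s = 3^{o(\sqrt n)}$ uniform samples hit each term at most once with probability $1-o(1)$; conditioned on this event the marginals of the sample labels under $\Dyes$ and $\Dno$ agree, so the two labeled-sample distributions are $o(1)$-close in total variation.

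The main obstacle I anticipate is the farness of $\Dno$ from convex. The family of convex sets in $\zpm^n$ is enormous, so a naive union bound over convex ``witnesses'' cannot work. The plan would be a localization argument: show that any convex set $T$ attempting to cover a random $\Dno$-set must pay an $\Omega(1)$ fraction of each term's volume to absorb the planted violations, and sum this cost over the $M$ nearly-disjoint terms to get $\dist(S,T) \geq \eps$. Making this rigorous will likely require a structural lemma controlling how well convex sets can simultaneously fit many independent random patterns on small subcubes---essentially the ternary analogue of the ``two-level'' Talagrand-style arguments used to prove distance-to-monotone lower bounds on $\zo^n$.
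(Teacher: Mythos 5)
Your high-level plan---two distributions built from a Talagrand-style random DNF, a birthday argument to bound total variation between labeled-sample distributions, and a farness argument for $\Dno$---is exactly the paper's approach. But two of the three steps, as you have sketched them, have genuine gaps.

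\textbf{What makes $\Dyes$ convex.} You propose starting from a halfspace (or intersection of random halfspaces) as the convex ``base'' and layering the term structure on top. This cannot work as stated: the moment you flip the membership of the points inside a term, a halfspace base stops being convex, and you have not said how the modified set would be. The paper's construction is built around the \emph{outward-oriented poset} $\preceq$ (Definition~\ref{def:oop}). Terms are points $t^{(i)} \in \zpm^n$ with $\sqrt{n}$ nonzero $\pm 1$ coordinates, and $x$ ``satisfies'' $t^{(i)}$ means $t^{(i)} \preceq x$. The $\Dyes$ set assigns $1$ to the inner layers $\inlayers(\sqrt n)$ (which is a ball, hence down-closed), $1$ to points in the middle that satisfy no term, $0$ to those satisfying two or more terms, $0$ to the outer layers, and for each $U_i$ either all $1$ or all $0$. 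The crucial observation (Lemma~\ref{lem:yes}) is that ``satisfies a term'' is up-closed in $\preceq$, so this whole assignment is non-increasing in $\preceq$; then by Fact~\ref{fact:simplex1} any minimal violating pair $(X,y)$ would have $X \subseteq \upset(y) \subseteq \overline{S}$, a contradiction. A halfspace is not down-closed in this poset, so it cannot serve as the base, and without the poset structure you have no mechanism for certifying convexity of the yes-instances. This is the heart of the proof and your sketch does not contain it.

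\textbf{Farness of $\Dno$.} You anticipate that proving $\Dno$ is far will require a union bound over convex witnesses or a delicate localization lemma. It does not. The paper uses a much more elementary argument: exhibit a collection $T$ of $\Omega(3^n)$ \emph{pairwise disjoint} colinear triples $(x,y,z)$ in the middle layers with $y = (x+z)/2$ (Claim~\ref{clm:triples}; just pair up $(-1,z),(0,z),(+1,z)$). For each such triple, with constant probability all three points land in $U$ and the random coloring $\pmb{r}$ makes $x,z \in S$, $y \notin S$ (Claim~\ref{clm:highprob_vertex}). Since the triples are disjoint, each one that becomes a violation forces any convex repair to change at least one of its three points, so with constant probability $\dist(S, \mathbf{convex}) = \Omega(1)$. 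No quantification over convex sets is needed. You should look for this kind of ``many disjoint local certificates'' argument before reaching for a union bound over the hypothesis class.

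\textbf{Indistinguishability.} Your birthday-style calculation matches the paper (Lemma~\ref{lem:hard}), but note that for it to apply you must ensure that, conditioned on every $U_i$ receiving at most one sample, the labeled-sample distributions under $\Dyes$ and $\Dno$ are \emph{identical}, not merely close; this is why $\Dyes$ must also be randomized (flipping whole $U_i$'s with probability $1/2$) rather than being a fixed base set. Your sketch gestures at this but does not reconcile it with the convexity requirement, which is precisely where the poset structure is load-bearing.
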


Again, the testing-by-learning reduction of \Cref{lem:learning->testing} implies that this lower bound also holds for learning. 

\begin{restatable}{corollary}{corlearningLB} \label{cor:learningLB}
For sufficiently small constant $\eps > 0$, sample-based learning convex sets in $\zpm^n$ requires
at least $3^{\Omega(\sqrt n)}$ samples.
\end{restatable}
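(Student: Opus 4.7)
The plan is to reduce sample-based testing to sample-based learning via the testing-by-learning reduction (\Cref{lem:learning->testing}) and then invoke the tester lower bound of \Cref{thm:2sidedsamplestestingLB}. Concretely, suppose toward contradiction that there is a sample-based learning algorithm $A$ for convex sets in $\zpm^n$ with sample complexity $q = q(n,\eps) = 3^{o(\sqrt n)}$ at some sufficiently small constant $\eps$. I would convert $A$ into a sample-based convexity tester as follows: draw $q$ samples and feed them to $A$ to obtain an (explicit) hypothesis $h \colon \zpm^n \to \{0,1\}$; then draw an additional $O(1/\eps^2)$ independent uniform samples and use them to estimate $\dist(S,h)$ to within additive error $\eps/12$ by a Chernoff bound; accept iff the estimate is less than $\eps/2$ \emph{and} $h$ lies within distance $\eps/3$ of some discrete convex set (the latter check consults no samples — we only care about sample complexity, so we allow unbounded computation).

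For correctness, if $S$ is convex then by the learning guarantee we have $\dist(S,h) \leq \eps/3$ with high probability, which makes $h$ trivially $\eps/3$-close to a convex set (namely $S$), and the sampled estimate of $\dist(S,h)$ is at most $\eps/2$ with high probability — so the tester accepts. If $S$ is $\eps$-far from convex, consider two cases on the output $h$ produced by $A$. Either $h$ is more than $\eps/3$-far from every convex set, in which case the tester rejects on the second check; or $h$ is $\eps/3$-close to some convex $T$, in which case the triangle inequality gives
\[
\dist(S,h) \;\geq\; \dist(S,T) - \dist(T,h) \;>\; \eps - \eps/3 \;=\; 2\eps/3,
\]
so the sampled estimate exceeds $\eps/2$ with high probability and the tester rejects. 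Combining the success probabilities via a union bound yields a sample-based convexity tester using $q + O(1/\eps^2)$ samples. Applying \Cref{thm:2sidedsamplestestingLB} at the same small constant $\eps$ forces $q + O(1/\eps^2) \geq 3^{\Omega(\sqrt n)}$, and since $\eps$ is constant this gives $q \geq 3^{\Omega(\sqrt n)}$, a contradiction.

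The one subtlety — precisely what makes \Cref{lem:learning->testing} ``slightly non-standard'' — is that the learner is \emph{improper}, so on a non-convex input $S$ its hypothesis $h$ need not itself be convex, and a proper-learner reduction that just checks whether $h \in \mathcal{P}$ is not applicable. The extra $O(1/\eps^2)$ samples used to estimate $\dist(S,h)$ are what handle this: they rule out the bad case where the learner on a far-from-convex input happens to output a hypothesis $h$ that is close to some convex set but far from $S$. Once this is in place, the counting of samples and the contradiction with \Cref{thm:2sidedsamplestestingLB} is immediate, and there is no other significant obstacle.
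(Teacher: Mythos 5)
Your proposal is correct and follows essentially the same route as the paper, which simply invokes \Cref{lem:learning->testing} (the testing-by-learning reduction adapted to improper learners) together with \Cref{thm:2sidedsamplestestingLB}. Your inline re-derivation of the reduction --- estimating $\dist(S,h)$ directly and separately checking that $h$ is near some convex set, rather than first projecting $h$ to the nearest convex $g \in \cF$ and estimating $\dist(S,g)$ as in the lemma's proof --- is an equivalent formulation, and the $O(1/\eps^2)$ versus $O(1/\eps)$ cost of the distance estimate is immaterial at constant $\eps$.
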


\subsubsection{Non-Adaptive One-Sided Testing}

A convexity tester with one-sided error is one that finds a \emph{witness} of non-convexity with
probability at least $2/3$ when the tested set is $\eps$-far from convex. A convexity tester is
\emph{non-adaptive} if it must choose its  set of membership queries before receiving any of the
query results. Bounds on non-adaptive one-sided error testing therefore have a natural combinatorial
interpretation as bounds on the likelihood of blindly finding a witness of non-convexity in a random
substructure of the domain.

Our first result shows that there is a non-adaptive one-sided error tester with sub-exponential
query complexity $3^{o(n)}$. In contrast, a similar bound for the Gaussian setting is not yet known
to exist.

\begin{theorem} \label{thm:UB_nonadaptive_1sided}
For every $\eps > 0$, there is a non-adaptive convexity tester with one-sided error for sets in
$\zpm^n$ that has query complexity $3^{\widetilde{O}\big(\sqrt{n \ln 1/\eps}\big)}$ where the
$\widetilde{O}(\cdot)$ notation is hiding an extra $\ln n$ term.
\end{theorem}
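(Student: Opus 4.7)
My plan is to use a \emph{random subcube tester}. For $d = \widetilde{\Theta}(\sqrt{n \ln(1/\eps)})$, sample a uniformly random combinatorial subcube $F \subseteq \zpm^n$ of dimension $d$: pick a uniformly random set $J \subseteq [n]$ with $|J| = n - d$ together with a uniformly random assignment $a \in \zpm^J$, and set $F = \{x \in \zpm^n : x_J = a\}$. Query all $3^d$ points of $F$ and reject iff $S|_F := S \cap F$ is non-convex viewed as a subset of $\zpm^d$, which can be verified in time $3^{O(d)}$ by enumerating candidate Carath\'eodory witnesses. Repeating this procedure $\poly(n, 1/\eps)$ times to amplify rejection probability to $2/3$ keeps the total query complexity at $3^{\widetilde{O}(\sqrt{n \ln(1/\eps)})}$. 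Non-adaptivity and one-sided error are immediate: the queries are chosen independently of $S$, and if $S$ is convex then so is $S \cap F$ for every $F$, so no witness of non-convexity can ever be found in a convex $S$.

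The heart of the argument is the structural claim that if $S$ is $\eps$-far from convex, then the probability over a random $F$ that $S|_F$ is non-convex is at least $1/\poly(n, 1/\eps)$. I would attack this via the contrapositive: assuming that at least a $(1-\delta)$-fraction of dimension-$d$ subcubes have convex restrictions, construct a global convex approximation $S^\star$ of $S$ by stitching together the local convex hulls, for example by setting $S^\star(x) = 1$ iff $x \in \conv(S|_F) \cap F$ for a majority of convex-restricted $F \ni x$. The aim is to show (i) $\dist(S, S^\star) = O(\delta)$ by an averaging argument, and (ii) $S^\star$ is itself convex in $\zpm^n$; together these contradict $\eps$-farness once $\delta \ll \eps$.

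The main obstacle will be step (ii): given any Carath\'eodory configuration $b \in \conv(a_1, \dots, a_k)$ with the $a_i \in S^\star$, one must exhibit enough convex-restricted subcubes simultaneously certifying $b \in S^\star$. The core difficulty is that a generic Carath\'eodory witness in $\zpm^n$ can involve up to $n+1$ supporting points, so it need not fit inside any single dimension-$d$ subcube --- this is precisely why the naive line tester fails, as illustrated by the non-convex set $\{(1,0), (0,1), (-1,-1)\} \subseteq \zpm^2$, which admits no collinear violation whatsoever. The key leverage I would exploit is the face-restriction observation that any witness for $b \in \conv(S) \setminus S$ must be supported inside the face $F_b$ of $\zpm^n$ determined by $b$'s nonzero coordinates (a convex combination cannot leave a coordinate face it already lies on), so the effective ambient dimension of the witness equals the number of zero coordinates of $b$. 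A concentration plus union-bound argument over $\poly(1/\eps)$ representative violating configurations, combined with a careful accounting of how many dimension-$d$ subcubes cover a given face, should isolate witnesses with effective dimension $\widetilde{O}(\sqrt{n \ln(1/\eps)})$ and produce the claimed balance between $d$ and $\eps$.
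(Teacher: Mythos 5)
Your approach diverges substantially from the paper's, and it contains a genuine gap.

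\paragraph{Where your approach differs.} The paper does not use random subcubes. Its tester samples a random point $\by$, and if $\by$ is in the middle layers, it queries \emph{all} of $\upset(\by) \cap \midlayers(\ell)$ for $\ell = \sqrt{2n\ln(8/\eps)}$, i.e.\ the set of points $\bx \succeq \by$ in the outward-oriented poset whose $\ell_1$-norm stays within $\pm\ell$ of $\tfrac{2n}{3}$. This query set is \emph{not} a subcube: it consists of all $\bx$ obtainable from $\by$ by changing at most $O(\ell)$ zero coordinates to $\pm 1$, and it has size $\binom{n}{2\ell}3^{2\ell} \le n^{O(\ell)}$. The completeness argument is Lemma~\ref{lem:farfromconvex}: if $\eps(\bS) \ge \eps$, then at least $\tfrac{\eps}{2}3^n$ points $\by$ lie in $\conv(\bS\cap\midlayers(\ell))\cap(\overline{\bS}\cap\midlayers(\ell))$, and for any such $\by$ a \emph{minimal} witness $(\bX,\by)$ satisfies $\bX \subseteq \upset(\by)\cap\midlayers(\ell)$ by Fact~\ref{fact:simplex1} --- with no bound on $|\bX|$. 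You did rediscover the key structural observation (your ``face-restriction observation'' is exactly Fact~\ref{fact:simplex1}), but you use it as only a heuristic justification, whereas the paper uses it to design the query set directly.

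\paragraph{The gap.} First, the stitching construction you propose is vacuous. If $S|_F$ is convex as a subset of the affine subcube $F\cong\zpm^d$, then by definition $\conv(S|_F)\cap F = S|_F = S\cap F$, so ``$x\in\conv(S|_F)\cap F$'' is equivalent to ``$x\in S$'' whenever $S|_F$ is convex. Your majority-vote construction therefore yields $S^\star = S$ regardless of $\delta$, so step (i) gives $\dist(S,S^\star)=0$ and step (ii) fails, and no contradiction with $\eps$-farness is obtained.

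Second, and more fundamentally, a random $d$-dimensional subcube cannot in general contain a minimal violating pair even when $\by$ and all of $\bX$ are in $\midlayers(\ell)$. Each $\bx\in\bX$ differs from $\by$ in at most $O(\ell)$ coordinates, but \emph{different} points of $\bX$ may differ from $\by$ in \emph{different} coordinates, and by Carath\'eodory $|\bX|$ can be as large as $\dim(\upset(\by))+1 \approx n/3+1$. So $\bigcup_{\bx\in\bX}\supp(\bx)\setminus\supp(\by)$ can span $\Theta(n)$ coordinates even though every individual $\bx$ is ``close'' to $\by$. Hence a witness that is fully contained in $\upset(\by)\cap\midlayers(\ell)$ (which is what Fact~\ref{fact:simplex1} plus concentration guarantee) need not fit in \emph{any} subcube of dimension $\widetilde O(\sqrt{n\ln 1/\eps})$. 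You flag this as ``the main obstacle,'' but the hoped-for ``careful accounting'' that ``isolates witnesses with effective dimension $\widetilde O(\sqrt{n\log 1/\eps})$'' is precisely what is false in general: the effective dimension of a witness is controlled by the \emph{face} of $\by$, not by a subcube, and that face has dimension $\approx n/3$. The paper's fix is to change the shape of the query set from a subcube to the middle slab of $\upset(\by)$: this has the same size $3^{\widetilde O(\sqrt{n\ln 1/\eps})}$, but it captures witnesses of arbitrary cardinality, because the constraint is on the per-point distance from $\by$ (enforced by concentration of $\|\cdot\|_1$), not on the total number of coordinates touched.
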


Next, we show that \cref{thm:UB_nonadaptive_1sided} is essentially tight, in that the exponential
dependence on $\sqrt{n}$ in its bound is unavoidable.

\begin{theorem} \label{thm:LB_nonadaptive_1sided} For sufficiently small constant $\eps > 0$, every
non-adaptive convexity tester with one-sided error for sets in $\zpm^n$ has query complexity at
least $3^{\Omega(\sqrt{n})}$.
\end{theorem}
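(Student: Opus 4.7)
By Yao's minimax principle, to prove the $3^{\Omega(\sqrt n)}$ lower bound it suffices to exhibit a distribution $\mathcal{D}$ supported on subsets $S \subseteq \zpm^n$ that are all $\varepsilon$-far from convex (for a sufficiently small constant $\varepsilon > 0$), and to show that for every fixed query set $Q \subseteq \zpm^n$ with $|Q| \leq 3^{c\sqrt{n}}$ (for an appropriate small constant $c>0$), the probability over $S \sim \mathcal{D}$ that $Q$ contains a witness of non-convexity for $S$ is strictly less than $1/3$. Here a witness is a tuple $(y, x_1, \ldots, x_k) \in Q^{k+1}$ with $y \in \conv(\{x_1, \ldots, x_k\})$, $S(y) = 0$, and $S(x_j) = 1$ for every $j$; equivalently, $Q$ contains no witness iff there exists a convex $C \subseteq \zpm^n$ with $C \cap Q = S \cap Q$, and only in the former case can a one-sided tester reject.

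For the construction of $\mathcal{D}$, the plan is to adapt the Talagrand-random-DNF-style construction that powers the sample-based lower bound \cref{thm:2sidedsamplestestingLB}. Concretely, $S$ is a random union of roughly $M = 3^{\Theta(\sqrt{n})}$ small, carefully placed ``clumps'' supported near the middle Hamming layers of $\zpm^n$, where almost all the mass of the cube resides. The clumps together cover a constant fraction of $\zpm^n$, but their convex hull also contains many points that are not in $S$ (the interstitial points between clumps); this forces every $S$ in the support of $\mathcal{D}$ to be $\Omega(1)$-far from every convex set. On the other hand, by the random placement, the marginal $\bP[x \in S]$ is small and essentially uniform in $x \in \zpm^n$, and the events $\{x \in S\}$ and $\{x' \in S\}$ are nearly independent for pairs with sufficiently large Hamming distance.

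The main technical work is to upper bound the expected number of witness tuples in $Q$. Carath\'eodory's theorem allows witnesses of size $k+1$ up to $n+2$, so the naive union bound $|Q|^{n+1}$ is useless; the proof must instead exploit the structure of $\mathcal{D}$. The key point is that the joint probability $\bP[\,x_1, \ldots, x_k \in S \text{ and } y \notin S\,]$ decays geometrically in $k$ because each clump is small and the per-point events for widely separated points are nearly independent, while the geometric requirement $y \in \conv(\{x_j\})$ severely restricts which $k$-tuples in $Q$ can even contribute. The main obstacle is balancing two competing requirements: $\varepsilon$-farness from convexity pushes toward larger or more numerous clumps, while controlling high-$k$ witnesses pushes the opposite way. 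Calibrating the clump size and density appropriately, and enumerating witness shapes carefully enough that their cumulative contribution is $o(1)$ when $|Q| \leq 3^{c\sqrt{n}}$, should yield the desired bound on the expected number of witnesses via Markov's inequality.
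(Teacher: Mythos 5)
Your high-level framing via Yao's principle is fine, but the proposed construction has a fatal flaw, and the paper takes a genuinely different route precisely to avoid it. You propose to reuse the Talagrand-random-DNF-style distribution $\Dno$ from the sample-based lower bound (\cref{thm:2sidedsamplestestingLB}). That distribution is designed so that \emph{random} samples rarely reveal structure; it is \emph{not} designed to hide witnesses from an adversarially chosen query set. In fact, it fails badly at that: by \cref{clm:highprob_vertex} and the accompanying computation in \cref{sec:proof-no}, for a fixed colinear triple $(x, y, z)$ with $y = \tfrac{1}{2}(x+z)$ in the middle layers, we have $\Pr_{\pmb{t},\pmb{r}}[(x,y,z) \text{ is a violating triple for } S_{\pmb{t},\pmb{r}}] \geq \Omega(1)$. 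So a non-adaptive one-sided tester that simply queries a constant number of such fixed triples detects a violation with constant probability for the $\Dno$ distribution. Your assertion that ``the events $\{x \in S\}$ and $\{x' \in S\}$ are nearly independent for pairs with sufficiently large Hamming distance'' is actually what dooms the approach: near-independence makes it \emph{easy} for a fixed triple to be a witness, not hard. The distinction you need, and do not supply, is between a distribution that a random query set is unlikely to see through (sufficient for \cref{thm:2sidedsamplestestingLB}) and one that \emph{every} small query set is unlikely to see through.

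The paper's proof instead uses a parameterized family of deterministic sets --- the truncated anti-slabs $\tas_{\bv}$ for $\bv \in \zpm^n$ with $\|\bv\|_1 = n/2$ (\cref{def:tas}) --- and avoids the need to union-bound over all Carath\'eodory witnesses of size up to $n+1$. The key structural lemma (\cref{claim:only_distant_certificates} combined with \cref{clm:certificate}) shows that \emph{any} witness $(\bX,\by)$ for $\tas_{\bv}$ must contain a single \emph{pair} $(\bx,\by)$ with $|\langle \bv, \bx-\by\rangle| > \sqrt{n}$ and, due to the truncation to $\midlayers(0.7\sqrt{n})$, also $|\Delta(\bx,\by)| \leq 1.4\sqrt{n}$. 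This reduces the whole problem to bounding, for each pair in $Q$, the fraction of normal vectors $\bv$ for which that pair is ``dangerous'' (\cref{clm:Vxy_bound}); the union bound is then only over $|Q|^2$ pairs, not over all subsets. This collapse from tuples to pairs is the missing idea in your proposal: you correctly identify that bounding witnesses of all sizes is the main difficulty, but you do not have a mechanism (analogous to \cref{claim:only_distant_certificates}) that forces every witness to expose a small, bounded-arity sub-configuration whose probability you can control.
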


Our \cref{thm:2sidedsamplestestingLB} above showed that $3^{\Omega(\sqrt n)}$ is required for
\emph{sample-based} testing. For one-sided error testers, we can improve this lower bound to show
that non-adaptive testers are significantly more powerful than sample-based testers for one-sided
testing.

\begin{restatable}{theorem}{thmsamplesonesided}
\label{thm:samples_1sided}
For sufficiently small constant $\eps > 0$, sample-based convexity testing in $\zpm^n$ with
one-sided error requires $3^{\Theta(n)}$ samples.
\end{restatable}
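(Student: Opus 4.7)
The upper bound $3^{O(n)}$ is immediate: a tester that queries every point of $\zpm^n$ recovers $S$ exactly and decides convexity without error. For the lower bound, my plan is to exhibit an explicit $\Omega(1)$-far-from-convex set $S^\star \subseteq \zpm^n$ such that, with probability strictly greater than $1/3$, a uniform sample $Q$ of size $m = 3^{o(n)}$ is consistent with some convex set. Because a one-sided sample-based tester may reject only when the labelled sample is inconsistent with every convex set---equivalently, when some sampled $y \in Q$ with $S(y) = 0$ lies in $\conv(\{q \in Q : S(q) = 1\})$---this forces any such tester to accept $S^\star$ with probability at least $2/3$.

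I take $S^\star \vcentcolon= \{x \in \zpm^n : x_1 \neq 0\}$, which has density $2/3$. For any convex $T \subseteq \zpm^n$, let $T_+ \vcentcolon= T \cap \{x_1 = +1\}$, $T_- \vcentcolon= T \cap \{x_1 = -1\}$, $T_0 \vcentcolon= T \cap \{x_1 = 0\}$, and write $\pi$ for projection onto the last $n-1$ coordinates. If $v \in \pi(T_+) \cap \pi(T_-)$ then both $(+1, v)$ and $(-1, v)$ lie in $T$, so by discrete convexity $(0, v) \in T$ too. Hence $|T_0| = |\pi(T_0)| \geq |\pi(T_+) \cap \pi(T_-)| \geq |T_+| + |T_-| - 3^{n-1}$, which rearranges to
\begin{align*}
    3^n \cdot \dist(S^\star, T) \;=\; 2 \cdot 3^{n-1} - (|T_+| + |T_-|) + |T_0| \;\geq\; 3^{n-1},
\end{align*}
so $S^\star$ is $(1/3)$-far from convex, and in particular $\eps$-far for every $\eps < 1/3$. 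Unrolling the convex-combination condition $y = \sum_i \lambda_i x^{(i)}$ with $y_1 = 0$ and $x^{(i)}_1 \in \{\pm 1\}$ then shows that $y \in Q$ with $y_1 = 0$ is a witness iff $y = (a_+ + a_-)/2$ for some $a_\pm \in \conv(Q \cap \{x_1 = \pm 1\})$.

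The core estimate is a union bound on minimal witnesses. For each $y \in \overline{S^\star}$, a size-two $R = \{x^+, x^-\} \subseteq S^\star$ with $y \in \conv(R)$ must have $x^\pm_1 = \pm 1$, agree with $y$ at every coordinate $j \geq 2$ where $y_j \neq 0$, and satisfy $x^+_j + x^-_j = 0$ at every $j \geq 2$ where $y_j = 0$---the last condition yielding three choices per such coordinate. Letting $z(y)$ denote the number of zero coordinates of $y$ at positions $\geq 2$ and summing over all $y \in \overline{S^\star}$,
\begin{align*}
    \sum_{y \in \overline{S^\star}} 3^{z(y)} \;=\; \sum_{k=0}^{n-1}\binom{n-1}{k} 2^{n-1-k} 3^k \;=\; 5^{n-1}.
\end{align*}
The expected number of minimal witnesses contained in $Q$ is thus at most $5^{n-1}(m/3^n)^3$, which is less than $1/6$ whenever $m \leq c_0 \cdot 3^{\alpha n}$ for $\alpha \vcentcolon= 1 - \tfrac13 \log_3 5 > 0$. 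For larger witnesses, I would project onto the $z(y) + 1$ ``relevant'' coordinates of $y$, on which $S^\star_y \vcentcolon= \{x \in S^\star : x_j = y_j \text{ whenever } y_j \neq 0\}$ is symmetric about $\vec 0$; Wendel's theorem then gives $\pr[\vec 0 \in \conv(R)] = 0$ for $|R| \leq z(y) + 1$ and geometrically decaying bounds thereafter, so a Chernoff estimate on $|Q \cap S^\star_y|$ bounds the total contribution of larger witnesses by the minimal-witness term in the same range of $m$.

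The main obstacle I anticipate is controlling larger witnesses: a naive union bound over all $(y, T)$ with $y \in \conv(T)$ is useless because the number of such pairs can exceed $3^{\poly(n)}$, so one has to exploit the symmetry of each $S^\star_y$ together with the dimension-dependent Wendel bound to argue that the expected witness count is dominated by its smallest-$|R|$ contribution. Combining the minimal- and larger-witness estimates yields $\pr[Q \text{ contains any witness}] < 1/3$ for $m = 3^{o(n)}$, contradicting the existence of a one-sided tester with $3^{o(n)}$ samples.
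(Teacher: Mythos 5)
Your framework is sound---a one-sided sample-based tester can only reject when $Q$ contains a witness of non-convexity, and you have correctly identified a set $S^\star$ that is $(1/3)$-far from convex, with a clean density argument. Your size-two witness estimate is also correct. But the "larger witnesses" part is a genuine gap, as you acknowledge, and the proposed fix is not a patch but the missing core of the argument. Two concrete problems with the Wendel route: (i) Wendel's theorem requires a distribution that is symmetric about the origin \emph{and} puts zero mass on every hyperplane through the origin; the uniform distribution on the discrete set $\{\pm 1\}\times\zpm^{z(y)}$ is symmetric but certainly puts positive mass on many such hyperplanes, so you cannot invoke Wendel directly and would have to re-derive a discrete analogue. (ii) You are taking a union bound over all $y\in\overline{S^\star}$ (there are $3^{n-1}$ of them), and for each $y$ you would need $\Pr\bigl[y\in\conv(Q\cap S^\star)\bigr]\ll 1/m$ even conditioned on the random value of $|Q\cap\upset(y)\cap S^\star|$; managing this conditioning and still getting a net $3^{\Omega(n)}$ bound is the hard part of your plan, and it is not worked out.

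The paper avoids all of this with a single observation that makes the hard instance and the witness-counting unnecessary. By Fact~\ref{fact:simplex1} (which you implicitly rediscovered when you noted $x^\pm$ must agree with $y$ on its nonzero coordinates), \emph{every} minimal violating pair $(X,y)$ satisfies $X\subseteq\upset(y)$ in the outward-oriented poset. Hence for a one-sided tester to reject \emph{any} set, $Q$ must contain a comparable pair $x\succeq y$. But two independent uniform samples $x,y$ are comparable with probability exactly
\[
\Pru{x,y}{x\in\upset(y)} \;=\; \Exu{y}{\frac{|\upset(y)|}{3^n}} \;=\; \prod_{i=1}^n\left(\frac{1}{3}\cdot 1+\frac{2}{3}\cdot\frac{1}{3}\right) \;=\; \left(\frac{5}{9}\right)^n,
\]
so $\tfrac23\le s^2(5/9)^n$ forces $s\ge 3^{\Omega(n)}$. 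This bypasses the choice of a specific $S^\star$, all witness-size cases, and any use of Wendel; you only need the existence of \emph{some} $\Omega(1)$-far set, which is already established elsewhere in the paper (e.g.\ the truncated anti-slabs). Your size-two calculation $\sum_y 3^{z(y)}=5^{n-1}$ is essentially the same $5/9$ arithmetic in disguise, but the paper's framing removes the need to go any further.
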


This theorem also includes a matching upper bound.  The upper bound in \cref{thm:samples_1sided} is
trivial because a coupon-collector argument shows that one can learn any set $S \subseteq \zpm^n$
exactly using $O(n 3^n)$ samples. A slightly improved bound of $O\left(3^n \cdot
\tfrac{1}{\eps}\log(1/\eps)\right)$ also holds by a general upper bound on one-sided error testing
via the VC dimension \cite{BFH21}.

\subsection{Techniques}

The discrete nature of the ternary hypercube, in contrast to the continuity of the domains $\mathbb{R}^n$ or $[0,1]^n$, provides a new angle in the study of convexity which leads to the development of a new set of \emph{combinatorial} techniques and tools. 
In this section we give a brief overview of the techniques we use to prove each of our theorems. 

\subsubsection{The Edge Boundary and Influence of Convex Sets}

\paragraph{Influence Upper Bound:} Our proof of \Cref{thm:inf-bound}, which gives an upper bound on the edge boundary of a convex set, is accomplished by relating the number of boundary edges to the expected number of sign-changes of one-dimensional random processes. This is done by constructing a distribution $\cD$ over the edge-set $E$ of the ternary hypercube, such that (a) $\cD$ is ``close'' to the uniform distribution over $E$ and (b) the probability that a random edge drawn from $\cD$ is influential for our convex set $S \subseteq \zpm^n$ is equal to the expected number of sign-changes of a certain random process. This process is defined by considering a random walk $\bm{X}^{(0)},\ldots,\bm{X}^{(m)}$ of length $m \approx n^{1/2}$ where $\bm{X}^{(0)}$ is a random point from the middle layers of $\zpm^n$ and each $\bm{X}^{(s)}$ is obtained by flipping a random $0$-valued bit of $\bm{X}^{(s-1)}$ to a uniform random $\{\pm 1\}$-value; the process finally draws $\bm{s} \sim [m]$ uniformly at random and outputs the edge $(\bm{X}^{(\bm{s}-1)},\bm{X}^{(\bm{s})})$.

The crux of the argument is to bound the expected number of times this random walk enters and leaves the set $S$. Since $S$ is convex, it can be written as an intersection of halfspaces $S = H_1 \cap H_2 \cap \cdots \cap H_k$ of the form $H_i = \{x \in \zpm^n \colon \langle x,v^{(i)} \rangle < \tau_i \}$ where $v^{(i)} \in \mathbb{R}^n$ and $\tau_i \in \mathbb{R}$. For each halfspace $H_i$, we define a corresponding one-dimensional random walk $\bm{W}_i(s) = \langle \bm{X}^{(s)}, v^{(i)} \rangle - \tau_i$ and observe that the original random walk crosses the boundary of $H_i$ at step $s$ if and only if $\bm{W}_i$ \emph{changes sign} at step $s$. Then the number of times the walk $\bm{X}^{(0)}, \bm{X}^{(1)}, \dotsc$ crosses the boundary of $S = \bigcap_i H_i$ is the number of times the \emph{maximum} of the processes $\bm{M} = \max_i\bm{W}_i$ changes sign. Therefore, our goal is to bound the expected number of sign-changes for $\bm{M}$, which we accomplish by using Sparre Andersen's fluctuation theorem \cite{SA54} (as stated in \cite{BB23}) to relate this quantity to the number of sign-changes of a uniform random walk.

\paragraph{High-Influence Set Construction:} Our proof of \Cref{thm:infLB} is inspired by the proof of ~\cite[Theorem 2]{Kane14} which constructs a set in the Boolean hypercube $\{\pm 1\}^n$ with influence $\Omega(\sqrt{n \log k})$ by considering an intersection of $k$ random halfspaces each of which is at distance $\approx \sqrt{n \log k}$ from the origin. In particular, when $k \approx 2^{\sqrt{n}}$ the construction has influence $\approx n^{3/4}$ and when $k \approx 2^n$ the set has influence $\approx n$. On the ternary hypercube $\zpm^n$, the behaviour is different: here, halfspaces exhibit a ``density increment'' behaviour as their threshold moves away from the origin, which prevents the influence from increasing as $k$ grows past $2^{\sqrt n}$, when $\Omega(\sqrt{n \log k})$ matches our upper bound of $\widetilde O(n^{3/4})$.

We can summarize this ``density increment'' phenomenon as follows. Most of the edges of $\zpm^n$ occur in the middle layer $\{ x \in \zpm^n : \|x\|_1 = \tfrac{2}{3} n \pm O(\sqrt n) \} = \bigcup_{\ell = -O(\sqrt n)}^{O(\sqrt n)} \{ x \;:\; \|x\|_1 = \tfrac{2}{3}n + \ell \}$.
A convex set is an intersection of halfspaces, but for convenience we consider its complement which is a union of halfspaces, and has the same influence. Consider the ``density'' or measure of the halfspace with normal vector $\vec{1}$ at distance $\tau$ from the origin on the points
$\{ x \;:\; \|x\|_1 = \tfrac{2n}{3} + \ell \}$:
\[
\rho(\ell,\tau) := \pr_{x \in \zpm^n \colon \norm{x}_1 = \frac{2n}{3}+\ell}\left[\sum_{i} x_i > \tau\right] \,.
\]
Suppose that there is a fixed value $\rho$ such that $\rho(\ell,\tau) \approx \rho$ up to constant factors for \emph{all} $\ell = \pm O(\sqrt n)$ simultaneously. Then we can take $k \approx \tfrac{1}{\rho}$ random halfspaces
with threshold $\tau$ and combine their boundary edges, since they will be essentially disjoint on the whole middle layer, and it is not hard to show that the influence of the resulting union is roughly $\tau$. It happens that the condition of $\rho(\ell,\tau)$ being approximately equal for all values $\ell = \pm O(\sqrt n)$ holds for $\tau$ up to $\tau \approx n^{3/4}$ but for $\tau \gg n^{3/4}$ the intersection of the halfspace with the set $\{ x \;:\; \|x\|_1 = \frac{2n}{3} + \ell\}$ grows extremely fast with $\ell$ making $\rho(-\sqrt{n}, \tau) \ll \rho(\sqrt n, \tau)$, and the intersection of halfspaces with threshold $\tau$ quickly approaches the ball with influence $O(\sqrt n)$ (see \cref{ex:balls}).

\subsubsection{Sample-Based Learning and Testing}

\paragraph{Learning Upper Bound:} Our proof of \Cref{thm:learner} follows by combining our upper bound on the influence from \Cref{thm:inf-bound} with the Low-Degree Algorithm of Linial, Mansour, and Nisan~\cite{LinialMN93}. In particular, using Fourier analysis over $\zpm^n$ in combination with \Cref{thm:inf-bound} we can show that for convex sets, a $(1-\eps)$-fraction of the Fourier mass is on the coefficients with degree at most $\widetilde{O}(n^{3/4})/\eps$. 
Then we may use the Low-Degree Algorithm for learning the convex sets; see \Cref{section:low-degree-algorithm}.
Since the ternary hypercube is a non-standard domain, we state the necessary Fourier analysis for functions over $\zpm^n$ in \Cref{section:fourier-setup}, which follows \cite[Chapter 8]{O14}. One technical difference between Fourier analysis over the Boolean and ternary hypercubes is that the standard Fourier basis over $\{\pm 1\}^n$ is given by the parity functions which are bounded in $[0,1]$, whereas any Fourier basis over $\zpm^n$ will have functions taking value $2^{O(n)}$ on some elements $x \in \zpm^n$. Nevertheless, with some care, we show that the Low-Degree Algorithm still works.

\paragraph{Sample-Based Testing Lower Bound:} Our proof of \Cref{thm:2sidedsamplestestingLB} uses a family of functions known as \emph{Talagrand's random DNFs} adapted to the ternary hypercube. As we mentioned, this family of functions has been used to prove lower bounds for monotonicity testing \cite{BeBl16,Chen17}. Our adapted version is described as follows. Each "term" of the DNF is chosen to be a random point $t \in \zpm^n$ with $\norm{t}_1 = \sqrt{n}$. We then say that a point $x \in \zpm^n$ "satisfies" $t$ if $x_i = t_i$ for all $i \in [n]$ where $t_i \in \{\pm 1\}$. After choosing $N$ random terms $t^{(1)},\ldots,t^{(N)}$ we define the disjoint regions of $\zpm^n$ given by $U_1,\ldots,U_N$ where $U_i$ is the set of points $x \in \zpm^n$ with $\norm{x}_1 \in [2n/3 \pm \sqrt{n}]$ which satisfy a unique term. Choosing $N = 3^{\sqrt{n}}$ results in $\bigcup_{i=1}^N U_i$ covering a constant fraction of the domain. We then define two distributions $\Dyes$ and $\Dno$ as follows. Recall that $B_r$ is the radius-$r$ ball in the ternary cube (\Cref{ex:balls}) and let $D$ denote the set of points $x \in \zpm^n$ with $\norm{x}_1 \in [2n/3 \pm \sqrt{n}]$ that don't satisfy \emph{any} term.

\begin{itemize}
    \item $S \sim \Dyes$ is drawn by setting $S = B_{\frac{2n}{3}-\sqrt{n}} \cup D \cup \left(\bigcup_{i \in T} U_i\right)$ where $T$ includes each $i \in [N]$ independently with probability $1/2$. Such a set is always convex.
    \item $S \sim \Dno$ is drawn by setting $S = B_{\frac{2n}{3}-\sqrt{n}} \cup D \cup C$ where $C$ includes each $x \in \bigcup_{i=1}^N U_i$ independently with probability $1/2$. Informally, this set will be $\Omega(1)$-far from convex with constant probability since its intersection with the middle layers is  random. 
\end{itemize}

For both distributions, each point $x \in \bigcup_{i=1}^N U_i$ satisfies $\pr_S[x \in S] = 1/2$ and if $x \in U_i$ and $y \in U_j$ where $i\neq j$, then the events $x \in S$ and $y \in S$ are independent. Thus, to distinguish $\Dyes$ and $\Dno$ one has to see at least two points from the same $U_i$ and this gives our sample complexity lower bound. 

\subsubsection{Non-Adaptive One-Sided Testing} 

The proofs of \cref{thm:UB_nonadaptive_1sided,thm:LB_nonadaptive_1sided,thm:samples_1sided} all rely on a partial order $\preceq$ defined on $\zpm^n$, which we call the \emph{outward-oriented poset}, that has the origin $0^n$ as the minimum element and the corners of the cube $\{\pm 1\}^n$ as the maximum elements. (See \cref{sec:outward} for the formal definition of this poset and a discussion of its properties and history.) For any $\by \in \zpm^n$, we define $\upset(\by) \define \{ \bx \in \zpm^n : \by \preceq \bx \}$ to represent the set of points above $\by$ in this poset. 

\paragraph{Non-Adaptive One-Sided Upper Bound:} An important property of the outward-oriented poset in the context of testing convexity is that any point $\by$ in the convex hull of a set of points $\bX \subseteq \zpm^n$ is also in the convex hull of $\bX \cap \upset(\by)$. Conversely, if a set $\bS \subseteq \zpm^n$ is \emph{not} convex, then there is a certificate of non-convexity of the form $(\bX, \by)$ where $\by \notin S$ is in the convex hull of $\bX \subseteq \bS$, and $\bX \subseteq \upset(\by)$.
This property implies that a convexity tester can  search for certificates of non-convexity by repeatedly choosing a random point $\by$ and querying all points in $\upset(\by)$. A na\"ive implementation of this idea leads to a query complexity that is significantly larger than the bound in the theorem. However, the ternary hypercube satisfies a strong concentration of measure property: almost all of the points in the ternary hypercube have $\tfrac23 n \pm O(\sqrt n)$ non-zero coordinates. As a result, we can refine the convexity tester to only query the points in $\upset(\by)$ whose number of non-zero coordinates is at most $\tfrac{2}{3}n + O(\sqrt n)$ to obtain the desired query complexity. The details of the proof of \cref{thm:UB_nonadaptive_1sided} are presented in \cref{sec:UB_nonadaptive_1sided}.

\paragraph{Non-Adaptive One-Sided Lower Bound:} The lower bound in \cref{thm:LB_nonadaptive_1sided} is obtained by considering the class of \emph{anti-slabs}, which are defined by choosing a vector $\bv \in \zpm^n$ with $n/2$ non-zero coordinates and taking the set of points $\{ \bx \in \zpm^n : |\inn{\bv, \bx}| > \tau \}$. It is quite easy to find certificates of non-convexity for anti-slabs---the three points $-\bx$, $0^n$, and $\bx$ obtained by choosing $\bx$  uniformly at random in the ternary hypercube forms such a certificate with reasonably large probability whenever $\tau$ is small enough. However, we can eliminate these certificates of non-convexity if we ``truncate'' the anti-slabs by including the set of points whose number of non-zero coordinates is below $\tfrac{2}{3}n - O(\sqrt n)$, and excluding the points whose number of non-zero coordinates is above $\tfrac{2}{3}n + O(\sqrt n)$. We show that any certificate of non-convexity for these truncated anti-slabs must have two points $\bx, \bz$ with a large difference between $\inn{\bv, \bx}$ and $\inn{\bv, \bz}$, but on the other hand, any small set of queries has a low probability of including such a pair when $\bv$ is chosen at random.

\paragraph{Sample-Based One-Sided Lower Bound:} Finally, the proof of the lower bound \cref{thm:samples_1sided} again uses the outward-oriented poset and the connection between convex hulls and the upwards sets $\upset(\by)$ to show that any set of $3^{o(n)}$ samples is unlikely to draw any point $\by$ that is contained in the convex hull of the other sampled points and thus to have any possibility of identifying a certificate of non-convexity of any set.

\subsection{Discussion and Open Problems}

As far as we know, we are the first to study convex sets and their associated algorithmic problems on the ternary hypercube. Thus there are many possible questions one could ask. In this section we discuss a few such questions which we find most interesting.

\paragraph*{Learning and sample-based testing.}
The most obvious question which our work leaves open is that of determining the true sample complexity of learning and sample-based testing of convex sets in the ternary hypercube, where our results leave a gap of $3^{\Omega(\sqrt{n})}$ vs. $3^{\widetilde{O}(n^{3/4})}$. By \cref{thm:infLB}, our upper bound of $\widetilde{O}(n^{3/4})$ on the influence of convex sets is tight up to a factor of $\log^{1/4} n$, and therefore to improve our learning upper bound would require another method. 

\begin{question} Can we close the gap of $3^{\Omega(\sqrt{n})}$ vs. $3^{\widetilde{O}(n^{3/4})}$ for learning convex sets and for sample-based convexity testing in $\zpm^n$? \end{question}

\paragraph{Testing with two-sided error.}
Our results for testing with queries apply only to case of one-sided error. Earlier work on testing convex sets under the Gaussian distribution on $\R^n$ with samples showed that, in that setting, two-sided error was more efficient than one-sided \cite{CFSS17}.

\begin{question}
    Is there a two-sided error non-adaptive tester for domain $\zpm^n$ 
    with better query complexity than our one-sided error tester?
\end{question}

Our lower bound technique does not suffice for two-sided error. This is because the class of anti-slabs, which we proved are hard to distinguish from convex sets using a one-sided tester, can be distinguished from convex sets with \emph{two-sided} error using only $O(n)$ samples. To do so, one may use the standard testing-by-learning reduction of \cite{GGR98}, together with an $O(n)$ bound on the VC dimension of the anti-slabs (which are essentially the union of two halfspaces).

\paragraph*{Testing convexity in other domains.} 
Our results show that queries can be more effective than samples for testing discrete convex sets in some high-dimensional domains.
Is this true for all discrete high-dimensional domains?

\begin{question}
What are the sample and query complexities for testing discrete convexity over general hypergrids $[m]^n$?
\end{question}

Note that our techniques do not immediately generalize to larger hypergrids, so answering the last question even for the hypergrid $\{0,\pm 1,\pm 2\}^n$ requires some new ideas.

It would also be interesting to see if the gap between sample and query complexity also holds for continuous sets.

\begin{question}
    Can queries improve upon the bounds of \cite{CFSS17,HY21} for testing convex sets with samples in $\R^n$ under the Gaussian distribution? 
\end{question}

It is not clear if there is a formal connection between testing convex sets on the domain $\zpm^n$ and on the domain $\R^n$ under the standard Gaussian distribution. One might expect a connection here because the uniform distribution on $\zpm^n$ acts similarly to the Gaussian in certain ways when $n \to \infty$. But we do not see how to construct direct reductions between these two settings for the problem of convexity testing.
Also, there is an intriguing analogy between monotone subsets of $\{\pm 1\}^n$ and convex subsets of $\R^n$ in the Gaussian space~\cite{DeNS22}. How do convex subsets of $\{0,\pm 1\}^n$ fit into this analogy?

\section{Convexity on the Ternary Hypercube}

The main object of study in this paper is the \emph{ternary hypercube}, an analogue of the Boolean hypercube over the ternary set $\zpm^n$. This set can be viewed as a discrete subset of $\R^n$, as a (hyper)grid graph in which two points $\bx, \by \in \zpm^n$ are connected by an edge if and only if $\sum_{i=1}^n |x_i - y_i| = 1$, and as a poset that we will describe in more detail in the subsection below. 

The study of the ternary hypercube and more general grid graphs goes back at least to Bollob{\'a}s and Leader~\cite{BollobasL91}. As a poset, its study goes back at least to Metropolis and Rota~\cite{MetropolisR78}. The ternary hypercube appears to have some particularly elegant structure that is not necessarily shared by larger hypergrids. We describe some of these fundamental properties in the following subsections.

\subsection{The Outward-Oriented Poset}
\label{sec:outward}

We define a partial order over $\zpm^n$, which puts the origin $0^n$ as the minimum element and the corners $\{\pm 1\}^n$ as the maximum elements. 

\begin{definition} [Outward-Oriented Poset] \label{def:oop} We denote by $(\zpm^n,\preceq)$ the $n$-wise product of the partial order defined by $0 \prec 1$ and $0 \prec -1$. Equivalently, we write $\by \preceq \bx$ when $\forall i \in [n] \colon (\by_i \neq 0 ~\Longrightarrow~ \bx_i = \by_i)$. \end{definition}

The outward-oriented poset can easily be extended to a lattice (by adding a global maximum point), though since we do not need this extension we do not pursue it here.
The outward-oriented poset appears naturally in many different contexts and, as a result, has received different names. For instance, it arises in the study of the faces of the Boolean hypercube~\cite{MetropolisR78}, where it is sometimes called the ``cubic lattice'', and in the study of partial Boolean functions (see, e.g.,~\cite{Engel97}). We use the name ``outward-oriented poset'' to emphasize the fact that this poset is distinct from the partial order inherited from $\R^n$.

\begin{definition} [Upper Shadow] \label{def:upset} For any point $\by \in \zpm^n$, the \emph{upper shadow} of $\by$ is the set
\[
  \upset(\by) \define \left\{ \bx \in \zpm^n : \by \preceq \bx \right\} \,.
\]
\end{definition}

\subsection{Convexity and Witnesses of Non-Convexity}

Given a set of points $\bX \subseteq \{0,\pm 1\}^n$, we denote the convex hull of $\bX$ by
\[
\conv(\bX) := \left\{\sum_{\bx \in \bX} \lambda_{\bx} \bx \colon \sum_{\bx \in \bX} \lambda_{\bx} = 1 \text{ and } \lambda_{\bx} \geq 0 \text{, } \forall \bx \in \bX\right\} \text{.}
\]
\begin{definition} [Discrete Convexity] A set $\bS \subseteq \{0,\pm 1\}$ is \emph{convex} if $\bS = \conv(\bS) \cap \{0,\pm 1\}^n$. \end{definition}

Let $\Delta(\bS,\bT)$ denote the cardinality of the symmetric difference between $\bS$ and $\bT$. Given $\bS \subseteq \zpm^n$, we define $\dist(\bS,\mathbf{convex})$ as the minimum, over all convex sets $\bT \subseteq \zpm^n$, of $\Delta(\bS,\bT) \cdot 3^{-n}$. For brevity, we also sometimes use the notation $\eps(\bS) := \dist(\bS,\mathbf{convex})$. If $\eps(\bS) \geq \eps$ for some $\eps \in (0,1)$, then we say that $\bS$ is $\eps$-far from convex.

\begin{definition} [Violating Pairs] Consider $S \subseteq \zpm^n$. If $X \subseteq S$ and $y \in \conv(X) \cap \zpm^n$, but $y \notin S$, then we call $(X,y)$ a \emph{violating pair} for $S$. The pair is called \emph{minimal} if $y \notin \conv(X')$ for any strict subset $X' \subset X$. \end{definition}

All of our results exploit the following key property of the outward-oriented poset. This fact captures the structure of $\zpm^n$ which we use throughout the paper.

\begin{fact} \label{fact:simplex1} If a violating pair $(\bX,\by)$ is minimal, then $\bX \subseteq \upset(\by)$. \end{fact}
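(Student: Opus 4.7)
The plan is to use minimality of $(X,y)$ together with the ternary structure $x_i \in \{-1,0,+1\}$ to force the points of $X$ to agree with $y$ on every nonzero coordinate of $y$.

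First I would unpack what minimality gives. Fix any convex combination witnessing $y \in \conv(X)$, say $y = \sum_{x \in X} \lambda_x x$ with $\lambda_x \geq 0$ and $\sum_{x \in X} \lambda_x = 1$. I claim every $\lambda_x$ must in fact be strictly positive: if some $\lambda_{x_0} = 0$, then the same expression (with $x_0$ deleted) would witness $y \in \conv(X \setminus \{x_0\})$, contradicting the minimality hypothesis. So the coefficients form a strictly positive probability distribution over $X$.

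Next I would fix a coordinate $i$ with $y_i \neq 0$. Without loss of generality suppose $y_i = +1$; the case $y_i = -1$ is symmetric. The convex combination gives
\[
1 \;=\; y_i \;=\; \sum_{x \in X} \lambda_x\, x_i .
\]
Every $x_i \in \{-1,0,+1\}$ so $x_i \leq 1$, and $\sum \lambda_x = 1$, which means the inequality $\sum \lambda_x x_i \leq \sum \lambda_x \cdot 1 = 1$ is tight only when $x_i = 1$ for every $x \in X$ with $\lambda_x > 0$. By the previous paragraph this is every $x \in X$. Hence $x_i = y_i$ for every $x \in X$ at every coordinate $i$ where $y_i \neq 0$, which is exactly the definition of $y \preceq x$, giving $X \subseteq \upset(y)$.

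There is no real obstacle here; the only subtle point worth stating carefully is the interpretation of minimality, namely that it upgrades ``there exists a convex combination'' into ``every convex combination assigns positive weight to every element of $X$,'' since this is what allows the coordinate-wise extremality argument to pin down $x_i$ for every $x \in X$ rather than just for those appearing with positive coefficient in some fixed combination.
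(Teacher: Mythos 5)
Your proof is correct and follows essentially the same argument as the paper's: minimality forces all convex-combination coefficients to be strictly positive, and then fixing a coordinate $i$ with $y_i \neq 0$, the extremality of $y_i \in \{\pm 1\}$ among ternary values forces $x_i = y_i$ for every $x \in X$. The paper is a bit terser (it simply asserts $\lambda_x > 0$ without spelling out the deletion argument), but there is no substantive difference.
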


\begin{proof} We have $\by = \sum_{\bx \in \bX} \lambda_{\bx} \bx$ where $\sum_{\bx \in \bX} \lambda_{\bx} = 1$. Moreover, the minimality of $(\bX,\by)$ implies that $\lambda_{\bx} > 0$ for all $\bx \in \bX$. Now, let $i \in [n]$ be some coordinate where $\by_i \neq 0$. We need to show that $\bx_i = \by_i$ for all $\bx \in \bX$. Without loss of generality, suppose $\by_i = 1$. Thus, we have $1 = \sum_{\bx \in \bX} \lambda_{\bx} \bx_i$. If $\bx_i < 1$ for some $\bx \in \bX$, then we would have $\sum_{\bx \in \bX} \lambda_{\bx} \bx_i < 1$, which is a contradiction. \end{proof}

\begin{figure}[t] \label{fig:2d_example}
	\begin{center}
		\includegraphics[trim = 0 0 0 0, clip, scale=0.35]{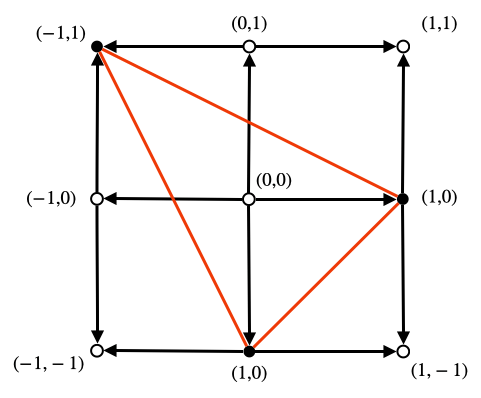}
		\caption{An illustration of $\zpm^2$. Arrows indicate the direction of the partial order. The red triangle shows the convex hull of $\bX := \{(-1,1),(1,0),(0,1)\}$, which contains the origin. I.e. $(\bX,(0,0))$ is a minimal violating pair for $\bX$.}
	\end{center}
\end{figure}

\begin{fact} \label{fact:convex-characterization} Let $\bS \subseteq \zpm^n$. The following two statements are equivalent.
\begin{itemize}
    \item $\bS$ is not convex.
    \item There exists a minimal violating pair $(X,y)$ for $S$.
\end{itemize}  
\end{fact}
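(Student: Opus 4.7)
The plan is to prove both directions directly from the definitions, using the fact that the convex hull of $\bS$ equals the union of the convex hulls of finite subsets of $\bS$, together with a finite descent argument to obtain minimality.

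For the easier direction ($\Leftarrow$), suppose $(X,y)$ is a minimal violating pair for $\bS$. Then $X \subseteq \bS$ gives $\conv(X) \subseteq \conv(\bS)$, so $y \in \conv(\bS) \cap \zpm^n$. Since $y \notin \bS$ by the definition of a violating pair, we have $\bS \neq \conv(\bS) \cap \zpm^n$, hence $\bS$ is not convex.

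For the harder direction ($\Rightarrow$), suppose $\bS$ is not convex, so there exists $y \in (\conv(\bS) \cap \zpm^n) \setminus \bS$. By the definition of $\conv(\bS)$, there is a finite set $X_0 \subseteq \bS$ and coefficients $\lambda_x \geq 0$ with $\sum_{x \in X_0}\lambda_x = 1$ and $y = \sum_{x \in X_0}\lambda_x x$; in particular $y \in \conv(X_0)$. Thus $(X_0,y)$ is a violating pair for $\bS$. Now consider the collection
\[
\mathcal{F} \define \{X \subseteq X_0 \colon y \in \conv(X)\},
\]
which is nonempty (it contains $X_0$) and consists of finite sets. Pick $X \in \mathcal{F}$ of minimum cardinality. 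Then $(X,y)$ is a violating pair, and by minimality of $|X|$, for every strict subset $X' \subsetneq X$ we have $y \notin \conv(X')$. Hence $(X,y)$ is a minimal violating pair, which is what we needed.

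The only real subtlety is ensuring that the witnessing set can be taken finite (so the descent terminates), which is immediate from the definition of $\conv(\bS)$ as the set of finite convex combinations given earlier in the excerpt. No deeper structural result about $\zpm^n$ is needed here; \cref{fact:simplex1} is not used in this proof, but rather it is a consequence of the minimality condition that becomes important in later applications.
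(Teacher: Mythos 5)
Your proof is correct and follows essentially the same route as the paper's: forward direction via $\conv(X)\subseteq\conv(S)$, and reverse direction by extracting a minimal witnessing subset. The paper simply says ``let $X\subseteq S$ be a minimal set of points such that $y\in\conv(X)$'' where you spell out the finite-descent step more explicitly; both are fine since $\zpm^n$ (and hence $S$) is finite.
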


\begin{proof} Suppose there exists a minimal violating pair $(\bX,\by)$ for $S$. Since $\bX \subseteq \bS$, we have $\conv(\bX) \subseteq \conv(\bS)$ and so $\by \in \conv(\bS)$. Thus, $\by \notin \bS$ implies $\bS$ is not convex. Now suppose $S$ is not convex. Then there exists $y \in (\conv(S) \cap \zpm^n) \setminus S$. Let $X \subseteq S$ be a minimal set of points such that $y \in \conv(X)$. The pair $(X,y)$ is a minimal violating pair for $S$. \end{proof}

\begin{fact} Consider  $\bS,\bQ \subseteq \zpm^n$. If $\bQ$ does not contain any $\bX \cup \{y\}$ such that $(\bX, \by)$ is a violating pair for $\bS$, then there exists a convex set $\bS'$ such that $\bS' \cap \bQ = \bS \cap \bQ$.
\end{fact}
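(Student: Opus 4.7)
The natural candidate is to take the discrete convex hull of $S \cap Q$, namely
$S' \define \conv(S \cap Q) \cap \zpm^n$.
By definition $S'$ is a discrete convex subset of $\zpm^n$, so the only thing to verify is that $S' \cap Q = S \cap Q$.

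One inclusion is immediate: since $S \cap Q \subseteq \conv(S \cap Q) \cap \zpm^n = S'$ and $S \cap Q \subseteq Q$, we have $S \cap Q \subseteq S' \cap Q$. For the reverse inclusion, the plan is to argue by contradiction. Suppose there is some $y \in S' \cap Q$ with $y \notin S \cap Q$; since $y \in Q$, this means $y \notin S$. By definition of $S'$, the point $y$ lies in $\conv(S \cap Q) \cap \zpm^n$, so there is a finite subset $X \subseteq S \cap Q$ with $y \in \conv(X)$. Then $X \subseteq S$, $y \in \conv(X) \cap \zpm^n$, and $y \notin S$, so $(X, y)$ is a violating pair for $S$ in the sense of the definition of violating pairs. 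Moreover, $X \subseteq Q$ and $y \in Q$, so $X \cup \{y\} \subseteq Q$, contradicting the hypothesis on $Q$.

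I expect no real obstacle here; the only subtlety is matching the definition of "violating pair" exactly (we do not need $(X,y)$ to be minimal, only to be a violating pair, which is what the hypothesis forbids). If the convention required minimality, one would instead pass to a minimal subset $X' \subseteq X$ with $y \in \conv(X')$ and apply the hypothesis to $(X', y)$, which still satisfies $X' \cup \{y\} \subseteq Q$. Either way, the proof is a direct two-line argument once the right $S'$ is chosen.
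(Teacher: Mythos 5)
Your proof is correct and follows essentially the same route as the paper: both take $S' = \conv(S \cap Q) \cap \zpm^n$ and derive a contradiction from a point $y \in S' \cap Q$ with $y \notin S$ by exhibiting a violating pair inside $Q$. The paper simply uses $X = S \cap Q$ directly as the first coordinate of the violating pair rather than passing to a smaller subset $X \subseteq S \cap Q$ with $y \in \conv(X)$, but this is an immaterial stylistic difference.
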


\begin{proof} Let $\bS' = \conv(\bS \cap \bQ)$ and consider an arbitrary $\by \in \bQ$. We need to show that $\by \in \bS$ if and only if $\by \in \bS'$. Clearly, $\by \in \bS$ implies $\by \in \bS'$. Now suppose $\by \in \bS'$ and note this implies $\by \in \conv(\bS \cap \bQ) \subseteq \conv(\bS)$. Thus, if $\by \notin \bS$, then $(\bS \cap \bQ, \by)$ is a violating pair for $\bS$ and this contradicts our assumption about $\bQ$. \end{proof}

The following corollary is crucial for proving our lower bounds in \cref{sec:LB_nonadaptive_1sided} and \cref{sec:LB_samples_1sided}.

\begin{corollary} \label{cor:witness} Let $T$ be a convexity tester for sets $\bS \subseteq \zpm^n$ with $1$-sided error. Suppose $T$ rejects a set $\bS$ after querying a set $\bQ$.
Then $\bQ$ contains some $X \cup \{y\}$ such that $(X,y)$ is a minimal violating pair for $\bS$. \end{corollary}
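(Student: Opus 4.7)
The plan is to prove the contrapositive via the previous fact, which states that if $\bQ$ does not contain any $\bX \cup \{\by\}$ forming a violating pair for $\bS$, then there is a convex set $\bS'$ that agrees with $\bS$ on all of $\bQ$. First I would set up the one-sided error argument: because $T$ has one-sided error, $T$ accepts every convex set with probability $1$, so on any execution path that leads $T$ to reject after querying $\bQ$, the answers $\bS|_{\bQ}$ must be inconsistent with every convex set (otherwise there would be a convex set producing the same transcript, which $T$ would be forced to accept).

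Next I would combine this with the preceding fact. Suppose for contradiction that $\bQ$ contains no subset of the form $\bX \cup \{\by\}$ with $(\bX, \by)$ a violating pair for $\bS$. Then the fact produces a convex $\bS'$ with $\bS' \cap \bQ = \bS \cap \bQ$. Since $T$ only sees $\bS$ through its answers on $\bQ$, the transcript for $\bS$ is identical to the transcript for $\bS'$ on the same random choices of $T$; hence $T$ would reject $\bS'$ on that execution, contradicting the one-sided guarantee. Therefore $\bQ$ must contain some $\bX \cup \{\by\}$ with $(\bX, \by)$ a (possibly non-minimal) violating pair for $\bS$.

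Finally I would promote this to a \emph{minimal} violating pair. Given a violating pair $(\bX, \by)$ with $\bX \cup \{\by\} \subseteq \bQ$, take $\bX' \subseteq \bX$ to be an inclusion-minimal subset with $\by \in \conv(\bX')$; such an $\bX'$ exists because the collection of subsets $\bX'' \subseteq \bX$ with $\by \in \conv(\bX'')$ is nonempty and finite. Then $(\bX', \by)$ is a minimal violating pair, and $\bX' \cup \{\by\} \subseteq \bX \cup \{\by\} \subseteq \bQ$, which is exactly what we want.

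The only subtle step is the one-sided error argument making sure we are comparing $T$'s behaviour on $\bS$ and $\bS'$ on the \emph{same} random choices; this is routine since $T$'s rejection event depends only on its random coins and the answers $\bS|_{\bQ}$, which equal $\bS'|_{\bQ}$ by construction. The rest is bookkeeping via \Cref{fact:simplex1} and the preceding unnamed fact.
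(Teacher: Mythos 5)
Your proposal is correct and matches the intended argument (the paper states this corollary without a written proof, relying on the preceding fact in exactly the way you spell out). You also correctly notice the small gap the paper glosses over --- the preceding fact only yields a violating pair, not a minimal one --- and you close it cleanly by passing to an inclusion-minimal subset.
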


\subsection{Concentration of Mass in the Ternary Hypercube} \label{sec:prelims_concentration}

For $\bx \in \zpm^n$, observe that $\norm{\bx}_1 = \norm{\bx}_2^2$ is precisely the number of non-zero coordinates of $\bx$. Moreover, each coordinate of a uniformly random $\bx$ is non-zero with probability $2/3$, and so $\Exu{\bx \in \zpm^n}{\norm{\bx}_1} = \frac{2n}{3}$. Standard concentration inequalities yield the following bound on the number of points $\bx \in \zpm^n$ where $\norm{\bx}_1$ is far from this expectation.

\begin{fact} \label{fact:chernoff} For every $\tau \geq 0$, 
\[
\Pru{\bx \in \zpm^n}{\left|\norm{\bx}_1 - \frac{2n}{3}\right| > \tau} \leq 2\exp(-\tau^2/2n) \text{.}
\]
\end{fact}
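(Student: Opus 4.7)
The plan is to reduce the concentration statement to a standard Chernoff/Hoeffding bound on a sum of independent bounded random variables. The observation immediately preceding the fact already supplies the key structural reformulation: for $\bx \in \zpm^n$, the quantity $\norm{\bx}_1$ is exactly the number of non-zero coordinates, i.e.\ $\norm{\bx}_1 = \sum_{i=1}^n Z_i$, where $Z_i \define \id[\bx_i \neq 0]$. When $\bx$ is uniform on $\zpm^n$, the random variables $Z_1, \dotsc, Z_n$ are independent, each Bernoulli with parameter $2/3$, and each takes values in $[0,1]$.

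The first step will be to formally note this decomposition into i.i.d.\ bounded variables with mean $\tfrac{2}{3}$ and appeal to Hoeffding's inequality, which yields
\[
\Pr\left[\left|\sum_{i=1}^n Z_i - \tfrac{2n}{3}\right| > \tau\right] \leq 2\exp\!\bigl(-2\tau^2/n\bigr),
\]
for every $\tau \geq 0$. Since $2\exp(-2\tau^2/n) \leq 2\exp(-\tau^2/(2n))$, this already implies the claimed bound with room to spare, which is presumably why the authors state the weaker constant (it suffices for all applications in the paper and can alternatively be obtained from any one-sided Chernoff bound combined with a union bound over the upper and lower tails).

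There is essentially no obstacle here: the argument is a direct instantiation of a textbook tail inequality once the coordinate-indicator decomposition is written down. The only small decision is which form of the Chernoff/Hoeffding inequality to quote; I would quote Hoeffding for bounded i.i.d.\ variables since it is the cleanest, but either works. No properties specific to convexity or to the outward-oriented poset are needed for this fact; it is a standalone concentration statement about the uniform distribution on $\zpm^n$ recorded for later use in Sections that analyze random points by conditioning on their Hamming weight lying in $\tfrac{2n}{3} \pm O(\sqrt n)$.
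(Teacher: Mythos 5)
Your proof is correct and follows essentially the same route as the paper: decompose $\norm{\bx}_1$ into the sum of independent Bernoulli($2/3$) coordinate-indicators and apply Hoeffding. You also correctly note that Hoeffding actually gives the stronger bound $2\exp(-2\tau^2/n)$, which immediately implies the stated $2\exp(-\tau^2/2n)$.
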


\begin{proof} We have $\norm{\bx}_1 = \sum_{i=1}^n X_i$ where $X_i = 1$ with probability $2/3$ and $X_i = 0$ with probability $1/3$. Thus, the bound follows immediately from Hoeffding's inequality. \end{proof}

Given $\tau \geq 0$, we use the following notation to denote the inner, middle, and outer layers of $\zpm^n$ with respect to distance $\tau$:
\begin{align} \label{eq:inn_mid_out}
   &\inlayers(\tau) := \left\{\bx \colon \norm{\bx}_1 - \frac{2n}{3} < - \tau \right\}\text{, } \nonumber \\
   &\midlayers(\tau) := \left\{\bx \colon \left|\norm{\bx}_1 - \frac{2n}{3}\right| \leq \tau \right\}\text{, } \nonumber \\
   &\outlayers(\tau) := \left\{\bx \colon \norm{\bx}_1 - \frac{2n}{3} > \tau \right\} \text{.}
\end{align}

\begin{figure}[t] \label{fig:zpm_layered}
	\begin{center}
		\includegraphics[trim = 0 0 0 3, clip, scale=0.14]{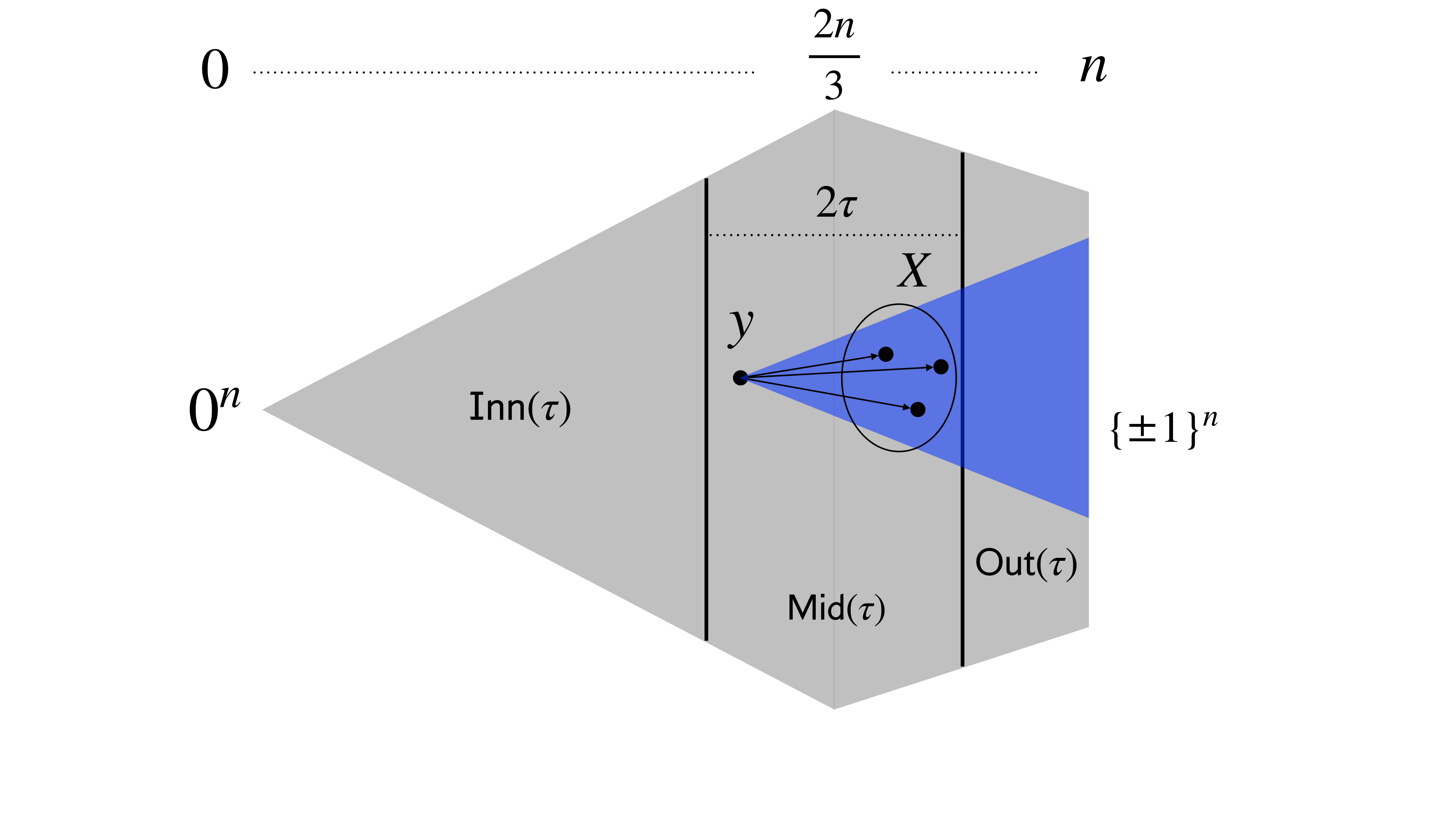}
		\caption{This figure shows a pictorial representation of $\zpm^n$ as a poset. Any vertical slice represents the set of all points with some fixed number of non-zero coordinates, and this number is increasing from left to right. The left-most point is the origin and the right-most points are the vertices of the hypercube $\{\pm 1\}^n$. The outward-oriented poset goes from left to right. The shaded blue region emanating from $\by$ is the set $\upset(\by)$ of points above $\by$ in the partial order. The set $X$ represents some minimal set of points for which $y \in \conv(X)$ and thus $\by \prec \bx$ for all $\bx \in \bX$, by \cref{fact:simplex1}.}
	\end{center}
\end{figure}

\section{The Influence of Convex Sets}
\label{section:influence}

In this section we prove that the maximum edge boundary of convex sets in $\zpm^n$ is $\widetilde \Theta(n^{3/4}) \cdot 3^n$, or equivalently that the influence is $\widetilde \Theta(n^{3/4})$.

\subsection{Upper Bound}

We prove that convex sets in the ternary hypercube have influence $\widetilde
O(n^{3/4})$. The main idea in the proof is to relate the influence of a convex set $S$ to the number
of sign-changes in the maximum of a set of one-dimensional random walks. The proof will consider a
random walk $\bm{X}^{(0)}, \bm{X}^{(1)}, \dotsc, \bm{X}^{(m)}$ starting from a random position in
the middle layer of the ternary hypercube and moving randomly ``outward'' for $m = O\left(\sqrt
\frac{n}{\log n} \right)$ steps, and count the number of influential edges crossed near the ``middle
layers'' by relating them to one-dimensional random walks. We begin in \Cref{section:random-walks} with definitions regarding the
one-dimensional random walks that we require and then in \Cref{section:inf-upperbound-proof} show how they relate to the number of
influential edges of $S$; finally, in \Cref{section:max-walk-crossing-bound} we prove the necessary bound on the number of sign-changes of the
one-dimensional random walks.

\emph{Notation.} In this section it will be convenient to use bold letters like $\bm{X}$ for random
variables, with the non-bold letter $X$ being reserved for a fixed instantiation of $\bm{X}$.

\subsubsection{One-Dimensional Random Walks and the Max-Walk} \label{section:random-walks}

Let us define the types of one-dimensional random walks that will be necessary for our proof.

\begin{definition} [Random Walks] \label{def:random-walk} Let $x = (x_1, \dotsc, x_m) \in \R^m$. Fix any permutation $\sigma : [m] \to [m]$ and sign vector
$\varepsilon
= (\varepsilon_1, \dotsc, \varepsilon_m) \in \pmset^m$. For any $a \in \R$, we define the function $W_{x}^{+a}(t ;
\sigma, \varepsilon)$ for $t \in \{0\} \cup [m]$ as
\[
  W_{x}^{+a}(t; \sigma, \varepsilon) \define \begin{cases}
    a &\text{ if } t = 0 \\
    a + \sum_{i=1}^t \varepsilon_i x_{\sigma(i)} &\text{ if } t > 0 \,.
  \end{cases}
\]
The random walk $\bm{W}_x^{+a}$ is defined by choosing a uniformly random permutation $\bm \sigma$ and vector $\bm \varepsilon \sim \pmset^m$ and
setting $\bm{W}_x^{+a}(t) = W_x^{+a}(t; \bm \sigma, \bm \varepsilon)$ for every $t$. If $a=0$ we drop the superscript.
\end{definition}

The main quantity of interest to us is the number of sign-changes of a random walk, defined as
follows.

\begin{definition}[Crossing Number]
Let $W : \{0\} \cup [m] \to \R$ be any sequence. We define the \emph{crossing number} $C(W)$ as the
number of sign-changes of $W$, defined as the number of times $t \in [m]$ such that either
$W(t) \geq 0 > W(t-1)$ or $W(t) < 0 \leq W(t-1)$.
\end{definition}

An important feature of our random walks will be that they have the Distinct Subset-Sum (DSS).

\begin{definition} [DSS Random Walk] \label{def:SA-walk}
We say a sequence $x \in \R^m$ has the Distinct Subset-Sum (DSS) property if for every two disjoint
subsets $A, B \subseteq [m]$, it holds that $\sum_{a \in A} x_a \neq \sum_{b \in B} x_b$. In
particular, the random walk $\bm{W}_x$ satisfies
\[
  \forall t \in [m]\,,\; \Pru{\bm{\sigma}, \bm{\varepsilon}}{ W_x(t; \bm{\sigma},\bm{\varepsilon}) = 0} = 0 \,.
\]
Note that, if $x$ has the DSS property, then so does any subsequence of $x$.
\end{definition}

We will require an upper bound on the crossing number of \emph{max-walks}, which are random walks
defined as the maximum of a set of \emph{constituent} walks of the type defined above.

\begin{definition}[Max-Walk] \label{def:max-walk}
Let $X$ be a set of sequences $x \in \R^m$, and let $a : X \to \R$. For a fixed permutation $\sigma$
and vector $\eps \in \pmset^m$, define
\[
  M_X^{+a}(t ; \sigma, \varepsilon) \define \max_{x \in X} W_x^{+a(x)}(t ; \sigma, \varepsilon) \,,
\]
and let the random walk $\bm{M}_X^{+a}$ be defined as
\[
  \bm{M}_X^{+a}(t) \define M_X^{+a}(t ; \bm{\sigma}, \bm \varepsilon)
\]
where $\bm\sigma, \bm \varepsilon$ are chosen uniformly at random.
\end{definition}

The main fact about max-walks that we require is the following, which we prove in
\cref{section:max-walk-crossing-bound}.

\begin{lemma}[Max-Walk Crossing Number]
\label{lemma:max-walk-crossing-number}
Let $X$ be a set of sequences $x \in \R^m$, each having the DSS property, and let $a : X \to \R$.
Then
\[
  \Ex{ C(\bm{M}_X^{+a}) } = O(\sqrt m) \,.
\]
\end{lemma}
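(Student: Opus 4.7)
The plan is to reduce the crossing number of the max-walk $\bm{M}_X^{+a}$ to that of a single DSS random walk, then apply Sparre Andersen's fluctuation theorem as formulated in \cite{BB23}, which yields $\mathbb{E}[C(\bm{W}_x)] = O(\sqrt m)$ for any fixed $x$ with the DSS property. The central difficulty is to avoid a $|X|$ factor in the bound, since $|X|$ could be arbitrarily large and a naive union bound over constituent walks would be useless.

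As a warm-up I would first establish a \emph{witness property}: every sign change of $\bm{M}_X^{+a}$ at a time $t$ coincides with a sign change of some constituent walk $\bm{W}_x^{+a(x)}$ at the same time. For a $+$-to-$-$ transition, the walk $x^\star \in \arg\max_x \bm{W}_x^{+a(x)}(t-1)$ is $\geq 0$ at $t-1$ (by being the max) and must be $<0$ at $t$ (since the max at $t$ is below zero). Symmetric reasoning with $x^\star \in \arg\max_x \bm{W}_x^{+a(x)}(t)$ handles $-$-to-$+$ transitions. This tells us the sign changes are ``localized'' at sign changes of the currently active walk, though it does not yet bound them by $O(\sqrt m)$ without paying a factor of $|X|$.

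The main step is to reduce to a single random walk. My approach would be to work directly with $\bm{M}$ itself, leveraging the fact that all constituent walks share the random permutation $\bm{\sigma}$ and sign vector $\bm{\varepsilon}$. Consequently the joint trajectory $(\bm{W}_x^{+a(x)})_{x \in X}$ is distributionally invariant under permutations of the time indices---an exchangeability property that underpins Sparre Andersen's theorem. The plan would be to extend the standard Sparre Andersen identity (the location of the maximum over $\{0,\ldots,m\}$ has the same distribution as the number of positive partial sums) from single walks to the max-walk using this joint exchangeability, yielding an arcsine-law-type bound of $O(\sqrt m)$ independently of $|X|$.

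The hard part will be making this extension rigorous, since the increments of $\bm{M}$ are not of the simple form $\bm{\varepsilon}_t x_{\bm{\sigma}(t)}$ for a fixed $x$ but instead depend on which constituent walk currently attains the max. I would handle this either by (i) invoking a generalized form of Sparre Andersen from \cite{BB23} that applies to max-walks directly, or (ii) constructing a single auxiliary DSS walk whose sign changes dominate those of $\bm{M}$ almost surely, via a careful coupling that exploits the shared randomness $(\bm{\sigma}, \bm{\varepsilon})$ to splice together segments of the active constituent walks into one process with exchangeable increments. Either way, the combinatorial core of the argument is the arcsine identity for exchangeable increments with no vanishing subset sums---which the DSS hypothesis guarantees for each constituent walk---and the $O(\sqrt m)$ bound then follows from the standard estimate $\sum_{t=1}^m \frac{1}{\sqrt t} = O(\sqrt m)$.
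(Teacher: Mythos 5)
Your warm-up observation (a sign change of $\bm{M}_X^{+a}$ at time $t$ must coincide with a sign change of some constituent walk at time $t$) is correct, and your diagnosis of the central obstacle---avoiding a factor of $|X|$ by exploiting that all constituents share $(\bm\sigma, \bm\varepsilon)$ and invoking Sparre Andersen---matches the spirit of the paper's argument. But the mechanism you propose for closing the gap is where the proposal stalls, and I think neither of your two suggested routes goes through as stated.

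Route (i), a ``generalized Sparre Andersen for max-walks,'' does not appear in \cite{BB23}: Proposition 4.1 there is an arcsine-type identity for a \emph{single} walk with the DSS property. Route (ii), splicing the segments of the locally maximizing constituents into one auxiliary DSS walk, faces a real difficulty: which constituent is ``active'' at each time depends on the realization of $(\bm\sigma, \bm\varepsilon)$, so the spliced increments are neither exchangeable nor drawn from a fixed finite multiset with distinct subset-sums---the very hypotheses that make Sparre Andersen apply. Without a concrete construction, the coupling is a hope rather than a proof.

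What the paper actually does is more indirect but avoids both problems. Instead of reducing $C(\bm{M})$ to the crossing count of any single walk, it introduces an intermediate combinatorial quantity, the number of \emph{downwards level returns} $L_\downarrow(\bm{M})$, and bounds $\Ex{L_\downarrow(\bm{M}_X^{+a})}$ by the expected count $\Ex{\bm{Q}^{(m)}}$ of a one-dimensional renewal process whose increment distribution is exactly the Sparre Andersen excursion-length law $\Pr{\bm{R}=t} = g(t-1)-g(t)$. The key step (\cref{prop:max-walk-bound-q}) is an induction on $m$: when $\bm{M}$ first drops below its current level, the constituent that was realizing the max must have dropped too, at a time $\bm{s}'_1 \le \bm{s}_1$, and $\bm{s}'_1$ has the Sparre Andersen distribution $\bm{R}$; this lets the $|X|$-independent renewal bound propagate. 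The renewal quantity $\Ex{\bm{Q}^{(m)}} = O(\sqrt m)$ is then established (\cref{prop:sa-level-decrease,prop:sa-q-bound,lemma:extended-crossings-uniform}) by identifying it with the expected number of ``level decreases'' of a single perturbed walk and comparing that, pathwise, to the number of visits of the standard $\pm 1$ simple walk to $\{0,\pm 1\}$. So the paper does ultimately reduce to a single walk---but it is the fixed unit walk $\bm{W}_{\vec 1}$, reached through the renewal-process intermediary rather than a direct coupling, and the quantity that gets bounded is a level-return count, not the crossing number itself. Your proposal is missing this renewal/induction layer, which is the actual engine of the argument.
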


\subsubsection{Upper Bound on the Number of Influential Edges of a Convex Set} \label{section:inf-upperbound-proof}

We now prove the following upper bound on the influence of any convex set in the ternary hypercube,
restated below for convenience.

\thminfbound*

We require the following basic property of discrete convex sets.

\begin{proposition}
\label{prop:convex-set-to-nice-halfspaces}
Let $S \subseteq \zpm^n$ be any discrete convex set. Then there is a finite set of vectors $V
\subseteq \R^n$ and thresholds $\tau : V \to \R$, where each $v \in V$ defines a halfspace $H_v \define
\{ x \in \zpm^n \;:\; \inn{v,x} < \tau(v) \}$, such that $S = \bigcap_{v \in V} H_v$. One may also
assume that $V$ satisfies the property that, for every $v \in V$ and every two disjoint subsets $A,
B \subseteq [n]$, $\sum_{i \in A} v_i \neq \sum_{j \in B} v_j$.
\end{proposition}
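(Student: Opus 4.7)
The plan is to prove the proposition in two stages: first, to obtain any representation of $S$ as a finite intersection of open halfspaces using classical convex analysis; second, to perturb each halfspace normal slightly to enforce the DSS condition while preserving each halfspace's intersection with $\zpm^n$.

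For the first stage, since $S \subseteq \zpm^n$ is finite, $\conv(S)$ is a polytope, and by the Minkowski--Weyl theorem it equals a finite intersection $\bigcap_{v \in V_0} \{x \in \R^n : \langle v, x\rangle \leq \tau_0(v)\}$ of closed halfspaces. Because $\{\langle v, x\rangle : x \in \zpm^n\}$ is finite for each $v \in V_0$, I can nudge $\tau_0(v)$ upward by some $\eta_v > 0$ so that no $x \in \zpm^n$ falls in the thin strip $(\tau_0(v), \tau_0(v)+\eta_v]$; setting $\tau(v) = \tau_0(v) + \eta_v$ then yields $H_v = \{x \in \zpm^n : \langle v, x\rangle < \tau(v)\}$ with $S = \bigcap_{v \in V_0} H_v$, giving the desired \emph{strict} inequalities.

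For the second stage, observe that DSS for $v$ is equivalent to $\langle \lambda, v\rangle \neq 0$ for every nonzero $\lambda \in \zpm^n$, and the set of vectors violating this condition is a union of $3^n - 1$ hyperplanes in $\R^n$, hence has Lebesgue measure zero. On the other hand, the gap $g_v = \min_{x \in \zpm^n} |\langle v, x\rangle - \tau(v)|$ is strictly positive by construction of $\tau(v)$, and any perturbation $v'$ with $\|v' - v\|_\infty < g_v / n$ satisfies $|\langle v' - v, x\rangle| < g_v$ for every $x \in \zpm^n$, so replacing $v$ by $v'$ (with the same threshold $\tau(v)$) leaves $H_v$ unchanged. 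Since the measure-zero DSS-violating set cannot contain the open ball of radius $g_v/n$ around $v$, I can pick such a $v'$ that also satisfies DSS; doing this independently for each $v \in V_0$ yields the desired finite set $V$.

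I do not anticipate a substantive obstacle: the only real care required is the bookkeeping that controls the strict gaps $\eta_v$ and perturbation radii so that the halfspaces are preserved on both sides under the perturbation. One minor subtlety is that the DSS condition as literally stated fails when $A = B = \emptyset$, so I adopt the intended reading --- consistent with how DSS is used in \cref{def:SA-walk} to ensure $W_x(t) \neq 0$ almost surely --- which is precisely the reformulation "$\langle \lambda, v\rangle \neq 0$ for all nonzero $\lambda \in \zpm^n$" that I used above.
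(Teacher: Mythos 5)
Your proposal is correct and takes essentially the same approach as the paper: write $\conv(S)$ as a finite intersection of halfspaces chosen so that no point of $\zpm^n$ lies on any bounding hyperplane, then perturb each normal $v$ within the resulting positive gap (keeping the threshold fixed) so that the intersection with $\zpm^n$ is unchanged while DSS is achieved. The only cosmetic differences are that you make the Minkowski--Weyl step and the threshold-nudge explicit, and you phrase the perturbation step deterministically (a measure-zero union of hyperplanes cannot cover an open ball) rather than via a random perturbation succeeding with probability~$1$.
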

\begin{proof}
Since $S$ is the intersection of its convex hull $\conv(S)$ with $\zpm^n$, it may be written as the
intersection of $\zpm^n$ with a finite set of halfspaces with normal vectors $V$ and thresholds $\tau : V \to \R$, and one may assume that none of the points in $\zpm^n$ lie on the hyperplane boundary of
any of the halfspaces. Then there is some $\delta > 0$ such that the minimum distance between a
hyperplane and a point of $\zpm^n$ is at least $\delta \cdot n$. For each $v \in V$, apply independent
random perturbations to each coordinate to obtain $v'_i = v_i + r_i$ where $r_i$ is drawn from
$[-\delta,\delta]$ uniformly at random. With probability 1, the resulting set $V' = \{ v' \;:\; v
\in V \}$ satisfies the required conditions.
\end{proof}

\begin{proof}[Proof of \cref{thm:inf-bound}] Recall the definition of the edge-set $E$ of the ternary cube from \cref{eq:edges} and the set $\midlayers(\ell)$ from \cref{eq:inn_mid_out}. Given $\ell > 0$, let 
\[
E_{\ell} = \left\{(u,v) \in E \colon u,v \in \midlayers(\ell)\right\}
\]
denote the set of edges lying in the middle $\ell$ layers of $\zpm^n$.
We consider the following process which samples a random edge in $\zpm^n$. Define $\ell := \sqrt{2n\log n}$ and $m := \sqrt{\frac{n}{\log n}}$. Let $\cD$ denote the distribution over edges defined by the following procedure.
\begin{enumerate}
    \item Sample $\bm{X}^{(0)} \sim \midlayers(\ell)$.
    \item Choose a random subset $\bm{T} \subseteq \{i \colon \bm{X}^{(0)}_i = 0\}$ with $|\bm{T}| = m$ of coordinates where $\bm{X}^{(0)}$ has a $0$. 
    \item Let $\bm{\eps} = (\bm{\eps}_1,\ldots,\bm{\eps}_m) \in \{\pm 1\}^m$ be independent Rademacher random variables and let $\bm{\sigma} \colon [m] \to \bm{T}$ be a random bijection.
    \item For each $s \in [m]$, let $\bm{X}^{(s)} = \bm{X}^{(s-1)} + \bm{\eps}_s e_{\bm{\sigma}(s)} =
\bm{X}^{(0)} + \sum_{i=1}^s \bm{\eps}_i e_{\bm{\sigma}(i)}$ where $e_j$ is the unit vector with a $1$ in coordinate $j$.
    \item Choose $\bm{s} \sim [m]$ and return the edge $(\bm{X},\bm{Y})
      = (\bm{X}^{(s-1)},\bm{X}^{(s)})$.
\end{enumerate}
Note that the above process can be equivalently defined as obtaining $\bm{X}^{(s)}$ by selecting a
uniform random coordinate $\bm{i}$ where $\bm{X}^{(s-1)}_{\bm{i}} = 0$ and flipping that bit to a
random value in $\{\pm 1\}$, with equal probability. This results in a random walk
$\bm{X}^{(0)},\bm{X}^{(1)}, \ldots, \bm{X}^{(m)}$ of length $m$ where each $(\bm{X}^{(s-1)},\bm{X}^{(s)})$ is a
random out-going edge from $\bm{X}^{(s-1)}$. We use two main claims regarding this random walk to complete the proof of the theorem. The first is that choosing an edge $(\bm{X},\bm{Y})
\sim \cD$ is approximately the same as choosing a uniformly random edge from the middle layers. 

\begin{claim} \label{clm:unif} Fix any $z \in \midlayers(\ell)$ and $s \in [m]$. Then $\pr[\bm{X}^{(s)} = z] = \Theta(3^{-n})$. I.e., each step of the random walk is approximately uniformly distributed over $\midlayers(\ell)$. As a corollary, for any fixed edge $(u,v) \in E_{\ell}$, we have 
\[
\pr_{(\bm X,\bm Y) \sim \cD}[(\bm X,\bm Y) = (u,v)]
    = \Theta\left(\frac{1}{n \cdot 3^n}\right) \text{.}
\]
\end{claim}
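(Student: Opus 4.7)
My plan is to compute $\Pr[\bm{X}^{(s)} = z]$ by directly enumerating the trajectories that can end at $z$. The walk has the structure $\|\bm{X}^{(s)}\|_1 = \|\bm{X}^{(0)}\|_1 + s$, so setting $k := \|z\|_1$, we must have $\|\bm{X}^{(0)}\|_1 = k-s$, and $\bm{X}^{(0)}$ must be obtained from $z$ by zeroing out $s$ of $z$'s $k$ nonzero coordinates. There are $\binom{k}{s}$ such candidate starting points, and all of them lie in $\midlayers(\ell)$ whenever $k - s \geq \frac{2n}{3} - \ell$ (for $z$ in the thin inner shell where this fails, $\Pr[\bm{X}^{(s)} = z] = 0$; by \cref{fact:chernoff} this shell comprises a $2/n$-fraction of $\midlayers(\ell)$).

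Next I would write
\[
\Pr\bigl[\bm{X}^{(s)} = z\bigr] \;=\; \binom{k}{s} \cdot \frac{1}{|\midlayers(\ell)|} \cdot \frac{s!}{2^s \prod_{i=1}^s (n-k+i)} \;=\; \frac{1}{|\midlayers(\ell)|}\prod_{i=1}^s \frac{k-i+1}{2(n-k+i)},
\]
where the second factor in the first equality is the probability of reaching $z$ from any fixed candidate $x_0$: there are $s!$ orderings of which zero-coordinate is filled at each step, and each ordering occurs with probability $\prod_{i=1}^s \tfrac{1}{2(n-k+i)}$ because at step $i$ the state has $n-k+i$ zero coordinates, from which the correct one must be picked (probability $\tfrac{1}{n-k+i}$) and flipped to the correct sign (probability $\tfrac{1}{2}$).

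The heart of the calculation is to show the product is $\Theta(1)$. Since $z \in \midlayers(\ell)$ we have $k = \frac{2n}{3} + O(\ell)$, and for $i \leq s \leq m$ both $k-i+1$ and $n-k+i$ lie in $\frac{n}{3} \pm O(\ell)$, so each factor equals $1 + O(\ell/n) = 1 + O(\sqrt{\log n / n})$. With our choice $m = \sqrt{n/\log n}$, the product of $s \leq m$ such factors is $\exp\bigl(O(m \cdot \sqrt{\log n / n})\bigr) = \exp(O(1)) = \Theta(1)$. Combined with $|\midlayers(\ell)| = (1-o(1))\cdot 3^n = \Theta(3^n)$ from \cref{fact:chernoff} applied at $\ell = \sqrt{2n\log n}$ (so $2e^{-\ell^2/2n} = 2/n$), this yields $\Pr[\bm{X}^{(s)} = z] = \Theta(3^{-n})$.

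For the corollary, I would decompose
\[
\Pr_{(\bm{X},\bm{Y})\sim\cD}\bigl[(\bm{X},\bm{Y}) = (u,v)\bigr] \;=\; \frac{1}{m}\sum_{s=1}^m \Pr\bigl[\bm{X}^{(s-1)} = u\bigr] \cdot \frac{1}{2(n-\|u\|_1)},
\]
using that conditional on $\bm{X}^{(s-1)} = u$, the edge $(u,v)$ arises exactly when the next flip hits the unique coordinate where $v$ and $u$ differ, with the correct sign. By the first part every summand is $\Theta(3^{-n})\cdot\Theta(1/n)$, and for $u$ outside the inner-boundary shell there are $\Theta(m)$ valid values of $s$, so the sum divided by $m$ is $\Theta(1/(n \cdot 3^n))$. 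The main obstacle throughout is handling the thin shell near the inner boundary of $\midlayers(\ell)$ where not all $s$ are reachable; the key point is that this shell has only a $2/n$-fraction of $\midlayers(\ell)$'s mass by \cref{fact:chernoff}, so the claim is accurate up to constants on the overwhelming majority of inputs, which is all that is needed in the downstream argument.
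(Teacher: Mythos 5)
Your proposal is correct and follows essentially the same approach as the paper. Your formula $\prod_{i=1}^s \frac{k-i+1}{2(n-k+i)}$ is literally the same quantity as the paper's $\frac{1}{2^s}\binom{n}{k-s}\binom{n}{k}^{-1}$; you obtain it by summing over the $\binom{k}{s}$ possible starting points while the paper instead uses the layer-symmetry of the walk (conditioned on $\|\bm{X}^{(0)}\|_1 = k-s$, the point $\bm{X}^{(s)}$ is uniform on the shell of weight $k$), but the subsequent factor-by-factor $1 + O(\ell/n)$ bound on the product of at most $m$ terms is identical. Two small remarks: your description of the step-$i$ probability has the number of zeros increasing with $i$ when it actually decreases from $n-k+s$ to $n-k+1$ as the walk proceeds (the product is order-independent so the final formula is unaffected), and your parenthetical "$k-i+1$ and $n-k+i$ lie in $\frac{n}{3} \pm O(\ell)$" should read $\frac{2n}{3}$ for the numerator. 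You are also correct — and more explicit than the paper — that the claim as literally stated fails for $z$ with $\|z\|_1 < \frac{2n}{3}-\ell+s$ (probability $0$ there); the fix is as you indicate, to observe that this inner shell has $O(1/n)$ mass so the corresponding edges contribute only $O(1)$ to the influence and can be absorbed into the $E \setminus E_\ell$ term in the proof of Theorem 3.7.
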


The second claim is that the probability of $(\bm X, \bm Y) \sim \cD$ being an influential edge is small.

\begin{claim}
\label{lem:prob_D}
$\pr_{(\bm X,\bm Y) \sim \cD}[S(\bm X) \neq S(\bm Y)] \leq O\left(\frac{1}{\sqrt m}\right)$.
\end{claim}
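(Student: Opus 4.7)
The plan is to bound the crossing number of an auxiliary one-dimensional max-walk and then invoke \cref{lemma:max-walk-crossing-number}. Using \cref{prop:convex-set-to-nice-halfspaces}, I would write $S = \bigcap_{v \in V} H_v$ where $V \subseteq \R^n$ is a finite set, each $H_v = \{x \in \zpm^n : \langle v, x\rangle < \tau(v)\}$, and each $v \in V$ has the DSS property of \cref{def:SA-walk}. For each $v$ define the one-dimensional process
\[
\bm W_v(s) \;:=\; \langle v, \bm X^{(s)}\rangle - \tau(v), \qquad s \in \{0\} \cup [m],
\]
and let $\bm M(s) := \max_{v \in V} \bm W_v(s)$. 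Since $x \in S$ iff $\langle v, x\rangle - \tau(v) < 0$ for every $v$, we have $\bm X^{(s)} \in S \iff \bm M(s) < 0$, so the number of step indices $s \in [m]$ at which $S(\bm X^{(s-1)}) \neq S(\bm X^{(s)})$ is exactly the crossing number $C(\bm M)$.

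Next, I would condition on the initial point $\bm X^{(0)}$ and on the flip-set $\bm T$, leaving the randomness of the bijection $\bm \sigma : [m] \to \bm T$ and of $\bm \eps \in \pmset^m$. Ordering $\bm T$ canonically (and relabelling $\bm \sigma$ as a permutation of $[m]$), let $x(v) \in \R^m$ denote the restriction of $v$ to $\bm T$ and set $a(x(v)) := \langle v, \bm X^{(0)} \rangle - \tau(v)$. Unrolling $\bm X^{(s)} = \bm X^{(s-1)} + \bm \eps_s e_{\bm \sigma(s)}$ yields
\[
\bm W_v(s) \;=\; a(x(v)) + \sum_{i=1}^s \bm \eps_i\, x(v)_{\bm \sigma(i)} \;=\; W_{x(v)}^{+a(x(v))}(s;\bm \sigma,\bm \eps),
\]
so $\bm M$ coincides with the max-walk $\bm M_X^{+a}$ of \cref{def:max-walk} for $X := \{x(v) : v \in V\}$, all constituent walks sharing the common $\bm \sigma$ and $\bm \eps$. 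Each $x(v)$ is a subsequence of a DSS sequence and is therefore itself DSS, so \cref{lemma:max-walk-crossing-number} gives $\Ex{C(\bm M) \mid \bm X^{(0)}, \bm T} = O(\sqrt m)$; averaging over $\bm X^{(0)}$ and $\bm T$ preserves the bound.

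Since $(\bm X, \bm Y) \sim \cD$ is the $\bm s$-th edge of the walk with $\bm s$ uniform on $[m]$,
\[
\Pru{(\bm X, \bm Y) \sim \cD}{S(\bm X) \neq S(\bm Y)} \;=\; \Ex{\,\tfrac{C(\bm M)}{m}\,} \;=\; O\!\left(\tfrac{1}{\sqrt m}\right),
\]
which is the claim. The main obstacle is the conditional identification of $\bm M$ with a max-walk on DSS sequences; the DSS refinement in \cref{prop:convex-set-to-nice-halfspaces} is exactly what allows \cref{lemma:max-walk-crossing-number} to apply, while the finiteness of $V$ (also supplied by that proposition) makes $\bm M$ well-defined and the expectation swap routine.
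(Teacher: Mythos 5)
Your proposal is correct and follows essentially the same approach as the paper's proof: decompose $S$ into DSS halfspaces via \cref{prop:convex-set-to-nice-halfspaces}, translate membership along the random walk into the sign of the max-walk $\bm M_X^{+a}$, invoke \cref{lemma:max-walk-crossing-number} conditionally, and divide by $m$. The one thing you make explicit that the paper leaves implicit is the passage from the ambient vector $v \in \R^n$ to its restriction $x(v) \in \R^m$ on the flip-set $\bm T$ (needed for the walk to literally match \cref{def:random-walk}), and you condition on $(\bm X^{(0)}, \bm T)$ rather than fixing $(\sigma,\varepsilon)$ first and averaging afterward; both are equivalent organizations of the same argument.
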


We defer the proof of both claims to the end of the section.
We now prove \cref{thm:inf-bound} using \cref{clm:unif} and \cref{lem:prob_D} as follows. Let $E$ denote the edges of the ternary hypercube and let $E_{\ell} = \{(u,v) \in E \colon u,v \in \midlayers(\ell)\}$. By definition,
\begin{align*}
    \Inf(S) = \frac{1}{3^n} \cdot \Big(|\{(u,v) \in E \setminus E_{\ell} \colon S(u) \neq S(v)\}| + |\{(u,v) \in E_{\ell} \colon S(u) \neq S(v)\}|\Big) \text{.}
\end{align*}
The first term is bounded using \cref{fact:chernoff} as 
\begin{align*}
    \frac{|\{(u,v) \in E \setminus E_{\ell} \colon S(u) \neq S(v)\}|}{3^n} \leq \frac{|E \setminus E_{\ell}|}{3^n} \leq 2n \cdot \frac{|\overline{\midlayers(\ell)}|}{3^n} \leq 2n \cdot 2\exp(-\ell^2/2n) = O(1) 
\end{align*}
since every vertex has degree at most $2n$. The second term is bounded as
\begin{align}
    \frac{|\{(u,v) \in E_{\ell} \colon S(u) \neq S(v)\}|}{3^n} &= \frac{|E_{\ell}|}{3^n} \cdot
\pr_{(u,v) \sim E_{\ell}} [S(u) \neq S(v)] \leq \frac{2n}{3} \cdot  \pr_{(u,v) \sim E_{\ell}} [S(u) \neq S(v)] \nonumber \\
    &\leq L n \cdot \pr_{(\bm X,\bm Y) \sim \cD} [S(\bm X) \neq S(\bm Y)] \leq L' n \cdot m^{-1/2} = L' \cdot n^{3/4} \log^{1/4} n \,, \nonumber
\end{align}
where $L, L'$ are absolute constants.  The first inequality follows simply from $E_{\ell} \subset E$
and $|E| = 2n \cdot 3^{n-1}$. The second inequality follows from \cref{clm:unif} and the third
inequality follows from \cref{lem:prob_D}. This completes the proof of \cref{thm:inf-bound}.
\end{proof}

Let us now complete the deferred proofs of \cref{clm:unif} and \cref{lem:prob_D}.

\begin{proof}[Proof of \cref{clm:unif}]
Let $\norm{z}_1 = \frac{2n}{3}+r$ where $|r| = O(\sqrt{n\log n})$. In order for $X^{(s)} = z$ to occur we
must have $\norm{X^{(0)}}_1 = \frac{2n}{3} + r - s$. Thus, the probability is
\begin{align*}
    \pr[X^{(s)} = z] &= \frac{1}{3^n} \left({n \choose \frac{2n}{3}+r-s} \cdot 2^{\frac{2n}{3}+r-s}\right) \cdot \left({n \choose \frac{2n}{3}+r} \cdot 2^{\frac{2n}{3}+r}\right)^{-1} \\
    &= \frac{1}{3^n} \cdot \frac{1}{2^s} \cdot  {n \choose \frac{2n}{3}+r-s} {n \choose \frac{2n}{3}+r}^{-1} = \Theta(3^{-n})
\end{align*}
where the last step is due to the following fact: 


\emph{If $|r| \leq O(\sqrt{n\log n})$ and $s = O(\sqrt{\frac{n}{\log n}})$, then ${n \choose
\frac{2n}{3} + r - s}{n \choose \frac{2n}{3} + r}^{-1} = \Theta(2^s)$. As a corollary, the number of
points in the ternary cube with hamming weight $\frac{2n}{3}+r-s$ and $\frac{2n}{3}+r$ differ by at
most a constant factor.}

This is proved as follows.
\begin{align*}
    \frac{{n \choose \frac{2n}{3} + r - s}}{{n \choose \frac{2n}{3} + r}}
  &= \frac{\left(\frac{2n}{3}+r\right)! \left(\frac{n}{3}-r\right)!}
      {\left(\frac{2n}{3}+r-s\right)!\left(\frac{n}{3}-r+s\right)!} 
  = \prod_{p=0}^{s-1} \frac{\frac{2n}{3}+r-p}{\frac{n}{3}-r+s-p} 
  = 2^s \cdot \prod_{p=0}^{s-1} \frac{\frac{n}{3}+\frac{r}{2}-\frac{p}{2}}{\frac{n}{3}-r+s-p} \\
  &= 2^s \cdot \prod_{p=0}^{s-1} \frac{\frac{n}{3}-r+s-p + (\frac{3r}{2}+\frac{p}{2}-s)}
                                      {\frac{n}{3}-r+s-p} 
  = 2^s \cdot \prod_{p=0}^{s-1} \left(1
      + \frac{\frac{3r}{2}+\frac{p}{2}-s}{\frac{n}{3}-r+s-p}\right)
\end{align*}
Observe that the numerator inside the product is $\pm O(\sqrt{n\log n})$ since $r$ is the dominating term and the denominator is $\Omega(n)$ since $n/3$ is the dominating term. Therefore, we have
\begin{align*}
    \frac{{n \choose \frac{2n}{3} + r - s}}{{n \choose \frac{2n}{3} + r}} = 2^s \cdot \left(1 \pm O\left(\sqrt{\frac{\log n}{n}}\right)\right)^s = \Theta(1) \cdot 2^s
\end{align*}
since $s = O(\sqrt{\frac{n}{\log n}})$. \end{proof}

\begin{proof}[Proof of \cref{lem:prob_D}.]
Let $S$ be an intersection of halfspaces $S = \bigcap_{v \in V} H_v$,
with thresholds $\tau : V \to \R$, in the form promised by \cref{prop:convex-set-to-nice-halfspaces}. In particular, each vector $v \in V$ has the DSS property (\cref{def:SA-walk}).
Fix any value of $\bm{X}^{(0)} = X^{(0)}$ and fix any permutation $\sigma$ and sign-vector
$\varepsilon = (\varepsilon_1, \dotsc, \varepsilon_m) \in \pmset^m$ in the definition of $\cD$, and
consider the resulting fixed values of $X^{(0)}, X^{(1)}, \dotsc, X^{(m)}$. Define $a : V \to \R$ as
$a(v) = \inn{v, X^{(0)}} - \tau(v)$.

For each $v \in V$, consider the sequences $W_v^{+a(v)} \define W_v^{+a(v)}( \cdot \;; \sigma, \varepsilon)$. For
each $X^{(s)}$, observe that $X^{(s)} \in H_v$ if and only if $\inn{v,X^{(s)}} < \tau(v)$, which is
equivalent to the condition $W_v(s) < 0$, since
\begin{align*}
W_v^{+a(v)}(s)
&= a(v) + \sum_{j = 1}^s \varepsilon_j \cdot v_{\sigma(j)} 
= \left(\sum_{i : X^{(0)}_i \neq 0} v_i X^{(0)}_i \right) - t(v) + \sum_{j=1}^s
X^{(s)}_{\sigma(j)} v_{\sigma(j)} \\
&= \left(\sum_{j : X^{(s)}_j \neq 0} X^{(s)}_j v_j \right) - t(v) = \inn{X^{(s)}, v} - t(v) \,.
\end{align*}
Therefore $X^{(s)} \in S$ if and only if $W_v^{+a(v)}(s) < 0$ for all $v \in V$, which is equivalent to
$M_V^{+a}(s) < 0$ where $M_V^{+a}$ is the max-walk (recall \cref{def:max-walk}). Then for fixed sequence $X^{(0)}, \dotsc,
X^{(m)}$ and uniformly random $\bm{s} \sim [m]$, the probability that $(X^{(\bm{s}-1)},
X^{(\bm{s})})$ is an influential edge is equal to $\frac{C(M_V^{+a})}{m}$. 
Therefore, taking
$\bm{\sigma}$ and $\bm{\varepsilon}$ to be random, we have
\[
  \Pru{(\bm X, \bm Y) \sim \cD}{ S(\bm X) \neq S(\bm Y) } = \frac{1}{m} \cdot \Ex{ C(\bm{M}_V^{+{\bm
a}}) } = O(\sqrt m) \,,
\]
where the final bound is due to \cref{lemma:max-walk-crossing-number}, since each vector in $V$ was
assumed to have the DSS property. This concludes the proof of the claim.
\end{proof}

\subsubsection{Crossing Bound for the Max-Walk\texorpdfstring{: Proof of \cref{lemma:max-walk-crossing-number}}{}}
\label{section:max-walk-crossing-bound}

We prove an upper bound on the number of times the maximum of a set of one-dimensional random walks
can change sign. Let us define certain special events in a random walk.

Fix any walk time $m$ and let $W : \{0\} \cup [m] \to \R$. We define:
\begin{itemize}
\item A \emph{downcrossing} of $W$ is a time $t \in [m]$ such that $W(t) < 0 \leq W(t-1)$.
      $C_\downarrow(W)$ is the number of downcrossings of $W$.
\item An \emph{upcrossing} of $W$ is a time $t \in [m]$ such that $W(t) \geq 0 > W(t)$.
      $C_\uparrow(W)$ is the number of upcrossings of $W$.
\item A \emph{downwards level return} of $W$ is any time $t$ such that either:
  \begin{itemize}
    \item If $W(0) \geq 0$ then the smallest time $t \in [m]$ such that $W(t) < W(0)$ is a downwards
      level return.
    \item For any upcrossing $s$ of $W$, the first time $t > s$ such that $W(t) < W(s)$ is a
      downwards level return.
  \end{itemize}
  We write $L_\downarrow(W)$ for the number of downwards level returns of $W$.
\item The \emph{downwards level decrease} times of $W$ is the unique sequence $s_1 < s_2 < \dotsm$
    defined inductively as follows.
  \begin{itemize}
    \item If $W(0) \geq 0$ then $s_1$ is the first time such that $W(s_1) < W(0)$. Otherwise let $t$
be the first upcrossing of $W$. Then $s_1$ is the first time such that $W(s_1) < W(t)$.
    \item For $i > 1$, if $W(s_{i-1}) \geq 0$ then $s_i \in [m]$ is the smallest time such that
        $W(s_i) < W(s_{i-1})$. Otherwise, if $W(s_{i-1}) < 0$, then let $t$ be the first upcrossing
$t > s_{i-1}$ and define $s_i$ as the first time $s_i > t$ such that $W(s_i) < W(t)$.
  \end{itemize}
  We write $S_\downarrow(W)$ for the number of downwards level decreases of $W$.
\item The \emph{upwards level increase} times of $W$ is the unique sequence $t_1 < t_2 < \dotsm$
    defined inductively as follows.
  \begin{itemize}
    \item If $W(0) < 0$ then $t_1$ is the first time such that $W(t_1) > W(0)$. Otherwise let $s$ be
the first downcrossing of $W$. Then $t_1$ is the first time such that $W(t_1) > W(t)$.
    \item For $i > 1$, if $W(t_{i-1}) < 0$ then $t_i$ is the first time such that $W(t_i) >
W(t_{i-1})$. Otherwise if $W(t_{i-1}) \geq 0$, then let $s$ be the first downcrossing $s > t_{i-1}$
and define $t_i$ as the first time $t_i > s$ such that $W(t_i) > W(s)$.
  \end{itemize}
  We write $S_\uparrow(W)$ for the number of upwards level increases of $W$.
\end{itemize}

The main technical tool in our analysis is the following version of Sparre Andersen's fluctuation
theorem~\cite{SA54}, as found in~\cite[Prop.\ 4.1]{BB23}. Recall the definition of $\bm{W}_x$ from
\cref{def:random-walk} and the DSS property from \cref{def:SA-walk}.

\begin{theorem}[Sparre Anderson; see \cite{BB23}, Proposition 4.1]
\label{thm:sa}
For every $m \in \N$,  if $x \in \R^m$ has the DSS property, then the random walk $\bm{W}_{x}$
satisfies
\[
  \Pr{ \forall t \in [m] : \bm{W}_{x}(t) > 0 } = g(m) \define \frac{1}{4^m}{2m \choose m} \,.
\]
\end{theorem}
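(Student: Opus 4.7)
The plan is to prove this via the classical Sparre Andersen combinatorial/generating-function identity, leveraging three properties of the sequence $Y_i := \varepsilon_i x_{\sigma(i)}$: (i) exchangeability, (ii) marginal symmetry around $0$, and (iii) the absence of ties in partial sums. Exchangeability is immediate: since $\bm{\varepsilon}$ is i.i.d.\ uniform on $\pmset$ and $\bm{\sigma}$ is a uniformly random permutation, the joint distribution of $(Y_1,\dots,Y_m)$ is invariant under any permutation of the indices $[m]$. Symmetry holds coordinatewise because $\varepsilon_i$ is $\pm 1$ with equal probability independent of $\sigma$, so each $Y_i$ is symmetric, and hence $\bm{W}_x(t) = Y_1+\cdots+Y_t$ has a symmetric distribution for every fixed $t$. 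Finally, the DSS property guarantees $\Pr[\bm{W}_x(t) = 0] = 0$ for all $t \in [m]$: if $\bm{W}_x(t) = 0$ for some choice of $(\sigma,\varepsilon)$, then partitioning the first $t$ indices $\{\sigma(1),\ldots,\sigma(t)\}$ according to the sign $\varepsilon_i$ would give two disjoint subsets of $[m]$ with equal partial sums of $x$, contradicting DSS.

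Given these three properties, I would invoke Sparre Andersen's identity on the generating function of $p_m := \Pr[\bm{W}_x(t) > 0 \text{ for all } t \in [m]]$. Writing $q_m := \Pr[\bm{W}_x(m) > 0]$, the identity reads
\[
\sum_{m \ge 0} p_m z^m \;=\; \exp\!\left(\sum_{m \ge 1} \frac{q_m}{m}\, z^m\right).
\]
By symmetry and the no-tie property, $q_m = \tfrac12$ for every $m \ge 1$, so the right-hand side evaluates to
\[
\exp\!\left(\tfrac12 \sum_{m \ge 1} \tfrac{z^m}{m}\right) \;=\; \exp\!\left(-\tfrac12 \ln(1-z)\right) \;=\; (1-z)^{-1/2}.
\]
Extracting the coefficient of $z^m$ from the Taylor expansion $(1-z)^{-1/2} = \sum_{m \ge 0} \binom{2m}{m} 4^{-m} z^m$ then yields $p_m = 4^{-m}\binom{2m}{m} = g(m)$, as desired. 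Alternatively, one could avoid generating functions and give a purely bijective/cycle-lemma argument: condition on $\sigma$, then pair each sign vector $\varepsilon \in \pmset^m$ with a lattice path, and use a path-swapping involution to count paths staying strictly positive.

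The main obstacle in either route is the no-tie issue. The generating-function identity of Sparre Andersen is usually stated for distributions that assign zero mass to $\{S_t = 0\}$, and in the bijective approach one needs strict inequalities $\bm{W}_x(t) > 0$ to be preserved under the involutions. Both concerns are handled by DSS, but I would want to check that DSS for $x$ really does force $\Pr[\bm{W}_x(t) = 0] = 0$ uniformly in $t$ (the argument sketched above), so that the clean ``symmetric continuous'' version of Sparre Andersen applies without modification and $q_m = \tfrac12$ holds exactly. Since the statement in the paper is quoted verbatim from \cite{SA54,BB23}, the remaining work is purely verification that our specific $\bm{W}_x$ satisfies the hypotheses of those references, which the three observations above accomplish.
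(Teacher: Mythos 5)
The paper does not actually prove this theorem: it states it and cites the proof to Sparre Andersen \cite{SA54} and \cite{BB23}, Proposition~4.1. Your proposal supplies a genuine derivation, and it is correct. Your verification of the three hypotheses is exactly what is needed to invoke the classical result: exchangeability of $(Y_1,\dots,Y_m)$ follows because composing a fixed permutation with the uniform $\bm\sigma$ leaves its law unchanged and $\bm\varepsilon$ is i.i.d.; symmetry of each $S_t$ follows from the sign flips; and your observation that $W_x(t)=0$ would produce two disjoint subsets of $[m]$ with equal $x$-sums (split $\{\sigma(1),\dots,\sigma(t)\}$ by the sign of $\varepsilon_i$) is precisely why DSS guarantees no ties. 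From there the Spitzer/Sparre Andersen generating-function identity with $q_m=\tfrac12$ gives $(1-z)^{-1/2}=\sum_m 4^{-m}\binom{2m}{m}z^m$, matching $g(m)$. This is essentially the proof one would find in \cite{BB23}. The only point worth tightening is that the generating-function identity is usually stated for an infinite exchangeable sequence; since the coefficient of $z^m$ on both sides depends only on the first $m$ steps, the finite case follows immediately, but it is worth saying so explicitly (your alternative cycle-lemma route avoids this cosmetic issue entirely).
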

We define a random variable $\bm{R}$ on the positive integers with
\[
  \forall t \in \N \,,\; \Pr{ \bm{R} = t } \define g(t-1) - g(t)
    = \frac{1}{4^{t-1}}{2(t-1) \choose t-1} - \frac{1}{4^t}{2t \choose t} \,,
\]
where we define $g(0) \define 1$. For each $m \in \N$, we also define a random variable
$\bm{Q}^{(m)}$ by the following process. Set $q = 0$ and $X = 0$; while $X < m$, increment $q$ and
set $X \gets X + \bm{R}$ where $\bm{R}$ is a new independent copy of the random variable defined
above. Then set $\bm{Q}^{(m)} = q$ once this process terminates; note that $\bm{Q}^{(0)} = 0$.
Observe that for every $k \in \N$,
\[
  \Pr{ \bm{Q}^{(m)} \geq k } = \Pr{ \bm{R}_1 + \bm{R}_2 + \dotsm + \bm{R}_k \leq m }
\]
where each $\bm{R}_i$ is an independent copy of $\bm{R}$, and
\begin{align*}
  \Ex{\bm{Q}^{(m)} }
    = \sum_{t=1}^m \Pr{ \bm{R} = t } \cdot \left( 1 + \Ex{ \bm{Q}^{(m-t)} } \right)\,.
\end{align*}

The following holds due to \cref{thm:sa}.
\begin{proposition}
\label{prop:sa-first-decrease}
Let $x \in \R^m$ have the DSS property, let $a \in \R$, and let $\bm{s}_1$, $\bm{t}_1$ denote the
first downwards level decrease time and upwards level increase times of $\bm{W}_x^{+a}$,
respectively. Then for all $z \in [m]$,
\begin{enumerate}
\item If $a \geq 0$ then $\Pr{ \bm{s}_1 = z } = \Pr{ \bm{R} = z }$; and,
\item If $a < 0$ then $\Pr{ \bm{t}_1 = z } = \Pr{ \bm{R} = z }$.
\end{enumerate}
\end{proposition}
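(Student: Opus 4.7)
The plan is to reformulate $\bm{s}_1$ and $\bm{t}_1$ in terms of the unshifted walk $\bm{W}_x$, apply a prefix-version of \cref{thm:sa}, and then telescope.

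First I would unpack the definitions. When $a \geq 0$, we have $\bm{W}_x^{+a}(0) = a \geq 0$, so $\bm{s}_1$ is the smallest $t \in [m]$ with $\bm{W}_x^{+a}(t) < a$, equivalently the smallest $t$ with $\bm{W}_x(t) < 0$. The DSS property forces $\bm{W}_x(t) \neq 0$ for every $t \in [m]$, so the event $\{\bm{s}_1 = z\}$ coincides with $\{\bm{W}_x(1) > 0, \ldots, \bm{W}_x(z-1) > 0,\, \bm{W}_x(z) < 0\}$. When $a < 0$, the analogous unpacking shows that $\bm{t}_1$ is the smallest $t$ with $\bm{W}_x(t) > 0$. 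Since $\bm{\eps}$ and $-\bm{\eps}$ are identically distributed on $\pmset^m$, we have $\bm{W}_x \stackrel{d}{=} -\bm{W}_x$, so $\bm{t}_1$ under the hypothesis of (2) equals $\bm{s}_1$ in distribution under the hypothesis of (1). Hence it suffices to prove (1).

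The key step is to upgrade \cref{thm:sa} from the full walk of length $m$ to every prefix length $k \in \{0, 1, \ldots, m\}$, namely
\[
\Pr{\bm{W}_x(t) > 0 \text{ for all } t \in [k]} = g(k),
\]
with $g(0) := 1$. To prove this, I would condition on the unordered set $\bm{T} = \bm{\sigma}([k])$, which is a uniformly random size-$k$ subset of $[m]$. Conditioned on $\bm{T} = T$, the remaining randomness in $(\bm{\sigma}(1),\ldots,\bm{\sigma}(k))$ is a uniformly random ordering of $T$, so the partial sums $\sum_{i=1}^t \bm{\eps}_i x_{\bm{\sigma}(i)}$ for $t \in [k]$ are distributed exactly as the DSS random walk $\bm{W}_{x_T}$ of length $k$, where $x_T \in \R^k$ inherits the DSS property from $x$. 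Applying \cref{thm:sa} to $\bm{W}_{x_T}$ gives $g(k)$ independently of $T$, and averaging over $\bm{T}$ yields the prefix identity.

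Finally, the proposition follows by telescoping: since $\{\bm{s}_1 > k\}$ is exactly the event that $\bm{W}_x$ stays positive on $[k]$, for each $z \in [m]$,
\[
\Pr{\bm{s}_1 = z} = \Pr{\bm{s}_1 > z-1} - \Pr{\bm{s}_1 > z} = g(z-1) - g(z) = \Pr{\bm{R} = z},
\]
using the boundary case $\Pr{\bm{s}_1 > 0} = 1 = g(0)$. I do not foresee a serious obstacle: the only point requiring care is the conditioning argument that promotes \cref{thm:sa} to a statement about prefixes, specifically that the joint law of $(\bm{\sigma}(1),\ldots,\bm{\sigma}(k), \bm{\eps}_1,\ldots,\bm{\eps}_k)$ conditioned on $\bm{T} = T$ really produces a freshly-drawn DSS walk on $x_T$; everything else is a short calculation.
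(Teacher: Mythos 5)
Your proof is correct and supplies exactly the argument the paper is implicitly invoking when it writes ``The following holds due to \cref{thm:sa}'' and gives no further details. The three ingredients you use are the right ones: (i) for $a \geq 0$ the first downwards level decrease of $\bm{W}_x^{+a}$ is the first time the unshifted walk $\bm{W}_x$ goes negative (and DSS rules out ties at zero), with the $a<0$ case reducing to it by the sign-flip symmetry $\bm{W}_x \stackrel{d}{=} -\bm{W}_x$; (ii) the prefix version of Sparre Andersen, obtained by conditioning on $\bm{T} = \bm{\sigma}([k])$ and noting that $(\bm{\sigma}|_{[k]}, \bm{\eps}|_{[k]})$ then has exactly the law of a fresh DSS walk on $x_T$, so the positivity probability is $g(k)$ uniformly in $T$; and (iii) telescoping $\Pr{\bm{s}_1 = z} = \Pr{\bm{s}_1 > z-1} - \Pr{\bm{s}_1 > z} = g(z-1) - g(z) = \Pr{\bm{R} = z}$, which matches the paper's definition of $\bm{R}$ with the convention $g(0)=1$. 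The only step that genuinely requires care is the one you flagged — promoting \cref{thm:sa} from length $m$ to prefixes of length $k$ — and your exchangeability argument handles it cleanly. No gap.
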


\begin{proposition}
\label{prop:sa-level-decrease}
Let $x \in \R^m$ have the DSS property and let $a \in \R$. Then
\[
  \Ex{ \bm{Q}^{(m)} } =
  \Ex{ S_\downarrow(\bm{W}_x^{+a}) + S_\uparrow(\bm{W}_x^{+a}) } \,.
\]
\end{proposition}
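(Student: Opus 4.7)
The plan is to prove the identity by strong induction on $m$, combining \cref{prop:sa-first-decrease} with the Markov structure of DSS random walks. The central idea is to reinterpret $S_\downarrow(\bm{W}_x^{+a}) + S_\uparrow(\bm{W}_x^{+a})$ as counting the events in a single \emph{unified} sequence of level changes, which permits a clean recursive argument.

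Define times $u_1 < u_2 < \cdots$ recursively by setting $u_0 = 0$, $v_0 = a$, and for $i \geq 1$: if $v_{i-1} \geq 0$, let $u_i$ be the smallest $t \in (u_{i-1}, m]$ with $\bm{W}_x^{+a}(t) < v_{i-1}$; otherwise let $u_i$ be the smallest such $t$ with $\bm{W}_x^{+a}(t) > v_{i-1}$; then set $v_i = \bm{W}_x^{+a}(u_i)$, terminating when no such $t$ exists. Let $\bm{N}$ denote the number of finite $u_i$'s. By tracking the definitions of $\{s_j\}$ and $\{t_j\}$---each $u_i$ with $v_{i-1} \geq 0$ corresponds to some $s_j$, each $u_i$ with $v_{i-1} < 0$ corresponds to some $t_j$, and the values $v_{i-1}$ at downcrossings and upcrossings handle the interaction between the two sequences---one obtains the combinatorial identity $\bm{N} = S_\downarrow(\bm{W}_x^{+a}) + S_\uparrow(\bm{W}_x^{+a})$.

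The base case $m = 0$ is immediate. For the inductive step, observe that $u_1 = \bm{s}_1$ when $a \geq 0$ and $u_1 = \bm{t}_1$ when $a < 0$, so \cref{prop:sa-first-decrease} yields $\Pr{u_1 = z} = \Pr{\bm{R} = z}$ for every $z \in [m]$, with the remaining probability corresponding to the event that no level change occurs in $[m]$ (giving $\bm{N} = 0$). Conditional on $u_1 = z$ and on the walk's prefix up to time $z$, the continuation $\bm{W}_x^{+a}(z+1), \dots, \bm{W}_x^{+a}(m)$ is itself a DSS random walk of length $m - z$ on the unused indices of $x$ started from value $v_1$, and its own unified-level-change count equals $\bm{N} - 1$, with expectation $\Ex{\bm{Q}^{(m-z)}}$ by the inductive hypothesis applied with the starting value $v_1$ (of either sign). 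Therefore
\[
\Ex{\bm{N}} = \sum_{z=1}^m \Pr{\bm{R} = z}\bigl(1 + \Ex{\bm{Q}^{(m-z)}}\bigr) = \Ex{\bm{Q}^{(m)}},
\]
where the last equality is the recurrence for $\Ex{\bm{Q}^{(m)}}$ recorded in the excerpt.

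The main obstacle is the combinatorial identity $\bm{N} = S_\downarrow + S_\uparrow$: because $\{s_j\}$ and $\{t_j\}$ are defined by separate recursions that interact only through downcrossings and upcrossings, verifying that the unified sequence interleaves them correctly---with no event missed or double-counted---requires a careful case analysis, most delicate when the walk oscillates around zero. Once this identity is established, the probabilistic part is a direct application of \cref{prop:sa-first-decrease} together with the Markov property of DSS walks (using that a subsequence of a DSS sequence is still DSS, so the hypotheses of \cref{prop:sa-first-decrease} and the inductive hypothesis apply to each sub-walk).
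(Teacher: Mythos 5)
Your proof is correct and follows the same strategy as the paper: condition on the first level event (via \cref{prop:sa-first-decrease}, applied according to the sign of $a$), remove the prefix up to that event, and recurse on the tail walk, which is again a DSS walk of the residual length started at the current value. Your ``unified'' sequence $u_1 < u_2 < \cdots$ is a convenient packaging of what the paper does implicitly---the paper's inductive step asserts, without spelling it out, that removing the prefix up to $\bm{s}_1$ drops $S_\downarrow + S_\uparrow$ by exactly one and leaves the combined count of a fresh walk started at $\bm{W}_x^{+a}(\bm{s}_1)$, which is precisely the interleaving identity $\bm{N} = S_\downarrow + S_\uparrow$ that you rightly flag as the one substantive combinatorial verification.
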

\begin{proof}
By induction on $m$. For $m = 1$ we have $\Ex{ \bm{Q}^{(1)} } = \Pr{ \bm{R} = 1 } = 1/2$
and $\Ex{ S_\downarrow(\bm{W}_x^{+a}) + S_\uparrow(\bm{W}_x^{+a}) } = 1/2$ since the random walk
has probability $1/2$ of increasing or decreasing in the first step; if $a \geq 0$ then the walk
must decrease to create a downwards level decrease, while if $a > 0$ then the walk must increase to
create an upwards level increase.

Now let $m > 1$. Suppose $a \geq 0$ without loss of generality. Then the first level increase or decrease is a downwards level
decrease. Let $\bm{s}_1$ be the first downwards level decrease and let $\bm{y}$ denote the random
subsequence of $x$ that remains after removing the first $\bm{s}_1$ elements according to the random permutation
$\bm{\sigma}$. Then by induction and \cref{prop:sa-first-decrease},
\begin{align*}
  \Ex{ S_\downarrow(\bm{W}_x^{+a}) + S_\uparrow(\bm{W}_x^{+a}) }
  &= \sum_{t=1}^m \Pr{ \bm{s}_1 = t } \cdot \left( 1 + 
    \Exuc{}{ S_\downarrow\left(\bm{W}_{\bm{y}}^{+ \bm{W}_x(\bm{s}_1)}\right) + 
         S_\uparrow\left(\bm{W}_{\bm{y}}^{+ \bm{W}_x(\bm{s}_1)}\right) }{ \bm{s}_1 = t } \right) \\
  &= \sum_{t=1}^m \Pr{ \bm{R} = t } \cdot \left(1 + \Ex{ \bm{Q}^{(m-t)} } \right) = \Ex{ \bm{Q}^{(m)} }
        \,. \qedhere
\end{align*}
\end{proof}

For a sequence $W : \{0\} \cup [m] \to \R$, write $Z(W) = \sum_{t=1}^m \ind{W(t) \in \{0,\pm 1\}}$.
\begin{lemma}
\label{lemma:extended-crossings-uniform}
For any $m$, $\Ex{ Z(\bm{W}_{\vec 1}) } = O(\sqrt m)$.
\end{lemma}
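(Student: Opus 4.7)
The plan is to apply linearity of expectation together with the classical pointwise bound on the simple symmetric random walk's distribution. First observe that since $x = \vec{1}$, the $x_{\sigma(i)}$ factors in \cref{def:random-walk} are all $1$, so the random walk $\bm{W}_{\vec{1}}$ is simply $\bm{W}_{\vec{1}}(t) = \sum_{i=1}^{t} \bm{\varepsilon}_i$ where the $\bm{\varepsilon}_i \in \pmset$ are i.i.d.\ uniform. In other words, $\bm{W}_{\vec{1}}$ is the standard simple symmetric random walk on $\Z$ starting from $0$.

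By linearity of expectation,
\[
  \Ex{ Z(\bm{W}_{\vec 1}) } \;=\; \sum_{t=1}^{m} \sum_{k \in \{-1,0,1\}} \Pr{ \bm{W}_{\vec 1}(t) = k }.
\]
For the simple symmetric walk, $\Pr{\bm{W}_{\vec 1}(t) = k}$ is nonzero only when $t$ and $k$ have the same parity, and in that case equals $2^{-t}\binom{t}{(t+k)/2}$. By Stirling's approximation, $\binom{t}{\lfloor t/2 \rfloor} = \Theta\bigl(2^{t}/\sqrt{t}\bigr)$, and since $k \in \{-1,0,1\}$ is within $O(1)$ of the median, the corresponding binomial coefficient is likewise $\Theta\bigl(2^{t}/\sqrt{t}\bigr)$. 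Therefore $\Pr{\bm{W}_{\vec 1}(t) = k} = O(1/\sqrt{t})$ for each fixed $k \in \{-1,0,1\}$, uniformly in $t$.

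Summing over the (at most) three admissible values of $k$ per time step and then over $t$, we obtain
\[
  \Ex{ Z(\bm{W}_{\vec 1}) } \;\leq\; \sum_{t=1}^{m} O\!\left( \frac{1}{\sqrt{t}} \right) \;=\; O(\sqrt{m}),
\]
as claimed. There is essentially no obstacle here: once one recognizes that $\bm{W}_{\vec{1}}$ is the ordinary lazy-free $\pm 1$ walk, everything reduces to the standard local central limit estimate $\Pr{\bm{W}(t) = k} = O(1/\sqrt{t})$ and the telescoping bound $\sum_{t=1}^{m} t^{-1/2} = O(\sqrt{m})$. The only minor care needed is the parity restriction, which only \emph{removes} terms and so preserves the upper bound.
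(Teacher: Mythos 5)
Your proof is correct and takes essentially the same approach as the paper's: both use linearity of expectation together with the standard bound $\Pr[\bm{W}_{\vec 1}(t)=k] = O(1/\sqrt{t})$ for $k$ near the median, then sum $\sum_{t\le m} t^{-1/2} = O(\sqrt m)$. The only cosmetic difference is that you bound $\Pr[\bm{W}_{\vec 1}(t)=\pm1]$ directly via the binomial formula, whereas the paper first handles $k=0$ and then reduces $k=\pm1$ to the $k=0$ case via shifted walks; your version is arguably a bit more streamlined.
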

\begin{proof}
We first bound the number of times $t$ such that $\bm{W}_{\vec 1}(t) = 0$. If $t$ is odd then $\Pr{
\bm{W}_{\vec 1}(t) = 0} = 0$. If $t$ is even then there is a universal constant $C$ such that
\[
  \Pr{ \bm{W}_{\vec 1}(t) = 0 } = \frac{1}{2^t} {t \choose t/2} \leq C \cdot \frac{1}{ \sqrt t } \,.
\]
Therefore the expected number of times $t$ with $\bm{W}_{\vec 1}(t) = 0$ is at most
\[
  \sum_{t \text{ even}} \Pr{ \bm{W}_{\vec 1}(t) = 0 }
  \leq C \cdot \sum_{t=1}^m \frac{1}{\sqrt t} = O(\sqrt m) \,.
\]
Now observe that the expected number of times $t$ where $\bm{W}_{\vec 1}(t) = 1$ is the average of
the expected number of times where the shifted walks $\bm{W}_{\vec 1}^{+a}$ is 0 on domain $[m-1]$,
where $a = \pm 1$, and the same holds for the number of times $t$ where $\bm{W}_{\vec 1}(t) = -1$.
\end{proof}

\begin{proposition}
\label{prop:sa-q-bound}
There exists
$x \in \R^m$ with the DSS property such that $\Ex{ S_\downarrow(\bm{W}_x) + S_\uparrow(\bm{W}_x) } \leq \Ex{Z(\bm{W}_{\vec 1})}$.
As a consequence,
\[
  \Ex{ \bm{Q}^{(m)} } = O(\sqrt m) \,.
\]
\end{proposition}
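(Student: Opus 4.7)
The plan is to produce $x \in \R^m$ with the DSS property that is a tiny perturbation of $\vec{1}$, couple $\bm{W}_x$ with $\bm{W}_{\vec{1}}$ using the same random $\bm{\sigma}, \bm{\varepsilon}$, and prove the pointwise inequality $S_\downarrow(\bm{W}_x) + S_\uparrow(\bm{W}_x) \leq Z(\bm{W}_{\vec{1}})$ by showing that every level-decrease time $s_i$ and every level-increase time $t_j$ of $\bm{W}_x$ is a distinct step $t$ at which $W_{\vec{1}}(t) \in \{0, \pm 1\}$. Taking expectations and applying \cref{lemma:extended-crossings-uniform} then yields the first statement of the proposition, and combining with \cref{prop:sa-level-decrease} applied to $a = 0$ gives the stated consequence $\Ex{\bm{Q}^{(m)}} = O(\sqrt m)$.

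Concretely, I would set $x_i = 1 + \eta \cdot 2^{-i}$ for some fixed $\eta \in (0, 1/2)$. Uniqueness of binary representations shows that any two distinct disjoint subsets of $[m]$ yield distinct $x$-sums (their fractional parts, $\eta$ times a dyadic rational in $(0,1)$, are distinct), so $x$ has the DSS property. Moreover $\sum_i |x_i - 1| < \eta$, so the coupled walks satisfy $|W_x(t) - W_{\vec{1}}(t)| < \eta$ for every $t$. Because $\bm{W}_{\vec{1}}$ is integer-valued and takes $\pm 1$ steps, this closeness tightly constrains when $\bm{W}_x$ can cross a reference value of small magnitude.

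The main task is an induction along the alternating sequence of level-decreases, upcrossings, level-increases, and downcrossings maintaining the invariants: (a) each $s_i$ has $W_{\vec{1}}(s_i) \in \{0, \pm 1\}$ and its reference step $\rho_i$ (either $s_{i-1}$ from the $W_x(s_{i-1}) \geq 0$ branch, or the intervening upcrossing of $\bm{W}_x$) satisfies $W_{\vec{1}}(\rho_i) \in \{0, 1\}$; symmetrically (b) each $t_j$ has $W_{\vec{1}}(t_j) \in \{0, \pm 1\}$ and $W_{\vec{1}}(\rho_j) \in \{-1, 0\}$. Given the invariant, the definition of $s_i$ provides $W_x(s_i - 1) \geq W_x(\rho_i)$ and $W_x(s_i) < W_x(\rho_i)$; the $\eta$-closeness turns these into $W_{\vec{1}}(s_i - 1) \geq W_{\vec{1}}(\rho_i)$ and $W_{\vec{1}}(s_i) \leq W_{\vec{1}}(\rho_i)$ (because the integer-valued differences lie in $(-2\eta, 2\eta) \subset (-1, 1)$), and then the $\pm 1$-step structure pins $W_{\vec{1}}(s_i) \in \{W_{\vec{1}}(\rho_i) - 1, W_{\vec{1}}(\rho_i)\} \subseteq \{-1, 0, 1\}$; case (b) is symmetric. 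Disjointness $\{s_i\} \cap \{t_j\} = \emptyset$ is immediate because at each $s_i$ the walk $\bm{W}_x$ takes a downward step while at each $t_j$ it takes an upward step, and these are determined by the single sign $\varepsilon_t$ of the corresponding step.

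The main subtlety I expect to have to handle carefully is the reference-reset across each sign change of $\bm{W}_x$: after $\bm{W}_x$ dips below zero the reference for the next level-decrease becomes the value at the subsequent upcrossing, and I need to verify that this reset value still satisfies $W_{\vec{1}}(\rho) \in \{0,1\}$. This comes from the same $\pm 1$-step analysis applied at the upcrossing itself, using that $W_x(\rho-1) < 0 \leq W_x(\rho)$ forces $W_{\vec{1}}(\rho-1) \leq 0 \leq W_{\vec{1}}(\rho)$, so $W_{\vec{1}}(\rho) \in \{0, 1\}$; the symmetric statement handles downcrossings. Once this invariant is verified, summing over the distinct indices contributed by $\{s_i\} \cup \{t_j\}$ yields $S_\downarrow(\bm{W}_x) + S_\uparrow(\bm{W}_x) \leq Z(\bm{W}_{\vec{1}})$ pointwise, completing the proof.
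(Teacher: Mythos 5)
Your proposal is correct and follows essentially the same strategy as the paper's proof: perturb $\vec 1$ slightly to gain the DSS property while keeping $|W_x(t) - W_{\vec 1}(t)|$ strictly below $1/2$, then argue by induction along the level-decrease and level-increase times that each such time $t$ satisfies $W_{\vec 1}(t) \in \{0, \pm 1\}$, using the $\pm 1$-step structure of $W_{\vec 1}$ to pin down the values at both the reference upcrossings/downcrossings and the level-change times themselves. The paper takes a random perturbation $x = \vec 1 + z$ with $z$ uniform in $[-\delta,\delta]^m$, $\delta = 1/(3m)$ (DSS holds with probability one), and directly bounds $|W_x(s_i)| \leq 4/3$ at each such time, whereas you use an explicit dyadic perturbation and track the invariant on $W_{\vec 1}(\rho_i)$; these are cosmetic variants of the same argument.
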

\begin{proof}
Let $\delta \define \frac{1}{3m}$. Let $\bm{x} \define \vec 1 + \bm{z}$ where $\bm{z} \sim
[-\delta,\delta]^m$ uniformly at random. Note that $\bm{x}$ has the DSS property with probability 1.
For any fixed $z \in [-\delta,\delta]^m$, any permutation $\sigma$, and any $r \in \pmset^n$, write
$W_x(t) \define W_x(t ; \sigma, \varepsilon)$ for $x = \vec 1 + z$. Then we have $W_x(t ; \sigma,
\varepsilon) \in [W_{\vec 1}(t ; \sigma, \varepsilon) - 1/3, W_{\vec 1}(t ; \sigma, \varepsilon) + 1/3]$. Now
fix any $z \in [-\delta,\delta]^m$ such that $x = \vec 1 + z$ has the DSS property; we show that it
satisfies the required condition.

Let $s_1 < s_2 < \dotsm < s_k$ be the downwards level decreasing or upwards level increasing points
for $W_x$, let $s_0 = 0$, and observe that a point cannot be both downwards level decreasing and
upwards level increasing.  We show by induction on $i$ that $|W_x(s_i)| \leq 1 + 1/3$ and therefore
that $W_{\vec 1}(s_i) \in \{0, \pm 1\}$. Since $W_x(0) = 0$ it must be that $s_1$ is downwards level
decreasing and $W_x(s_1) < 0 = W_x(0) \leq W_x(s_1-1)$ and therefore $W_x(s_1) \geq W_x(s_1-1) - 1 -
1/3 \geq -4/3$ so it must be that $W_{\vec 1}(s_1) \in \{-1,0\}$. For $i > 1$, suppose that
$s_i$ is a downwards level decreasing point. If there exists an upcrossing point $a > s_{i-1}$ such
that $W_x(s_i) < W_x(a)$, then we observe that $W_x(a-1) < 0 \leq W_x(a)$ and therefore $W_x(a) < 1
+ 1/3$ so $W_{\vec 1}(a) \in \{0,1\}$. Now $W_x(s_i) < W_x(a) \leq W_x(s_i-1)$ so it must be that $-
1 - 1/3 \leq W_x(s_i) < 1 + 1/3$ so $W_{\vec 1}(s_i) \in \{0, \pm 1\}$. On the other hand, if
$W_x(s_{i-1}) \geq 0$ and $W_x(s_i) < W_x(s_{i-1})$ then by induction we have $W_{\vec 1}(s_{i-1})
\in \{0,1\}$, and also $W_x(s_i-1) > W_x(s_{i-1})$, so again we have $-1-1/3 \leq W_x(s_i) < 1+1/3$
and therefore $W_{\vec 1}(s_i) \in \{0, \pm 1\}$. A similar argument holds for the upwards level
increasing points.

The conclusion now follows from \cref{prop:sa-level-decrease} and
\cref{lemma:extended-crossings-uniform}, since for the $x \in \R^m$ defined in the current proof,
\[
  \Ex{ \bm{Q}^{(m)} } = \Ex{ S_\downarrow(\bm{W}_x) + S_\uparrow(\bm{W}_x) }
  \leq \Ex{Z(\bm{W}_{\vec 1})} = O(\sqrt m) \,. \qedhere
\]
\end{proof}

\begin{proposition}
\label{prop:max-walk-bound-q}
Let $X$ be a set of sequences $x \in \R^m$ each having the DSS property, and let $a : X \to \R$ be
arbitrary. Then
\[
  \Ex{ L_\downarrow(\bm{M}_X^{+a}) } \leq \Ex{ \bm{Q}^{(m)} } \,.
\]
\end{proposition}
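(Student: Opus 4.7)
Proof plan. I would proceed by induction on $m$. The base case $m=0$ is immediate because $L_\downarrow(\bm{M}_X^{+a}) = 0 = \bm{Q}^{(0)}$. For the inductive step, let $\bm{\tau}_1$ be the first downwards level return of $\bm{M}_X^{+a}$, with the convention $\bm{\tau}_1 = \infty$ if no such time exists. The crux of the argument is a stochastic domination claim: $\Pr{\bm{\tau}_1 \leq t} \leq \Pr{\bm{R} \leq t}$ for every $t \in [m]$. With that in hand, condition on $\bm{\tau}_1 = t$: the max-walk ``restarts'' on the suffix of length $m-t$, whose constituent sequences are subsequences of the original DSS sequences and hence still DSS, and the remaining downwards level returns of the original walk are dominated by those of this fresh restarted max-walk (the fresh walk can only gain at most one extra ``initial'' return when its starting value is non-negative). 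Applying the inductive hypothesis and combining with the defining recursion for $\bm{Q}^{(m)}$ (see the discussion preceding \cref{prop:sa-first-decrease}) gives
\begin{align*}
\Ex{L_\downarrow(\bm{M}_X^{+a})}
&\leq \sum_{t=1}^m \Pr{\bm{\tau}_1 = t}\bigl(1 + \Ex{\bm{Q}^{(m-t)}}\bigr) \\
&\leq \sum_{t=1}^m \Pr{\bm{R} = t}\bigl(1 + \Ex{\bm{Q}^{(m-t)}}\bigr) = \Ex{\bm{Q}^{(m)}},
\end{align*}
where the middle step uses the stochastic domination together with the monotonicity of $t \mapsto 1 + \Ex{\bm{Q}^{(m-t)}}$.

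To prove the domination, split on the sign of $\bm{M}_X^{+a}(0) = \max_x a(x)$. When $\bm{M}_X^{+a}(0) \geq 0$, choose $x^* := \arg\max_x a(x)$, so $W_{x^*}^{+a(x^*)}(0) = a(x^*) = \bm{M}_X^{+a}(0)$. Since $W_{x^*}^{+a(x^*)}(s) \leq \bm{M}_X^{+a}(s)$ for all $s$, the max-walk cannot fall below $\bm{M}_X^{+a}(0)$ until $W_{x^*}^{+a(x^*)}$ has fallen below $a(x^*)$; thus $\bm{\tau}_1$ is bounded below by the first downwards level decrease $\bm{s}_1^*$ of $W_{x^*}^{+a(x^*)}$, and \cref{prop:sa-first-decrease} gives $\bm{s}_1^* \sim \bm{R}$. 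When $\bm{M}_X^{+a}(0) < 0$, no downwards level return can occur before the first upcrossing $\bm{t}_1^{\text{up}} \geq 1$; conditioning on the history through $\bm{t}_1^{\text{up}}$, the remaining process is a fresh max-walk on a DSS suffix whose starting value $\bm{M}_X^{+a}(\bm{t}_1^{\text{up}}) \geq 0$, so the previous sub-case contributes an additional delay distributed as $\bm{R}$. The resulting bound $\bm{\tau}_1 \geq 1 + \bm{R}$ (in stochastic order) is at least as strong as what is needed.

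The main obstacle is making the ``restart'' argument rigorous. One must verify that, after conditioning on $\bm{\tau}_1$ (or $\bm{t}_1^{\text{up}}$), the remaining walks are genuine random walks on DSS subsequences so that both \cref{prop:sa-first-decrease} and the inductive hypothesis may be invoked, and that the downwards level returns of the original walk occurring after $\bm{\tau}_1$ are no more numerous than those of the restarted max-walk. The discrepancy in the definitions — the original walk counts only post-upcrossing returns while the restarted walk may also count a single ``initial'' return from a non-negative starting value — goes in the correct direction and only inflates the upper bound by one term, which is absorbed cleanly into the recursion for $\bm{Q}^{(m)}$.
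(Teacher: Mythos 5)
Your proposal is correct and follows essentially the same route as the paper: induct on $m$, condition on the first downwards level return $\bm{\tau}_1$, bound it below by the first level decrease of a maximizing constituent walk (which has the law of $\bm R$ by Sparre Andersen), and use the monotonicity of $\Ex{\bm{Q}^{(k)}}$ to close the recursion. Your "stochastic domination plus monotonicity" phrasing is a clean repackaging of the paper's decomposition $\bm{s}_1 = \bm{s}'_1 + (\bm{s}_1 - \bm{s}'_1)$ with a convexity argument, and your observation that the restarted max-walk may overcount by one initial return is a genuine subtlety the paper glosses over (it writes the conditional recursion as an equality) but, as you note, it only goes in the favorable direction.
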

\begin{proof}
By induction on $m$. For $m=1$, the probability that $\bm{M}_X^{+a}$ has a downwards level return is
at most $1/2$, because if $M_X^{+a}(0) \geq 0$, all of the maximizing constituent walks $x \in X$
satisfying $W_x(0)^{+a(x)} = M_X^{+a}(0)$ must decrease. If $M_X^{+a}(0) < 0$ then there is no
downwards level return for $m=1$.

Let $m > 1$ and consider two cases. First assume that $M_X^{+a}(0) \geq 0$ and let $x \in X$ be an
arbitrary constituent walk satisfying $W_x^{+a(x)}(0) = M_X^{+a}(0)$. For fixed permutation $\sigma$
and sign vector $\varepsilon$, let $s_1$ be the first
downwards level return point of $M_X^{+a}(\cdot\; ; \sigma, \varepsilon)$ and let $s'_1$ be the first
downwards level return point of $W_x^{+a(x)}(\cdot\; ; \sigma, \varepsilon)$. Note that $s'_1 \leq s_1$ since
$M_X^{+a}$ is the maximum of its constituents.

Now we may write
\[
  \Ex{ L_\downarrow(\bm{M}_X^{+a}) }
  = \sum_{s=1}^m \Pr{ \bm{s}_1 = s } \left(1 + \Exuc{}{ L_\downarrow\left(\bm{M}_{\bm{Y}}^{+ \bm{b}}
\right) }{ \bm{s}_1 = s } \right)\,,
\]
where $\bm{Y}$ denotes the set of vectors $X$ after removing the first $t$ coordinates according to
the random permutation $\bm{\sigma}$ and $\bm{b} : \bm{Y} \to \R$ is the starting point
$\bm{b}(\bm{y}) = \bm{W}_x^{+a(x)}(t)$ of each walk $\bm{y} \in \bm{Y}$ obtained from the
original vector $x \in X$ by removing the first $t$ coordinates according to $\bm{\sigma}$. By
induction, this is
\[
  \Exuc{}{ L_\downarrow\left(\bm{M}_{\bm{Y}}^{+\bm{b}} \right) }{ \bm{s}_1 = s }
  \leq \Ex{ \bm{Q}^{(m-s)} } \,.
\]
Now we write $\bm{s}_1 = \bm{s}'_1 + (\bm{s}_1 - \bm{s}'_1)$ where the second term is non-negative.
Then
\begin{align*}
  \Ex{ L_\downarrow(\bm{M}_X^{+a}) }
    &\leq \sum_{s=1}^m \Pr{ \bm{s}'_1 + (\bm{s}_1 - \bm{s}'_1) = s } \cdot
      \left( 1 + \Ex{ \bm{Q}^{(m-s)} } \right) \\
    &= \sum_{t=1}^m \Pr{ \bm{s}'_1 = t } \sum_{s=0}^{m-t} \Pruc{}{ \bm{s}_1 - \bm{s}'_1 = s
}{\bm{s}'_1 = t}
      \left( 1 + \Ex{ \bm{Q}^{(m - (s+t))} } \right) \,.
\end{align*}
The inner sum is a convex sum of terms $1 + \Ex{ \bm{Q}^{(m-(s+t))} }$ which are each bounded by $1
+ \Ex{ \bm{Q}^{(m-t)} }$ because $\Ex{ \bm{Q}^{(k)} } \geq \Ex{ \bm{Q}^{(k')} }$ when $k \geq k'$.
We also have $\Pr{ \bm{s}'_1 = t } = \Pr{ \bm{R} = t }$ due to \cref{prop:sa-first-decrease}. Therefore,
\[
\Ex{ L_\downarrow(\bm{M}_X^{+a}) }
    \leq \sum_{t=1}^m \Pr{ \bm{R} = t } \left( 1 + \Ex{ \bm{Q}^{(m-t)} } \right)
    = \Ex{ \bm{Q}^{(m)} } \,.
\]
We must now handle the case where $M_X^{+a}(0) < 0$. Let $\bm{t}$ be the smallest time where
$\bm{M}_X^{+a}(\bm{t}) \geq 0$. Then
\[
  \Ex{ L_\downarrow(\bm{M}_X^{+a}) }
  = \sum_{t=1}^m \Pr{ \bm{t} = t } \Exuc{}{ L_\downarrow\left(\bm{M}_{\bm{Y}}^{+\bm{b}}\right) }{
\bm{t} = t }  \,,
\]
where $\bm{Y}$ and $\bm{b}$ are defined similarly as before, as the sequences $X$ after removing the
first $t$ coordinates according to the random permutation $\bm{\sigma}$ and $\bm{b}(\bm{y}) =
\bm{W}_x^{+a(x)}(t)$ is where the $x$ walk ended up at time $t$. This new walk starts above 0 so the
above argument applies and the conclusion holds.
\end{proof}

We can now prove \cref{lemma:max-walk-crossing-number}.
\begin{proof}[Proof of \cref{lemma:max-walk-crossing-number}]
First observe that $C(\bm{M}_X^{+a}) \leq 2 \, C_\downarrow(\bm{M}_X^{+a}) + 1$ so
it suffices to bound $C_\downarrow(\bm{M}_X^{+a})$. By definition it holds that
$C_\downarrow(\bm{M}_X^{+a}) \leq L_\downarrow(\bm{M}_X^{+a})$, so by 
\cref{prop:max-walk-bound-q} and \cref{prop:sa-q-bound}, we have
\[
  \Ex{ C_\downarrow(\bm{M}_X^{+a}) } \leq \Ex{\bm{Q}^{(m)}} = O(\sqrt m) \,.\qedhere
\]
\end{proof}

\subsection{Lower Bound}

Recall the definition of the \emph{influence} of a set in the ternary hypercube given in
\cref{eq:influence}. In this section, we show that there exists a convex set whose influence is $\Omega(n^{3/4})$, nearly matching the upper bound given in
\cref{thm:inf-bound}. 

The construction of the high-influence set is obtained by considering the
intersection of $2^{\sqrt{n}}$ random halfspaces whose distance from the origin is
$\Theta(n^{3/4})$. This approach is inspired by \cite[Theorem 2]{Kane14}, who showed that in the Boolean hypercube, an intersection of $k$ random halfspaces with an appropriately chosen distance from the origin will have expected influence $\Omega(\sqrt{n \log k})$. In the ternary hypercube, this type of argument still works as long as $k \leq 2^{O(\sqrt{n})}$. This type of construction was also used by Nazarov \cite{Nazarov03} to show the existence of convex sets in $\mathbb{R}^n$ whose Gaussian surface area is $\Omega(n^{1/4})$, which matches the $O(n^{1/4})$ upper bound proven by Ball \cite{Ball93}.  

\subsubsection{A Convex Set with Large Influence}

We prove the following theorem.

\thminfLB*

\begin{proof} Recall that the edges of $\zpm^n$ are the directed pairs of points $(x,y)$ such that there exists $i \in [n]$ for which $x_i = 0$, $y_i \in \{\pm 1\}$ and $x_j = y_j$ for all $j \neq i$. We will use the following claim which is also used by Kane (see \cite[Lemma 7]{Kane14} and its proof). 

\begin{claim} \label{clm:t} For any $n$ and $\eps \in [2^{-n},1/2]$, there exists $\tau = \Theta(\sqrt{n \log 1/\eps})$ such that 
\[
\pr_{x \sim \{\pm 1\}^n}\left[ \sum_{i=1}^n x_i > \tau \right] \geq \eps \text{.}
\]
\end{claim}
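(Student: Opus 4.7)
The claim is an anti-concentration estimate for the simple random walk $S_n := \sum_{i=1}^n x_i$ with $x_i \in \{\pm 1\}$ uniform i.i.d. The plan is to set $\tau := c\sqrt{n\log(1/\eps)}$ for a small absolute constant $c \in (0,1)$ to be chosen. This immediately gives the upper side $\tau = O(\sqrt{n\log(1/\eps)})$, and the constraint $\eps \geq 2^{-n}$ ensures $\tau \leq c\sqrt{\ln 2}\cdot n$, which stays in a convenient range (well below $n$ for $c$ small). It then suffices to verify $\pr_{x \sim \{\pm 1\}^n}[S_n > \tau] \geq \eps$.

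The main tool is a sharp comparison to a Gaussian tail via Slud's inequality. Writing $S_n = 2X - n$ with $X \sim \mathrm{Bin}(n, 1/2)$, Slud's inequality states that for every integer $k \in [n/2,\, n]$,
\[
\pr[X \geq k] \;\geq\; \pr\!\left[Z \geq \tfrac{k - n/2}{\sqrt{n/4}}\right],
\]
where $Z$ is a standard Gaussian. In particular, $\pr[S_n > \tau] \geq \pr[Z > \tau/\sqrt n]$ for every $\tau \in [0, n]$. The multiplicative (rather than additive) nature of Slud's inequality is crucial here: since $\eps$ may be as small as $2^{-n}$, the $O(1/\sqrt n)$ additive error one would incur from a naive Berry--Esseen approximation would completely swamp the probability we are trying to lower bound.

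Combining Slud with the standard Gaussian lower tail $\pr[Z > t] \geq \tfrac{t}{\sqrt{2\pi}(1+t^2)}\,e^{-t^2/2}$ at $t = \tau/\sqrt n = c\sqrt{\log(1/\eps)}$ gives
\[
\pr_{x \sim \{\pm 1\}^n}\!\left[S_n > \tau\right] \;\geq\; \Omega\!\left(\tfrac{1}{\sqrt{\log(1/\eps)}}\right)\cdot \eps^{c^2/2}.
\]
Since $c^2/2 < 1$, the ratio of this lower bound to $\eps$ equals $\Omega\bigl(\eps^{c^2/2 - 1}/\sqrt{\log(1/\eps)}\bigr)$, which diverges as $\eps \to 0$; hence the bound exceeds $\eps$ for all $\eps$ below some universal threshold $\eps_0$. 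The boundary regime $\eps \in (\eps_0, 1/2]$, where $\log(1/\eps) = O(1)$ and $\sqrt{n\log(1/\eps)} = \Theta(\sqrt n)$, is handled by the same Slud--Gaussian estimate: $\pr[Z > c] = \Omega(1)$ can be made $\geq 1/2 \geq \eps$ by taking $c$ sufficiently small. Thus a single absolute constant $c$ works for all $\eps \in [2^{-n}, 1/2]$, completing the plan. The main technical subtlety---and the only place the argument could fail---is choosing $c$ small enough to overcome the $1/\sqrt{\log(1/\eps)}$ factor across the whole range; any $c \leq 1/4$ comfortably suffices.
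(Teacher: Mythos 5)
The paper does not actually prove \cref{clm:t}; it cites it as \cite[Lemma 7]{Kane14} and uses it as a black box. So your proof is a self-contained replacement, and the Slud-plus-Gaussian-tail route is a natural and standard one. That said, there are two issues you should address.

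First, a technical point about Slud's inequality: the classical statement (Slud 1977, Theorem 2.1) with the unshifted comparison $\pr[X \ge k] \ge \pr[Z \ge (k-np)/\sigma]$ holds under the hypothesis $p \le 1/4$, or alternatively $np \le k \le n(1-p)$, neither of which covers the symmetric case $p=1/2$ with $k > n/2$. The version that does apply at $p=1/2$ carries a shift, $\pr[X \ge k] \ge \pr[Z \ge (k-1-np)/\sigma]$. For your purposes the shift by one unit is harmless (it actually lands exactly on $\pr[Z \ge \tau/\sqrt{n}]$ after unwinding $S_n = 2X - n$), but as written your statement of Slud is not quite the theorem in the literature and should be corrected or justified.

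Second, and more substantively, your treatment of the boundary regime $\eps \in (\eps_0, 1/2]$ does not work as stated. You assert that $\pr[Z > c]$ "can be made $\ge 1/2 \ge \eps$ by taking $c$ sufficiently small." But $\pr[Z > t] < 1/2$ for every $t > 0$, so $\pr[Z > c\sqrt{\log(1/\eps)}] < 1/2$ no matter how small $c > 0$ is, and the desired inequality fails outright at $\eps = 1/2$. In fact, the claim as stated in the paper is itself imprecise at this endpoint: for any $\tau = \Omega(\sqrt{n})$, we have $\pr[S_n > \tau] < \pr[S_n > 0] \le 1/2$, so no universal constants in the $\Theta(\cdot)$ can make the statement literally true at $\eps = 1/2$. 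This does not affect the paper, since the claim is invoked only at $\eps = 2^{-\sqrt{n}}$, but your proof should not pretend to cover the endpoint with a false step. The cleanest fix is to run your Slud-plus-Mills argument only for $\eps$ below a small absolute constant $\eps_0$ (where it is airtight), and to note that in the range $[\eps_0, 1/2)$ one may simply take $c$ small enough that $\pr[Z > c\sqrt{\log(1/\eps_0)}] \ge \eps$ -- which works because $\eps$ is then bounded away from $1/2$ -- or, alternatively, to observe that the paper only needs the case $\eps \le 2^{-\sqrt{n}}$. It is also worth noting that the paper's own Stirling machinery in \cref{fact:special_case_3/4} and \cref{cor:middle_binomial_bound} would give an entirely elementary proof of the claim for $\tau = O(n^{3/4})$ (i.e., $\eps \ge 2^{-\Omega(\sqrt{n})}$, which covers the use case), avoiding Slud altogether.
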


Let $\eps = 2^{-\sqrt{n}}$ and choose $\tau = \Theta(\sqrt{n \log 1/\eps}) = \Theta(n^{3/4})$ so that
\begin{align}
    \rho := \pr_{z \sim \{\pm 1\}^{2n/3}} \left[\sum_i z_i > \tau\right] \geq \eps
\end{align}
as guaranteed by \cref{clm:t}. The main technical lemma that allows our construction to work is the following, which we prove in \cref{sec:proof_of_halfspace_density_increment_bound}. Note that this lemma crucially uses the assumption that $\tau = O(n^{3/4})$ and this is where the structure of the ternary hypercube prevents this construction from obtaining sets with influence $\gg n^{3/4}$.

\begin{lemma} 
\label{lem:halfspace_density_increment_bound} For all $n$, all $\Omega(\sqrt{n}) \leq \tau \leq O(n^{3/4})$, and all $\ell=O(\sqrt{n})$,
\[
\Pru{x \sim \{\pm 1\}^{n + \ell}}{\sum_{i=1}^{n+\ell} x_i > \tau} \leq O\left( \Pru{x \sim \{\pm 1\}^{n}}{ \sum_{i=1}^{n} x_i > \tau} \right)\text{.}
\] 
\end{lemma}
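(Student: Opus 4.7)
The plan is to condition on the sum $T$ of the extra $\ell$ coordinates and then compare the tail probability of a Rademacher sum of length $n$ at threshold $\tau - T$ to the tail at threshold $\tau$. Writing $S_k := \sum_{i=1}^k x_i$, independence yields
\[
\Pru{x \sim \{\pm 1\}^{n+\ell}}{S_{n+\ell} > \tau} \;=\; \Exu{T}{\Pr{S_n > \tau - T}},
\]
where $T$ has the distribution of a Rademacher sum of length $\ell$. Dividing through by $\Pr{S_n > \tau}$, the task reduces to showing that $\Exu{T}{\Pr{S_n > \tau - T}/\Pr{S_n > \tau}} = O(1)$.

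The pointwise comparison I would establish is: for $|T| \leq O(\sqrt n)$ and $\tau \leq O(n^{3/4})$,
\[
\frac{\Pr{S_n > \tau - T}}{\Pr{S_n > \tau}} \;\leq\; O\!\left(\exp\!\left(\tfrac{\tau T}{n} + \tfrac{T^2}{2n}\right)\right).
\]
This follows from sharp moderate-deviation estimates $\Pr{S_n > s} = \Theta\!\bigl((\sqrt n / s)\, e^{-s^2/(2n)}\bigr)$, valid in the range $s \in [\Omega(\sqrt n), o(n)]$, which come from Stirling's approximation applied to the binomial tail $\sum_{k > (n+s)/2}\binom{n}{k}$. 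The algebraic identity $(\tau^2 - (\tau - T)^2)/(2n) = \tau T/n - T^2/(2n)$ then yields the claimed exponent. For $T \leq 0$ the ratio is trivially bounded by $1$, so only positive $T$ contributes substantively.

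Taking expectation, since $T$ is itself a Rademacher sum of length $\ell$, its moment generating function satisfies $\Ex{e^{\tau T/n}} = \cosh(\tau/n)^\ell \leq e^{\ell \tau^2/(2n^2)}$, and the contribution of the $T^2/(2n)$ term is controlled by a similar sub-Gaussian estimate. Plugging in $\ell \leq O(\sqrt n)$ and $\tau \leq O(n^{3/4})$ gives $\ell \tau^2/n^2 = O(\sqrt n \cdot n^{3/2}/n^2) = O(1)$, so $\Exu{T}{e^{\tau T/n}} = O(1)$, as required. This is exactly the regime in which the density-increment argument saturates, mirroring the upper bound $\widetilde{O}(n^{3/4})$ from \cref{thm:inf-bound}.

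The main obstacle is proving the pointwise ratio bound with sharp constants: a crude Chernoff/Hoeffding upper bound on $\Pr{S_n > \tau}$ loses polynomial factors in $\tau$, which are unaffordable here since the target factor in the final bound is a genuine $O(1)$. The cleanest route uses the Stirling-based two-sided estimate for the binomial tail, equivalently a local central limit theorem. One additional subtlety is handling the tail event $T > \tau - c\sqrt n$ in which the shifted threshold $\tau - T$ leaves the moderate-deviations regime; there one can use the trivial bound $\Pr{S_n > \tau - T} \leq 1$, absorbed by the exponentially small Chernoff probability of such large $T$.
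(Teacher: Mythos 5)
Your approach is correct in spirit but takes a genuinely different route from the paper. The paper splits $\Pr{S_{n+\ell} > \tau}$ as $\Pr{\tau < S_{n+\ell} \le 2\tau} + \Pr{S_{n+\ell} > 2\tau}$, dispatches the far piece by Hoeffding, and bounds the near piece by comparing binomial coefficients term-by-term via the two-sided Stirling estimate ${n \choose (n-\tau')/2} = \Theta\big(\tfrac{2^n}{\sqrt n}\,e^{-\tau'^2/(2n)}\big)$ for $\tau' = O(n^{3/4})$ (\cref{fact:special_case_3/4}, \cref{clm:binomials_diff_n}). You instead condition on the sum $T$ of the extra $\ell$ coordinates and integrate a shifted-threshold tail ratio against the law of $T$, closing with the MGF bound $\Ex{e^{\tau T/n}} = \cosh(\tau/n)^\ell \le e^{\ell\tau^2/(2n^2)}$. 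Both arguments bottom out at $\ell\tau^2/n^2 = O(1)$ and the same moderate-deviations asymptotics, but yours needs a two-sided estimate for the \emph{tail} $\Pr{S_n > s} \asymp \tfrac{\sqrt n}{s}e^{-s^2/(2n)}$ rather than for a single coefficient; this requires summing the geometric-type decay near the boundary, a step the paper avoids. Your tilting-style argument is cleaner conceptually and more reusable, but watch the algebra: the actual ratio from the tail estimate is $\Theta\big(\tfrac{\tau}{\tau-T}\big)\exp\big(\tfrac{\tau T}{n} - \tfrac{T^2}{2n}\big)$, not $\exp\big(\tfrac{\tau T}{n} + \tfrac{T^2}{2n}\big)$. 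The sign flip on $T^2/(2n)$ and the dropped prefactor $\tau/(\tau-T)$ both go in the safe direction, and indeed $\tfrac{\tau}{\tau-T} = O(e^{T^2/n})$ whenever $T \le \tau - \Omega(\sqrt n)$, so your stated bound is valid; but the ``algebraic identity'' as written does not produce the exponent you then integrate, and that step deserves to be made explicit.
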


By \cref{lem:halfspace_density_increment_bound} there are constants $C_0 < 1 < C_1$ such that for all $\ell \in [-\sqrt{n},\sqrt{n}]$, we have
\begin{align} \label{eq:density}
   C_0 \rho \leq \Pru{z \sim \{\pm 1\}^{2n/3 + \ell}}{\sum_i z_i > \tau} \leq C_1 \rho\text{.}
\end{align}
Define $H := \{x \in \zpm^n \colon \sum_{i=1}^n x_i > \tau\}$. Let $L_{m} = \{x \in \zpm^n \colon \norm{x}_1 = m\}$ and note that
\[
\Pru{z \sim \{\pm 1\}^{2n/3 + \ell}}{\sum_i z_i > \tau} = \Pru{z \sim L_{2n/3+\ell}}{\sum_i z_i > \tau}
\]
and so \cref{eq:density} tells us that the density of $H$ in $L_{2n/3+\ell}$ only differs by a constant multiplicative factor for any $\ell \in [-\sqrt{n},\sqrt{n}]$. We abuse notation and write $H(x) = \mathbf{1}(x \in H)$. Now, let $E_{\sqrt{n}}$ denote the set of edges in $\zpm^n$ which have both endpoints in $\midlayers(\sqrt{n})$ and let 
\begin{align}
    \Inf_{\mathsf{mid}}(H) &:= \frac{1}{3^{n}} \cdot |\{(x,y) \in E_{\sqrt{n}} \colon H(x) \neq H(y)\}|
\end{align}
denote the influence of $H$ restricted to $\midlayers(\sqrt{n})$. We prove the following lower bound on this quantity.


\begin{claim} \label{clm:halfspace_middle_influence} $\Inf_{\mathsf{mid}}(H) = \Omega(\rho \cdot \tau)$. \end{claim}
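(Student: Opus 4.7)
The plan is to lower-bound the number of influential edges in $E_{\sqrt n}$ by exhibiting a large family of threshold-crossing points. Let $s_0 \define \lfloor \tau \rfloor$. For any $x \in \midlayers(\sqrt{n}-1)$ with $\sum_i x_i \in \{s_0, s_0+1\}$, and any coordinate $j$ with $x_j = 0$, exactly one of the two edges $(x, x\pm e_j)$ crosses the halfspace boundary: if $\sum_i x_i = s_0 \le \tau$ then $x \notin H$ but $x + e_j \in H$; if $\sum_i x_i = s_0+1 > \tau$ then $x \in H$ but $x - e_j \notin H$. Moreover $x \pm e_j \in \midlayers(\sqrt n)$, so the resulting edge lies in $E_{\sqrt n}$. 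Each such $x$ has $n - \norm{x}_1 = \Omega(n)$ zero coordinates, hence contributes $\Omega(n)$ influential edges. Writing $N_\tau$ for the number of such crossing points,
\[
\Inf_{\mathsf{mid}}(H) \;\ge\; \frac{\Omega(n) \cdot N_\tau}{3^n},
\]
so the claim reduces to showing $N_\tau = \Omega(\rho \tau \cdot 3^n/n)$.

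Next I would decompose $N_\tau$ across layers $L_m$ for $m = \tfrac{2n}{3} + \ell$, $|\ell| \le \sqrt n - 1$. Conditioned on $x \sim L_m$, the value $\sum_i x_i$ is distributed as the sum of $m$ i.i.d.\ Rademacher variables, and by the parity constraint ($\sum_i x_i \equiv m \pmod 2$) exactly one of $s_0, s_0+1$ is attainable. Set $p_s^{(m)} \define \Pr_{x \sim L_m}[\sum_i x_i = s]$ and $\rho_m \define \Pr_{x \sim L_m}[\sum_i x_i > \tau]$; by \eqref{eq:density} we have $\rho_m = \Theta(\rho)$ for every $|\ell| \le \sqrt n$.

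The technical heart of the argument is the local density estimate $p_{s_0}^{(m)} + p_{s_0+1}^{(m)} = \Omega(\rho \tau / n)$. The ratio of consecutive (same-parity) point masses is
\[
\frac{p_{s+2}^{(m)}}{p_s^{(m)}} \;=\; \frac{m-s}{m+s+2},
\]
which for $s \in [\tau, \tau + cn/\tau]$ and $m = \Theta(n)$, $\tau = \Theta(n^{3/4})$ lies in $1 - \Theta(\tau/n)$. Therefore throughout a window of length $\Theta(n/\tau)$ above $\tau$ each same-parity point mass is $\Omega(p_{s_0}^{(m)} + p_{s_0+1}^{(m)})$, and beyond that window the masses decay geometrically. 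Splitting $\rho_m = \sum_{s > \tau} p_s^{(m)}$ into the contribution from this window plus a geometric tail yields $\rho_m = \Theta\!\left((p_{s_0}^{(m)} + p_{s_0+1}^{(m)}) \cdot n/\tau\right)$, and inverting this gives the stated bound.

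Summing over $|\ell| \le \sqrt n - 1$ and invoking \cref{fact:chernoff} to obtain $\sum_\ell |L_{2n/3 + \ell}| = \Omega(3^n)$ produces $N_\tau = \Omega(\rho\tau/n) \cdot \Omega(3^n)$, which combined with the first display proves $\Inf_{\mathsf{mid}}(H) = \Omega(\rho\tau)$. The only delicate step is the local density estimate in the third paragraph; but since the relevant deviation $\tau = \Theta(n^{3/4})$ is safely within the Gaussian regime of a length-$m$ simple random walk with $m = \Theta(n)$, standard ratios of binomial coefficients suffice and no delicate tail bound is required.
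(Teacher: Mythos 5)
Your argument is correct, but it takes a genuinely different route from the paper's. You count influential edges explicitly: you locate the ``crossing points'' $x \in \midlayers(\sqrt n - 1)$ with $\sum_i x_i \in \{s_0, s_0+1\}$, observe that each of their $\Omega(n)$ zero coordinates yields one edge of $E_{\sqrt n}$ on the boundary of $H$, and reduce everything to the local density estimate $p^{(m)}_{s_0}+p^{(m)}_{s_0+1} = \Omega(\rho\tau/n)$, which you extract from the same-parity binomial ratio $p^{(m)}_{s+2}/p^{(m)}_s = (m-s)/(m+s+2) = 1-\Theta(\tau/n)$. The paper avoids any local density computation: it expands $\Inf_{\mathsf{mid}}(H)$ as a coordinate-wise sum, uses monotonicity of $H$ under the standard partial order to drop the absolute values and telescope out the $x^{i\gets 0}$ terms (\cref{eq:monotone}), rearranges to $\tfrac13\Ex{H(z)\sum_i z_i\,\ind{\cdot}}$ (\cref{eq:influence-lb-rearrange}), and then the factor of $\tau$ falls out immediately from $H(z)=1 \Rightarrow \sum_i z_i > \tau$; what remains is only the global estimate $\Ex{H(z)\ind{\norm{z}_1 \in \cI'}} = \Omega(\rho)$. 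Your version buys concreteness---it actually exhibits the $\Omega(\rho\tau\cdot 3^n)$ boundary edges---at the price of the binomial-ratio bookkeeping and the parity case analysis; the paper's telescope is shorter and sidesteps the local estimate entirely. A small point to make explicit when writing this up: when the parity-consistent crossing value is $s^* = s_0 \le \tau$ (so the crossing points lie outside $H$), the sum $\rho_m = \sum_{s>\tau}p^{(m)}_s$ begins at $s_0+2$, and you need the (immediate) observation $p^{(m)}_{s_0} = \Theta(p^{(m)}_{s_0+2})$ to tie your window bound back to $p^{(m)}_{s^*}$.
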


\begin{proof} Let $\cI = [2n/3 - \sqrt{n}, 2n/3 + \sqrt{n} - 1]$ and observe that we can write
\begin{align}
    \Inf_{\mathsf{mid}}(H) &= \frac{1}{3} \sum_{i=1}^n \Exu{x \sim \{0,\pm 1\}^{n}}{\mathbf{1}(\norm{x}_1 \in \cI) \cdot \left(|H(x^{i \gets 1})-H(x^{i \gets 0})| + |H(x^{i \gets 0})-H(x^{i \gets -1})|\right)} \nonumber \\
    &= \frac{1}{3}\sum_{i=1}^n \Exu{x \sim \{0,\pm 1\}^{n}}{\mathbf{1}(\norm{x}_1 \in \cI)  \cdot (H(x^{i \gets 1})-H(x^{i \gets -1}) )} \label{eq:monotone} \\
    &= \frac{1}{3} \Exu{x \sim \zpm^n}{\sum_{i=1}^n \mathbf{1}(\norm{x}_1 \in \cI) \left(H(x^{i \gets 1}) - H(x^{i \gets -1}) \right)} \nonumber \\
    &= \frac{1}{3} \Exu{z \sim \zpm^n}{\sum_{i : z_i = 1} \mathbf{1}(\norm{z^{i \gets 0}}_1) H(z)
                                       - \sum_{i : z_i = -1} \mathbf{1}(\norm{z^{i \gets 0}}_1) H(z) }
    \label{eq:influence-lb-rearrange}
\end{align}
where \cref{eq:monotone} holds because $H$ is a monotone function with respect to the standard partial order, i.e., if $x_i \leq y_i$ for all $i \in [n]$, then $x \in H$ implies $y \in H$, and \cref{eq:influence-lb-rearrange} follows by the observation that for each $i$, $z \in \zpm^n$ appears in the sum as $H(z)$ whenever $x = z^{i \gets 0}$ and $z_i = 1$, and appears in the sum as $-H(z)$ whenever $x = z^{i \gets 0}$ and $z_i = -1$.
Let $\cI' = [2n/3-\sqrt{n}+1,2n/3+\sqrt{n}-1]$. Thus, the above expression can be rewritten as 
\begin{align} \label{eq:Imid}
    \Inf_{\mathsf{mid}}(H) 
    &= \frac{1}{3} \Exu{z \sim \{0,\pm 1\}^n}{H(z)\sum_{i=1}^n z_i \cdot \mathbf{1}(\norm{z^{i \gets 0}}_1 \in \cI)} \nonumber \\
    &\geq \frac{1}{3} \Exu{z \sim \{0,\pm 1\}^n}{H(z)\mathbf{1}(\norm{z}_1 \in \cI')\sum_{i=1}^n z_i} > \frac{\tau}{3} \Exu{z \sim \{0,\pm 1\}^n}{H(z)\mathbf{1}(\norm{z}_1 \in \cI')}
\end{align}
where the first inequality holds since $\cI' \subset \cI$ and $\norm{z}_1 \in \cI'$ implies $\norm{z^{i \gets 0}}_1 \in \cI$ for all $i \in [n]$. The second inequality holds since $H(z) = 1$ if and only if $\sum_i z_i > \tau$. Finally,
\begin{align}
     \Exu{z \sim \{0,\pm 1\}^n}{H(z)\mathbf{1}(\norm{z}_1 \in \cI')} &= \frac{1}{3^n} \cdot \sum_{\ell = -\sqrt{n}+1}^{\sqrt{n}-1} \sum_{z \in L_{2n/3 + \ell}} H(z) \nonumber \\
     &= \sum_{\ell = -\sqrt{n}+1}^{\sqrt{n}-1} \frac{{n \choose 2n/3 + \ell} \cdot 2^{2n/3+\ell}}{3^n}\cdot \Pru{z \sim \{\pm 1\}^{2n/3 + \ell}}{\sum_i z_i > \tau}
\end{align}
and the quantity in the RHS is $\Omega(\rho)$ by the lower bound in \cref{eq:density} and since ${n \choose 2n/3 + \ell} \cdot 2^{2n/3+\ell} = \Omega(3^n/\sqrt{n})$ for all $\ell \in [-\sqrt{n},\sqrt{n}]$ by an application of Stirling's approximation. Combining this with \cref{eq:Imid}, we conclude that $\Inf_{\mathsf{mid}}(H) = \Omega(\rho \cdot \tau)$ as claimed. \end{proof}

Let $k := \max\{\lfloor (4C_1 \rho)^{-1} \rfloor,1\} \leq \eps^{-1}$. Choose $v^{(1)},\ldots,v^{(k)} \in \{\pm 1\}^n$ i.i.d.\ uniformly at random and for each $i \in [k]$ define $H_i = \{x \in \zpm^n \colon \langle x, v^{(i)} \rangle > \tau\}$. Let $S = \cap_{i=1}^k \overline{H_i}$ be the convex set formed by the intersection of the complements of the $H_i$'s. Note that $\overline{S} = \cup_{i=1}^k H_i$ and $\Inf(S) = \Inf(\overline{S})$ and thus it suffices to give a lower bound on $\Inf(\overline{S})$. 
Observe that every edge $(x,y)$ that is influential for $H_i$ is guaranteed to be influential for $\overline{S}$ if $x,y \notin H_j$ for all $j \neq i \in [k]$. Moreover, if $\norm{x}_1 = \frac{2n}{3} + \ell$ where $\ell \in [-\sqrt{n},\sqrt{n}]$, then 
\[
\Pru{v^{(j)} \sim \{\pm 1\}^n}{x \in H_j} = \Pru{v^{(j)} \sim \{\pm 1\}^n}{\langle v^{(j)}, x \rangle > \tau} = \Pru{z \sim \{\pm 1\}^{2n/3 + \ell}}{\sum_i z_i > \tau} \leq C_1 \cdot \rho \text{.}
\]
Thus, by a union bound, the probability that a $H_i$-influential edge $(x,y)$ with $x,y \in \midlayers(\sqrt{n})$ remains influential for $\overline{S}$ is at least $1 - 2(k-1) \cdot C_1 \rho \geq 1 - 2 \cdot (4C_1\rho)^{-1} \cdot C_1\rho \geq 1/2$. Therefore,
\begin{align}
    \Exu{v^{(1)},\ldots,v^{(m)}}{\Inf(\overline{S})} \geq \frac{1}{2} \sum_{i=1}^k \Inf_{\mathsf{mid}}(H_i) = \Omega(k \cdot \rho \cdot \tau) = \Omega(\tau) = \Omega\left(n^{3/4}\right)
\end{align}
and this completes the proof. \end{proof}

\subsubsection{Bounding the Density Increment of Halfspaces\texorpdfstring{: Proof of \cref{lem:halfspace_density_increment_bound}}{}} \label{sec:proof_of_halfspace_density_increment_bound}

Our proof makes crucial use of the following tight bound on the binomial coefficient ${n \choose \frac{n-\tau}{2}}$ for any $\tau = O(n^{3/4})$. Importantly, the bound is tight up to constant multiplicative factors, as opposed to constant factors in the exponent.

\begin{fact} \label{fact:special_case_3/4} If $\tau = O(n^{3/4})$, then ${n \choose \frac{n-\tau}{2}} = \Theta\left(\frac{2^n}{\sqrt{n}} \cdot \exp(-\frac{\tau^2}{2n})\right)$. \end{fact}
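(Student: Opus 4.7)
The plan is to apply Stirling's approximation and carefully track the Taylor expansion of the binary entropy function to the level of \emph{constant multiplicative error}, not merely constant error in the exponent. Set $\delta \define \tau/n$ and $p \define (1-\delta)/2$, so that $(n-\tau)/2 = pn$. Stirling's formula $m! = \sqrt{2\pi m}(m/e)^m (1 + O(1/m))$ yields
\[
\binom{n}{pn} = \frac{1 + O(1/n)}{\sqrt{2\pi n \, p(1-p)}} \cdot e^{n h(p)},
\]
where $h(p) \define -p\ln p - (1-p)\ln(1-p)$. Since $p(1-p) = (1-\delta^2)/4 = \Theta(1)$ whenever $\delta = O(n^{-1/4})$, the prefactor is $\Theta(1/\sqrt n)$ throughout the range $\tau = O(n^{3/4})$.

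Next I would expand $h(p)$ around $p = 1/2$. The auxiliary function $f(\delta) \define (1-\delta)\ln(1-\delta) + (1+\delta)\ln(1+\delta)$ satisfies $f(0) = f'(0) = f'''(0) = 0$, $f''(0) = 2$, and $f^{(4)}(0) = 4$, so that
\[
h(p) = \ln 2 - \tfrac{1}{2} f(\delta) = \ln 2 - \frac{\delta^2}{2} - \frac{\delta^4}{12} - O(\delta^6),
\]
and therefore
\[
n h(p) = n \ln 2 - \frac{\tau^2}{2n} - \frac{\tau^4}{12\, n^3} - O\!\left(\frac{\tau^6}{n^5}\right).
\]
The first two terms produce exactly the claimed factor $2^n \cdot e^{-\tau^2/(2n)}$. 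The hypothesis $\tau = O(n^{3/4})$ is used precisely to control the remaining terms: it forces $\tau^4/n^3 = O(1)$ and $\tau^6/n^5 = O(n^{-1/2}) = o(1)$, so $\exp(-\tau^4/(12 n^3) - O(\tau^6/n^5))$ is a bounded multiplicative constant. Combining this with the $\Theta(1/\sqrt n)$ prefactor yields the stated bound.

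The main (and only real) obstacle is keeping the higher-order entropy terms under a $\Theta(1)$ multiplicative factor rather than a factor that scales with $n$. This is precisely the reason for restricting to $\tau = O(n^{3/4})$: it is exactly the threshold at which the quartic correction $\tau^4/n^3$ saturates to a constant. For $\tau \gg n^{3/4}$ this term diverges, and a multiplicatively tight estimate would have to retain the factor $e^{-\tau^4/(12 n^3)}$ (and possibly further terms) explicitly, producing a weaker conclusion than the clean form stated here.
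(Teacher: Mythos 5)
Your proof is correct and follows essentially the same route as the paper's: Stirling's formula plus a Taylor expansion of the logarithmic factor, where your binary-entropy expansion $h(p) = \ln 2 - \tfrac12 f(\delta)$ is notationally equivalent to the paper's direct expansion of $(1-\tau/n)^{(n-\tau)/2}(1+\tau/n)^{(n+\tau)/2}$ in \cref{thm:binomial_approx1}. The paper first establishes the full-series identity and then truncates in \cref{cor:middle_binomial_bound} (with $s=2$ giving this Fact), whereas you truncate directly after the quartic term; both hinge on the same observation that $\tau = O(n^{3/4})$ makes $\tau^4/n^3 = O(1)$ and the tail $O(\tau^6/n^5) = o(1)$.
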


\Cref{fact:special_case_3/4} is a special case of a much more general approximation, \cref{cor:middle_binomial_bound}, specifically the case of $s=2$. The proof is relatively tedious and so we relegate it to \cref{sec:approx-bin-coeff}. Using \cref{fact:special_case_3/4} we are able to prove the following claim which is important for the proof of \cref{lem:halfspace_density_increment_bound}.

\begin{claim} \label{clm:binomials_diff_n} For all $n$, all $\Omega(\sqrt{n}) \leq \tau \leq O(n^{3/4})$, and all $\ell = O((n/\tau)^2)$, 
\[
2^{-(n+\ell)}{n+\ell \choose \frac{n+\ell-\tau}{2}} \leq O\left( 2^{-n} {n \choose \frac{n-\tau}{2}} \right)\text{.}
\] 
\end{claim}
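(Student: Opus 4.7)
The plan is to apply \cref{fact:special_case_3/4} to both sides of the inequality and reduce the claim to a comparison of Gaussian tails. The hypotheses handed to us $-$ namely $\tau = O(n^{3/4})$ and $\ell = O((n/\tau)^2)$, combined with the lower bound $\tau = \Omega(\sqrt n)$ $-$ should exactly be what is needed to make the approximation apply uniformly and to control the ratio of the two resulting exponentials.

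\textbf{Step 1: Verify that \cref{fact:special_case_3/4} applies with parameter $n+\ell$.} The lower bound $\tau = \Omega(\sqrt n)$ forces $\ell = O((n/\tau)^2) = O(n)$, so $n + \ell = \Theta(n)$ and in particular $(n+\ell)^{3/4} = \Theta(n^{3/4}) \geq \tau$. Hence the hypothesis of the fact holds with parameter $n+\ell$ in place of $n$.

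\textbf{Step 2: Apply the fact to both sides.} On the right, \cref{fact:special_case_3/4} gives
\[
2^{-n}\binom{n}{(n-\tau)/2} = \Theta\!\left(\tfrac{1}{\sqrt n}\exp(-\tau^2/(2n))\right).
\]
On the left, applying the fact with parameter $n+\ell$ (justified in Step 1) and using $(n+\ell)^{-1/2} = \Theta(n^{-1/2})$ yields
\[
2^{-(n+\ell)}\binom{n+\ell}{(n+\ell-\tau)/2} = \Theta\!\left(\tfrac{1}{\sqrt n}\exp(-\tau^2/(2(n+\ell)))\right).
\]

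\textbf{Step 3: Bound the ratio of exponentials.} Dividing, the claim reduces to showing
\[
\frac{\tau^2}{2n} - \frac{\tau^2}{2(n+\ell)} = \frac{\tau^2 \ell}{2n(n+\ell)} = O(1).
\]
Since $n+\ell = \Theta(n)$ by Step 1, this is equivalent to $\tau^2 \ell / n^2 = O(1)$, i.e.\ $\ell = O(n^2/\tau^2) = O((n/\tau)^2)$, which is exactly the stated hypothesis on $\ell$. This gives the required $O(1)$ bound on the ratio of the two expressions above and concludes the proof.

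The only real content is Step 1 (checking that $\ell = O(n)$ so the fact applies, and that the $\sqrt{n+\ell}$ factor collapses to $\sqrt n$) and Step 3 (matching the exponent gap against the hypothesis). There is no genuine obstacle here: everything follows mechanically from \cref{fact:special_case_3/4} once one verifies that the regime of parameters $(\tau,\ell,n)$ stays inside the regime where the Gaussian approximation of the binomial is tight up to constants. The one point that requires a sentence of justification $-$ rather than being purely automatic $-$ is invoking the lower bound $\tau = \Omega(\sqrt n)$ to convert $\ell = O((n/\tau)^2)$ into $\ell = O(n)$, which is what lets us treat $n+\ell$ as $\Theta(n)$ throughout.
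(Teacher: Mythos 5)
Your proof is correct and follows essentially the same route as the paper: apply \cref{fact:special_case_3/4} to both binomials (checking that $n+\ell = \Theta(n)$ so the hypothesis of the fact still holds and the $\sqrt{n+\ell}$ prefactor collapses), then bound the exponent gap $\tau^2\ell/(2n(n+\ell)) = O(1)$ using $\ell = O((n/\tau)^2)$. The one extra observation you make explicitly—that $\tau = \Omega(\sqrt n)$ is what guarantees $\ell = O(n)$—is left implicit in the paper but is the right thing to say.
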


\begin{proof} Since $\tau = O(n^{3/4})$, by \cref{fact:special_case_3/4}
\begin{align}
    2^{-(n+\ell)}{n+\ell \choose \frac{n+\ell-\tau}{2}} \leq \frac{1}{\Theta(\sqrt{n+\ell})} \exp\left(-\frac{\tau^2}{2(n+\ell)}\right) = \frac{1}{\Theta(\sqrt{n})} \exp\left(-\frac{\tau^2}{2n(1+\ell/n)}\right). \nonumber
\end{align}
Observe that 
\begin{align}
    -\frac{\tau^2}{2n(1+\ell/n)} = -\frac{\tau^2}{2n}\left(1-\frac{\ell}{n+\ell}\right) = -\frac{\tau^2}{2n} + \frac{\tau^2 \ell}{2n(n+\ell)} = -\frac{\tau^2}{2n} + O(1) \nonumber
\end{align}
since $\ell = O((n/\tau)^2)$. Therefore,
\[
2^{-(n+\ell)}{n+\ell \choose \frac{n+\ell-\tau}{2}} \leq \frac{1}{\Theta(\sqrt{n})} \exp\left(-\frac{\tau^2}{2n}\right) \leq O\left( 2^{-n} {n \choose \frac{n-\tau}{2}}\right)
\]
where the second inequality is by \cref{fact:special_case_3/4} since $\tau = O(n^{3/4})$. \end{proof}

We are now set up to prove \cref{lem:halfspace_density_increment_bound}.

\begin{proof}[Proof of \cref{lem:halfspace_density_increment_bound}.] 
First, write
\begin{align} \label{eq:main}
    \Pru{x \sim \{\pm 1\}^{n + \ell}}{\sum_{i=1}^{n+\ell} x_i > \tau} = \Pru{x \sim \{\pm 1\}^{n + \ell}}{\tau < \sum_{i=1}^{n+\ell} x_i \leq 2\tau} + \Pru{x \sim \{\pm 1\}^{n + \ell}}{\sum_{i=1}^{n+\ell} x_i > 2\tau}\text{.}
\end{align}
By Hoeffding's inequality, the second term is
\begin{align} \label{eq:Hoeffding1}
    \Pru{x \sim \{\pm 1\}^{n + \ell}}{\sum_{i=1}^{n+\ell} x_i > 2\tau} &\leq \exp\left(-\frac{2 \cdot (2\tau)^2}{4(n+\ell)}\right) \nonumber \\
    &= \exp\left(-\frac{2\tau^2}{n}\left(1-\frac{1}{(n/\ell) + 1}\right)\right) = O\left(\exp\left(-\frac{2\tau^2}{n}\right)\right)
\end{align}
since $\frac{\tau^2\ell}{n^2} = O(1)$. Using \cref{clm:binomials_diff_n}, the first term is
\begin{align} \label{eq:first_term}
    \Pru{x \sim \{\pm 1\}^{n + \ell}}{\tau < \sum_{i=1}^{n+\ell} x_i \leq 2\tau} &= \frac{1}{2^{n+\ell}}\sum_{\tau' \in (\tau,2\tau] \text{ even}} {n + \ell \choose \frac{n+\ell-\tau'}{2}} \nonumber \\
    &\leq O \left( \frac{1}{2^n} \sum_{\tau' \in (\tau,2\tau] \text{ even}} {n \choose \frac{n-\tau'}{2}} \right)
    = O\left( \Pru{x \sim \{\pm 1\}^{n}}{\tau < \sum_{i=1}^{n} x_i \leq 2\tau} \right)\text{.}
\end{align}
We now just need to show that the first term dominates the second term. By \cref{fact:special_case_3/4}
\begin{align} \label{eq:second_term}
    \Pru{x \sim \{\pm 1\}^{n}}{\tau < \sum_{i=1}^{n} x_i \leq 2\tau} &= \frac{1}{2^n} \sum_{\tau' \in (\tau,2\tau] \text{ even}} {n \choose \frac{n-\tau'}{2}} \nonumber \\
    &\geq \Omega\left(\frac{\tau}{\sqrt{n}} \exp\left(-\frac{2\tau^2}{n}\right)\right) = \Omega\left(\exp\left(-\frac{2\tau^2}{n}\right)\right) \text{.}
\end{align}
Plugging the bounds from \cref{eq:Hoeffding1}, \cref{eq:first_term}, and \cref{eq:second_term} back into \cref{eq:main} yields
\[
     \Pru{x \sim \{\pm 1\}^{n + \ell}}{\sum_{i=1}^{n+\ell} x_i > \tau} \leq O \left( \Pru{x \sim \{\pm 1\}^{n}}{\tau < \sum_{i=1}^{n} x_i \leq n^{3/4}} \right) . \qedhere
\]
\end{proof}

\section{Sample-Based Testing and Learning}

In this section we prove upper and lower bounds for testing and learning
convex sets on $\zpm^n$ with samples.

\subsection{Upper Bound}
\thmlearner*

Our proof of \cref{thm:learner} uses the standard approach of showing that when $S$ is convex, its low-degree Fourier coefficients contain most of the information about $S$. We can then learn $S$ by estimating its low-degree Fourier coefficients. This learning approach was established by Linial, Mansour, and Nisan, and is referred to as the "Low-Degree Algorithm" \cite{LinialMN93}. 
The section is organized as follows:
\begin{enumerate}
    \item \cref{section:fourier-setup}: Setup of the Fourier analysis over the ternary hypercube that will be necessary for the learning result.
    \item  \cref{section:fourier-concentration}: Bounds on the Fourier concentration of convex sets, using the influence bounds from \cref{section:influence}.
    \item \cref{section:low-degree-algorithm}: The low-degree learning algorithm and the proof of \cref{thm:learner} and \cref{cor:UB-2sidedsamples}.
\end{enumerate}

\subsubsection{Fourier Analysis Setup over the Ternary Hypercube}
\label{section:fourier-setup}

This subsection uses mostly standard techniques, following Chapter 8 of \cite{O14} which outlines how to generalize Fourier analysis of Boolean functions to arbitary product spaces.

Let $\pi_{1/3}$ and $\pi_{1/3}^{\otimes n}$ denote the uniform distribution over $\zpm$ and $\zpm^n$, respectively. Let $L^2(\zpm^n, \pi_{1/3}^{\otimes n})$ denote the real inner product space of functions $f \colon \zpm^n \to \RR$ with inner product $\langle f, g \rangle = \Exp_x[f(x)g(x)]$.

\begin{definition} A \emph{Fourier basis} for $L^2(\zpm, \pi_{1/3})$ is an orthonormal basis $\phi_{-1},\phi_0,\phi_{+1} \colon \{0,\pm 1\}^n \to \mathbb{R}$ with $\phi_0 \equiv 1$. \end{definition}

An important message in Chapter 8 of \cite{O14} is that the specific choice of Fourier basis does not matter. For concreteness, we can use the following basis given in Example 8.10 of \cite{O14}.

\begin{definition} \label{def:basis} Define the following Fourier basis for $L^2(\zpm, \pi_{1/3})$: $\phi_0 \equiv 1$, 
\[
    (\phi_{-1}(-1), \phi_{-1}(0), \phi_{-1}(1)) = (-\sqrt{6}/2, 0, \sqrt{6}/2) \text{, and } (\phi_{1}(-1), \phi_1(0), \phi_{1}(1)) = (-\sqrt{2}/2, \sqrt{2}, -\sqrt{2}/2 ) \text{.}
\]
\end{definition}

It can be easily confirmed that the basis in \cref{def:basis} is orthonormal and so is indeed a Fourier basis. Now, given $\alpha \in \zpm^n$, we define $\phi_{\alpha} \in L^2(\zpm^n, \pi_{1/3}^{\otimes n})$ as 
\begin{align}
    \phi_{\alpha}(x) := \prod_{i = 1}^n \phi_{\alpha_i}(x_i)
\end{align}
An immediate corollary of Proposition 8.13 from \cite{O14} is that $(\phi_{\alpha})_{\alpha \in \zpm^n}$ is a Fourier basis for $L^2(\zpm^n, \pi_{1/3}^{\otimes n})$. I.e., $\phi_{(0,0,\ldots,0)} \equiv 1$ and this basis is orthonormal. Definition 8.14 of \cite{O14} now asserts that every function $f \colon \zpm^n \to \RR$ can be written as 
\begin{align}
    f(x) = \sum_{\alpha \in \zpm^n} \widehat{f}(\alpha)\phi_{\alpha}(x) \text{ where } \widehat{f} = \langle f, \phi_{\alpha} \rangle = \Exp_x[f(x)\phi_{\alpha}(x)]
\end{align}
is the Fourier coefficient of $f$ on $\alpha$. 

\subsubsection{Fourier Concentration for Convex Sets}
\label{section:fourier-concentration}

We use the notation $\# \alpha := |\{i \colon \alpha_i \neq 0\}|$. Our goal is now to prove the following fact about the Fourier coefficients of convex sets. Here we abuse notation and use $S \colon \zpm^n \to \{\pm 1\}$ defined as $S(x) = (-1)^{\mathbf{1}(x \notin S)}$ to denote membership in the set $S$.

\begin{lemma} [Fourier Concentration for Convex Sets] \label{lem:concentration} There exists a constant $C > 0$ such that for any convex set $S \subseteq \zpm^n$ and $\eps > 0$,
\[
\sum_{\alpha \colon \# \alpha > \frac{C}{\eps}n^{3/4}\log^{1/4}n} \widehat{S}(\alpha)^2 \leq \eps \text{.}
\]
\end{lemma}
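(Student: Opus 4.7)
The plan is to bound the Fourier tail mass above level $d$ by a Markov-type argument applied to the weighted sum $\sum_\alpha \#\alpha \cdot \widehat{S}(\alpha)^2$, to show that this weighted sum is $O(\Inf(S))$, and then to invoke \cref{thm:inf-bound} with $d$ proportional to $n^{3/4}\log^{1/4} n /\eps$.

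The first step is to establish the standard ``variance influence equals Fourier weight by level'' identity in the ternary setting. For any $f \in L^2(\zpm^n, \pi_{1/3}^{\otimes n})$ expanded in the product basis of \cref{section:fourier-setup}, a direct expansion using $\phi_0 \equiv 1$ and the orthonormality of the basis yields, for each coordinate $i \in [n]$,
\[
\Exu{u_{-i}}{\mathrm{Var}_{x_i}\!\big(f(x_i, u_{-i})\big)} \;=\; \sum_{\alpha \,:\, \alpha_i \neq 0} \widehat{f}(\alpha)^2 \,,
\]
and summing over $i$ gives $\sum_\alpha \#\alpha \cdot \widehat{f}(\alpha)^2 = \sum_i \Ex{\mathrm{Var}_{x_i}(f)}$.

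The second step applies this identity with $f = S$ (the $\pm 1$-valued version). On any three-point line $\{(x_i, u_{-i}) : x_i \in \zpm\}$ the restricted function takes values in $\{\pm 1\}$, so a short case analysis shows that its variance is either $0$ (when all three values agree) or $8/9$ (otherwise), while the number of edges on the line contributing to $\Inf(S)$ is $0$ in the first case and at least $1$ in the second. Hence $\mathrm{Var}_{x_i}(S \mid u_{-i}) \leq \tfrac{8}{9} \cdot (\text{local edge count})$ pointwise, and summing over $i$ and averaging, together with the normalization in \cref{eq:influence}, gives $\sum_\alpha \#\alpha \cdot \widehat{S}(\alpha)^2 = O(\Inf(S))$. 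A Markov step then yields
\[
\sum_{\alpha \,:\, \#\alpha > d} \widehat{S}(\alpha)^2 \;\leq\; \frac{1}{d}\sum_\alpha \#\alpha \cdot \widehat{S}(\alpha)^2 \;=\; O\!\left(\frac{\Inf(S)}{d}\right) \;=\; O\!\left(\frac{n^{3/4}\log^{1/4} n}{d}\right)
\]
using \cref{thm:inf-bound}, so taking $d = (C/\eps)\, n^{3/4}\log^{1/4} n$ for a sufficiently large absolute constant $C$ makes the tail at most $\eps$, as required.

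This argument is essentially routine once the influence upper bound is in hand. The only care is in the second step: one must chase the normalizations relating the variance influence $\sum_i \Ex{\mathrm{Var}_{x_i}(S)}$ to the edge influence $\Inf(S)$ from \cref{eq:influence}---specifically, tracking the factor of $3$ arising from $|E| = 2n\cdot 3^{n-1}$ versus the $3^n$ in the denominator of $\Inf(S)$, together with the factor $8/9$ coming from the variance of a $\pm 1$-valued function taking two distinct values on a three-point line. Neither of these introduces anything subtle; the key non-trivial input is \cref{thm:inf-bound}.
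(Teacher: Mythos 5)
Your proposal is correct and follows essentially the same route as the paper: the identity $\sum_\alpha \#\alpha \cdot \widehat{f}(\alpha)^2 = \sum_i \Ex{\mathrm{Var}_{x_i}(f)}$ is exactly the paper's ``Fourier influence'' $\Inf^{\mathsf{Fourier}}(f) = \sum_i \langle f, \mathrm{L}_i f\rangle$ in disguise, and your per-line $8/9$-vs-$0$ variance computation is the same calculation the paper uses to prove that $\Inf^{\mathsf{Fourier}}(f)$ and $\Inf(f)$ agree up to constant factors, after which both arguments apply Markov and invoke \cref{thm:inf-bound}.
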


\begin{proof} The idea is to make use of our upper bound on the influence of convex sets from \cref{thm:inf-bound}. We will need the following slightly different definition of the influence given by \cite{O14}. We will refer to this slightly different notion as \emph{Fourier influence} and will denote it by $\Inf^{\mathsf{Fourier}}(f)$ for clarity. We will show that these definitions are equivalent up to a constant factor. Below, for $x \in \zpm^n$, $i \in [n]$, and
$b \in \zpm$, we write $x^{i \gets b}$ for the vector obtained from $x$ by setting $x_i$ to $b$.

\begin{definition} [Def. 8.17 and 8.22, \cite{O14}] \label{def:operators} For $f \in L^2(\zpm^n, \pi_{1/3}^{\otimes n})$ and $i \in [n]$, the \emph{projection of $f$ onto $i$} is 
\[
\text{E}_i f(x) = \Exp_{b \in \{0,\pm 1\}}[f(x^{i \gets b})] \text{.}
\]
The \emph{$i$'th coordinate Laplacian operator} $\text{L}_i$ is defined as $\text{L}_i f := f - \text{E}_i f$. The \emph{Fourier influence of coordinate $i$ on $f$ is } $\Inf_i^{\mathsf{Fourier}}(f) = \langle f, \text{L}_i f \rangle$. The \emph{total Fourier influence of $f$} is $\Inf^{\mathsf{Fourier}}(f) = \sum_{i=1}^n \Inf_i^{\mathsf{Fourier}}(f)$.
\end{definition}

We will need the two following identities.

\begin{proposition} [Prop. 8.16 and Prop. 8.23, \cite{O14}] \label{prop:identities} Every $f \in L^2(\zpm^n, \pi_{1/3}^{\otimes n})$ satisfies the following two identities:
\begin{enumerate}
    \item $\sum_{\alpha \in \zpm^n} \widehat{f}(\alpha)^2 = \Exp[f^2]$
    \item $\Inf^{\mathsf{Fourier}}(f) = \sum_{\alpha \in \zpm^n} \# \alpha \cdot \widehat{f}(\alpha)^2$
\end{enumerate}
\end{proposition}

\begin{lemma} [Fourier Concentration from Influence] \label{lem:degree-bound} Let $\cF \colon \zpm^n \to \{\pm 1\}$ be a class of functions with Fourier influence upper bounded by $\Inf^{\mathsf{Fourier}}(f) \leq B$ for all $f \in \cF$. Then, for any $\eps > 0$, we have
\[
\sum_{\alpha \in \zpm^n \colon \# \alpha > B/\eps} \widehat{f}(\alpha)^2 \leq \eps \text{.}
\]
\end{lemma}
\begin{proof} By item (1) of \cref{prop:identities}, we have $\sum_{\alpha} f(\alpha)^2 = 1$. Thus, $\sum_{\alpha} \# \alpha \cdot \widehat{f}(\alpha)^2$ is the expectation of $\# \alpha$ when $\alpha$ is sampled with probability $\widehat{f}(\alpha)^2$. By item (2) of \cref{prop:identities} we have
\[
\sum_{\alpha \in \zpm^n} \# \alpha \cdot \widehat{f}(\alpha)^2 = \Inf^{\mathsf{Fourier}}(f) \leq B
\]
and now applying Markov's inequality yields the desired inequality. \end{proof}

\begin{fact} [Equivalence of Influence Definitions] \label{fact:inf-def} For every $f \colon \zpm^n \to \{\pm 1\}$, we have 
\[
\frac{3}{8} \cdot \Inf^{\mathsf{Fourier}}(f) \leq \Inf(f) \leq \frac{3}{4} \cdot \Inf^{\mathsf{Fourier}}(f)\text{.}
\]
\end{fact}

\begin{proof} Let $\Delta_i(f)$ denote the number of lines in the ternary hypercube of the form $(x^{i \gets -1}, x^{i \gets 0}, x^{i \gets 1})$ such that $f_{\{x^{i \gets -1}, x^{i \gets 0}, x^{i \gets 1}\}}$ is not constant, and let $\Delta(f) = \sum_i \Delta_i$ be the total number of such lines. Recall the definition of $\Inf(f)$ from \cref{eq:influence} and observe that since every such line contains either $1$ or $2$ influential edges, we have $\Delta(f) \cdot 3^{-n} \leq \Inf(f) \leq 2\Delta(f) \cdot 3^{-n}$. We will show that $\Inf^{\mathsf{Fourier}}(f) = \frac{8}{3} \Delta(f) \cdot 3^{-n}$ and combining these observations completes the proof. We have
\begin{align}
    \Inf_i^{\mathsf{Fourier}}(f) = \langle f, \text{L}_i f\rangle = \Exp_x\left[f(x)(f(x)-\text{E}_i f(x))\right] = \Exp_x\left[1 - f(x)\Exp_{b \in \{0,\pm 1\}}[f(x^{i \gets b})]\right] \text{.}  
\end{align}
Now, for a fixed $x$, consider the line in dimension $i$, containing $x$: $\ell_i(x) := (x^{i \gets -1}, x^{i \gets 0}, x^{i \gets 1})$. Observe that if $f$ is constant on $\ell_i(x)$, then $1 - f(x)\Exp_{b \in \{0,\pm 1\}}[f(x^{i \gets b})] = 0$. If $f$ is non-constant on $\ell_i(x)$, then either it contains two $+1$'s and one $-1$ or vice versa. In both cases we have
\begin{align}
    \Exp_{a \in \{0,\pm 1\}}\left[1 - f(x^{i \gets a})\Exp_{b \in \{0,\pm 1\}}[f(x^{i \gets b})]\right] = 1 - \Exp_{b \in \zpm}[f(x^{i\gets b})]^2 = 8/9 \text{.}
\end{align}
Therefore, 
\begin{align*}
    \Inf_i^{\mathsf{Fourier}}(f) &= \Exp_x\left[1 - f(x)\Exp_{b \in \{0,\pm 1\}}[f(x^{i \gets b})]\right] = \Exp_x\Exp_{a \in \{0,\pm 1\}}\left[1 - f(x^{i \gets a})\Exp_{b \in \{0,\pm 1\}}[f(x^{i \gets b})]\right] \\
        &= \frac{8}{9} \Exp_x \left[\mathbf{1}(f|_{\ell_i(x)} \text{ is not constant})\right] = \frac{8}{3} \Delta_i(f) \cdot 3^{-n} 
\end{align*}
and summing over all $i$ completes the proof. \end{proof}

Combining \cref{thm:inf-bound}, \cref{lem:degree-bound}, and \cref{fact:inf-def} completes the proof of \cref{lem:concentration}. \end{proof}

\subsubsection{Low-Degree Learning Algorithm\texorpdfstring{ and Proof of \cref{thm:learner}}{}}
\label{section:low-degree-algorithm}

Recall that we are using the basis for the space of functions $f\colon \zpm^n \to \mathbb{R}$ given by 
\begin{align}
    \phi_{\alpha}(x) := \prod_{i = 1}^n \phi_{\alpha_i}(x_i) \text{ ~for every~ } \alpha \in \zpm^n
\end{align}
where $\alpha_{-1},\alpha_0,\alpha_1$ are a basis for the space of functions $f \colon \zpm \to \mathbb{R}$ defined as: $\phi_0 \equiv 1$, 
\[
    (\phi_{-1}(-1), \phi_{-1}(0), \phi_{-1}(1)) = (-\sqrt{6}/2, 0, \sqrt{6}/2) \text{, and } (\phi_{1}(-1), \phi_1(0), \phi_{1}(1)) = (-\sqrt{2}/2, \sqrt{2}, -\sqrt{2}/2 ) \text{.}
\]
Recall that $\widehat{f}(\alpha) = \Exp_x[f(x)\phi_{\alpha}(x)]$. Our learning upper bound \cref{thm:learner} follows immediately by combining \cref{lem:concentration} with the following theorem.

\begin{theorem} [Low-Degree Algorithm over $\zpm^n$] \label{thm:low-degree} Let $\cF \colon \zpm^n \to \{\pm 1\}$ be a class of functions such that for $\eps > 0$ and $\tau = \tau(\eps,n)$,
\[
\sum_{\alpha \in \zpm^n \colon \# \alpha > \tau} \widehat{f}(\alpha) \leq \eps \text{.}
\]
Then $\cF$ can be learned with time and sample complexity $\text{poly}(n^{\tau},1/\eps)$. \end{theorem}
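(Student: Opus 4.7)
The plan is to implement the Low-Degree Algorithm of Linial, Mansour, and Nisan adapted to the ternary Fourier basis. Draw $m$ uniform samples $(x^{(1)}, f(x^{(1)})), \ldots, (x^{(m)}, f(x^{(m)}))$, form the empirical estimates
\[
\widetilde{f}(\alpha) \define \frac{1}{m} \sum_{i=1}^m f(x^{(i)}) \phi_\alpha(x^{(i)}) \qquad \text{for every } \alpha \in \zpm^n \text{ with } \#\alpha \leq \tau,
\]
and output the hypothesis $h(x) = \sgn(g(x))$ where $g(x) = \sum_{\#\alpha \leq \tau} \widetilde{f}(\alpha)\phi_\alpha(x)$.

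The analysis proceeds in three steps. First, Parseval gives
\[
\|f - g\|_2^2 \;=\; \sum_{\#\alpha > \tau} \widehat{f}(\alpha)^2 \;+\; \sum_{\#\alpha \leq \tau} \bigl(\widehat{f}(\alpha) - \widetilde{f}(\alpha)\bigr)^2 \;\leq\; \eps \;+\; \sum_{\#\alpha \leq \tau} \bigl(\widehat{f}(\alpha) - \widetilde{f}(\alpha)\bigr)^2,
\]
by the hypothesis on the Fourier tail. Second, each empirical coefficient is an unbiased estimator, so
\[
\Exp\bigl[(\widehat{f}(\alpha) - \widetilde{f}(\alpha))^2\bigr] \;=\; \tfrac{1}{m}\Var[f(x)\phi_\alpha(x)] \;\leq\; \tfrac{1}{m}\Exp[f(x)^2\phi_\alpha(x)^2] \;=\; \tfrac{1}{m},
\]
where the final equality uses $f(x)^2 = 1$ together with the orthonormality statement $\Exp[\phi_\alpha^2]=1$ from \cref{prop:identities}. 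Letting $N := |\{\alpha : \#\alpha \leq \tau\}| \leq \sum_{j=0}^{\tau}\binom{n}{j}2^j \leq (3n)^\tau$ and choosing $m = \Theta(N/\eps)$, we get expected total low-degree squared error $O(\eps)$; a Markov bound then yields $\sum_{\#\alpha \leq \tau}(\widehat{f}(\alpha) - \widetilde{f}(\alpha))^2 = O(\eps)$ with probability at least $2/3$, so $\|f-g\|_2^2 = O(\eps)$. Finally, the standard rounding argument concludes: whenever $h(x) \neq f(x)$ we have $|g(x) - f(x)| \geq 1$, hence
\[
\Pr_x[h(x) \neq f(x)] \;\leq\; \Exp[(g(x)-f(x))^2] \;=\; \|f-g\|_2^2 \;=\; O(\eps).
\]
Rescaling $\eps$ by a constant gives the claimed error $\eps$ with sample and time complexity $\mathrm{poly}(n^\tau, 1/\eps)$ (the running time is dominated by computing $mN$ evaluations of $\phi_\alpha$).

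The one point that must be handled with care, as emphasized in the paper's introduction, is that the ternary basis functions are not uniformly bounded: $|\phi_\alpha(x)|$ can be as large as $(\sqrt{6}/2)^{\#\alpha}$, so a direct Hoeffding-style concentration bound on an individual estimator $\widetilde f(\alpha)$ would require exponentially many samples. The key observation is that we never need such a per-coefficient concentration statement; it suffices to control the \emph{sum} of squared errors in expectation, and this quantity depends only on the Plancherel identity $\Exp[\phi_\alpha^2]=1$, which holds irrespective of the pointwise size of $\phi_\alpha$. This is the one place where the ternary setting differs from the Boolean setting, but it is absorbed cleanly by the variance-plus-Markov argument above rather than requiring any genuinely new idea.
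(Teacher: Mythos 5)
Your proposal is correct and follows essentially the same approach as the paper's proof: both run the Low-Degree Algorithm and, crucially, handle the exponential pointwise growth of the ternary basis functions by observing that $\Var[f(x)\phi_\alpha(x)] \leq \Exp[\phi_\alpha(x)^2] = 1$ by orthonormality, which is exactly the paper's \cref{clm:probE}. The only difference is in how concentration is applied: you bound $\Exp\big[\sum_{\#\alpha\leq\tau}(\widehat f(\alpha)-\widetilde f(\alpha))^2\big] \leq N/m$ and apply Markov once to the aggregate squared error (giving $m = \Theta(N/\eps)$), whereas the paper applies Chebyshev separately to each $Z_\alpha$ and then a union bound over $A$ (giving $s = 3|A|^2/\eps$). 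Your route is marginally cleaner and yields a sample complexity linear rather than quadratic in $|A|$, though both are $\mathrm{poly}(n^\tau,1/\eps)$ so the theorem is unaffected; the paper's per-coefficient version has the minor advantage of producing coefficient estimates that are each individually accurate, which it never actually needs.
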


\begin{proof} Let $A := \{\alpha \in \zpm^n \colon \# \alpha \leq \tau\}$. Note that $|A| = \sum_{\Delta=0}^{\tau} {n \choose \Delta} \cdot 2^{\Delta} = \text{poly}(n^{\tau})$. We take $s$ samples $x_1,\ldots,x_s \in \zpm^n$ where $s$ will be chosen later. For each $\alpha$, we use the empirical estimate $Z_\alpha := 
\frac{1}{s} \sum_{i=1}^s f(x_i) \phi_{\alpha}(x_i)$ and return the hypothesis
\begin{align} \label{eq:hypothesis}
    h(x) = \sgn\left(\sum_{\alpha \in A} Z_{\alpha} \phi_{\alpha}(x) \right)\text{.}
\end{align}
Consider the event that $|Z_{\alpha} - \widehat{f}(\alpha)| \leq \sqrt{\eps/|A|}$ for all $\alpha$. We first show that this event occurs with high probability, and then show that if it occurs, then $h$ is a good hypothesis.

\begin{claim} \label{clm:probE} Set $s := \frac{3|A|^2}{\eps} = \text{poly}(n^{\tau},1/\eps)$. Then 
\[
\pr_{x_1,\ldots,x_s}\left[|Z_{\alpha} - \widehat{f}(\alpha)| \leq \sqrt{\eps/|A|} \text{ for all } \alpha \in A\right] \text{.}
\]
\end{claim}

\begin{proof} Fix any $\alpha \in A$ and observe that $Z_{\alpha} = \frac{1}{s} \sum_{i=1}^s X_i$ where $X_1,\ldots,X_s$ are independent copies of $X = f(x)\phi_{\alpha}(x)$ for $x \sim \zpm^n$ drawn uniformly at random. Note that in the setting of the Boolean hypercube when one uses the standard basis of parity functions, the random variable $X$ always lies in $\{\pm 1\}$. In the ternary hypercube this is not the case and in fact $|\phi_{\alpha}(x)|$ can be exponentially large. For instance $\phi_{\vec{1}}(\vec{0}) = (\sqrt{2})^n$. However, since $\langle \phi_{\alpha}, \phi_{\alpha} \rangle = 1$ for all $\alpha$, we're able to show that $\Variance(X) \leq 1$ and this allows us to obtain good estimates for $\widehat{f}(\alpha)$. We have
\[
\sigma^2 := \Variance(X) = \Exp_x[(f(x)\phi_{\alpha}(x))^2] - \Exp_x[f(x)\phi_{\alpha}(x)]^2 = 1 - \widehat{f}(\alpha)^2
\]
by definition of the Fourier coefficient $\widehat{f}(\alpha)$ and the fact that our basis is orthonormal and so in particular $\Exp_x[\phi_{\alpha}(x)^2] = \langle \phi_{\alpha}, \phi_{\alpha} \rangle = 1$. Therefore, $\Variance(Z_{\alpha}) = \frac{\sigma^2}{s} = \frac{1-\widehat{f}(\alpha)^2}{s} \leq \frac{1}{s}$. Note also that $\Exp[Z_{\alpha}] = \widehat{f}(\alpha)$. Now, by Chebyshev's inequality, we have
\[
\pr\left[|Z_{\alpha} - \widehat{f}(\alpha)| \geq \frac{k}{\sqrt{s}}\right] \leq \pr\left[|Z_{\alpha} - \widehat{f}(\alpha)| \geq k \sqrt{\Variance(Z_{\alpha})}\right] \leq \frac{1}{k^2}.
\]
Setting $k = \sqrt{3|A|}$ and recalling $s = \frac{3|A|^2}{\eps}$ yields
\[
\pr\left[|Z_{\alpha} - \widehat{f}(\alpha)| \geq \sqrt{\eps/|A|}\right] = \pr\left[|Z_{\alpha} - \widehat{f}(\alpha)| \geq \frac{k}{\sqrt{s}}\right] \leq \frac{1}{3|A|}
\]
and taking a union bound over all $\alpha \in A$ completes the proof of the claim. \end{proof}

\begin{claim} \label{clm:good_h} Using the definition of $h$ in \cref{eq:hypothesis}, if $|Z_{\alpha} - \widehat{f}(\alpha)| \leq \sqrt{\eps/|A|}$ for all $\alpha \in A$, then $\pr_{x \sim \zpm^n}[h(x) \neq f(x)] \leq \eps$. \end{claim}

\begin{proof} First, observe that 
\begin{align} \label{eq:initial}
    \pr_x\left[f(x) \neq h(x)\right] = \frac{1}{4}\Exp_x\left[(f(x)-h(x))^2\right] \text{.}
\end{align}
Now, if $f(x) \neq h(x)$, then $\left(f(x) - \sum_{\alpha \in A} Z_{\alpha} \phi_{\alpha}(x)\right)^2 \geq 1 = \frac{1}{4}(f(x) - h(x))^2$. Clearly if $f(x) = h(x)$, then this inequality also holds. Thus, for any $x \in \zpm^n$, this inequality holds. Combining this observation with \cref{eq:initial} yields
\begin{align} \label{eq:second}
    \pr_x\left[f(x) \neq h(x)\right] &\leq \Exp_x\left[\left(f(x) - \sum_{\alpha \in A} Z_{\alpha}\phi_\alpha(x) \right)^2\right] \text{.}
\end{align}
In the next calculation, for $\alpha \notin A$, let $Z_{\alpha} := 0$. Now, writing $f(x) = \sum_{\alpha} \widehat{f}(\alpha) \phi_{\alpha}(x)$, expanding the squared sum, applying linearity of expectation, and using the fact that $\Exp_x[\phi_{\alpha}(x)\phi_{\alpha'}(x)] = \langle \phi_{\alpha}, \phi_{\alpha'}\rangle = 0$ for any $\alpha \neq \alpha'$, we get
\begin{align} \label{eq:third}
    \Exp_x\left[\left(\sum_{\alpha}\phi_{\alpha}(x)\left(\widehat{f}(\alpha) - Z_{\alpha}\right) \right)^2\right] &= \Exp_x\left[\sum_{\alpha,\alpha'} \phi_{\alpha}(x) \phi_{\alpha'}(x) \left(\widehat{f}(\alpha) - Z_{\alpha}\right)\left(\widehat{f}(\alpha') -Z_{\alpha'}\right)\right] \nonumber \\
    &= \sum_{\alpha,\alpha'} \langle \phi_{\alpha},\phi_{\alpha'} \rangle \left(\widehat{f}(\alpha) - Z_{\alpha}\right)\left(\widehat{f}(\alpha') -Z_{\alpha'}\right) \nonumber \\
    &= \sum_{\alpha} \left(\widehat{f}(\alpha) - Z_{\alpha}\right)^2 \text{.}
\end{align}
Finally, using \cref{eq:second}, \cref{eq:third}, and the fact that $|\widehat{f}(\alpha) - Z_{\alpha}| \leq \sqrt{\eps/|A|}$ for $\alpha \in A$ and $\sum_{\alpha \notin A} \widehat{f}(\alpha)^2 \leq \eps$, yields
\begin{align*}
    \pr_x\left[f(x) \neq h(x)\right] &\leq \sum_{\alpha} \left(\widehat{f}(\alpha) - Z_{\alpha}\right)^2 = \sum_{\alpha \in A} \left(\widehat{f}(\alpha) - Z_{\alpha}\right)^2 + \sum_{\alpha \notin A} \widehat{f}(\alpha)^2 \leq |A| \cdot \frac{\eps}{|A|} + \eps = 2\eps
\end{align*}
and this completes the proof of the claim. \end{proof}

Combining \cref{clm:probE} and \cref{clm:good_h} completes the proof of \cref{thm:low-degree}. \end{proof}

\subsection{Lower Bound}

In this section we prove the following lower bound on the \emph{sample complexity} of convexity testing in the ternary hypercube.

\thmtwosidedsamplestestingLB*

Our proof of \cref{thm:2sidedsamplestestingLB} follows the standard approach of defining a pair of distributions $\Dyes,\Dno$ over subsets of $\zpm^n$ such that the following hold:
\begin{itemize}
    \item $\Dyes$ is supported over convex sets.
    \item Sets drawn from $\Dno$ are typically far from convex: $\pr_{S \sim \Dno}[\eps(S) = \Omega(1)] = \Omega(1)$.
    \item The distributions over labeled examples from $\Dyes$ and $\Dno$ are close in TV-distance.
\end{itemize}

\subsubsection{The Distributions \texorpdfstring{$\Dyes$ and $\Dno$}{}} \label{sec:distributions}

Our construction uses a variant of random Talagrand DNFs adapted to the case of testing convexity in the ternary hypercube, $\zpm^n$. In particular, our construction is inspired by the approach of \cite{BeBl16} and \cite{Chen17} to prove lower bounds for testing monotonicity of functions on the Boolean hypercube, $\{0,1\}^n$.

Let $N = 3^{\sqrt{n}}$ and choose $N$ \emph{terms} $t^{(1)},\ldots,t^{(N)} \in \zpm^n$ i.i.d.\ according the following distribution. For each $i \in [N]$:
\begin{enumerate}
    \item Form a (multi)-set $T_i$ by taking $\sqrt{n}$ independent uniform samples from $[n]$.
    \item For each $a \in T_i$, set $t^{(i)}_a \in \{\pm 1\}$ uniformly at random. For each $a \notin T_i$, set $t^{(i)}_a = 0$.
\end{enumerate}
Let $\pmb{t} = (t^{(1)},\ldots,t^{(N)})$ denote the random sequence of terms. Recall the \emph{outward-oriented poset} (\Cref{def:oop}) over $\zpm^n$. For each $i \in [N]$, let 
\begin{align} \label{eq:buckets}
    U_{i} := \left\{x \in \midlayers(\sqrt{n}) \colon x \succeq t^{(i)} \text{ and } x \not\succeq t^{(j)} \text{ for all } j \in [N] \setminus \{i\}\right\}
\end{align}
denote the set of points in the middle layers of the ternary hypercube which satisfy the $i$'th term, uniquely. Let $U = \cup_{i=1}^N U_i$ denote the set of points which satisfy a unique term. 

Sets drawn from $\Dyes$ are generated as follows. Choose a uniform random assignment $\pmb{\phi} \colon [N] \to \{0,1\}$. For every $x \in \midlayers(\sqrt{n})$ define
\[
    S_{\pmb{t},\pmb{\phi}}(x) = 
    \begin{cases}
        1, & \text{if } \forall i \in [N] \text{, } x \not\succeq t^{(i)} \\
        0, & \text{if } \exists i \neq j \in [N]\text{, } x \succeq t^{(i)} \text{ and } x \succeq t^{(j)} \\
        \pmb{\phi}(i), & \text{if } x \in U_{i}\text{.}
    \end{cases}
\]
Sets drawn from $\Dno$ are generated as follows. Choose a uniform random function $\pmb{r} \colon U \to \{0,1\}$. For each $x \in \midlayers(\sqrt{n})$ define

\[
    S_{\pmb{t},\pmb{r}}(x) = 
    \begin{cases}
        1, & \text{if } \forall i \in [N] \text{, } x \not\succeq t^{(i)} \\
        0, & \text{if } \exists i \neq j \in [N]\text{, } x \succeq t^{(i)} \text{ and } x \succeq t^{(j)}\\
       \pmb{r}(x), & \text{if } x \in U \text{.}
        \end{cases}
\]
For $x \notin \midlayers(\sqrt{n})$: if $x \in \inlayers(\sqrt{n})$, then both the yes and no distributions assign value $1$ and if $x \in \outlayers(\sqrt{n})$, then both the yes and no distributions assign value $0$.

\Cref{thm:2sidedsamplestestingLB} follows immediately by combining the following three lemmas.

\begin{lemma} \label{lem:yes} Every set in the support of $\Dyes$ is convex. \end{lemma}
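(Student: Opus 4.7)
The plan is to prove the contrapositive: assume for contradiction that some $S = S_{\bm{t},\bm{\phi}}$ in the support of $\Dyes$ is not convex. By \cref{fact:convex-characterization}, there exists a minimal violating pair $(X,y)$ for $S$, and by \cref{fact:simplex1}, we may assume $X \subseteq \upset(y)$, so that every $x \in X$ satisfies $x \succeq y$ in the outward-oriented poset. Two structural consequences will be used repeatedly: (i) $x \succeq y$ implies $\|x\|_1 \geq \|y\|_1$, since $x$ agrees with $y$ on all of $y$'s nonzero coordinates and possibly has more nonzeros elsewhere; and (ii) if $y \succeq t^{(i)}$, then transitivity gives $x \succeq t^{(i)}$ for every $x \in X$.

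The proof proceeds by case analysis on which layer contains $y$. If $y \in \inlayers(\sqrt{n})$, then $y \in S$ by definition, contradicting that $(X,y)$ is violating. If $y \in \outlayers(\sqrt{n})$, then $\|y\|_1 > \tfrac{2n}{3} + \sqrt{n}$, so by (i) every $x \in X$ is also in $\outlayers(\sqrt{n})$; but $S$ contains no points of $\outlayers(\sqrt{n})$, contradicting $X \subseteq S$.

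The main case is $y \in \midlayers(\sqrt{n})$ with $y \notin S$. By the definition of $S_{\bm{t},\bm{\phi}}$, either (a) $y \succeq t^{(i)}$ and $y \succeq t^{(j)}$ for two distinct indices $i \neq j$, or (b) $y \in U_i$ with $\bm{\phi}(i) = 0$. In either subcase, fix an arbitrary $x \in X$ (which exists because $X$ is nonempty, as $y \in \conv(X)$). By (i), $\|x\|_1 \geq \|y\|_1 \geq \tfrac{2n}{3} - \sqrt{n}$, so $x \notin \inlayers(\sqrt{n})$. If $x \in \outlayers(\sqrt{n})$, then $x \notin S$, contradicting $X \subseteq S$. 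Otherwise $x \in \midlayers(\sqrt{n})$: in subcase (a), transitivity gives $x \succeq t^{(i)}$ and $x \succeq t^{(j)}$, so $S(x) = 0$; in subcase (b), transitivity gives $x \succeq t^{(i)}$, and either $x$ also satisfies a second term $t^{(j)}$ (forcing $S(x)=0$) or $x \in U_i$, in which case $S(x) = \bm{\phi}(i) = 0$. In every branch, $x \notin S$, contradicting $X \subseteq S$.

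No step of this argument looks like a serious obstacle, since the outward-oriented poset does precisely the work needed: its transitivity propagates the ``satisfies term $t^{(i)}$'' relation from $y$ up to every point in $X$, and its monotonicity of $\|\cdot\|_1$ confines the analysis to the middle and outer layers. The only subtlety worth being careful about is subcase (b), where a single $x \in X$ might satisfy either $U_i$ alone or multiple terms — both possibilities must be handled, and both yield $x \notin S$.
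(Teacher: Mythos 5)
Your proof is correct and follows essentially the same reasoning as the paper's: both arguments reduce to showing that the complement of $S_{\bm{t},\bm{\phi}}$ is upward-closed in the outward-oriented poset, and then invoke \cref{fact:simplex1} and \cref{fact:convex-characterization} to conclude convexity. The paper states this monotonicity property explicitly and then derives the conclusion; you instead start from a hypothetical minimal violating pair and derive the same contradiction via transitivity of $\preceq$ --- a cosmetic reorganization, not a different method.
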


\begin{proof} Let $S_{\pmb{t},\pmb{\phi}} \subseteq \zpm^n$ be any set drawn from $\Dyes$. We observe that $S_{\pmb{t},\pmb{\phi}}$ is non-increasing with respect to the \emph{outward-oriented poset} (recall \cref{def:oop}). Suppose $y \notin S_{\pmb{t},\pmb{\phi}}$ and let $x \in \upset(y)$ (recall \cref{def:upset}). We have three cases depending on where $y$ lies.
\begin{itemize}
    \item $y \in \outlayers(\sqrt{n})$: in this case $x \in \outlayers(\sqrt{n})$ as well and so $x \notin S_{\pmb{t},\pmb{\phi}}$.
    \item $y \succeq t^{(i)},t^{(j)}$ for two terms $i \neq j \in [N]$: in this case we have $x \succeq y \succeq t^{(i)},t^{(j)}$ and so $x \notin S_{\pmb{t},\pmb{\phi}}$. 
    \item $y \in U_i$ for some $i \in [N]$ and $\pmb{\phi}(i) = 0$: in this case we have $x \succeq y \succeq t^{(i)}$ and so either (i) $x \in U_i$, (ii) there exists $j \neq i \in [N]$ for which $x \succeq t^{(j)}$, or (iii) $x \in \outlayers(\sqrt{n})$. In all cases $x \notin S_{\pmb{t},\pmb{\phi}}$.
\end{itemize}
Since $S_{\pmb{t},\pmb{\phi}}$ is non-increasing we have $\upset(y) \subset \overline{S_{\pmb{t},\pmb{\phi}}}$ and so by \cref{fact:simplex1} any minimal set of points $X$ such that $y \in \conv(X)$ satisfies $X \subset \overline{S_{\pmb{t},\pmb{\phi}}}$. Thus $S_{\pmb{t},\pmb{\phi}}$ is convex by \cref{fact:convex-characterization}. \end{proof}

\begin{lemma} \label{lem:no} For $S_{\pmb{t},\pmb{r}} \sim \Dno$, we have
$\pr_{\pmb{t},\pmb{r}}[\eps(S_{\pmb{t},\pmb{r}}) \geq \Omega(1)] \geq \Omega(1)$. 
\end{lemma}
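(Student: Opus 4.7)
My plan is to prove \cref{lem:no} in two stages. First I will show that with probability $\Omega(1)$ over the random choice of terms $\pmb{t}$, the union $U = \bigcup_{i=1}^N U_i$ has $|U| = \Omega(3^n)$. Second, conditional on such a $\pmb{t}$, I will show that $S_{\pmb{t},\pmb{r}}$ admits $\Omega(3^n)$ pairwise point-disjoint witnesses of non-convexity with constant probability over $\pmb{r}$, which forces $\dist(S_{\pmb{t},\pmb{r}},\mathbf{convex}) = \Omega(1)$.

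For the first stage, I fix a point $x \in \midlayers(\sqrt n)$ and estimate $\pr_{\pmb{t}}[x \in U]$ by a Poissonization argument. For a single random term $t^{(i)}$, the event $x \succeq t^{(i)}$ requires $x_a = t^{(i)}_a$ at every $a \in T_i$, and each coordinate in $T_i$ contributes a factor of roughly $\tfrac{2}{3}\cdot\tfrac{1}{2} = \tfrac{1}{3}$ (namely, the probability that the random coordinate lands in $\supp(x)$, times the probability the random $\pm 1$ sign matches $x_a$). Since $T_i$ has $\sqrt n - O(1)$ distinct coordinates with high probability, $\pr[x \succeq t^{(i)}] = \Theta(3^{-\sqrt n}) = \Theta(1/N)$, up to $O(1)$-factor corrections from $\|x\|_1 - \tfrac{2n}{3} = O(\sqrt n)$. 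Since the $N$ terms are independent, the number satisfied by $x$ is approximately $\text{Poisson}(1)$, so $\pr[x \in U] \to e^{-1}$. Combined with $|\midlayers(\sqrt n)| = \Omega(3^n)$ via \cref{fact:chernoff}, this gives $\bE_{\pmb{t}}|U| = \Omega(3^n)$, and a second-moment bound (using that for $x \neq y$ the covariance of the satisfaction events decays as $\Theta((1/6)^{\sqrt n})$, well below $\bE[x \in U]\bE[y \in U] = \Theta((1/e)^2)$) yields $|U| = \Omega(3^n)$ with constant probability.

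For the second stage, fix a good $\pmb{t}$ and look for \emph{line violations}: triples $(a,b,c) \subseteq \zpm^n$ with $b = \tfrac{1}{2}(a+c)$ such that $a, c \in S_{\pmb{t},\pmb{r}}$ but $b \notin S_{\pmb{t},\pmb{r}}$. Because $b \in \conv(\{a,c\}) \cap \zpm^n$, any convex $T$ containing $a$ and $c$ must contain $b$, so each line violation forces a modification of at least one of $\{a,b,c\}$. I parameterize candidate lines by a midpoint $b \in \midlayers(\sqrt n)$ with $\|b\|_1 \leq \tfrac{2n}{3} + \sqrt n - 1$ and a direction $j \in [n]$ with $b_j = 0$, setting $a = b + e_j$ and $c = b - e_j$; then $a, c \in \midlayers(\sqrt n)$ as well, and there are $\Omega(n)$ valid directions per $b$, yielding $\Omega(n \cdot 3^n)$ candidate lines. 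A calculation similar to the first stage shows that conditional on $b \in U$ with $\pmb{r}(b) = 0$ (an event of probability $\Theta(1)$), each endpoint $a$ satisfies $a \in S_{\pmb{t},\pmb{r}}$ with probability $\Omega(1)$ (either $a \in D$, with probability $\to e^{-1}$, or $a \in U$ and $\pmb{r}(a) = 1$, with probability $\to e^{-1}/2$), and similarly for $c$. Hence the expected number of violating lines is $\Omega(n \cdot 3^n)$.

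Finally, each point of $\zpm^n$ belongs to at most $O(n)$ candidate lines (as midpoint, $O(n)$ direction choices; as endpoint, $O(n)$ direction choices), so a greedy matching in the $3$-uniform hypergraph of violating lines extracts $\Omega(n \cdot 3^n)/O(n) = \Omega(3^n)$ pairwise point-disjoint violating lines, and disjoint violations require disjoint modifications, giving $\dist(S_{\pmb{t},\pmb{r}},\mathbf{convex}) = \Omega(1)$. The main technical obstacle is the concentration step for the violating-line count, since the statuses of nearby lines are correlated through shared endpoints and shared terms. I expect McDiarmid's inequality to suffice, applied separately to $\pmb{t}$ and $\pmb{r}$, using that each $\pmb{r}(x)$-bit affects only $O(n)$ lines while each term $t^{(i)}$ affects $O(n\cdot 3^{n-\sqrt n})$ lines; the resulting deviation bound is $\tilde O(n \cdot 3^{n-\sqrt n /2})$, which is $o(n \cdot 3^n)$. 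The calculation is routine but must be tracked carefully since the dependence structure on $\pmb t$ is not i.i.d.\ across terms in the strict sense needed for McDiarmid, and one may need to first reveal the terms and then condition before bounding $\pmb r$-deviations.
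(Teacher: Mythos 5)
Your high-level plan --- find $\Omega(3^n)$ pairwise-disjoint violating triples and conclude $\Omega(1)$-farness --- is the same as the paper's, but you introduce two sources of difficulty that the paper avoids, and one of your conditional probability computations contains an error.

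\textbf{The error.} You claim that, conditional on $b \in U$ and $\pmb{r}(b) = 0$, an endpoint $a = b + e_j$ may fall in $D$ ``with probability $\to e^{-1}$.'' This is impossible. Conditioning on $b \in U_i$ means $b \succeq t^{(i)}$, and since $a \succeq b$ in the outward-oriented poset, $a \succeq t^{(i)}$ as well; so $a$ always satisfies at least one term and can never lie in $D$. The correct case analysis is: either $a$ satisfies exactly one term (namely $t^{(i)}$ or some other unique $t^{(j)}$), in which case $a \in U$ and $a \in S$ iff $\pmb{r}(a)=1$; or $a$ satisfies two or more terms, in which case $a \notin S$ by definition of $\Dno$. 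The bound $\pr[a \in S \mid b \in U_i, \pmb{r}(b)=0] = \Omega(1)$ still follows from the first case alone, so the conclusion survives, but the stated justification is wrong.

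\textbf{Unnecessary machinery.} You count \emph{all} candidate lines (sharing vertices), which forces you to (i) prove concentration of the violating-line count via McDiarmid over both $\pmb t$ and $\pmb r$ --- which you correctly flag as delicate because one must condition and two-step the argument --- and then (ii) extract a matching from the resulting $3$-uniform hypergraph. Both steps can be eliminated by choosing the candidate triples to be pairwise disjoint from the start. The paper does exactly this: it fixes the family $T = \{((-1,z),(0,z),(+1,z)) : z \in \zpm^{n-1},\ \|z\|_1 \in [\tfrac{2n}{3}-\sqrt n, \tfrac{2n}{3}+\sqrt n - 1]\}$, which has size $\Omega(3^n)$ and consists of disjoint colinear triples in $\midlayers(\sqrt n)$. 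Then it shows by a direct term-by-term computation (\cref{clm:highprob_vertex}) that $\pr_{\pmb t}[x,y,z \in U] \geq 10^{-6}$ for each triple, hence each triple is a violation with probability at least $(1/8)\cdot 10^{-6}$, and a Markov bound on $|T \setminus T_{\mathsf{viol}}|$ gives $|T_{\mathsf{viol}}| = \Omega(3^n)$ with constant probability. No concentration inequality is needed because the triples were built disjoint before any randomness was revealed. Your Stage 1 (showing $|U| = \Omega(3^n)$ via a second-moment method) is also not used in your Stage 2 and can be dropped entirely.

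In short: switching to a pre-committed disjoint family of lines both fixes the greedy-matching overhead and eliminates the need for McDiarmid, and your endpoint case analysis should drop the $a \in D$ possibility.
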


We prove \cref{lem:no} in \cref{sec:proof-no}.

\begin{lemma} \label{lem:hard} Given a collection of points $\pmb{x} = (x_1,\ldots,x_s) \in (\zpm^n)^s$ and a set $S \subseteq \zpm^n$, let $(\pmb{x},S(\pmb{x})) := ((x_1,S(x_1)),\ldots,(x_s,S(x_s)))$ denote the corresponding collection of labelled examples. Let $\mathcal{E}_{\mathsf{yes}}$ and $\mathcal{E}_{\mathsf{no}}$ denote the distributions over $(\pmb{x},S(\pmb{x}))$ when $\pmb{x}$ consists of $s$ i.i.d.\ uniform samples and $S \sim \Dyes$ and $S \sim \Dno$, respectively. If $s \leq 3^{\sqrt{n}/3}$, then the total variation distance between $\mathcal{E}_{\texttt{yes}}$ and $\mathcal{E}_{\texttt{no}}$ is $o(1)$. \end{lemma}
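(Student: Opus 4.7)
The plan is to couple $\cE_{\mathsf{yes}}$ and $\cE_{\mathsf{no}}$ by sharing the randomness of $\pmb{t} = (t^{(1)}, \ldots, t^{(N)})$ and of the sample sequence $\pmb{x} = (x_1, \ldots, x_s)$, and then bound the TV distance by the probability of a single bad event: that two samples fall in the same bucket $U_k$. Under the coupling, for every sample $x_j$ the label $S(x_j)$ is a deterministic function of $\pmb t$ and $x_j$ \emph{unless} $x_j \in U$, in which case it is $\pmb\phi(k)$ in the yes distribution and $\pmb r(x_j)$ in the no distribution, where $k$ is the unique index with $x_j \in U_k$.

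Let $\cC$ be the event that there exist distinct $i, j \in [s]$ and $k \in [N]$ with $x_i, x_j \in U_k$. Conditioning on the complement $\cC^c$, every sample lying in $U$ lies in a distinct $U_k$, so in both distributions those labels are i.i.d.\ uniform on $\{0,1\}$ (independent uniform draws of $\pmb\phi(k)$ on the yes side, independent uniform draws of $\pmb r(x_j)$ on the no side), while the remaining labels are identical deterministic functions of $(\pmb t, \pmb x)$. Hence the conditional label distributions coincide on $\cC^c$, and a standard coupling argument yields $d_{TV}(\cE_{\mathsf{yes}}, \cE_{\mathsf{no}}) \leq \Pr[\cC]$. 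It is important here that $\pmb x$ consists of \emph{independent} uniform samples, so duplicate samples (which would force identical labels on both sides) do not cause any asymmetry.

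To bound $\Pr[\cC]$, I would use a union bound over pairs and terms. Since $U_k \subseteq \{x : x \succeq t^{(k)}\}$ and $t^{(k)}$ is independent of the $x_i$'s,
\[
\Pr[\cC] \leq \binom{s}{2} \sum_{k=1}^{N} \Pr\left[x_1 \succeq t^{(k)} \text{ and } x_2 \succeq t^{(k)}\right] = \binom{s}{2} \cdot N \cdot \mathbb{E}\!\left[ (1/9)^{|T_1|} \right],
\]
where $|T_1|$ is the number of \emph{distinct} coordinates chosen in step~1 of the term-drawing process, since conditional on $t^{(k)}$ we have $\Pr_x[x \succeq t^{(k)}] = (1/3)^{|T_k|}$. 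The random variable $|T_1|$ concentrates tightly near $\sqrt n$: the expected number of repeated draws among $\sqrt n$ i.i.d.\ draws from $[n]$ is $\binom{\sqrt n}{2}/n = O(1)$, so a direct calculation (splitting the sum over $|T_1| = \sqrt n - j$ and noting that the probability of $j$ repeats decays faster than $9^j$ grows) gives $\mathbb{E}[(1/9)^{|T_1|}] = O(9^{-\sqrt n}) = O(3^{-2\sqrt n})$.

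Plugging $N = 3^{\sqrt n}$ and $s \leq 3^{\sqrt n / 3}$ into the bound yields
\[
\Pr[\cC] \leq O\!\left(s^2 \cdot 3^{\sqrt n} \cdot 3^{-2\sqrt n}\right) = O\!\left(s^2 \cdot 3^{-\sqrt n}\right) \leq O\!\left(3^{-\sqrt n / 3}\right) = o(1),
\]
which completes the proof. The main technical obstacle is the bound $\mathbb{E}[(1/9)^{|T_1|}] = O(3^{-2\sqrt n})$: because $(1/9)^{|T_1|}$ is exponentially sensitive to $|T_1|$, a naive concentration bound is too weak, and one has to observe that the probability of $j$ collisions in the multiset $T_1$ drops off faster than $9^j$ so that the contribution from each deviation $|T_1| = \sqrt n - j$ is at most $O(9^{-\sqrt n})$.
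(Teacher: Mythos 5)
Your proposal is correct and follows the same high-level strategy as the paper: couple $\cE_{\mathsf{yes}}$ and $\cE_{\mathsf{no}}$ on the shared randomness of $\pmb t$ and $\pmb x$, observe that the label distributions coincide conditioned on no two samples landing in the same bucket $U_k$, and bound the TV distance by the probability of this collision event via a union bound over pairs. The difference is in how the collision probability is computed. The paper fixes a sample $y$, integrates over $\pmb t$, and uses the closed form $\Pr_{t^{(i)}}[t^{(i)} \preceq y] = (\norm{y}_1/2n)^{\sqrt n} \le 5/N$; you instead condition on $t^{(k)}$ first and integrate over $x_1, x_2$, getting $\Pr[x_1, x_2 \succeq t^{(k)}] = \EX[(1/9)^{|T_k|}]$, which forces you to control the moment generating function of the collision count $\sqrt n - |T_k|$. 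The extra work in your route (the MGF bound, which does hold since the collision count is stochastically dominated by $\mathrm{Bin}(\sqrt n, 1/\sqrt n)$ so $\EX[9^{\sqrt n - |T_k|}] \le (1+8/\sqrt n)^{\sqrt n} = O(1)$) is offset by the fact that conditioning on $\pmb t$ is the cleanest way to make the independence step airtight: the indicators $\mathbf 1[x_1 \in U_k]$ and $\mathbf 1[x_2 \in U_k]$ are only \emph{conditionally} independent given $\pmb t$, and the paper's computation, which factors $\Pr[x \in U_i \wedge y \in U_i]$ into $\Pr[y \in U_i]^2$ without first conditioning on $\pmb t$, in fact only gives a lower bound by Jensen. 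Your conditioning sidesteps this; both routes end up at the same $O(s^2/N)$ bound and the same conclusion.
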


We prove \cref{lem:hard} in \cref{sec:proof-hard}.

\subsubsection{Sets Drawn from \texorpdfstring{$\Dno$}{Dno} are Far from Convex\texorpdfstring{: Proof of \cref{lem:no}}{}} \label{sec:proof-no}

\begin{proof} Recall the definition of the set $U$ in \cref{eq:buckets}. We prove \cref{lem:no} by showing that with constant probability over the terms $\pmb{t}$ and the random function $\pmb{r} \colon U \to \{0,1\}$, there exists a collection $L$ of $\Omega(3^n)$ disjoint co-linear triples $(x,y,z)$ such that $x,z \in S_{\pmb{t},\pmb{r}}$, $y \notin S_{\pmb{t},\pmb{r}}$, and $y = \frac{1}{2}(x+z)$. The existence of such a set implies that $\eps(S_{\pmb{t},\pmb{r}}) \geq \frac{1}{3}|L| \cdot 3^{-n} = \Omega(1)$ since the membership of at least one point from each of these triples would need to changed in order to make the set convex.

We first show that there is a large collection $T$ of disjoint co-linear triples lying in $\midlayers(\sqrt{n})$. Then, by \cref{clm:highprob_vertex} and fact that each point in $U$ is included in the set $S_{\pmb{t},\pmb{r}}$ with probability $1/2$, we can argue that with constant probability, a constant fraction of the triples in $T$ will be violations of convexity. 

\begin{claim} \label{clm:triples} There exists a set $T$ of $\Omega(3^n)$ disjoint co-linear triples $(x,y,z)$ such that (a) $x,y,z \in \midlayers(\sqrt{n})$, and (b) $y = \frac{1}{2}(x+z)$. \end{claim}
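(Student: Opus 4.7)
The plan is to construct the triples by fixing two coordinates and letting the remaining $n-2$ coordinates range freely. Specifically, for each $w \in \zpm^{n-2}$, I will consider the candidate triple $(x_w, y_w, z_w)$ with $x_w = (1, -1, w)$, $y_w = (0, 0, w)$, and $z_w = (-1, 1, w)$. By construction $y_w = \frac12(x_w + z_w)$, and for $w \neq w'$ the two triples are automatically disjoint because every point in one triple disagrees with every point in the other on some coordinate in $\{3, \ldots, n\}$. So all six points are distinct whenever $w \neq w'$, and disjointness is free.

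The only nontrivial condition is that all three points lie in $\midlayers(\sqrt{n})$. Here I note that $\|y_w\|_1 = \|w\|_1$ while $\|x_w\|_1 = \|z_w\|_1 = \|w\|_1 + 2$, so requiring $x_w, y_w, z_w \in \midlayers(\sqrt{n})$ is equivalent to
\[
  \|w\|_1 \in \bigl[\tfrac{2n}{3} - \sqrt{n},\; \tfrac{2n}{3} + \sqrt{n} - 2\bigr].
\]
So the task reduces to showing that a constant fraction of $w \in \zpm^{n-2}$ satisfies this Hamming-weight constraint.

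For $w \sim \zpm^{n-2}$, $\|w\|_1$ is a sum of $n-2$ i.i.d.\ Bernoulli$(2/3)$ variables, so it has mean $\tfrac{2(n-2)}{3} = \tfrac{2n}{3} - \tfrac{4}{3}$ and standard deviation $\Theta(\sqrt n)$. The target interval above has width $2\sqrt{n} - 2 = \Theta(\sqrt n)$ and is centered within $O(1)$ of the mean, so by the central limit theorem (or directly from the Stirling-type estimates in \cref{sec:approx-bin-coeff}, since the binomial density at its mode is $\Theta(1/\sqrt n)$) a constant fraction of $w$ lies in this window. Thus the number of valid $w$'s is $\Omega(3^{n-2})$, yielding $\Omega(3^n)$ disjoint co-linear triples as required.

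I do not anticipate a real obstacle here: the construction is explicit, disjointness holds by inspection, and the only estimate needed is standard anti-concentration for a binomial around its mode, which is folklore and also a special case of the binomial approximations already developed elsewhere in the paper. The only point to be a bit careful about is the endpoint arithmetic ($\sqrt n$ vs.\ $\sqrt n - 2$ vs.\ the shift $-4/3$ between the means of $\|w\|_1$ and $\tfrac{2n}{3}$), but all of these are absorbed into the $\Omega(\cdot)$.
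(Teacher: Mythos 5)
Your proof is correct and takes essentially the same approach as the paper: fix a small constant number of leading coordinates to form the colinear pattern (the paper uses one coordinate ranging over $\{-1,0,1\}$, you use two with the pattern $(1,-1)/(0,0)/(-1,1)$), let the remaining coordinates range over a middle-weight band, and count with a Stirling-type estimate. The differences are cosmetic and your endpoint bookkeeping is handled correctly.
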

\begin{proof} Given $z \in \zpm^{n-1}$ and $b \in \zpm$, let $(b,z) \in \zpm^n$ denote the point whose first coordinate is $b$ and the rest of the coordinates are given by $z$. Consider the set of disjoint triples 
\begin{align*} \label{eq:triples}  
    T := \left\{((-1,z),(0,z),(+1,z)) \colon z \in \zpm^{n-1} \text{ such that } \norm{z}_1 \in \left[\frac{2n}{3} - \sqrt{n}, \frac{2n}{3} + \sqrt{n} - 1\right]\right\} \text{.}
\end{align*}
Observe that every triple $(x,y,z)$ is contained in $\midlayers(\sqrt{n})$ and clearly $y = \frac{1}{2}(x+z)$. We use the following fact to lower bound $|T|$. This fact follows from an application of Stirling's approximation.
\begin{fact} For any $N$ and $\ell \in [-O(\sqrt{N}),O(\sqrt{N})]$, we have ${N \choose \frac{2N}{3} + \ell} = \Theta\left(\frac{1}{\sqrt{N}} \cdot \frac{3^N}{2^{2N/3}+\ell}\right)$. \end{fact}
By the above fact, 
\[
|T| = \sum_{\ell = -\sqrt{n}}^{\sqrt{n}-1} {n-1 \choose \frac{2n}{3} + \ell} 2^{2n/3+\ell} = \sum_{\ell = -\sqrt{n}}^{\sqrt{n}-1} \Omega\left(\frac{1}{\sqrt{n-1}} \cdot \frac{3^{n-1}}{2^{2(n-1)/3 + \ell}}\right) 2^{2n/3 + \ell} = \Omega(3^n)
\]
and this completes the proof of the claim. \end{proof}

Let $T$ denote the set of $\Omega(3^n)$ disjoint co-linear triples in $\midlayers(\sqrt{n})$ given by \cref{clm:triples}. We will need the following claim which shows that triples in $T$ are contained in $U$ with constant probability. 

\begin{claim} \label{clm:highprob_vertex} For any $(x,y,z) \in T$, we have $\pr_{\pmb{t}}[x,y,z \in U] \geq \frac{1}{1,000,000}$. \end{claim}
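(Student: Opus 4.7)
The approach is to use the structure of the triple to classify each of the $N$ i.i.d.\ terms into a small number of mutually exclusive types, and then reduce $\{x,y,z \in U\}$ to a clean multinomial calculation. Fix $(x,y,z) = ((-1,w),(0,w),(+1,w)) \in T$ and let $k \define \norm{w}_1 \in [2n/3 - \sqrt{n},\, 2n/3 + \sqrt{n} - 1]$. Since the three points agree on coordinates $2,\ldots,n$ and differ only at coordinate $1$, each random term $t$ falls into exactly one of four mutually exclusive classes determined by $t_1$ and whether the ``$w$-part'' $w \succeq t|_{[2,n]}$ holds: type $A$ (when $t_1 = 0$ and the $w$-part matches, so $t$ satisfies all three of $x,y,z$), type $B$ (when $t_1 = -1$ and the $w$-part matches, so only $x$ is satisfied), type $C$ (when $t_1 = +1$ and the $w$-part matches, so only $z$ is satisfied), and type $O$ (no point satisfied). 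Let $p_A, p_B, p_C, p_O$ be the per-term type probabilities and $(A,B,C,N{-}A{-}B{-}C)$ the corresponding multinomial counts over the $N$ terms. Since $x \in U \iff A + B = 1$, $y \in U \iff A = 1$, and $z \in U \iff A + C = 1$, the target event is exactly $\{A = 1,\, B = 0,\, C = 0\}$, whose multinomial probability is $N p_A (1 - p_A - p_B - p_C)^{N-1}$.

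The next step is to estimate $p_A, p_B, p_C$ tightly enough that this product is an absolute positive constant. Writing $m = \sqrt{n}$ and letting $T$ denote the random multiset of $m$ iid uniform indices in $[n]$ that forms the support of a random term, the event $y \succeq t$ requires (i) $T \subseteq \supp(w)$ and (ii) that each independently drawn sign $t_a$ for distinct $a \in T$ match $w_a$. Hence $p_A = (k/n)^m \cdot \Exp[2^{-|T|} \mid T \subseteq \supp(w)]$. The key probabilistic input is the birthday-paradox regime $m^2/k = O(1)$: conditional on $T \subseteq \supp(w)$ (a set of size $k$), the expected number of collisions is $\binom{m}{2}/k = O(1)$, and a moment-generating-function estimate gives $\Exp[2^{m-|T|}] \leq \exp(\binom{m}{2}/k) = O(1)$. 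Combined with $\Exp[|T|] = m - O(1)$, this yields $p_A = \Theta((k/(2n))^m) = \Theta(3^{-\sqrt n}) = \Theta(1/N)$ uniformly in $k$ over the allowed range. A parallel computation for $\Pr[x \succeq t]$ (replacing $\supp(w)$ of size $k$ by $\supp(x) = \{1\} \cup \supp(w)$ of size $k+1$) gives $\Pr[x \succeq t]/p_A = ((k+1)/k)^m \cdot (1 + O(1/n)) = 1 + O(1/\sqrt{n})$, so $p_B = \Pr[x \succeq t] - p_A = O(p_A/\sqrt{n})$ and symmetrically $p_C = O(p_A/\sqrt{n})$.

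Putting the pieces together, $N p_A = \Theta(1)$ while $N(p_B + p_C) = O(1/\sqrt{n}) = o(1)$, so $N(p_A + p_B + p_C) = O(1)$ and $(1 - p_A - p_B - p_C)^{N-1} \geq \exp(-2 N(p_A + p_B + p_C)) = e^{-O(1)} = \Omega(1)$. Multiplying by $N p_A = \Omega(1)$ gives $\Pr[A = 1,\, B = 0,\, C = 0] = \Omega(1)$, which exceeds $1/10^6$ once $n$ is sufficiently large (the constant in the claim is far from tight). The main obstacle I expect is the tight two-sided estimate of $p_A$: the naive bound only places $p_A \in [(k/(2n))^m,\, (k/n)^m]$, an exponential gap of $2^{\sqrt n}$; closing this gap requires the MGF estimate on the collision count in $\sqrt{n}$ iid samples from a set of size $\Theta(n)$, and this is the one genuinely non-trivial probabilistic input in the argument.
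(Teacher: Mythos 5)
Your proposal arrives at exactly the same expression as the paper does: writing $\Pr[x,y,z \in U] = N p_A (1 - p_A - p_B - p_C)^{N-1}$. The paper reaches it by expanding $\Pr[x,y,z\in U]=\sum_i\Pr[x,y,z\in U_i]$ and factoring over the independent terms, using that $t\preceq y$ forces $t\preceq x,z$ (so the first factor is $\Pr[t\preceq y]=p_A$) and that $t\not\preceq x$ and $t\not\preceq z$ forces $t\not\preceq y$ (so the other factors are $1-\Pr[t\preceq x \vee t\preceq z]=1-(p_A+p_B+p_C)$); your multinomial $A,B,C,O$ classification is a clean repackaging of this same decomposition, not a different route.

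Where the two arguments genuinely diverge is in the estimation of the per-term probabilities. The paper asserts $\Pr[t\preceq y]=(\norm{y}_1/(2n))^{\sqrt n}$ as an equality, arguing that each of ``the $\sqrt n$ non-zero coordinates of $t$'' must independently land in $\supp(y)$ with the right sign. This reasoning implicitly treats the $\sqrt n$ i.i.d.\ support draws as collision-free; when collisions occur the term has fewer non-zero coordinates and the stated quantity is only a \emph{lower} bound on $\Pr[t\preceq y]$. That is fine for the paper's lower bound on $p_A$ but is used in the wrong direction when the paper also writes $\Pr[t\preceq x]=(\norm{x}_1/(2n))^{\sqrt n}\le 5/N$ to upper-bound the competing terms. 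Your proof identifies and closes exactly this gap: the identity $p_A=(k/n)^m\,\Exp[2^{-|T|}\mid T\subseteq\supp(w)]$ plus the MGF bound $\Exp[2^{m-|T|}]\le\exp\bigl(\binom{m}{2}/k\bigr)=O(1)$ in the birthday regime $m^2/k=O(1)$ is precisely the missing ingredient that pins $p_A=\Theta\bigl((k/(2n))^m\bigr)$ with an explicit two-sided constant. Your further refinement $p_B,p_C=O(p_A/\sqrt n)$ (via comparing support sizes $k$ and $k{+}1$) is strictly stronger than the crude union bound the paper uses, and it is actually useful: once the $e^{O(1)}$ slack from the collision MGF is accounted for, the tighter handling of $p_B+p_C$ is what keeps the final numerical constant comfortably above the target. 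So: same skeleton as the paper, but your version supplies the one nontrivial probabilistic estimate that the paper's write-up glosses over, and you correctly flag it as the crux.
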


\begin{proof} By definition of $T$ we have $x,y,z \in \midlayers(\sqrt{n})$ and $x_1 = +1$, $y_1 = 0$, $z_1 = -1$ and $x_j = y_j = z_j$ for all $j \in [2,n]$. Recall the distribution over the terms $\bt = (t^{(1)},\ldots,t^{(N)})$ defined in \cref{sec:distributions}. Note that $\pr_{t^{(i)}}[t^{(i)} \preceq y] = (\frac{\norm{y}_1}{2n})^{\sqrt{n}}$ since $t^{(i)} \preceq y$ if and only if each of the $\sqrt{n}$ non-zero coordinates $a$ of $t^{(i)}$ (a) is chosen as one of the non-zero coordinates of $y$ which happens with probability $\norm{y}/n$ and (b) $t^{(i)}_a$ is set to $y_a$ which happens with probability $1/2$. Also note that for a term $t^{(i)}$, we have $t^{(i)} \prec y$ implies $t^{(i)} \prec x,z$. Therefore,
\begin{align} \label{eq:term-mass}
    \pr_{\pmb{t}}[x,y,z \in U_i] &= \pr_{t^{(i)}}[t^{(i)} \preceq x,y,z] \cdot \prod_{j \neq i} \pr_{t^{(j)}}[t^{(j)} \not\preceq x,y,z] \nonumber \\
    &= \pr_{t^{(i)}}[t^{(i)} \preceq y] \cdot \prod_{j \neq i} \pr_{t^{(j)}}[t^{(j)} \not\preceq x,z] \nonumber \\
    &= \pr_{t^{(i)}}[t^{(i)} \preceq y] \cdot \prod_{j \neq i} \left(1 - \pr_{t^{(j)}}[(t^{(i)} \preceq x) \vee (t^{(i)} \preceq z)]\right)\text{.}
\end{align}
The first term is lower bounded by
\begin{align} \label{eq:lower-bound}
    \pr_{t^{(i)}}[t^{(i)} \preceq y] = \left(\frac{\norm{y}_1}{2n}\right)^{\sqrt{n}} \geq \left(\frac{\frac{2n}{3}-\sqrt{n}}{2n}\right)^{\sqrt{n}} = \frac{1}{3^{\sqrt{n}}}\left(1 - \frac{3}{2\sqrt{n}}\right)^{\sqrt{n}} \geq \frac{e^{-3/2-o(1)}}{N} \geq \frac{1}{5N} \text{.}
\end{align}
To lower bound the second term, observe that 
\begin{align} \label{eq:upper-bound1}
    \pr_{t^{(i)}}[t^{(i)} \preceq x] = \left(\frac{\norm{x}_1}{2n}\right)^{\sqrt{n}} \leq \left(\frac{\frac{2n}{3}+\sqrt{n}}{2n}\right)^{\sqrt{n}} = \frac{1}{3^{\sqrt{n}}}\left(1 + \frac{3}{2\sqrt{n}}\right)^{\sqrt{n}} \leq \frac{e^{3/2}}{N} \leq \frac{5}{N}
\end{align}
and the same bound holds for the point $z$. Therefore, by a union bound $\pr_{t^{(i)}}[(t^{(i)} \preceq x) \vee (t^{(i)} \preceq z)] \leq 10/N$. Plugging this bound along with \cref{eq:lower-bound} into \cref{eq:term-mass} and summing over all $i \in [N]$ yields
\begin{align*}
    \pr_{\pmb{t}}[x,y,z \in U] &= \sum_{i=1}^N \pr_{t^{(i)}}[t^{(i)} \preceq y] \cdot \prod_{j \neq i} \left(1 - \pr_{t^{(j)}}[(t^{(i)} \preceq x) \vee (t^{(i)} \preceq z)]\right) \geq N \cdot \frac{1}{5N} \cdot \left(1 - \frac{10}{N}\right)^{N}
\end{align*}
which is at least $\frac{1}{1,000,000}$ and this completes the proof. \end{proof}

Now, for a set $S_{\pmb{t},\pmb{r}} \sim \Dno$, let 
\[
T_{\mathsf{viol}} = \{(x,y,z) \colon x,z \in S_{\pmb{t},\pmb{r}} \text{ and } y \notin S_{\pmb{t},\pmb{r}}\}
\]
denote the set of triples in $T$ that are violations of convexity for $S_{\pmb{t},\pmb{r}}$. By definition of $\Dno$ and using \cref{clm:highprob_vertex}, for any fixed $(x,y,z) \in T$, we have 
\begin{align*}
    \pr[(x,y,z) \in T_{\mathsf{viol}}] = \pr_{\pmb{t}}[x,y,z \in U] \cdot \pr_{\pmb{r}}[\pmb{r}(x) = \pmb{r}(z) = 1, \pmb{r}(y) = 0 ~|~ x,y,z \in U] \geq \frac{1}{8,000,000}.
\end{align*} 
Therefore, $\Exp_{\pmb{t},\pmb{r}}[|T\setminus T_{\mathsf{viol}}|] \leq |T|(1-\frac{1}{8,000,000})$ and so by Markov's inequality
\begin{align*}
    \pr_{\pmb{t},\pmb{r}}\left[|T_{\mathsf{viol}}| \leq \frac{|T|}{8,000,000^2}\right] &\leq \pr_{\pmb{t},\pmb{r}}\left[|T \setminus T_{\mathsf{viol}}| \geq |T|\left(1-\frac{1}{8,000,000^2}\right)\right] \\
    &= \pr_{\pmb{t},\pmb{r}}\left[|T \setminus T_{\mathsf{viol}}| \geq |T|\left(1-\frac{1}{8,000,000}\right)\left(1+\frac{1}{8,000,000}\right)\right] \\
    &\leq \pr_{\pmb{t},\pmb{r}}\left[|T \setminus T_{\mathsf{viol}}| \geq \Exp_{\pmb{t},\pmb{r}}[|T\setminus T_{\mathsf{viol}}|]\left(1+\frac{1}{8,000,000}\right)\right] \\
    &\leq \frac{1}{1+\frac{1}{8,000,000}} = 1 - \frac{1}{8,000,001} \text{.}
\end{align*}
Finally, since $|T| = \Omega(3^n)$, this gives us
\begin{align*}
    \pr_{\pmb{t},\pmb{r}}\left[\eps(S_{\pmb{t},\pmb{r}}) \geq \Omega(1)\right] \geq \pr_{\pmb{t},\pmb{r}}\left[|T_{\mathsf{viol}}| \geq \frac{|T|}{8,000,000^2}\right] \geq \frac{1}{8,000,001}
\end{align*}
and this completes the proof of \cref{lem:no}. \end{proof}

\subsubsection{\texorpdfstring{$\Dyes$ and $\Dno$}{Dyes and Dno} are Hard to Distinguish\texorpdfstring{: Proof of \cref{lem:hard}}{}} \label{sec:proof-hard}

\begin{proof} Recall the definition of the set $U_{i}$ in \cref{eq:buckets}. For $a \neq b \in [s]$, let $E_{ab}$ denote the event that $x_a$ and $x_b$ belong to the same $U_{i}$ for some $i \in [N]$. Observe that conditioned on $\overline{\vee_{a,b}E_{ab}}$, the distributions $\mathcal{E}_{\mathsf{yes}}$ and $\mathcal{E}_{\mathsf{no}}$ are identical. 

Let $x,y \in \zpm^n$ denote two independent uniform samples. We have
\begin{align} \label{eq:asd}
    \pr[E_{ab}] = \pr_{x,y,\pmb{t}}\left[\bigvee_{i=1}^n (x \in U_{i} \wedge y \in U_{i})\right] = \sum_{i=1}^n \pr_{x,y,\pmb{t}} \left[x \in U_{i} \wedge y \in U_{i}\right] = \sum_{i=1}^N \pr_{y,\pmb{t}}[y \in U_{i}]^2 
\end{align}
where the second equality holds since the $U_{i}$'s are disjoint and the third equality holds by independence of $x$ and $y$. Now, for a fixed $i \in [N]$, if $y \notin \midlayers(\sqrt{n})$ observe that $\pr_{\pmb{t}}[y \in U_{i}] = 0$ and if $y \in \midlayers(\sqrt{n})$, we have $\pr_{\pmb{t}}[y \in U_i] \leq 5/N$ by \cref{eq:upper-bound1}. Thus, $\pr_{y,\pmb{t}}[y \in U_{i}] \leq 5/N$ and combining this with \cref{eq:asd} yields $\pr[E_{ab}] \leq 25/N$. Finally, by a union bound, we have
\[
d_{\text{TV}}(\mathcal{E}_{\mathsf{yes}},\mathcal{E}_{\mathsf{no}}) \leq \pr_{\pmb{x},\pmb{t}}\left[\bigvee_{a \neq b \in [s]} E_{ab}\right] \leq s^2 \cdot \frac{25}{N} = o(1)
\]
since $N = 3^{\sqrt{n}} = \omega(s^2)$. \end{proof}

\section{One-Sided Error Testing}
\label{sec:UB_nonadaptive_1sided}

\subsection{Non-Adaptive Upper Bound}
We complete the proof of \cref{thm:UB_nonadaptive_1sided} in this section.
The upper bound on the query complexity for testing convexity non-adaptively with one-sided error is achieved by \cref{alg:tester}. (As a reminder, the notions of upward shadow $\upset(\by)$ and middle layers $\midlayers(\ell)$ in the algorithm are introduced in \cref{def:upset} and \cref{eq:inn_mid_out}, respectively.)

\begin{algorithm}
\caption{Convexity tester for sets in $\zpm^n$.} \label{alg:tester}
\textbf{Input:} A set $\bS \subseteq \zpm^n$ and a parameter $\eps \in (0,1)$.

\noindent Set $\ell := \sqrt{2n \ln 8/\eps}$ and repeat $\frac{4}{\eps}$ times:
\begin{enumerate}
    \item Query $\by \in \zpm^n$ uniformly at random.
    \item If $\by \in \overline{S} \cap \midlayers(\ell)$, then query all points in $\upset(\by) \cap \midlayers(\ell)$.
    \item If there exists $\bX \subseteq \bS \cap \upset(\by) \cap \midlayers(\ell)$ such that $\by \in \conv(\bX)$, then reject.
\end{enumerate}
Accept.
\end{algorithm}

The analysis of \cref{alg:tester} relies on the following lemma regarding sets that are far from convex. 

\begin{lemma} \label{lem:farfromconvex} Let $\bS \subseteq \zpm^n$ and $\eps \leq \eps(\bS)$. If $\ell = \sqrt{2n \ln 8/\eps}$, then 
\[
|\conv\big(\bS \cap \midlayers(\ell)\big) \cap \big(\overline{\bS} \cap \midlayers(\ell)\big)| \geq \frac{\eps}{2} \cdot 3^n \text{.}
\]
\end{lemma}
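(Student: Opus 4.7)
The plan is to construct a specific convex approximator $T := \conv(\bS \cap \midlayers(\ell)) \cap \zpm^n$ of $\bS$, lower bound $|\bS \triangle T|$ using the hypothesis $\eps(\bS) \geq \eps$, and then show that almost all of the disagreement between $\bS$ and $T$ already lies in $\midlayers(\ell)$. First I would verify that $T$ is discrete convex: indeed, $\conv(T) \subseteq \conv(\conv(\bS \cap \midlayers(\ell))) = \conv(\bS \cap \midlayers(\ell))$, so $\conv(T) \cap \zpm^n \subseteq T$. Since $T$ is convex and $\eps(\bS) \geq \eps$, we obtain $|\bS \triangle T| \geq \eps \cdot 3^n$.

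Next I would bound the two parts of the symmetric difference separately. Because every $x \in \bS \cap \midlayers(\ell)$ trivially belongs to $T$, we have $\bS \setminus T \subseteq \bS \setminus \midlayers(\ell) \subseteq \zpm^n \setminus \midlayers(\ell)$. By \cref{fact:chernoff} applied with the choice $\ell = \sqrt{2n \ln(8/\eps)}$,
\[
\frac{|\zpm^n \setminus \midlayers(\ell)|}{3^n} \;\leq\; 2\exp\!\left(-\tfrac{\ell^2}{2n}\right) \;=\; 2 \cdot \tfrac{\eps}{8} \;=\; \tfrac{\eps}{4},
\]
so $|\bS \setminus T| \leq \tfrac{\eps}{4}\cdot 3^n$ and consequently $|T \setminus \bS| \geq \tfrac{3\eps}{4} \cdot 3^n$.

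The one step that is not entirely bookkeeping, and the place I would want to be careful, is showing that the convex hull cannot ``leak'' into $\outlayers(\ell)$. The observation is that the $\ell_1$ norm is convex on $\R^n$, so since every $x \in \bS \cap \midlayers(\ell)$ satisfies $\|x\|_1 \leq \tfrac{2n}{3} + \ell$, every point of $\conv(\bS \cap \midlayers(\ell))$ also has $\ell_1$ norm at most $\tfrac{2n}{3} + \ell$. Therefore $T \cap \outlayers(\ell) = \emptyset$, and $T \setminus \midlayers(\ell) \subseteq \inlayers(\ell) \subseteq \zpm^n \setminus \midlayers(\ell)$, giving $|T \setminus \midlayers(\ell)| \leq \tfrac{\eps}{4} \cdot 3^n$ by the same Chernoff bound as above. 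Combining these estimates,
\[
|(T \cap \midlayers(\ell)) \setminus \bS| \;\geq\; |T \setminus \bS| - |T \setminus \midlayers(\ell)| \;\geq\; \tfrac{3\eps}{4} \cdot 3^n - \tfrac{\eps}{4} \cdot 3^n \;=\; \tfrac{\eps}{2} \cdot 3^n,
\]
and since $(T \cap \midlayers(\ell)) \setminus \bS = \conv(\bS \cap \midlayers(\ell)) \cap (\overline{\bS} \cap \midlayers(\ell))$ (using that $\overline{\bS} \cap \midlayers(\ell) \subseteq \zpm^n$), this is exactly the bound claimed by the lemma.
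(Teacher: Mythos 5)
Your proof is correct and follows essentially the same route as the paper's: take $T := \conv(\bS \cap \midlayers(\ell)) \cap \zpm^n$ as the convex comparator, invoke $\eps \leq \eps(\bS)$ to get $|\bS \triangle T| \geq \eps\cdot 3^n$, and use the Chernoff bound to discard the mass outside $\midlayers(\ell)$. One small remark: the step you flagged as the only non-bookkeeping part --- showing via convexity of the $\ell_1$-norm that $T$ cannot leak into $\outlayers(\ell)$ --- is in fact not needed. You only use it to conclude $|T \setminus \midlayers(\ell)| \leq |\zpm^n \setminus \midlayers(\ell)|$, but this inclusion holds trivially for \emph{any} subset $T \subseteq \zpm^n$, which is exactly how the paper's proof proceeds (it bounds both $|\overline{\bT}\cap \bS|$ and $|\bT \cap \overline{\bS} \cap \overline{\midlayers(\ell)}|$ directly by $|\overline{\midlayers(\ell)}| \leq \tfrac{\eps}{4}\cdot 3^n$). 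Your observation is a true and pleasant structural fact, but the lemma does not rely on it.
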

In words, there are at least $\frac{\eps}{2} \cdot 3^n$ points in the middle layers $\midlayers(\ell)$ that are not in $\bS$ but that lie in the convex hull of the portion of $\bS$ in the middle layers. 
\begin{proof} Let $\bT := \conv(\bS \cap \midlayers(\ell)) \cap \zpm^n$. Clearly, $\bT$ is convex and so 
\begin{align*}
    \eps(\bS) \cdot 3^n \leq \Delta(\bS,\bT) = |\bT \cap \overline{\bS}| + |\overline{\bT} \cap \bS| \text{.}
\end{align*}
Now observe that $\bS \cap \midlayers(\ell) \subseteq \bT$ and so $\overline{\bT} \cap \bS \subseteq \overline{\midlayers(\ell)}$. Thus, $|\overline{\bT} \cap \bS| \leq |\overline{\midlayers(\ell)}|$. Next, we have
\[
|\bT \cap \overline{\bS}| = |\bT \cap (\overline{\bS} \cap \midlayers(\ell))| + |\bT \cap (\overline{\bS} \cap \overline{\midlayers(\ell)})| \leq |\bT \cap (\overline{\bS} \cap \midlayers(\ell))| + |\overline{\midlayers(\ell)}| \text{.}
\]
By \cref{fact:chernoff}, $|\overline{\midlayers(\ell)}| \leq 2\exp(-\ln (8/\eps)) \cdot 3^n = \frac{\eps}{4} \cdot 3^n$. Therefore, combining the above yields
\begin{align*}
   |\bT \cap (\overline{\bS} \cap \midlayers(\ell))| \geq \eps(\bS) \cdot 3^n - 2|\overline{\midlayers(\ell)}| \geq \left(\eps(\bS) - \frac{\eps}{2}\right) \cdot 3^n \geq \frac{\eps(\bS)}{2} \cdot 3^n
\end{align*}
where the last step holds since $\eps \leq \eps(\bS)$. \end{proof}

We now prove the correctness of \cref{alg:tester}. The tester always accepts when $\bS$ is convex, since in this
case $\conv(\bS \cap \midlayers(\ell)) \subseteq \bS$. Now suppose $\eps(\bS) \geq \eps$. If $\by \in
\conv(\bS \cap \midlayers(\ell)) \cap (\overline{\bS} \cap \midlayers(\ell))$, then there exists some $\bX \subseteq \bS \cap \midlayers(\ell)$ such that $(\bX,\by)$ is a minimal violating pair. Crucially, \cref{fact:simplex1} guarantees that $\bX \subseteq \upset(\by)$. Thus, if the tester picks such a $\by$ in step (1), then it is guaranteed to reject $\bS$ since step (2) queries all points in $\upset(\by) \cap \midlayers(\ell)$. Therefore, using \cref{lem:farfromconvex}, the probability that the tester rejects $\bS$ after one iteration of steps 1-3 is at least 
\[
\Pru{\by \in \zpm^n}{\by \in \conv(\bS \cap \midlayers(\ell)) \cap (\overline{\bS} \cap \midlayers(\ell))} \geq \eps/2 \text{.}
\]
Thus, the tester rejects $\bS$ with probability at least $1-(1-\eps/2)^{4/\eps} \geq 2/3$. 

We now bound the number of queries. I.e., we need to bound the size of $\upset(\by) \cap \midlayers(\ell)$ when $\by \in \midlayers(\ell)$. Note that each point in this set can be obtained by choosing a set of $2\ell$ coordinates where $\by$ has a zero, and then flipping each of these coordinates to a value in $\{0,\pm 1\}$. Therefore, when $\by \in \midlayers(\ell)$, we have
\begin{align*}
    |\upset(\by) \cap \midlayers(\ell)| \leq \binom{n}{2\ell} \cdot 3^{2\ell} \leq n^{4\ell} = n^{\sqrt{32 n \ln 8/\eps}}
\end{align*}
and so the total number of queries made by the tester is at most $\frac{4}{\eps} \cdot n^{\sqrt{32 n \ln 8/\eps}}$. 
This completes the proof of \cref{thm:UB_nonadaptive_1sided}.

\subsection{Non-Adaptive Lower Bound}
\label{section:non-adaptive-lower-bound}
\label{sec:LB_nonadaptive_1sided}

We complete the proof of \cref{thm:LB_nonadaptive_1sided} establishing the lower bound on the query complexity of non-adaptive convexity testers with one-sided error in this section.
The starting point for the lower bound is the notion of an \emph{anti-slab} in $\zpm^n$.


\begin{definition} [Slab] \label{def:slab} Fix $\tau \geq 1$ and $\bv \in \{0,\pm 1\}^n$. The $(\tau,\bv)$-\emph{slab} is defined as
\[
\slab_{\tau,\bv} = \left\{\bx \in \zpm^n \colon \left|\langle \bv,\bx\rangle\right| \leq \tau \right\} \text{.}
\]
We refer to $\overline{\slab_{\tau,\bv}}$ as the $(\tau,\bv)$-\emph{anti-slab}.
\end{definition}

Note that a slab is an intersection of two  parallel halfspaces and so an anti-slab is a union of two parallel and disjoint halfspaces. Anti-slabs are clearly non-convex, and the following claim establishes an important property of any certificate of non-convexity for the anti-slab. In particular, it shows that if a set of queries contains a witness of non-convexity for the $(\tau,\bv)$-anti-slab, then it must contain two points $\bx \in \overline{\slab_{\tau,\bv}}$ and $\by \in \slab_{\tau,\bv}$ whose projections onto $\bv$ are separated by at least distance $\tau$.

\begin{claim} [The Structure of Violating Pairs for Anti-slabs] \label{claim:only_distant_certificates} Suppose $(\bX,\by)$ is a violating pair for the $(\tau,\bv)$-anti-slab, $\overline{\slab_{\tau,\bv}}$. Then there exists a point $\bx \in \bX$ for which $|\langle \bv, \bx - \by \rangle| > \tau$. \end{claim}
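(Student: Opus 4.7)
The plan is to unpack the definitions, partition the witness set $\bX$ according to which side of the anti-slab its points sit on, and then pick the right point by casing on the sign of $\langle \bv,\by\rangle$.

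First I would unpack what it means for $(\bX,\by)$ to be a violating pair for $\overline{\slab_{\tau,\bv}}$: every $\bx \in \bX$ satisfies $|\langle \bv,\bx\rangle| > \tau$, while $\by \in \conv(\bX) \cap \zpm^n$ with $\by \notin \overline{\slab_{\tau,\bv}}$ gives $|\langle \bv,\by\rangle| \leq \tau$. Partition $\bX = \bX_+ \sqcup \bX_-$ where
\[
\bX_+ = \{\bx \in \bX : \langle \bv,\bx\rangle > \tau\}, \qquad \bX_- = \{\bx \in \bX : \langle \bv,\bx\rangle < -\tau\}.
\]
The first key observation is that \emph{both} parts must be nonempty: writing $\by = \sum_{\bx \in \bX} \lambda_\bx \bx$ with $\lambda_\bx \geq 0$ and $\sum \lambda_\bx = 1$, if $\bX = \bX_+$ then $\langle \bv,\by\rangle > \tau$, and if $\bX = \bX_-$ then $\langle \bv,\by\rangle < -\tau$; either contradicts $|\langle \bv,\by\rangle| \leq \tau$.

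Next I would split on the sign of $\langle \bv,\by\rangle$. If $\langle \bv,\by\rangle \geq 0$, pick any $\bx \in \bX_-$ (nonempty by the previous step); then
\[
\langle \bv,\bx - \by\rangle \;=\; \langle \bv,\bx\rangle - \langle \bv,\by\rangle \;<\; -\tau - \langle \bv,\by\rangle \;\leq\; -\tau,
\]
so $|\langle \bv,\bx - \by\rangle| > \tau$. If instead $\langle \bv,\by\rangle < 0$, pick any $\bx \in \bX_+$; then
\[
\langle \bv,\bx - \by\rangle \;=\; \langle \bv,\bx\rangle - \langle \bv,\by\rangle \;>\; \tau - \langle \bv,\by\rangle \;>\; \tau,
\]
again giving $|\langle \bv,\bx - \by\rangle| > \tau$.

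There is no real obstacle here; the claim is essentially a one-line consequence of the geometry of two parallel halfspaces separated by a gap of width $2\tau$. The only point that requires a moment's thought is the nonemptiness of both $\bX_+$ and $\bX_-$, and this is exactly where the hypothesis $\by \in \conv(\bX)$ combined with $|\langle \bv,\by\rangle| \leq \tau$ is used. Everything else is a direct arithmetic bound, so I expect the full write-up to be essentially the three displays above.
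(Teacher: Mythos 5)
Your proof is correct and follows essentially the same route as the paper's: use $\by \in \conv(\bX) \cap \slab_{\tau,\bv}$ to force the existence of points on both sides of the anti-slab, then case on the sign of $\langle \bv,\by\rangle$ to choose the witness on the far side. The only cosmetic difference is that you make the partition $\bX_+ \sqcup \bX_-$ explicit rather than extracting a single $\bx$ and $\bx'$; the arithmetic and case split are identical.
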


\begin{proof} We have $\by \in \conv(\bX)$ and so $\sum_{\bx \in \bX} \lambda_{\bx} \bx = \by$ where $\sum_{\bx \in \bX} \lambda_{\bx} = 1$. Moreover, we have $\by \in \slab_{\tau,\bv}$ and so 
\begin{align} \label{eq:inner_prod}
    \sum_{\bx \in \bX} \lambda_{\bx} \langle \bv,\bx \rangle = \left\langle \bv, \sum_{\bx \in \bX}\lambda_{\bx} \bx \right\rangle = \langle \bv,\by\rangle \in [-\tau,\tau] \text{.}
\end{align}
We also have $X \subseteq \overline{\slab_{\tau,\bv}}$, which implies $\left|\langle \bv,\bx \rangle\right| > \tau$ for all $\bx \in \bX$. Therefore, by \cref{eq:inner_prod} there clearly has to be some $\bx \in \bX$ where $\langle \bv,\bx \rangle$ is positive and some $\bx' \in \bX$ where $\langle \bv,\bx' \rangle$ is negative, for otherwise the LHS would be outside the interval $[-\tau,\tau]$. In particular, this implies $\langle \bv,\bx \rangle > \tau$ and $\langle \bv,\bx' \rangle < -\tau$ and so
\[
\langle \bv, \bx' \rangle < -\tau \leq \langle \bv,\by \rangle \leq \tau < \langle \bv, \bx \rangle \text{.}
\]
Thus, if $\langle \bv,\by \rangle \leq 0$, then $|\langle \bv, \bx - \by \rangle| > \tau$, and if $\langle \bv,\by \rangle \geq 0$, then $|\langle \bv, \bx' - \by \rangle| > \tau$. \end{proof}

We now introduce our hard family of sets: truncated anti-slabs. (As a reminder, the sets $\inlayers(t)$ and $\outlayers(t)$ are defined in \cref{eq:inn_mid_out}.)

\begin{definition} [Truncated Anti-slab] \label{def:tas} Fix $\tau \geq 1$, $\bv \in \{0,\pm 1\}^n$, and $t \geq 1$. The $t$-truncated $(\tau,\bv)$-anti-slab is defined as follows:
\[
\tas_{\tau,\bv,t} = \left(\overline{\slab_{\tau,\bv}} \cup \inlayers(t)\right) \setminus \outlayers(t) \text{.}
\]
\end{definition}

\begin{figure}[t] \label{fig:TAS}
	\begin{center}
		\includegraphics[trim = 0 0 0 0, clip, scale=0.175]{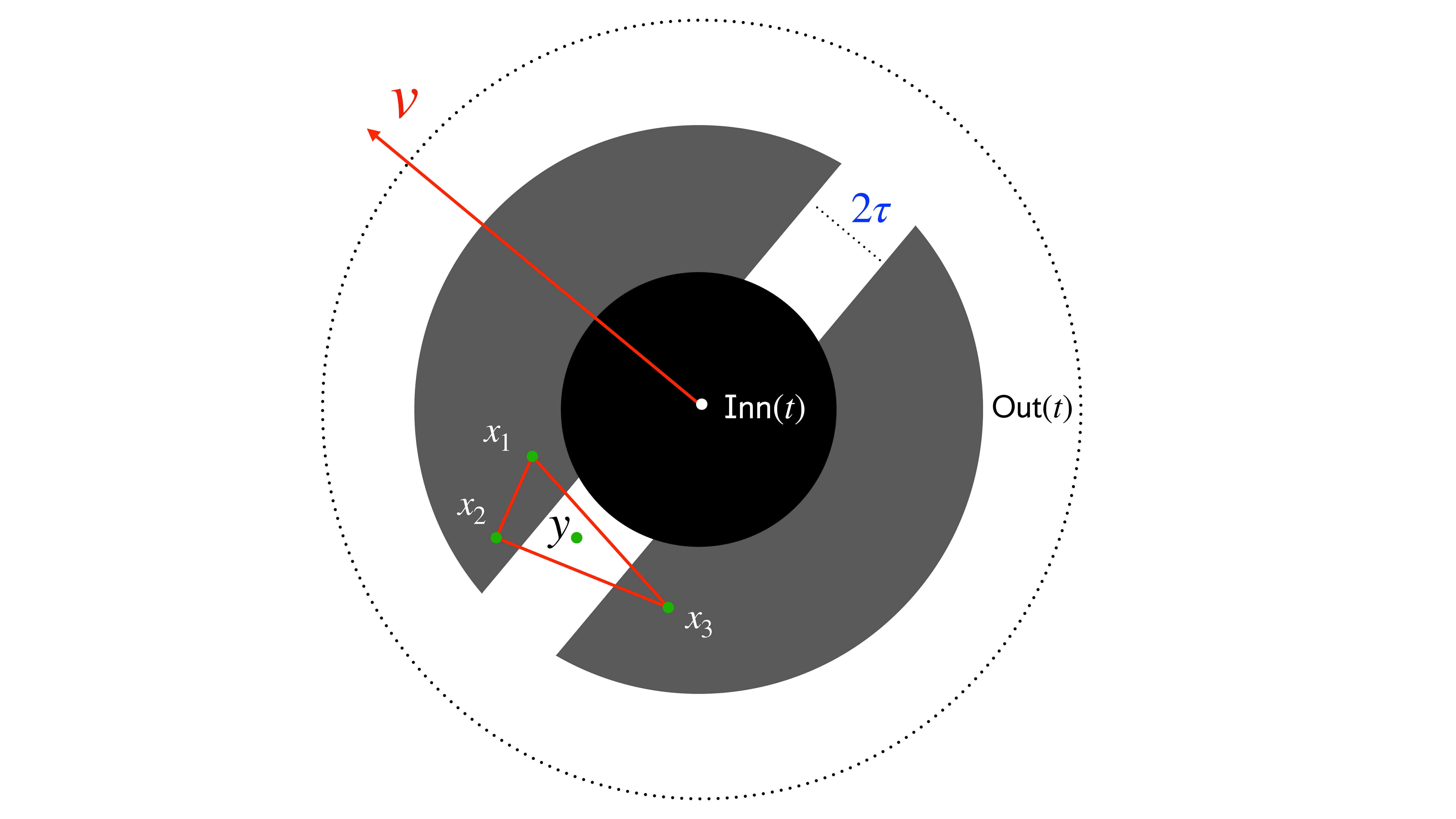}
		\caption{An illustration of the $t$-truncated $(\tau,v)$-anti-slab. The dotted circle represents $\{\pm 1\}^n$ and everything within it is $\zpm^n$. The dark shaded regions are $\tas_{\bv}$. The pair $(\{\bx_1,\bx_2,\bx_3\},\by)$ is a violation for the set.}
	\end{center}
\end{figure}
In particular, we fix $\tau = \sqrt{n}$, $t = 0.7\sqrt{n}$, and consider vectors $\bv \in \zpm^n$ for which $\norm{\bv}_1 = n/2$. Thus, henceforth we will drop the subscripts $\tau,t$ and abbreviate $\tas_{\bv} := \tas_{\sqrt{n}, \bv, 0.7\sqrt{n}}$.

In other words, $\tas_{\bv}$ is the set obtained by taking the $(\sqrt{n},\bv)$-anti-slab, adding in all points with fewer than $\frac{2}{3}n-0.7\sqrt{n}$ non-zero entries, and removing all points with more than $\frac{2}{3}n+0.7\sqrt{n}$ non-zero entries. The intuition for why these sets are hard to test (for non-adaptive testers with one-sided error) is as follows. Suppose a one-sided error tester $T$ has queried a set $\bQ \subset \zpm^n$ and rejects $\tas_{\bv}$. By \cref{cor:witness}, $\bQ$ must contain a minimal violating pair $(\bX,\by)$ for $\tas_{\bv}$. Note that $\bX \subset \tas_{\bv}$, $\by \notin \tas_{\bv}$, and $\by \prec \bx$ for all $\bx \in \bX$ by \cref{fact:simplex1}. By \cref{claim:only_distant_certificates}, there is some $\bx \in \bX$ such that $|\langle \bv, \bx - \by \rangle| > \sqrt{n}$. Additionally, by the truncation, we have $\bx \notin \outlayers(0.7\sqrt{n})$ and $\by \notin \inlayers(0.7\sqrt{n})$. Since $y \prec x$, this implies $\norm{\bx - \by}_1 \leq 1.4\sqrt{n}$. In summary, for $T$ to reject $\tas_{\bv}$ after querying $\bQ$, there must be some $\by \prec \bx \in \bQ$ for which $|\langle \bv, \bx - \by \rangle| > \sqrt{n}$, but also $\norm{\bx - \by}_1 \leq 1.4\sqrt{n}$. 

We consider the family of sets $F = \{\tas_{\bv} \colon \norm{\bv}_1 = n/2\}$. By the above argument the lower bound boils down to the following question: given $y \prec x$ such that $\norm{x-y}_1 \leq 1.4\sqrt{n}$, how many vectors $v \in \zpm^n$ (with $\norm{v}_1 = n/2$) exist for which $|\langle v, x-y \rangle| > \sqrt{n}$? We show that this number is upper bounded by $|F| \cdot \exp(-\Omega(\sqrt{n}))$ and so, by a union bound, the number of sets in $F$ that $T$ can reject after querying $\bQ$ is bounded by $|\bQ|^2 \cdot |F| \cdot \exp(-\Omega(\sqrt{n}))$. Therefore, for $T$ to be a valid non-adaptive tester with one-sided error, we must have $|\bQ|^2 = \exp(\Omega(\sqrt{n}))$ and this gives the result. This argument is formalized in \cref{sec:LB_proof}.

Of course, for the above argument to prove \cref{thm:LB_nonadaptive_1sided}, we need to show that truncated anti-slabs are $\Omega(1)$-far from convex.

\begin{lemma} \label{lem:farfromconvex1} Consider $\bv \in \zpm^n$ where $\norm{\bv}_2^2 = n/2$. We have $\dist(\tas_{\bv}, \mathbf{convex}) = \Omega(1)$. \end{lemma}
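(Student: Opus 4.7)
The plan is to produce a family $\mathcal{T}$ of triples $(x,y,z) \in (\zpm^n)^3$ with $y = (x+z)/2$, $x,z \in \tas_{\bv}$, and $y \notin \tas_{\bv}$, such that no point of $\zpm^n$ belongs to more than $K$ triples while $|\mathcal{T}| \geq \Omega(3^n) \cdot K$. Given such a family, for any convex $C \subseteq \zpm^n$ each triple contributes at least one point to $C \Delta \tas_{\bv}$: either $x \notin C$ or $z \notin C$ (a mismatch in $\tas_{\bv} \setminus C$), or $x,z \in C$ which by convexity forces $y \in C$ (a mismatch in $C \setminus \tas_{\bv}$). Summing over $\mathcal{T}$ and applying the degree bound yields $|\mathcal{T}| \leq K \cdot |C \Delta \tas_{\bv}|$, hence $\dist(\tas_{\bv}, \mathbf{convex}) = \Omega(1)$.

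Let $S := \{i : \bv_i \neq 0\}$, so $|S| = n/2$. Fix $k := \lfloor 1.2\sqrt{n}\rfloor$ and call $y \in \zpm^n$ \emph{good} if (i) $\|y\|_1 \in [\tfrac{2n}{3} - 0.7\sqrt{n},\, \tfrac{2n}{3} - 0.6\sqrt{n}]$, (ii) $|\langle \bv, y\rangle| \leq 0.1\sqrt{n}$, and (iii) the counts $n_+(y) := |\{i \in S : y_i = \bv_i\}|$, $n_0(y) := |\{i \in S : y_i = 0\}|$, and $n_-(y) := |\{i \in S : y_i = -\bv_i\}|$ each lie in $[\tfrac{n}{6} - C\sqrt{n},\, \tfrac{n}{6} + C\sqrt{n}]$ for a sufficiently large absolute constant $C$. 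Using the identity $\langle\bv,y\rangle = n_+(y) - n_-(y)$, standard binomial/CLT concentration estimates show that the set $Y$ of good $y$ satisfies $|Y| = \Omega(3^n)$.

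For each $y \in Y$ and each $A \subseteq \{i \in S : y_i = 0\}$ with $|A| = k$, include in $\mathcal{T}$ the triple $(x,y,z)$ with $x := y + \sum_{a \in A} \bv_a e_a$ and $z := y - \sum_{a \in A} \bv_a e_a$. A direct check gives $y = (x+z)/2 \in \zpm^n$, $\langle\bv,x\rangle \geq k - 0.1\sqrt{n} \geq 1.1\sqrt{n}$, $\langle\bv,z\rangle \leq -1.1\sqrt{n}$, and $\|x\|_1 = \|z\|_1 = \|y\|_1 + k \in [\tfrac{2n}{3} + 0.5\sqrt{n},\, \tfrac{2n}{3} + 0.6\sqrt{n}]$. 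Consequently $x,z \in \overline{\slab_{\sqrt{n},\bv}} \setminus \outlayers(0.7\sqrt{n}) \subseteq \tas_{\bv}$, whereas $y$ lies in the slab and outside $\inlayers(0.7\sqrt{n})$ by (i), so $y \notin \tas_{\bv}$. Hence every triple in $\mathcal{T}$ is a valid collinear witness of non-convexity.

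The hard part is the degree bound. A $y$-vertex $p$ appears in $\binom{n_0(p)}{k} = \Theta\!\big(\binom{n/6}{k}\big)$ triples by (iii). A point $p$ appears as an $x$-vertex only when $A \subseteq \{i \in S : p_i = \bv_i\}$ with $|A| = k$ and $y := p - \sum_{a \in A} \bv_a e_a$ is good; transferring (iii) back to $p$ forces $|\{i \in S : p_i = \bv_i\}| \in [\tfrac{n}{6} + k \pm C\sqrt{n}]$, and the number of admissible $A$ is then $\binom{n/6 + k + O(\sqrt{n})}{k} = \Theta\!\big(\binom{n/6}{k}\big)$ since $k \ll n/6$; the same bound applies to $z$-vertices. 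Hence $K = \Theta\!\big(\binom{n/6}{k}\big)$ while $|\mathcal{T}| = \Theta\!\big(|Y| \cdot \binom{n/6}{k}\big)$, yielding $|\mathcal{T}|/K = \Omega(3^n)$ as required. Without condition (iii) this ratio would collapse: an unbalanced $p$ with $|\{i \in S : p_i = \bv_i\}|$ as large as $n/2$ could lie in $\binom{n/2}{k}/\binom{n/6}{k} = 3^{\Theta(\sqrt{n})}$ times more triples than typical, so the balancedness clause, which automatically propagates from the centers to their endpoints, is what makes the counting go through.
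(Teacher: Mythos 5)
Your proof is correct and takes a genuinely different route from the paper's. The paper proves \cref{lem:farfromconvex1} via \cref{lem:farfromconvex2}, which constructs $\Omega(3^n)$ \emph{disjoint} collinear violating triples. To achieve disjointness it partitions the inner and outer points into slices $X_\ell, Y_\ell$ indexed by $|\bx|_{0,J} = \ell$, builds a biregular-enough bipartite graph between consecutive slices, and invokes a matching lemma (\cref{claim:matching}) to extract a near-perfect matching slice-by-slice; the union of matchings yields the disjoint triple family. You instead build a large \emph{overlapping} family $\mathcal{T}$ of violating triples and replace disjointness by a uniform degree bound $K$ on how many triples any single point of $\zpm^n$ can lie in, then observe that each triple contributes at least one point to $C\,\Delta\,\tas_{\bv}$ for any convex $C$, so $|C\,\Delta\,\tas_{\bv}| \ge |\mathcal{T}|/K$. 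This avoids the matching machinery entirely. The price is that the degree bound becomes the crux, and you handle it correctly: the balancedness constraint (iii) on the center $\by$ forces $|\{i\in S : p_i = \bv_i\}|$ of any endpoint $p$ into an $O(\sqrt n)$ window, so the number of admissible sets $A$ is $\Theta\big(\binom{n/6}{k}\big)$ uniformly over all three roles; without (iii) an unbalanced endpoint could participate in exponentially (in $\sqrt n$) more triples than a typical center, collapsing the ratio — exactly the failure mode you flag. Your condition (iii) thus plays the role of the paper's level-partitioning into $X_\ell,Y_\ell$, both being mechanisms to control the combinatorics near the middle of the trinomial profile. The $|Y|=\Omega(3^n)$ claim and the ratio $\binom{n/6+O(\sqrt n)}{k}/\binom{n/6}{k}=\Theta(1)$ for $k=\Theta(\sqrt n)$ are the same flavor of Stirling/local-CLT estimate the paper uses in \cref{clm:X_bound}; they are standard but should be spelled out in a full write-up. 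Overall your fractional/degree-bounded argument is a clean and arguably more flexible alternative to the paper's exact matching construction.
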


The above \cref{lem:farfromconvex1} is a corollary of the following \cref{lem:farfromconvex2}.

\begin{restatable}{lemma}{lemfarfromconvex} \label{lem:farfromconvex2} Consider $\bv \in \zpm^n$ where $\norm{\bv}_1 = n/2$. There exists a set $L \subset (\zpm^n)^3$ of $\Omega(3^n)$ disjoint colinear triples such that for every $(\bx,\by,\bz) \in L$ the following hold.
\begin{enumerate}
    \item $\by = \frac{\bx+\bz}{2}$ and $\by \in \slab_{\sqrt{n},\bv}$, $\bx,\bz \in \overline{\slab_{\sqrt{n},\bv}}$.
    \item $\bx,\by,\bz \in \midlayers(0.7\sqrt{n})$.
\end{enumerate}
\end{restatable}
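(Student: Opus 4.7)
My plan is to parameterize candidate triples by pairs $(\by, A)$ where $V := \{i : \bv_i \neq 0\}$ has $|V| = n/2$, $A \subseteq V$ with $|A| = a$ for $a := \lceil 1.1 \sqrt{n} \rceil$, and $\by \in \zpm^n$ satisfies $\by|_A = 0$. Define $T(\by, A) := (\by + \bv \cdot \mathbf{1}_A,\ \by,\ \by - \bv \cdot \mathbf{1}_A)$, which is automatically colinear with $\by$ as the midpoint. Because $\bx_i + \bz_i = 2\by_i$ with all coordinates in $\zpm$ forces $\bx_i = \bz_i = \by_i$ when $\by_i \neq 0$ and $\bx_i = -\bz_i \in \zpm$ when $\by_i = 0$, every integer colinear triple has this form; taking $A \subseteq V$ with $\bx_i = \bv_i$ on $A$ maximizes $|\langle \bv, \bx - \by \rangle| = a$ (coordinates outside $V$ would cost middle-layer budget without advancing the slab condition). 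The choice $\sqrt{n} < a < 1.4\sqrt{n}$ is essentially forced: $a > \sqrt{n}$ is needed for the slab conditions below, while $a < 1.4\sqrt{n}$ is needed so that $\norm{\by}_1$ and $\norm{\bx}_1 = \norm{\bz}_1 = \norm{\by}_1 + a$ can both lie in $\midlayers(0.7\sqrt{n})$.

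Straightforward algebra shows $T(\by, A)$ satisfies (1) and (2) iff $\norm{\by}_1 \in [\tfrac{2n}{3} - 0.7\sqrt{n},\ \tfrac{2n}{3} - 0.4\sqrt{n}]$ and $|\langle \bv, \by \rangle| < a - \sqrt{n} = 0.1\sqrt{n}$, using $\langle \bv, \bx \rangle = \langle \bv, \by \rangle + a$ and $\langle \bv, \bz \rangle = \langle \bv, \by \rangle - a$. For uniformly random $A \in \binom{V}{a}$ and $\by$ uniform on $\{y \in \zpm^n : y|_A = 0\}$, $\norm{\by}_1$ is $\mathrm{Bin}(n-a, 2/3)$ with mean $\approx \tfrac{2n}{3} - 0.73\sqrt{n}$ and standard deviation $\Theta(\sqrt{n})$, so the first condition holds with probability $\Omega(1)$ by the local CLT. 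Conditional on $\norm{\by}_1$, the quantity $\langle \bv, \by \rangle = \sum_{i \in V \setminus A} \bv_i \by_i$ is a mean-zero sum of $\Theta(n)$ bounded i.i.d.\ terms with standard deviation $\Theta(\sqrt{n})$, so the second condition also holds with probability $\Omega(1)$. Hence there are $\Omega(\binom{n/2}{a} \cdot 3^{n-a})$ distinct valid triples, and different $(\by, A)$ pairs give different triples.

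The heart of the proof is extracting $\Omega(3^n)$ pairwise disjoint valid triples. Any single $A$ yields only $\Theta(3^{n-a}) = \Theta(3^{n - \Theta(\sqrt{n})})$ automatically-disjoint triples (the sign pattern on $A$ is fixed as $\pm \bv$ or $0$), far short of $3^n$, so I must aggregate across exponentially many $A$'s via a fractional matching / randomized rounding argument on the 3-uniform hypergraph $\mathcal H$ of valid triples. A Chernoff bound shows a $1 - o(1)$ fraction of points $p \in \zpm^n$ are \emph{typical}: each of $|\{i \in V : p_i = 0\}|$, $|\{i \in V : p_i = \bv_i\}|$, $|\{i \in V : p_i = -\bv_i\}|$ lies within $O(\sqrt{n})$ of $n/6$. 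Restricting $\mathcal H$ to triples with all three vertices typical preserves a constant fraction of valid triples while bounding every vertex's degree by $D := \binom{n/6 + O(\sqrt{n})}{a}$. Since $a = O(\sqrt{n})$, one checks $D = O(\binom{n/6}{a})$ (using $\ln(1 + c/\sqrt{n}) = O(1/\sqrt{n})$), and combined with $\binom{n/2}{a}/\binom{n/6}{a} = \Theta(3^a)$, the uniform fractional matching $w(T) = 1/D$ has total weight $\Omega(3^{n-a} \binom{n/2}{a}/\binom{n/6}{a}) = \Omega(3^n)$. Randomized rounding (include each triple independently with probability $p/D$, delete one hyperedge per pairwise conflict) yields an integer matching of size $\geq pW - p^2 |V|$; optimizing $p$ gives $\Omega(W^2/|V|) = \Omega(3^n)$, furnishing $L$.

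The principal obstacle is this final matching-extraction step, which is unavoidable: the tension between $a > \sqrt{n}$ (slab) and $a < 1.4\sqrt{n}$ (middle-layer) forces each individual $A$-slice to contribute only $3^{n - \Theta(\sqrt{n})}$ triples, so $\Omega(3^n)$ disjoint triples can only arise by aggregating across exponentially many overlapping slices. The LP/rounding argument succeeds precisely because, once we restrict to triples of typical vertices, the hypergraph is \emph{balanced} — typical-vertex degrees and typical-triple counts are each of order $\binom{n/6}{a}$, up to constant factors — so the integrality gap is absorbed into constants and does not destroy the $\Omega(3^n)$ bound.
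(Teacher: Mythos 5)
Your proposal is correct but takes a genuinely different route from the paper. You parameterize candidate triples directly as $(\by + \bv\cdot\mathbf{1}_A,\ \by,\ \by - \bv\cdot\mathbf{1}_A)$ for $A \subseteq V$, $|A| = a \approx 1.1\sqrt n$, count $\Omega\big(\binom{n/2}{a}\cdot 3^{n-a}\big)$ valid (non-disjoint) triples, and then extract a disjoint subfamily of size $\Omega(3^n)$ by bounding hypergraph degrees (via a typicality restriction) and applying a fractional-matching/randomized-rounding argument. The paper instead slices by the parameter $\ell = |\bx|_{0,J}$, giving bipartite graphs $(Y_\ell, X_\ell, E_\ell)$ in which the edge relation is ``flip $a$ zero-coordinates of $\by$ in $J$ to $+1$''. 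In these graphs the $Y$-side is exactly $\Delta$-regular with $\Delta = \binom{\ell + a}{a}$ and the $X$-side has degree $\geq \Delta$ (this is precisely where the inequality $|\bx|_{1,J} \geq |\bx|_{0,J} + a$ built into the definition of $X$ is used), so a simple probabilistic matching claim (\cref{claim:matching}) immediately yields a matching of size $(1-1/e)|X_\ell|$ within each slice; the reflection $\bz = 2\by - \bx$ then supplies the third point, and a Stirling computation over $\ell$ gives $\Omega(3^n)$. The paper's decomposition is cleaner because within each slice the degree comparison holds \emph{exactly}, so no typicality pruning or integrality-gap accounting is needed; your route is more generic and would still work if the degree structure were messier, but at the cost of verifying that the typicality restriction preserves a constant fraction of triples and of the extra randomized-rounding step. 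Both yield the same family of triples up to bookkeeping.

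One small clarification worth noting in your write-up: when you write ``yields an integer matching of size $\geq pW - p^2|V|$,'' the notation $|V|$ collides with your earlier $V = \{i : \bv_i \neq 0\}$; you mean the number of vertices in the hypergraph, i.e.\ $\Theta(3^n)$. With that reading, $W = N/D = \Theta(3^n)$ and $p = \Theta(1)$ gives matching size $\Omega(W^2/3^n) = \Omega(3^n)$, consistent with the more direct bound $\Omega(N/D)$ obtained by setting the inclusion probability to $\Theta(1/D)$.
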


In \cref{sec:LB_proof} we prove \cref{thm:LB_nonadaptive_1sided} using \cref{claim:only_distant_certificates} and \cref{lem:farfromconvex1}. In \cref{sec:farfromconvex} we prove \cref{lem:farfromconvex2}, which immediately implies \cref{lem:farfromconvex1}.

\subsubsection{Proof of the Lower Bound} \label{sec:LB_proof} Recall the definition of $\inlayers(t)$, $\midlayers(t)$, and $\outlayers(t)$ in \cref{eq:inn_mid_out}. Given $\bv \in \zpm^n$, recall that
\begin{align*}
    \tas_{\bv} &= \left(\overline{\slab_{\sqrt{n},\bv}} \cup \inlayers(0.7\sqrt{n})\right) \setminus \outlayers(0.7\sqrt{n}) 
\end{align*}
is the $0.7\sqrt{n}$-truncated $(\sqrt{n},\bv)$-anti-slab (\cref{def:tas}). Let $\bV$ denote the set of all vectors 
 $\bv \in \zpm^n$ where $\norm{\bv}_2^2 = n/2$. By \cref{lem:farfromconvex1}, we have $\dist(\tas_{\bv},\textbf{convex}) = \Omega(1)$ for all $\bv \in \bV$. Also note that $|\bV| = \binom{n}{n/2} \cdot 2^{n/2} = 2^{3n/2}/\Theta(\sqrt{n})$.

Given $\bx,\by \in \zpm^n$, let $\Delta(\bx,\by) = \{i \in [n] \colon \bx_i \neq \by_i\}$. For $\bv \in \zpm^n$, let $\mathsf{NZ}_{\bv} = \{i \colon \bv_i \neq 0\}$. Let $T$ be a one-sided error, non-adaptive tester for convex sets in $\zpm^n$.

\begin{claim} \label{clm:certificate} Fix $\bv \in \bV$ and suppose that $T$ rejects $\tas_{\bv}$ after querying a set $\bQ \subseteq \zpm^n$. Then there exists $\bx \neq \by \in \bQ$ such that (a) $|\Delta(\bx,\by)| \leq 1.4\sqrt{n}$ and (b) $|\Delta(\bx,\by) \cap \mathsf{NZ}_{\bv}| > \sqrt{n}$. \end{claim}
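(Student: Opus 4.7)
The plan is to use \cref{cor:witness} to extract a minimal violating pair $(\bX, \by) \subseteq \bQ$ for $\tas_\bv$ and then show that the truncation, combined with minimality, forces this pair to also be a violating pair for the full anti-slab $\overline{\slab_{\sqrt{n}, \bv}}$, so that \cref{claim:only_distant_certificates} applies directly. The main structural claim will be that every point of $\bX \cup \{\by\}$ lies in $\midlayers(0.7\sqrt{n})$; once this is established, properties (a) and (b) fall out of easy arithmetic.

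The first step is the structural analysis. By \cref{fact:simplex1}, minimality gives $\bX \subseteq \upset(\by)$, so $\|\by\|_1 \leq \|\bx\|_1$ for every $\bx \in \bX$. Writing $\tas_\bv^c = (\slab_{\sqrt{n}, \bv} \cap \inlayers(0.7\sqrt{n})^c) \cup \outlayers(0.7\sqrt{n})$, I would first rule out $\by \in \outlayers(0.7\sqrt{n})$: if $\|\by\|_1 > 2n/3 + 0.7\sqrt{n}$, then every $\bx \in \bX$ would also be in $\outlayers(0.7\sqrt{n})$, contradicting $\bX \subseteq \tas_\bv$. Hence $\by \in \slab_{\sqrt{n}, \bv}$ and $\|\by\|_1 \geq 2n/3 - 0.7\sqrt{n}$, which in turn forces every $\bx \in \bX$ to satisfy $\|\bx\|_1 \geq 2n/3 - 0.7\sqrt{n}$ and, since $\bx \in \tas_\bv$, also $\|\bx\|_1 \leq 2n/3 + 0.7\sqrt{n}$. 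Thus $\bx \in \midlayers(0.7\sqrt{n})$, and since $\bx \in \tas_\bv \setminus \inlayers(0.7\sqrt{n})$ we must have $\bx \in \overline{\slab_{\sqrt{n}, \bv}}$. Hence $(\bX, \by)$ is a violating pair for $\overline{\slab_{\sqrt{n}, \bv}}$.

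Now I would apply \cref{claim:only_distant_certificates} to obtain some $\bx \in \bX$ with $|\langle \bv, \bx - \by \rangle| > \sqrt{n}$. For (a), since $\by \preceq \bx$, the coordinates where $\bx$ and $\by$ differ are exactly those where $\by_i = 0$ and $\bx_i \neq 0$, so
\[
|\Delta(\bx, \by)| = \|\bx\|_1 - \|\by\|_1 \leq 1.4\sqrt{n}.
\]
For (b), on $\Delta(\bx, \by)$ we have $\bx_i - \by_i = \bx_i \in \{\pm 1\}$, so
\[
\sqrt{n} < |\langle \bv, \bx - \by \rangle| = \Bigl| \sum_{i \in \Delta(\bx, \by) \cap \mathsf{NZ}_\bv} \bv_i \bx_i \Bigr| \leq |\Delta(\bx, \by) \cap \mathsf{NZ}_\bv|.
\]
Distinctness $\bx \neq \by$ follows from $\langle \bv, \bx - \by \rangle \neq 0$, and both points lie in $\bQ$ because \cref{cor:witness} guarantees $\bX \cup \{\by\} \subseteq \bQ$.

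The main obstacle is the structural step: the truncation was designed so that $\tas_\bv$ is not literally an anti-slab, so one cannot invoke \cref{claim:only_distant_certificates} as a black box. The work is in showing that minimality of the violating pair, combined with the $\inlayers$/$\outlayers$ truncation, pins both $\by$ and all points of $\bX$ into $\midlayers(0.7\sqrt{n})$ and further pushes $\bX$ into $\overline{\slab_{\sqrt{n}, \bv}}$; once this is done, bounds (a) and (b) are immediate from the $L_1$-gap and from $|\bv_i \bx_i| \leq \mathbf{1}[i \in \mathsf{NZ}_\bv]$ respectively.
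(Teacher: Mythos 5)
Your proof is correct and follows the same overall approach as the paper (extract a minimal violating pair via \cref{cor:witness}, invoke \cref{claim:only_distant_certificates}, then use $\by \prec \bx$ together with the truncation to bound $|\Delta(\bx,\by)|$). However, you are more careful than the paper at one point: \cref{claim:only_distant_certificates} is stated for violating pairs of the anti-slab $\overline{\slab_{\sqrt n,\bv}}$, while \cref{cor:witness} only hands you a violating pair for the \emph{truncated} set $\tas_\bv$. The paper applies \cref{claim:only_distant_certificates} to this pair without explicitly checking that the hypotheses are met, whereas you explicitly verify them. Your verification is sound: the poset chain $\bX \subseteq \upset(\by)$ rules out $\by \in \outlayers(0.7\sqrt n)$ (else all of $\bX$ would also be in $\outlayers$, contradicting $\bX \subseteq \tas_\bv$), hence $\by \in \slab_{\sqrt n,\bv} \cap \midlayers(0.7\sqrt n)$; then $\norm{\bx}_1 \geq \norm{\by}_1 \geq 2n/3 - 0.7\sqrt n$ together with $\bx \in \tas_\bv$ forces $\bx \in \overline{\slab_{\sqrt n,\bv}} \cap \midlayers(0.7\sqrt n)$, so $(\bX,\by)$ is indeed a violating pair for the anti-slab. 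The remaining arithmetic for (a) via $|\Delta(\bx,\by)| = \norm{\bx}_1 - \norm{\by}_1 \leq 1.4\sqrt n$, and for (b) via $|\langle \bv, \bx-\by\rangle| \leq |\Delta(\bx,\by)\cap\mathsf{NZ}_\bv|$, matches the paper. In short: same route, with a useful added justification that plugs a small gap in the paper's exposition.
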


\begin{proof} By \cref{cor:witness}, $\bQ$ must contain a minimal violating pair $(\bX,\by)$ for $\tas_{\bv}$. By \cref{claim:only_distant_certificates}, there exists $\bx \in \bX$ for which $|\langle \by - \bx, \bv\rangle| > \sqrt{n}$. Observe that $|\Delta(\bx,\by) \cap \mathsf{NZ}_{\bv}| \geq |\langle \by - \bx, \bv\rangle|$ and so (b) holds. 

Now, since $(\bX,\by)$ is a violating pair we have $\bx \in \tas_{\bv}$ and $\by \notin \tas_{\bv}$ and since $(\bX,\by)$ is minimal, \cref{fact:simplex1} implies that $\by \prec \bx$. By construction, we have $\inlayers(0.7\sqrt{n}) \subseteq \tas_{\bv}$ and $\outlayers(0.7\sqrt{n}) \subseteq \overline{\tas_{\bv}}$ and so it must be the case that $\bx,\by \in \midlayers(0.7\sqrt{n})$. In summary, $\bx$ can be obtained by changing at most $1.4\sqrt{n}$ zero values in $\by$ to non-zero values. Thus, (a) holds. \end{proof}

Now, given $\bx,\by \in \zpm^n$, let
\begin{align}
    \bV(\bx,\by) = \Big\{\bv \in \bV \colon |\Delta(\bx,\by) \cap \mathsf{NZ}_{\bv}| > \sqrt{n}\Big\} \subseteq \bV \text{.}
\end{align}

\begin{claim} \label{clm:Vxy_bound} If $|\Delta(\bx,\by)| \leq 1.4\sqrt{n}$, then $|\bV(\bx,\by)| \leq O(|\bV|) \cdot 2^{-0.08\sqrt{n}}$. \end{claim}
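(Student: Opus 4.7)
The plan is to recognize $|\bV(\bx,\by)|/|\bV|$ as a tail probability of a hypergeometric random variable and then apply a standard concentration inequality.

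Let $d := |\Delta(\bx,\by)| \leq 1.4\sqrt{n}$. First I would dispose of the trivial case: if $d \leq \sqrt{n}$, then for every $\bv \in \bV$ the intersection $\Delta(\bx,\by) \cap \mathsf{NZ}_{\bv}$ has size at most $d \leq \sqrt{n}$, so $\bV(\bx,\by) = \emptyset$ and the claim holds vacuously. So I may assume $\sqrt{n} < d \leq 1.4\sqrt{n}$.

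Next, since $\bV$ is in bijection with pairs (support of size $n/2$, sign vector on that support), a uniformly random $\bv \in \bV$ has $\mathsf{NZ}_{\bv}$ distributed as a uniformly random size-$n/2$ subset of $[n]$. Consequently, $|\Delta(\bx,\by) \cap \mathsf{NZ}_{\bv}|$ is hypergeometrically distributed (population $n$, successes $n/2$, sample size $d$), with mean $d/2 \leq 0.7\sqrt{n}$. I would then quote Chvátal's/Hoeffding's hypergeometric tail bound: for any $t > 0$,
\[
  \Pru{\bv \in \bV}{\,|\Delta(\bx,\by) \cap \mathsf{NZ}_{\bv}| \geq d/2 + t\,} \leq \exp(-2 t^2 / d)\,.
\]

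Applying this with $t = \sqrt{n} - d/2 \in [0.3\sqrt{n},\, \sqrt{n}/2]$, and using that the function $d \mapsto (\sqrt n - d/2)^2/d$ is decreasing on $(0, 2\sqrt n]$ so its minimum over our interval is attained at $d = 1.4\sqrt{n}$, I get
\[
  \Pru{\bv \in \bV}{\,|\Delta(\bx,\by) \cap \mathsf{NZ}_{\bv}| > \sqrt{n}\,}
  \leq \exp\!\left(-\frac{2(0.3\sqrt{n})^2}{1.4\sqrt{n}}\right)
  = \exp(-0.128\sqrt{n}) \leq 2^{-0.185\sqrt{n}} \leq 2^{-0.08\sqrt{n}}\,.
\]
Multiplying through by $|\bV|$ yields $|\bV(\bx,\by)| \leq O(|\bV|) \cdot 2^{-0.08\sqrt n}$, as required.

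There is no real obstacle: the only thing to be careful about is verifying that the constant $0.08$ comfortably fits inside the Hoeffding exponent across the whole range $\sqrt{n} < d \leq 1.4\sqrt{n}$, which reduces to the one-line observation that the worst case in this range is $d = 1.4\sqrt{n}$ and gives $0.185 > 0.08$ with room to spare.
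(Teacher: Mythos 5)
Your proof is correct. Let me note what differs from the paper.

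Both arguments rest on the same basic observation: since a uniformly random $\bv \in \bV$ has $\mathsf{NZ}_{\bv}$ uniformly distributed among size-$n/2$ subsets of $[n]$ (with signs chosen independently), the ratio $|\bV(\bx,\by)|/|\bV|$ is exactly the upper-tail probability $\Pr[\,|\Delta(\bx,\by) \cap \mathsf{NZ}_{\bv}| > \sqrt n\,]$ of a hypergeometric random variable with mean $d/2 \le 0.7\sqrt n$. Where you diverge is in the execution: the paper computes this tail by hand, writing $|\bV(\bx,\by)| = \sum_{\ell > \sqrt n} \binom{d}{\ell}\binom{n-d}{n/2-\ell}2^{n/2}$, pulling out the central binomial $\binom{n-d}{n/2-\sqrt n} = \Theta(2^{n-d}/\sqrt n)$, observing that the remaining quantity $2^{-d}\sum_{\ell>\sqrt n}\binom{d}{\ell}$ is monotone in $d$, and finishing with $\binom{1.4\sqrt n}{\sqrt n} \le (em/k)^k < 2^{1.31\sqrt n}$. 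You instead invoke the Chv\'atal--Hoeffding hypergeometric tail bound $\exp(-2t^2/d)$ as a black box, then check that $d \mapsto (\sqrt n - d/2)^2/d$ is decreasing on $(0, 2\sqrt n]$ so the worst case is $d = 1.4\sqrt n$. Your route is shorter and more modular; the paper's is fully self-contained and avoids citing a concentration inequality for a non-i.i.d.\ distribution. You also dispatch the trivial case $d \le \sqrt n$ (where $\bV(\bx,\by) = \emptyset$) explicitly, which the paper implicitly handles by monotonicity. Both arrive at essentially the same exponent. One cosmetic nitpick: you round $2t^2/d = 0.1286\sqrt n$ down to $0.128\sqrt n$ before converting to base $2$, which makes the intermediate inequality $\exp(-0.128\sqrt n) \le 2^{-0.185\sqrt n}$ fail by a hair ($0.128 \cdot \log_2 e \approx 0.1847 < 0.185$); keeping the extra digit or weakening to $2^{-0.18\sqrt n}$ fixes it, and of course the final bound $2^{-0.08\sqrt n}$ holds with lots of slack either way.
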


\begin{proof} Note that $|\mathsf{NZ}_{\bv}| = n/2$ for all $\bv \in \bV$ and so $|\bV(\bx,\by)|$ can be bounded as follows. In the following calculation, a vector $\bv \in \bV(\bx,\by)$ is chosen by picking $\ell > \sqrt{n}$ coordinates from $\Delta(\bx,\by)$ and $n/2-\ell$ coordinates from $[n] \setminus \Delta(\bx,\by)$ to be non-zero. Then each of these coordinates is fixed to a value in $\{\pm 1\}$.
\begin{align} \label{eq:lbeq}
    |\bV(\bx,\by)| &= \sum_{\ell > \sqrt{n}} \binom{|\Delta(\bx,\by)|}{\ell} \binom{n-|\Delta(\bx,\by)|}{n/2-\ell}  2^{n/2} \nonumber \\
    &\leq 2^{n/2} \binom{n - |\Delta(\bx,\by)|}{n/2 - \sqrt{n}}\sum_{\ell > \sqrt{n}} \binom{|\Delta(\bx,\by)|}{\ell} \nonumber \\
    &= \frac{2^{3n/2}}{\Theta(\sqrt{n})} \cdot 2^{-|\Delta(\bx,\by)|} \sum_{\ell > \sqrt{n}} \binom{|\Delta(\bx,\by)|}{\ell} = O(|\bV|) \cdot 2^{-|\Delta(\bx,\by)|} \sum_{\ell>\sqrt{n}} \binom{|\Delta(\bx,\by)|}{\ell}
\end{align}
To bound the RHS, observe that $2^{-k} \cdot \sum_{\ell > \sqrt{n}} \binom{k}{\ell}$ is precisely the probability that a random subset $S \subseteq [k]$ has size $|S| > \sqrt{n}$, which is a monotone increasing function of $k$. Thus, the RHS of \cref{eq:lbeq} is a monotone increasing function of $|\Delta(\bx,\by)|$ and so is maximized by setting $\Delta(\bx,\by) = 1.4\sqrt{n}$. Thus,
\begin{align*}
    |\bV(\bx,\by)| &\leq O(|\bV|) \cdot 2^{-1.4\sqrt{n}} \sum_{\ell > \sqrt{n}} \binom{1.4\sqrt{n}}{\ell} \\
    &\leq O(|\bV|) \cdot 2^{-1.4\sqrt{n}} \cdot \sqrt{n} \cdot \binom{1.4\sqrt{n}}{\sqrt{n}}
    \leq O(|\bV|) \cdot 2^{-0.08\sqrt{n}}\text{.}
\end{align*}
The last inequality holds by the well known bound ${m \choose k} \leq \left(\frac{e m}{k}\right)^k$ as follows. We have ${1.4\sqrt{n} \choose \sqrt{n}} = {1.4\sqrt{n} \choose 0.4\sqrt{n}} \leq (\frac{e \cdot 1.4}{0.4})^{0.4\sqrt{n}} = 2^{0.4\log_2(1.4e/0.4)\sqrt{n}} < 2^{1.31\sqrt{n}}$. \end{proof}

Now, given a set of queries $\bQ \subseteq \zpm^n$, let
\begin{align}
    \bV(\bQ) = \Big\{ \bv \in \bV \colon \exists \bx \neq \by \in \bQ \text{ such that } |\Delta(\bx,\by) \cap \mathsf{NZ}_{\bv}| > \sqrt{n} \text{ and } |\Delta(\bx,\by)| \leq 1.4\sqrt{n} \Big\} \subseteq \bV \text{.}
\end{align}
By \cref{clm:certificate}, if $T$ rejects $\tas_{\bv}$ after querying the set $\bQ$, then $\bv \in \bV(\bQ)$. Informally, $\bV(\bQ)$ contains all $\bv$ for which $\bQ$ can contain a witness of non-convexity for the set $\tas_{\bv}$. Moreover, by \cref{clm:Vxy_bound} and the union bound, we have
\begin{align} \label{eq:TQ_bound}
    \left|\bV(\bQ)\right| \leq \sum_{\bx,\by \in \bQ} \left|\bV(\bx,\by)\right| \leq |\bQ|^2 \cdot O(|\bV|) \cdot 2^{-0.08\sqrt{n}} \text{.}
\end{align}
Now, let $\bQ$ be the set of $q$ queries sampled according to the distribution defined by the non-adaptive, one-sided error tester $T$. Then, using linearity of expectation and the bound from \cref{eq:TQ_bound} we obtain
\begin{align*}
    &\sum_{\bv \in \bV} \Pru{\bQ}{T \text{ rejects } \tas_{\bv} \text{ after querying } \bQ} \leq
\sum_{\bv\in \bV} \Pru{\bQ}{\bv \in \bV(\bQ)} = \Exu{\bQ}{|\bV(\bQ)|} \leq q^2 \cdot O(|\bV|) \cdot 2^{-0.08\sqrt{n}}
\end{align*}
and therefore, by averaging over $\bV$, there exists $\bv \in \bV$ such that
\begin{align}
    \frac{2}{3} \leq \Pru{\bQ}{T \text{ rejects } \tas_{\bv} \text{ after querying } \bQ} \leq O(1) \cdot q^2 \cdot 2^{-0.08\sqrt{n}}
\end{align}
where the first inequality is due to the fact that $T$ rejects any $\bS_{\bv}$ with probability at least $2/3$. Therefore, it follows that $q \geq \Omega(1) \cdot 2^{0.04\sqrt{n}}$.

\subsubsection{Truncated Anti-slabs are Far from Convex} \label{sec:farfromconvex}

We complete the proof of \cref{lem:farfromconvex2} in this section, restated below for ease of reading.

\lemfarfromconvex*

\begin{proof} Let $J = \{j \in [n] \colon \bv_j \neq 0\}$. Without loss of generality, by a rotation, we may assume that $\bv_j = 1$ for all $j \in J$. 
Note that under this assumption, we have $\langle \bv,\bx \rangle = \sum_{j \in J} \bx_j$ for all $\bx \in \zpm^n$.

To construct our set $L$ of disjoint colinear triples we start by constructing a \emph{matching} of $\Omega(3^n)$ pairs $(\bx,\by)$ such that (a) $\by \in \slab_{\sqrt{n},\bv} \cap \midlayers(0.7\sqrt{n})$, (b) $\bx \in \overline{\slab_{\sqrt{n},\bv}} \cap \midlayers(0.7\sqrt{n})$, and (c) $\by$ can be obtained from $\bx$ by changing a subset of $\bx$'s $+1$ coordinates in $J$ to $0$. A third point $\bz$ is obtained by reflecting $\bx$ across $\by$, i.e., this same set of coordinates is changed to $-1$ to obtain $\bz$. By symmetry we have $\norm{\bz}_1 = \norm{\bx}_1$, $(\bx,\by,\bz)$ are colinear, and the resulting set of triples are disjoint. We also choose the original matching so that we will always have $\bz \in \overline{\slab_{\sqrt{n},\bv}} \cap \midlayers(0.7\sqrt{n})$ and so the resulting triple satisfies item (1) and (2) of the lemma, i.e., it is a violation of convexity for the $0.7\sqrt{n}$-truncated $(\sqrt{n},\bv)$-anti-slab, $\tas_{\sqrt{n},\bv,0.7\sqrt{n}}$.

To construct our matching we use the following simple claim.

\begin{claim} \label{claim:matching} Let $(U,V,E)$ be a bipartite graph and $\Delta > 0$ be such that (a) each vertex $x \in U$ has degree exactly $\Delta$ and (b) each vertex $y \in V$ has degree at least $\Delta$. Then there exists a matching $M \subseteq E$ in $(U,V,E)$ of size $|M| \geq (1-1/e) |V|$. \end{claim}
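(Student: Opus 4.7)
The plan is to prove the claim by a simple probabilistic construction that assigns each vertex in $U$ to a random neighbor in $V$ and then disambiguates conflicts. Specifically, I would define a random function $\pi : U \to V$ by letting each $x \in U$ independently select $\pi(x)$ uniformly at random from its $\Delta$ neighbors in $V$. From $\pi$ I would extract a matching $M_\pi$ as follows: for every $y$ in the image of $\pi$, include exactly one edge $(x,y)$ with $\pi(x) = y$, breaking ties arbitrarily (say by a fixed ordering on $U$). Each chosen $y$ is incident to exactly one edge of $M_\pi$, and each chosen $x$ is used for the unique $y$ it was assigned, so $M_\pi$ is a valid matching.

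The second step is to lower bound $\Ex{|M_\pi|}$, which equals the expected number of $y \in V$ in the image of $\pi$. Fix $y \in V$, and note that each $x$ in the neighborhood of $y$ chooses $\pi(x) = y$ independently with probability $1/\Delta$, since $x$ has exactly $\Delta$ neighbors by hypothesis (a). Using hypothesis (b) that $y$ has at least $\Delta$ neighbors, together with the monotonicity of $(1 - 1/\Delta)^k$ in $k$ and the standard bound $1 - t \leq e^{-t}$, I would obtain
\[
\Pr{y \notin \pi(U)} = \left(1 - \tfrac{1}{\Delta}\right)^{\deg(y)} \leq \left(1 - \tfrac{1}{\Delta}\right)^{\Delta} \leq \tfrac{1}{e}.
\]
Thus $\Pr{y \in \pi(U)} \geq 1 - 1/e$, and by linearity of expectation $\Ex{|M_\pi|} \geq (1 - 1/e)|V|$; so some realization of $\pi$ yields a matching of size at least $(1 - 1/e)|V|$, as required.

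No step here is an obstacle: the argument is essentially two inequalities once the random mapping is defined. As an aside, conditions (a) and (b) in fact imply Hall's condition on the $V$ side---for any $S \subseteq V$, edge-counting gives $\Delta |S| \leq |E(S, N(S))| \leq \Delta |N(S)|$ and hence $|N(S)| \geq |S|$---so a matching of size exactly $|V|$ saturating $V$ actually exists. The weaker $(1 - 1/e)|V|$ bound stated in the claim is therefore far from tight, but it suffices (up to a constant factor) for the subsequent construction of $\Omega(3^n)$ disjoint colinear triples.
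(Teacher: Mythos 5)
Your proof is correct and matches the paper's argument essentially verbatim: both define the random map $\pi$ (the paper calls it $\phi$) sending each $x \in U$ to a uniformly random neighbor, extract a matching by picking one preimage per hit vertex of $V$, and bound $\Pr{y \notin \pi(U)} = (1-1/\Delta)^{\deg(y)} \leq 1/e$ using $\deg(y) \geq \Delta$ and the fact that each $x$ has exactly $\Delta$ neighbors. Your aside that Hall's condition actually holds on the $V$ side (giving a $V$-saturating matching) is correct and a nice observation, though the paper does not pursue it since the $(1-1/e)|V|$ bound already suffices.
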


\begin{proof} We construct a random map $\phi \colon U \to V$ as follows. For each $x \in U$ let $\phi(x)$ be a uniform random neighbor of $x$. Observe that $\phi^{-1}(y) \cap \phi^{-1}(y') = \emptyset$ for all $y \neq y' \in V$. Thus, given $\phi$, we can obtain a matching $M_{\phi}$ as follows: for each $y \in V$, if $\phi^{-1}(y) \neq \emptyset$, then add $(x,y)$ to $M_{\phi}$ for some arbitrary $x \in \phi^{-1}(y)$. To lower bound the size of $M_{\phi}$, observe that
\[
\Exp_{\phi}\left[|M_{\phi}|\right] = |V| - \Exp_{\phi}\left[\sum_{y \in V} \mathbf{1}(\phi^{-1}(y) =
\emptyset)\right] = |V| - \sum_{y \in V} \Pru{\phi}{\phi^{-1}(y) = \emptyset}\text{.}
\]
Now, if $\phi^{-1}(y) = \emptyset$, this means that all $\deg(y) \geq \Delta$ neighbors of $y$ were mapped to some neighbor other than $y$, of which there are exactly $\Delta$ in total. Therefore,
\[
\Pru{\phi}{\phi^{-1}(y) = \emptyset} = \left(1-\frac{1}{\Delta}\right)^{\deg(y)} \leq 1/e
\]
since $\deg(y) \geq \Delta$. Thus, $\Exp_{\phi}\left[|M_{\phi}|\right] \geq |V| \cdot (1 - 1/e)$ and so there exists a matching $M$ satisfying the claim. \end{proof}

Given $\bx \in \zpm^n$ and $b \in \zpm$, let $|\bx|_{b,J} = |\{j \in J \colon \bx_j = b\}|$ and similarly for $\overline{J}$. Let $I = \left[n/6 + 0.6\sqrt{n}, n/6 + 0.8\sqrt{n}\right]$. We define the following sets.
\begin{align} \label{eq:X}
    X = \left\{ \bx \in \{0,\pm 1\}^n \colon \sqrt{n} < \sum_{j \in J} \bx_j < 1.2\sqrt{n} \text{, } |\bx|_{1,J} \geq |\bx|_{0,J} + 1.1\sqrt{n} \text{, and } |\bx|_{0,\overline{J}} \in I\right\} 
\end{align}
\begin{align} \label{eq:Y}
    Y = \left\{ \by \in \{0,\pm 1\}^n \colon -0.1\sqrt{n} < \sum_{j \in J} \by_j < 0.1\sqrt{n} \text{, } |\by|_{1,J} \geq |\by|_{0,J} - 1.1\sqrt{n} \text{, and } |\by|_{0,\overline{J}} \in I\right\}
\end{align}
Observe that $X \subset \overline{\slab_{\sqrt{n},\bv}}$ and $Y \subset \slab_{\sqrt{n},\bv}$. We now partition $X$ and $Y$ as follows. For each $\ell \in \NN$, let
\begin{align} \label{eq:XYell}
    X_{\ell} = \left\{\bx \in X \colon |\bx|_{0,J} = \ell \right\} \text{ and } 
    Y_{\ell} = \left\{\by \in Y \colon |\by|_{0,J} = \ell + 1.1\sqrt{n} \right\}\text{.}
\end{align}
For each such $\ell$ we consider the bipartite graph $(Y_{\ell},X_{\ell},E_{\ell})$ where there is an edge $(\by,\bx) \in E_{\ell}$ if $\bx$ can be obtained from $\by$ by choosing a set of $1.1\sqrt{n}$ coordinates from $J$ where $\by$ has a $0$ and flipping all of these bits to $+1$. Formally, $(\by,\bx) \in E$ iff $\exists A \subseteq J$ of size $|A| = 1.1\sqrt{n}$ such that (a) for all $j \in A$, $\by_j = 0$, $\bx_j = +1$, and (b) for all $j \in [n] \setminus A$, $\by_j = \bx_j$. Observe now that (a) every vertex in $Y_{\ell}$ has degree exactly $\Delta := {\ell + 1.1\sqrt{n} \choose 1.1\sqrt{n}}$, and (b) each vertex $\bx \in X_{\ell}$ has degree
\[
\text{deg}(\bx) = {|\bx|_{1,J} \choose 1.1\sqrt{n}} \geq {|\bx|_{0,J} + 1.1\sqrt{n} \choose 1.1\sqrt{n}} = {\ell + 1.1\sqrt{n} \choose 1.1\sqrt{n}} =  \Delta
\] 
where the inequality is by definition of $X$ and the second to last equality is by definition of $X_{\ell}$. Thus, by \cref{claim:matching}, there exists a matching $M_{\ell}$ in $(Y_{\ell},X_{\ell},E_{\ell})$ of size $|M_{\ell}| \geq \Omega(|X_{\ell}|)$.

Now, we obtain a set of disjoint colinear triples by taking
\[
L = \left\{(\bx,\by,2\by-\bx) \colon (\by,\bx) \in \bigcup_{\ell = \frac{n}{6} - 1.3\sqrt{n}}^{\frac{n}{6} - 1.2\sqrt{n}} M_{\ell}\right\} \text{.}
\]
Note that by construction every $(\bx,\by,\bz) \in L$ is a colinear triple in $\zpm^n$.

\paragraph{Proof of items (1) and (2) of \cref{lem:farfromconvex2}:}

By definition of the sets $X$ and $Y$, we have $\bx \in \overline{\slab_{\sqrt{n},\bv}}$ and $\by \in \slab_{\sqrt{n},\bv}$. Note that $\bz$ is obtained from $\bx$ by flipping a set of $1.1\sqrt{n}$ coordinates in $J$ where $\bx$ is $+1$ to $-1$. Therefore, we have $\bz \in \overline{\slab_{\sqrt{n},\bv}}$ since 
\[
\sum_{j\in J} \bz_j = \sum_{j \in J} \bx_j - 2.2\sqrt{n} < 1.2\sqrt{n} - 2.2\sqrt{n} = -\sqrt{n}
\]
where the inequality used the definition of the set $X$. Thus, item (1) of the lemma is satisfied.

Now, for every $(\bx,\by,\bz) \in L$, we have
\[
\frac{n}{6} - 1.3\sqrt{n} \leq |\bx|_{0,J} = |\bz|_{0,J} = |\by|_{0,J} - 1.1\sqrt{n} \leq \frac{n}{6} - 1.2\sqrt{n}
\]
and so 
\begin{align*}
    |\bx|_{0,J},|\bz|_{0,J},|\by|_{0,J} \in \left[\frac{n}{6} - 1.3\sqrt{n}, \frac{n}{6} - 0.1\sqrt{n}\right] \text{.}
\end{align*}
Now, recalling that $I = [n/6+0.6\sqrt{n},n/6+0.8\sqrt{n}]$ and the the definition of $X$ and $Y$, we have
\begin{align*}
    |\by|_{0,\overline{J}}, |\bz|_{0,\overline{J}}, |\bx|_{0,\overline{J}} \in \left[\frac{n}{6} + 0.6\sqrt{n}, \frac{n}{6} + 0.8\sqrt{n}\right] \text{.}
\end{align*}
Combining the two bounds above we get that the number of $0$-coordinates of $\bx,\by$, and $\bz$ are all in the range $[n/3-0.7\sqrt{n},n/3+0.7\sqrt{n}]$. Therefore, we have $\norm{\bx}_1,\norm{\by}_1,\norm{\bz}_1 \in [2n/3-0.7\sqrt{n},2n/3+0.7\sqrt{n}]$, i.e., item (2) of the lemma is satisfied. 

\paragraph{Proof that $|L| \geq \Omega(3^n)$:} It remains to lower bound the size of $L$. Towards this, recall that 
\begin{align} \label{eq:L_bound}
    |L| = \sum_{r = 1.2\sqrt{n}}^{1.3\sqrt{n}} |M_{n/6-r}| = \sum_{r = 1.2\sqrt{n}}^{1.3\sqrt{n}} \Omega(|X_{n/6-r}|)\text{.}
\end{align}
We use the following claim to simplify our calculation of $|X_{n/6-r}|$.
\begin{claim} \label{claim:subset} If $|\bx|_{0,J} < n/6 - 1.2\sqrt{n}$ and $\sum_{j \in J} \bx_j > \sqrt{n}$, then $|\bx|_{1,J} \geq |\bx|_{0,J} + 1.1\sqrt{n}$. \end{claim}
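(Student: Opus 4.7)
The plan is to prove this claim by a direct linear-algebraic manipulation using the definition of $J$ and the cardinality constraints on the three types of coordinates in $J$. Since $J = \{j : \bv_j \neq 0\}$ and we have $\norm{\bv}_1 = n/2$, the set $J$ has exactly $n/2$ coordinates, so $|\bx|_{1,J} + |\bx|_{-1,J} + |\bx|_{0,J} = n/2$. First I would use this to eliminate $|\bx|_{-1,J}$ from the inner product condition.

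Under the WLOG assumption made at the start of the proof of \cref{lem:farfromconvex2} that $\bv_j = 1$ for all $j \in J$, the hypothesis $\sum_{j \in J} \bx_j > \sqrt{n}$ becomes $|\bx|_{1,J} - |\bx|_{-1,J} > \sqrt{n}$. Substituting $|\bx|_{-1,J} = n/2 - |\bx|_{0,J} - |\bx|_{1,J}$ gives
\[
|\bx|_{1,J} > \frac{n}{2} - |\bx|_{0,J} - |\bx|_{1,J} + \sqrt{n},
\]
and solving for $|\bx|_{1,J}$ yields the lower bound $|\bx|_{1,J} > \frac{n}{4} - \frac{|\bx|_{0,J}}{2} + \frac{\sqrt{n}}{2}$.

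The remaining step is to combine this with the hypothesis $|\bx|_{0,J} < n/6 - 1.2\sqrt{n}$. To establish the target inequality $|\bx|_{1,J} \geq |\bx|_{0,J} + 1.1\sqrt{n}$, it suffices (by the previous display) to verify that $\frac{n}{4} - \frac{|\bx|_{0,J}}{2} + \frac{\sqrt{n}}{2} \geq |\bx|_{0,J} + 1.1\sqrt{n}$, which rearranges to the equivalent statement $|\bx|_{0,J} \leq \frac{n}{6} - 0.4\sqrt{n}$. Since $\frac{n}{6} - 1.2\sqrt{n} < \frac{n}{6} - 0.4\sqrt{n}$, the hypothesis $|\bx|_{0,J} < \frac{n}{6} - 1.2\sqrt{n}$ implies this bound, which completes the proof.

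There is no real obstacle here; the entire claim is a short arithmetic deduction from the hypothesis and the identity $|J| = n/2$. The only thing to be careful about is keeping track of which direction of inequality is needed at each step and confirming that $n/6 - 1.2\sqrt{n}$ is indeed smaller (not larger) than $n/6 - 0.4\sqrt{n}$, so that the hypothesis on $|\bx|_{0,J}$ is strong enough to yield the desired bound on $|\bx|_{1,J}$.
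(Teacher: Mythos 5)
Your proof is correct and follows essentially the same approach as the paper: both reduce to the identity $|\bx|_{1,J}+|\bx|_{-1,J}+|\bx|_{0,J}=|J|=n/2$ and the reading of $\sum_{j\in J}\bx_j$ as $|\bx|_{1,J}-|\bx|_{-1,J}$, and then eliminate $|\bx|_{-1,J}$ (you by direct substitution, the paper by adding the two resulting inequalities, which is the same manipulation). One small note: the WLOG normalization $\bv_j = 1$ is not actually needed here, since $\sum_{j\in J}\bx_j = |\bx|_{1,J}-|\bx|_{-1,J}$ is purely a statement about $\bx$; also, your arithmetic yields the slightly sharper constant $2.3\sqrt{n}$ in the final bound, versus the $1.85\sqrt{n}$ stated in the paper (a harmless discrepancy, since both exceed $1.1\sqrt{n}$).
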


\begin{proof} Note that $|\bx|_{1,J} - |\bx|_{-1,J} = \sum_{j \in J} \bx_j > \sqrt{n}$ and 
\[
|\bx|_{1,J} + |\bx|_{-1,J} = n/2 - |\bx|_{0,J} > n/3 + 1.2\sqrt{n} > 2|\bx|_{0,J} + 3.6\sqrt{n}\text{.}
\]
Adding these inequalities and dividing by $2$ yields $|\bx|_{1,J} > |\bx|_{0,J} + 1.85\sqrt{n} > |\bx|_{0,J} + 1.1\sqrt{n}$. \end{proof}

In particular, recalling the definition of $X$ in \cref{eq:X}, using \cref{claim:subset}, we get that for $\ell \in [n/6-1.3\sqrt{n},n/6-1.2\sqrt{n}]$, we can write
\[
X_{\ell} = \left\{\bx \in \zpm^n \colon \sqrt{n} < \sum_{j \in J} \bx_j < 1.2\sqrt{n} \text{, } |\bx|_{0,J} = \ell \text{ and } |\bx|_{0,\overline{J}} \in I\right\}\text{.}
\]
I.e., the condition $|\bx|_{1,J} \geq |\bx|_{0,J} + 1.1\sqrt{n}$ in the definition of $X$ is not needed to describe $X_{\ell}$ for the values of $\ell$ that we consider.

\begin{claim} \label{clm:X_bound} $\sum_{r = 1.2\sqrt{n}}^{1.3\sqrt{n}} |X_{n/6-r}| = \Omega(3^n)$. \end{claim}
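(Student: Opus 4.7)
The plan is to count $|X_{\ell}|$ directly by factoring the constraints across the disjoint coordinate sets $J$ and $\overline{J}$, and then to use the Stirling approximation already stated in the paper (together with local CLT estimates for symmetric random walks) on each factor. Note that the three defining constraints of $X_{\ell}$ decouple: the conditions ``$|\bx|_{0,J}=\ell$'' and ``$\sum_{j\in J}\bx_j \in (\sqrt n,1.2\sqrt n)$'' depend only on $\bx|_J$, while ``$|\bx|_{0,\overline J}\in I$'' depends only on $\bx|_{\overline J}$. Hence
\[
|X_{\ell}| \;=\; A_\ell \cdot B,
\]
where $A_\ell$ counts $\bx|_J \in \zpm^{J}$ satisfying the first two constraints, and $B$ counts $\bx|_{\overline J}\in \zpm^{\overline J}$ satisfying the third.

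First I would lower-bound $B$. The number of $\bx|_{\overline J}$ with exactly $k$ zeros is $\binom{n/2}{k}2^{n/2-k}$, and for every $k=n/6+\Theta(\sqrt n)$ the Stirling-based fact already used in the proof of \cref{clm:triples} gives $\binom{n/2}{k}2^{n/2-k}=\Theta(3^{n/2}/\sqrt n)$. Since $I$ is an interval of length $0.2\sqrt n$ lying in the $O(\sqrt n)$-window around the mean $n/6$, summing over $k\in I$ yields $B=\Omega(\sqrt n)\cdot\Theta(3^{n/2}/\sqrt n)=\Omega(3^{n/2})$.

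Next I would lower-bound $A_\ell$ for $\ell=n/6-r$ with $r\in[1.2\sqrt n,1.3\sqrt n]$. The number of $\bx|_J$ with exactly $\ell$ zeros is $\binom{n/2}{\ell}2^{n/2-\ell}=\Theta(3^{n/2}/\sqrt n)$ by the same Stirling fact (here the number of non-zeros is $n/2-\ell=n/3+\Theta(\sqrt n)$). Conditioned on any fixed zero-pattern with $\ell$ zeros, the $N\define n/2-\ell\approx n/3$ non-zero entries are i.i.d.\ uniform in $\{\pm 1\}$, and we need $\sum_{j\in J}\bx_j\in(\sqrt n,1.2\sqrt n)$. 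By the local central limit theorem for a symmetric $\pm 1$ walk of length $N$, the point probability $\Pr[\text{sum}=k]=2^{-N}\binom{N}{(N+k)/2}$ equals $\Theta(1/\sqrt N)=\Theta(1/\sqrt n)$ for every $k$ of appropriate parity with $|k|=O(\sqrt n)$ (again by the same Stirling fact). Since the interval $(\sqrt n,1.2\sqrt n)$ contains $\Omega(\sqrt n)$ such $k$, the conditional probability of landing in this interval is $\Omega(1)$, giving $A_\ell=\Omega(3^{n/2}/\sqrt n)$.

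Combining, $|X_{n/6-r}|=A_{n/6-r}\cdot B=\Omega(3^n/\sqrt n)$ for every $r\in[1.2\sqrt n,1.3\sqrt n]$, and since there are $\Omega(\sqrt n)$ such integer values of $r$, the sum telescopes to $\Omega(3^n)$. The only non-routine step is the constant-factor lower bound for the $J$-sum falling in the off-center interval $(\sqrt n,1.2\sqrt n)$; this is where I would lean on the local CLT for symmetric walks (or, equivalently, the already-invoked Stirling estimate for central binomial coefficients, since $\sqrt n$ is within one standard deviation of the mean $0$).
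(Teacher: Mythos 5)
Your proposal is correct and follows essentially the same route as the paper: both decompose the count over the disjoint coordinate blocks $J$ and $\overline{J}$, then bound each factor using the Stirling estimates (\cref{fact:Stirling}), with your ``local CLT for the $\pm 1$ walk'' step being the probabilistic rephrasing of the paper's direct manipulation of $\binom{n/3+k}{n/6+k/2+s}$. The only cosmetic difference is that you factor out $B$ and keep the sum over $r$ outside, whereas the paper's \cref{eq:count_X'} writes the full product of two sums; one terminological nit: the final step is just summing $\Omega(\sqrt n)$ comparable terms, not telescoping.
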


\begin{proof} For simplicity let us assume that $\sqrt{n}$ is an integer. Note that $\sum_{r = 1.2\sqrt{n}}^{1.3\sqrt{n}} |X_{n/6-r}|$ is equal to
\begin{align} \label{eq:count_X'}
   \left(\sum_{0.6\sqrt{n} \leq q \leq 0.8\sqrt{n}} {\frac{n}{2} \choose \frac{n}{3}-q} 2^{\frac{n}{3}-q}\right) \left(\sum_{1.2\sqrt{n} \leq k \leq 1.3\sqrt{n}} {\frac{n}{2} \choose \frac{n}{3} + k} \sum_{0.5\sqrt{n} < s < 0.6\sqrt{n}} {\frac{n}{3} + k \choose \frac{n}{6}+\frac{k}{2} + s}\right) \text{.}
\end{align}
The first term in the product in \cref{eq:count_X'} comes from the fact that the bits in $\overline{J}$ can be set to anything, as long as the number of zero bits is in the interval $I = [n/6+0.6\sqrt{n},n/6+0.8\sqrt{n}]$. Equivalently, the number of non-zero entries is in the interval $[n/3-0.8\sqrt{n},n/3-0.6\sqrt{n}]$.

Now consider the second term. The first sum is over the number of non-zero coordinates in $J$, which is in the interval $[\frac{n}{3}+1.2\sqrt{n},\frac{n}{3}+1.3\sqrt{n}]$. The second sum is over all ways to set the non-zero coordinates in $J$ so that they're sum is in the interval $(\sqrt{n},1.2\sqrt{n})$. Notice that if the number of non-zero coordinates is $\frac{n}{3} + k$, then the sum of the non-zero coordinates is in the interval $(\sqrt{n},1.2\sqrt{n})$ iff the number of $+1$'s is in the interval $(\frac{n}{6} + \frac{k}{2} + 0.5\sqrt{n}, \frac{n}{6} + \frac{k}{2} + 0.6\sqrt{n})$. This explains the second sum in the term.

To bound the RHS of \cref{eq:count_X'}, we use the following fact, which follows readily from Stirling's formula.

\begin{fact} \label{fact:Stirling}Let $N \in \NN$, $t \in \mathbb{Z}$ be such that $|t| \leq c \sqrt{N}$ for some constant $c > 0$. Then,
\[\text{(a) }{N \choose N/2 + t} = \Theta\left(\frac{1}{\sqrt{N}} \cdot 2^N\right) \text{ and (b) } {N \choose 2N/3 + t} = \Theta\left(\frac{1}{\sqrt{N}}\cdot \frac{3^N}{2^{2N/3+t}}\right)\text{.}\] 
\end{fact}

By part (b) of \cref{fact:Stirling} we can bound the first term of \cref{eq:count_X'} as
\begin{align} \label{eq:term1}
    \sum_{0.6\sqrt{n} \leq q \leq 0.8\sqrt{n}} {\frac{n}{2} \choose \frac{n}{3}-q} 2^{\frac{n}{3}-q} = \sum_{0.6\sqrt{n} \leq q \leq 0.8\sqrt{n}} \Omega\left(\frac{1}{\sqrt{n}} \cdot 3^{n/2}\right) = \Omega(3^{n/2})\text{.}
\end{align}
For the second term in \cref{eq:count_X'}, we have $k,s = \Theta(\sqrt{n})$.  
Thus, by part (a) of \cref{fact:Stirling} we have 
\[
{\frac{n}{3} + k \choose \frac{n}{6} + \frac{k}{2} + s} \geq \Omega\left(\frac{1}{\sqrt{n}} \cdot 2^{\frac{n}{3}+k}\right) ~\Longrightarrow~
    \sum_{0.5\sqrt{n} < s < 0.6\sqrt{n}} {\frac{n}{3} + k \choose \frac{n}{6}+\frac{k}{2} + s} \geq \Omega\left(2^{\frac{n}{3}+k}\right)
\]
and so the second term of \cref{eq:count_X'} is
\begin{align} \label{eq:term2}
    &\sum_{1.2\sqrt{n} \leq k \leq 1.3\sqrt{n}} {\frac{n}{2} \choose \frac{n}{3} + k} \sum_{0.5\sqrt{n} < s < 0.6\sqrt{n}} {\frac{n}{3} + k \choose \frac{n}{6}+\frac{k}{2} + s} &\geq \sum_{1.2\sqrt{n} \leq k \leq 1.3\sqrt{n}} {\frac{n}{2} \choose \frac{n}{3} + k} \cdot \Omega\left(2^{\frac{n}{3}+k}\right) 
\end{align}
which is at least $\Omega(3^{n/2})$ by part (b) of \cref{fact:Stirling} since $k = \Theta(\sqrt{n})$. 

To summarize, the LHS in the claim statement is equal to the quantity in \cref{eq:count_X'}, which is a product of two terms, each of which is at least $\Omega(3^{n/2})$ (\Cref{eq:term1} and \cref{eq:term2}). Therefore, $\sum_{r = 1.2\sqrt{n}}^{1.3\sqrt{n}} |X_{n/6-r}| = \Omega(3^n)$ as claimed. \end{proof}

Plugging the bound from \cref{clm:X_bound} into \cref{eq:L_bound} completes the proof of \cref{lem:farfromconvex2}. \end{proof}

\subsection{Sample-Based Lower Bound}
\label{sec:LB_samples_1sided}

There is an upper bound of $O(n3^n)$ samples required for exactly learning any set $\bS \subseteq \zpm^n$, due to the coupon-collector argument, and therefore there is an upper bound of $3^{O(n)}$ on one-sided error testing of convex sets with samples. For large enough $\eps > 0$, there is a slightly improved bound of $O(3^n \cdot \tfrac{1}{\eps}\log(1/\eps))$ for one-sided sample-based testers for any property of sets on $\{0,\pm 1\}^n$ (even in the distribution-free setting where the distribution over $\zpm^n$ is arbitrary and unknown to the algorithm), due to the general upper bound of $O(\textsc{VC}(\mathcal H) \cdot \tfrac{1}{\eps}\log(1/\eps))$ on one-sided sample-based testing, where $\textsc{VC}(\mathcal H)$ is the VC dimension of the property $\mathcal H$ \cite{BFH21}. We show that the exponent $O(n)$ is optimal for one-sided sample-based testers.

\thmsamplesonesided*

\begin{proof}
It suffices to prove the lower bound, due to the discussion above.
Suppose that $T$ is a one-sided sample-based tester and let $\bQ \subseteq \zpm^n$ denote a random set of $s$ samples made by $T$. If $T$ is given a non-convex set $\bS \subseteq \zpm^n$, then it must reject $\bS$ with probability at least $2/3$. Moreover, by \cref{cor:witness}, for $T$ to reject $\bS$ it must be that $\bQ$ contains a minimal violating pair $(\bX,\by)$ for $\bS$ and, by \cref{fact:simplex1}, $\bX \subseteq \upset(\by)$. Thus, in particular, there must exist two points $\bx,\by \in \bQ$ such that $\bx \in \upset(\by)$. Thus, by the union bound over all pairs in $\bQ$, we have
\begin{align} \label{eq:sample_up_bound1}
    2/3 \leq \Pru{\bQ}{T \text{ rejects } \bS} \leq \Pru{\bQ}{\exists \bx,\by \in \bQ \colon \bx \in \upset(\by)} \leq s^2 \cdot \Pru{\bx,\by \in \zpm^n}{\bx \in \upset(\by)}
\end{align}
To compute this probability, notice that 
\[
\bx \in \upset(\by) \text{ if and only if } \forall i \in [n] \colon (\by_i = 0) \vee (\bx_i = \by_i = 1) \vee (\bx_i = \by_i = -1)
\]
and so 
\begin{align} \label{eq:sample_up_bound2}
\Pru{\bx,\by \sim \{0,\pm 1\}^n}{\bx \in \upset(\by)} = (5/9)^n\text{.}
\end{align}
Thus, combining \cref{eq:sample_up_bound1} and \cref{eq:sample_up_bound2}, we have $s \geq \sqrt{\frac{2}{3}(\frac{9}{5})^{n}} = 3^{\Omega(n)}$. \end{proof}

\bibliographystyle{alpha}
\bibliography{biblio}

\appendix

\section{Approximating Binomial Coefficients Near the Middle} \label{sec:approx-bin-coeff}

In this section we prove the following approximation of the binomial coefficient ${n \choose \frac{n-\tau}{2}}$.

\begin{theorem} \label{thm:binomial_approx1} For all $0 \leq \tau \leq n(1-\Omega(1))$, we have 
\begin{align*}
{n \choose \frac{n-\tau}{2}} = \frac{2^n \cdot \sqrt{\frac{2n}{\pi(n-\tau)(n+\tau)}}}{\exp\left(\tau \cdot \sum_{k=1}^{\infty} \left(\frac{\tau}{n}\right)^{2k-1}\left(\frac{1}{2k-1}-\frac{1}{2k}\right) + \Theta\left(\frac{1}{n}\right)\right)} \text{.}
\end{align*}
\end{theorem}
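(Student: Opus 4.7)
The plan is a direct, careful application of Stirling's formula combined with a Taylor series computation. Write $a = \tfrac{n-\tau}{2}$ and $b = \tfrac{n+\tau}{2}$ so that $a+b = n$ and $\binom{n}{(n-\tau)/2} = n!/(a!\,b!)$. Since $\tau \leq n(1-\Omega(1))$, both $a$ and $b$ are $\Theta(n)$, so applying Stirling in the refined form $m! = \sqrt{2\pi m}\,(m/e)^m \exp(\Theta(1/m))$ to each of $n!, a!, b!$ and combining the three $\Theta(1/m)$ error terms yields a single combined error of $\Theta(1/n)$ in the exponent. This gives
\[
\binom{n}{a} = \sqrt{\frac{n}{2\pi a b}} \cdot \frac{n^n}{a^a b^b} \cdot \exp\!\left(\Theta(1/n)\right).
\]
The polynomial prefactor matches the claimed $\sqrt{\tfrac{2n}{\pi(n-\tau)(n+\tau)}}$ immediately because $4ab = (n-\tau)(n+\tau)$, so the only remaining task is to identify $\frac{n^n}{a^a b^b}$ with $2^n$ divided by the exponential in the statement.

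For this step, set $x = \tau/n \in [0, 1-\Omega(1))$, so $a = \tfrac{n}{2}(1-x)$ and $b = \tfrac{n}{2}(1+x)$. Taking logarithms and collecting the $\log n$ and $\log 2$ terms gives
\[
\log\!\left(\frac{2^n a^a b^b}{n^n}\right) = \tfrac{n}{2}\bigl[(1-x)\log(1-x) + (1+x)\log(1+x)\bigr].
\]
Now I expand using $\log(1\pm x) = \sum_{k\geq 1} (\mp 1)^{k+1} x^k/k$, which is valid since $|x| < 1$. After multiplying by $(1 \mp x)$ and summing, the odd-power terms cancel and one is left with
\[
(1-x)\log(1-x) + (1+x)\log(1+x) = 2\sum_{k\geq 1}\frac{x^{2k}}{(2k-1)(2k)}.
\]
Using the partial-fraction identity $\frac{1}{(2k-1)(2k)} = \frac{1}{2k-1} - \frac{1}{2k}$, multiplying by $n/2$, and factoring out $nx = \tau$ yields exactly
\[
\tau \cdot \sum_{k\geq 1}\!\left(\frac{\tau}{n}\right)^{\!2k-1}\!\!\left(\frac{1}{2k-1} - \frac{1}{2k}\right),
\]
which is the exponent in the denominator of the claim. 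Combined with the $\Theta(1/n)$ error from Stirling, this proves the theorem.

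The main obstacle is bookkeeping rather than conceptual: making sure that the series manipulation is valid throughout the allowed range of $\tau$ (the condition $\tau \leq n(1-\Omega(1))$ guarantees convergence and also keeps $a,b = \Theta(n)$ so that Stirling's remainder stays $\Theta(1/n)$), and that the odd-power cancellation in $(1-x)\log(1-x) + (1+x)\log(1+x)$ is done carefully so that the surviving even terms reassemble into the advertised series over $k \geq 1$. No new ideas beyond Stirling and a Taylor expansion are needed.
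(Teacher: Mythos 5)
Your proof is correct and follows essentially the same route as the paper's: apply Stirling's approximation with $\Theta(1/n)$ error to each factorial, then expand the remaining $\bigl(1\pm\tau/n\bigr)$ powers via the Taylor series of $\log(1\pm x)$ and reorganize the even-indexed terms into the advertised series via the partial fraction $\tfrac{1}{(2k-1)(2k)}=\tfrac{1}{2k-1}-\tfrac{1}{2k}$. The only cosmetic difference is that you expand $(1-x)\log(1-x)+(1+x)\log(1+x)$ as a single combined series from the outset, whereas the paper expands the two factors $(1\mp\tau/n)^{(n\pm\tau)/2}$ separately (via its Fact on $(1\pm 1/N)^N$) and multiplies afterward; the algebraic content is identical.
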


\Cref{thm:binomial_approx1} implies the following corollary.

\begin{corollary} \label{cor:middle_binomial_bound} For every constant integer $s \ge 1$, when $\tau = O(n^{1-\frac{1}{2s}})$ then 
\[
{n \choose \frac{n-\tau}{2}} = \Theta\left(\frac{2^n}{\sqrt{n}} \cdot \exp\left(-\sum_{k=1}^{s-1}\frac{\tau^{2k}}{n^{2k-1}}\Big(\frac{1}{2k-1}-\frac{1}{2k}\Big)\right)\right) \text{.}
\]
\end{corollary}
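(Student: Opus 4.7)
The plan is to apply \cref{thm:binomial_approx1} directly and show that, under the hypothesis $\tau = O(n^{1-1/(2s)})$, (i) the square-root prefactor simplifies to $\Theta(1/\sqrt n)$, (ii) the first $s-1$ terms of the infinite series in the exponent recover the expression in the target, and (iii) the tail $\sum_{k \geq s}$ contributes only an additive $O(1)$ to the exponent, which is absorbed into the $\Theta(\cdot)$ of the conclusion. I expect step (iii) to be the main step; once it is done, the rest is arithmetic.

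First I would handle the prefactor. Since $\tau = O(n^{1-1/(2s)}) = o(n)$, we have $(n-\tau)(n+\tau) = n^2(1-(\tau/n)^2) = \Theta(n^2)$, and therefore $\sqrt{\tfrac{2n}{\pi(n-\tau)(n+\tau)}} = \Theta(1/\sqrt n)$. The $\Theta(1/n)$ error term inside the exponential contributes only a multiplicative $\Theta(1)$ factor, so it too can be absorbed into the final $\Theta(\cdot)$.

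Next I would rewrite the exponent using $\tau \cdot (\tau/n)^{2k-1} = \tau^{2k}/n^{2k-1}$ to split
\[
\tau \sum_{k=1}^{\infty}\!\left(\tfrac{\tau}{n}\right)^{2k-1}\!\!\left(\tfrac{1}{2k-1}-\tfrac{1}{2k}\right)
= \sum_{k=1}^{s-1} \tfrac{\tau^{2k}}{n^{2k-1}}\!\left(\tfrac{1}{2k-1}-\tfrac{1}{2k}\right) + R_s,
\]
where $R_s := \sum_{k=s}^{\infty} \tfrac{\tau^{2k}}{n^{2k-1}}\!\left(\tfrac{1}{2k-1}-\tfrac{1}{2k}\right)$. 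The first sum matches the expression inside the exponential of the corollary exactly, so the whole task reduces to showing $R_s = O(1)$.

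For the tail bound, write $\tau \leq c \cdot n^{1-1/(2s)}$ for some constant $c$ coming from the hypothesis. Then $\tau/n \leq c \cdot n^{-1/(2s)}$, and a direct computation yields
\[
\tfrac{\tau^{2k}}{n^{2k-1}} \;=\; \tau \cdot (\tau/n)^{2k-1} \;\leq\; c^{2k} \cdot n^{(s-k)/s},
\]
which is $O(1)$ at $k=s$ and, for $k > s$, gains a factor $(c/n^{1/(2s)})^{2(k-s)}$ relative to the $k=s$ term. For $n$ larger than some constant $n_0$ depending on $c$, this factor is at most $(1/2)^{2(k-s)}$, so the tail is dominated by a geometric series and $R_s = O(1)$. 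For $n \leq n_0$ there are only finitely many cases and ${n \choose (n-\tau)/2}$ is bounded between two positive constants, so the $\Theta$-conclusion holds trivially. Combining (i)--(iii) and exponentiating finishes the proof. The only subtlety is that the $k=s$ term is genuinely of order $1$ (not negligible), which is precisely why the sum in the corollary stops at $k = s-1$ and the missing contribution is harmlessly absorbed into the constant of $\Theta(\cdot)$.
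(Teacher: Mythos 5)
Your proposal is correct and takes essentially the same route as the paper: apply Theorem~\ref{thm:binomial_approx1}, note the prefactor is $\Theta(1/\sqrt n)$, and show the tail $\sum_{k\ge s}$ of the exponent is $O(1)$ by a geometric-series comparison. You are slightly more careful than the paper about carrying the constant $c$ in $\tau \le c\,n^{1-1/(2s)}$ (the paper drops it when writing $\tau^{2k}/n^{2k-1}\le n^{1-k/s}$), but this is cosmetic and doesn't change the argument.
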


\begin{proof} Since $\tau = o(n)$, the square-root term in the numerator of \cref{thm:binomial_approx1} becomes $\Theta(1/\sqrt{n})$. Since $\tau = O(n^{1-1/2s})$ we have 
\begin{align*}
\sum_{k=s}^{\infty} \frac{\tau^{2k}}{n^{2k-1}}\left(\frac{1}{2k-1}-\frac{1}{2k}\right) &\leq \sum_{k=s}^{\infty} n^{1-k/s}\left(\frac{1}{2k-1}-\frac{1}{2k}\right) \\
&= \sum_{k=0}^{\infty} \left(\frac{1}{n^{1/s}}\right)^k\left(\frac{1}{2(k+s)-1}-\frac{1}{2(k+s)}\right) = O(1) \text{.}
\end{align*}
I.e., the infinite summation converges when one ignores the first $s-1$ terms. \end{proof}

\subsection{Proof of \texorpdfstring{\cref{thm:binomial_approx1}}{Theorem~\ref{thm:binomial_approx1}}} \label{sec:binomial_approx1_proof}

We use the following standard identities and approximations in the proof.

\begin{fact} \label{fact:e-approx} If $N > 1$, then
\begin{align*}
    \left(1+\frac{1}{N}\right)^N = \exp\left(1 - \sum_{k=1}^{\infty}\frac{(-1)^{k+1}}{N^k(k+1)}\right) \text{ and } \left(1-\frac{1}{N}\right)^N = \exp\left(-1 - \sum_{k=1}^{\infty}\frac{1}{N^k(k+1)}\right) \text{.}
\end{align*}
\end{fact}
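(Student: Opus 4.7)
The plan is to prove both identities by taking logarithms and invoking the Taylor series of $\ln(1 \pm x)$, which converges absolutely for $|x| < 1$; the hypothesis $N > 1$ is precisely what guarantees $|1/N| < 1$, so the series manipulations are valid.

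For the first identity, I would start with the standard expansion
\[
  \ln(1+x) \;=\; \sum_{k=1}^{\infty} \frac{(-1)^{k+1}}{k}\, x^k
\]
applied at $x = 1/N$, multiply through by $N$, and split off the $k=1$ term (which contributes exactly $1$):
\[
  N\ln\!\left(1+\tfrac{1}{N}\right)
  = 1 + \sum_{k=2}^{\infty} \frac{(-1)^{k+1}}{k\, N^{k-1}}.
\]
Then I would reindex with $j = k-1$ and observe $(-1)^{j+2} = (-1)^j = -(-1)^{j+1}$, giving
\[
  N\ln\!\left(1+\tfrac{1}{N}\right)
  = 1 - \sum_{j=1}^{\infty} \frac{(-1)^{j+1}}{(j+1)\, N^{j}}.
\]
Exponentiating yields the first identity.

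For the second identity, the computation is even simpler since all terms in $\ln(1-x) = -\sum_{k \ge 1} x^k/k$ have the same sign. Plugging $x = 1/N$, multiplying by $N$, and again peeling off the $k=1$ term gives $-1 - \sum_{j \ge 1} \frac{1}{(j+1) N^j}$, and exponentiating yields the claim.

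There is no real obstacle here: the only thing to verify is that the series converges (which needs $N > 1$) and that the reindexing is carried out correctly so that the alternating signs and the $\frac{1}{k+1}$ denominators line up as stated. No macros beyond those already defined in the paper are required.
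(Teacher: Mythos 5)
Your proof is correct and follows essentially the same route as the paper: expand $\ln(1\pm x)$ as a Taylor series at $x=1/N$ (valid since $N>1$ gives $|1/N|<1$), multiply by $N$, peel off the $k=1$ term, reindex, and exponentiate. The sign bookkeeping $(-1)^{j+2}=-(-1)^{j+1}$ is handled correctly, matching the paper's computation.
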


\begin{proof} Consider the Taylor series expansions $\ln (1+x) = \sum_{k=1}^{\infty} \frac{(-1)^{k-1}x^k}{k}$ and $\ln (1-x) = -\sum_{k=1}^{\infty} \frac{x^k}{k}$ at $x \in (0,1)$. Since $N > 1$, we have $1/N \in (0,1)$. Thus,
\begin{align*}
    \left(1+\frac{1}{N}\right)^N = \exp\left(N \ln \left(1+\frac{1}{N}\right)\right) = \exp\left(N \sum_{k=1}^{\infty}\frac{(-1)^{k-1}}{N^k k}\right) = \exp\left(1 - \sum_{k=1}^{\infty}\frac{(-1)^{k+1}}{N^k (k+1)}\right)
\end{align*}
and 
\begin{align*}
    \left(1-\frac{1}{N}\right)^N = \exp\left(N \ln \left(1-\frac{1}{N}\right)\right) = \exp\left(-N \sum_{k=1}^{\infty}\frac{1}{N^k k}\right) = \exp\left(-1 - \sum_{k=1}^{\infty}\frac{1}{N^k (k+1)}\right)\text{.}
\end{align*}
\end{proof}

\begin{fact}[Stirling's Approximation] 
\label{fact:binomial_estimate}
For all $n \geq 1$
\[
\sqrt{2\pi n} \cdot \left(\frac{n}{e}\right)^n e^{\frac{1}{12n+1}} \leq n! \leq \sqrt{2\pi n} \cdot \left(\frac{n}{e}\right)^n e^{\frac{1}{12n}} \text{.}
\]  
Therefore, for all $n 
\geq 1$ and $1 \leq k \leq n-1$, 
\[
{n \choose k} = \sqrt{\frac{n}{2\pi k (n-k)}}\left(\frac{n}{k}\right)^k\left(\frac{n}{n-k}\right)^{n-k} \cdot \exp\left(\frac{1}{12n + \Theta(1)} - \frac{1}{12k + \Theta(1)} - \frac{1}{12(n-k) + \Theta(1)}\right)\text{.}
\]
\end{fact}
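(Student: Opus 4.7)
The proof splits naturally into two parts: the refined Stirling bounds on $n!$ (which is Robbins' theorem), and the derivation of the binomial expression as a direct algebraic consequence.

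For the Stirling bounds, the plan is to analyze the sequence $d_n := \ln(n!) - (n+\tfrac{1}{2})\ln n + n$ and show it converges to $\tfrac{1}{2}\ln(2\pi)$ from above while $d_n - \tfrac{1}{12n+1}$ converges from below. First I would compute
\[
d_n - d_{n+1} = \left(n+\tfrac{1}{2}\right)\ln\!\left(1 + \tfrac{1}{n}\right) - 1,
\]
and using the Taylor series $\ln(1+x) = \sum_{j\geq 1}(-1)^{j-1}x^j/j$, expand this as a positive series in powers of $1/n$. A term-by-term comparison then shows simultaneously that (i) $d_n - d_{n+1} > 0$, so $d_n$ is strictly decreasing, and (ii) $d_n - d_{n+1} < \tfrac{1}{12n} - \tfrac{1}{12(n+1)}$, so $d_n - \tfrac{1}{12n}$ is strictly increasing; a slightly sharper inequality (with $\tfrac{1}{12n+1}$ replacing $\tfrac{1}{12n}$ on the lower side) gives the Robbins lower bound. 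The limit $\lim_n d_n = \tfrac{1}{2}\ln(2\pi)$ follows from Wallis' product formula. Exponentiating these bounds yields the first displayed inequality.

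The second part is a routine substitution. Writing $\binom{n}{k} = n!/(k!(n-k)!)$ and using the Stirling bounds just proved, for every integer $m \geq 1$ there exists $c_m \in [0,1]$ such that
\[
m! = \sqrt{2\pi m}\left(\frac{m}{e}\right)^{m} \exp\!\left(\frac{1}{12m + c_m}\right),
\]
where $c_m = 0$ saturates the upper bound and $c_m = 1$ saturates the lower bound. Substituting this for each of $n!$, $k!$, and $(n-k)!$ gives
\[
\binom{n}{k} = \sqrt{\frac{2\pi n}{(2\pi k)(2\pi(n-k))}} \cdot \frac{(n/e)^n}{(k/e)^k\,((n-k)/e)^{n-k}} \cdot \exp\!\left(\frac{1}{12n + c_n} - \frac{1}{12k + c_k} - \frac{1}{12(n-k) + c_{n-k}}\right).
\]
The square-root factor simplifies to $\sqrt{n/(2\pi k(n-k))}$; in the second factor the $e$ terms cancel because $n - k - (n-k) = 0$, leaving $n^n/(k^k(n-k)^{n-k}) = (n/k)^k \cdot (n/(n-k))^{n-k}$. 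Finally, each constant $c_m \in [0,1]$ is written as $\Theta(1)$, producing the displayed expression.

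The main (very mild) obstacle is the series manipulation in the first part: verifying that the tail of the explicit positive series for $d_n - d_{n+1}$ is tight enough to simultaneously yield the upper bound $\tfrac{1}{12n}$ and the slightly sharper lower bound $\tfrac{1}{12n+1}$. This is classical but requires some care with term-by-term estimates; since the resulting bounds on $n!$ are exactly Robbins' theorem, one could alternatively cite it directly and concentrate the write-up on the second (algebraic) part, which is the only new content needed for the applications in Section~\ref{sec:proof_of_halfspace_density_increment_bound}.
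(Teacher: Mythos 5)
The paper does not prove this fact at all---it is labeled a ``Fact'' and stated without argument, since the first displayed inequality is exactly Robbins' sharpening of Stirling's formula (a standard classical result), and the second display is an immediate algebraic consequence. Your proposal to just cite Robbins and write out only the binomial derivation is therefore the closest match to what the authors actually do; the binomial part of your argument (introducing $c_m \in [0,1]$ with $m! = \sqrt{2\pi m}(m/e)^m e^{1/(12m+c_m)}$, substituting three times, and simplifying) is correct and is precisely the content compressed into the word ``Therefore''.

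One small imprecision in your sketch of the Robbins part: expanding $(n+\tfrac12)\ln(1+\tfrac1n)-1$ using the Maclaurin series for $\ln(1+x)$ gives an \emph{alternating} series in $1/n$ (the $1/n$ term cancels, the $1/n^2$ coefficient is $+\tfrac{1}{12}$, the $1/n^3$ coefficient is $-\tfrac{1}{12}$, and so on), not a positive series. To get the positive series your argument needs---namely $d_n - d_{n+1} = \sum_{j\ge 1}\frac{1}{(2j+1)(2n+1)^{2j}}$, which you then bound above by the geometric tail $\frac{1}{12n}-\frac{1}{12(n+1)}$ and below by its first term to get the $\frac{1}{12n+1}$ refinement---you should instead write $\ln\tfrac{n+1}{n} = 2\operatorname{artanh}\tfrac{1}{2n+1} = 2\sum_{j\ge 0}\frac{1}{(2j+1)(2n+1)^{2j+1}}$. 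With that substitution the classical argument goes through exactly as you outline, and the rest of the proof is fine.
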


We are now ready to complete the proof of the approximation for binomial coefficients.

\begin{proof}[Proof of \cref{thm:binomial_approx1}]
Letting $k = \frac{n-\tau}{2}$ and substituting this value in \cref{fact:binomial_estimate} yields
\begin{align} \label{eq:stirling-approx}
    {n \choose \frac{n-\tau}{2}} &=  \sqrt{\frac{2n}{\pi(n-\tau)(n+\tau)}} \cdot \left(\frac{n}{\frac{n-\tau}{2}}\right)^{\frac{n-\tau}{2}}\left(\frac{n}{\frac{n+\tau}{2}}\right)^{\frac{n+\tau}{2}} \cdot \exp(\Theta(1/n)) \nonumber \\
    &= 2^n \sqrt{\frac{2n}{\pi(n-\tau)(n+\tau)}} \cdot \left(\frac{1}{1-\frac{\tau}{n}}\right)^{\frac{n-\tau}{2}} \left(\frac{1}{1+\frac{\tau}{n}}\right)^{\frac{n+\tau}{2}} \cdot \exp(\Theta(1/n)) \text{.}
\end{align}
We now apply \cref{fact:e-approx} and get the following bounds:
\begin{align} \label{eq:-term}
    \left(1-\frac{\tau}{n}\right)^{\frac{1}{2}(n-\tau)} &= \left(1-\frac{\tau}{n}\right)^{\frac{n}{\tau} \cdot \frac{\tau}{2n}(n-\tau)} \nonumber \\
    &= \exp\left(\left(-1-\sum_{k=1}^{\infty} \left(\frac{\tau}{n}\right)^k \frac{1}{k+1}\right) \cdot \left(\frac{\tau}{2} - \frac{\tau^2}{2n}\right)\right) \nonumber \\
    &= \exp\left(-\frac{\tau}{2}+\frac{\tau^2}{2n}\right)\exp\left(\left(\frac{\tau^2}{2n}-\frac{\tau}{2}\right)\sum_{k=1}^{\infty} \left(\frac{\tau}{n}\right)^k \frac{1}{k+1}\right)
\end{align}
and
\begin{align} \label{eq:+term}
    \left(1+\frac{\tau}{n}\right)^{\frac{1}{2}(n+\tau)} &= \left(1+\frac{\tau}{n}\right)^{\frac{n}{\tau} \cdot \frac{\tau}{2n}(n+\tau)} \nonumber \\
    &= \exp\left(\left(1-\sum_{k=1}^{\infty} \left(\frac{\tau}{n}\right)^k \frac{(-1)^{k+1}}{k+1}\right) \cdot \left(\frac{\tau}{2} + \frac{\tau^2}{2n}\right)\right) \nonumber \\
    &= \exp\left(\frac{\tau}{2}+\frac{\tau^2}{2n}\right)\exp\left(\left(-\frac{\tau^2}{2n}-\frac{\tau}{2}\right)\sum_{k=1}^{\infty} \left(\frac{\tau}{n}\right)^k \frac{(-1)^{k+1}}{k+1}\right) \text{.}
\end{align}
Taking the product of \cref{eq:-term} and \cref{eq:+term} gives 
\begin{align}
    &\left(1-\frac{\tau}{n}\right)^{\frac{1}{2}(n-\tau)}\left(1+\frac{\tau}{n}\right)^{\frac{1}{2}(n+\tau)} \nonumber \\
    &= \exp\left(\frac{\tau^2}{n}\right)\exp\left(\frac{\tau^2}{2n}\sum_{k = 1}^{\infty} \left(\frac{\tau}{n}\right)^k \frac{1 - (-1)^{k+1}}{k+1}\right) \exp\left(-\frac{\tau}{2}\sum_{k = 1}^{\infty} \left(\frac{\tau}{n}\right)^k \frac{1+(-1)^{k+1}}{k+1}\right) \text{.} \nonumber
\end{align}
In the first sum, the $k$th term cancels when $k$ is odd and doubles when $k$ is even. In the second sum, the $k$th term cancels when $k$ is even and doubles when $k$ is odd. Thus,
\begin{align}
    \left(1-\frac{\tau}{n}\right)^{\frac{1}{2}(n-\tau)}\left(1+\frac{\tau}{n}\right)^{\frac{1}{2}(n+\tau)} &= \exp\left(\frac{\tau^2}{n} + \left(\frac{\tau^2}{n}\sum_{k = 2 \text{, even}}^{\infty} \left(\frac{\tau}{n}\right)^k \frac{1}{k+1}\right) - \left(\tau\sum_{k = 1\text{, odd}}^{\infty} \left(\frac{\tau}{n}\right)^k \frac{1}{k+1}\right)\right) \nonumber \\ 
    &= \exp\left(\frac{\tau^2}{n} + \left(\tau\sum_{k = 2 \text{, even}}^{\infty} \left(\frac{\tau}{n}\right)^{k+1} \frac{1}{k+1}\right) - \left(\tau \sum_{k = 2\text{, even}}^{\infty} \left(\frac{\tau}{n}\right)^{k-1} \frac{1}{k}\right)\right) \nonumber \\ 
    &= \exp\left(\frac{\tau^2}{n} + \left(\tau\sum_{k = 1}^{\infty} \left(\frac{\tau}{n}\right)^{2k+1} \frac{1}{2k+1}\right) - \left(\tau \sum_{k = 1}^{\infty} \left(\frac{\tau}{n}\right)^{2k-1} \frac{1}{2k}\right)  \right) \nonumber \\
    &= \exp\left(\left(\tau\sum_{k = 1}^{\infty} \left(\frac{\tau}{n}\right)^{2k-1} \frac{1}{2k-1}\right) - \left(\tau \sum_{k = 1}^{\infty} \left(\frac{\tau}{n}\right)^{2k-1} \frac{1}{2k}\right)  \right) \nonumber \\
    &= \exp\left(\tau \sum_{k=1}^{\infty} \left(\frac{\tau}{n}\right)^{2k-1}\left(\frac{1}{2k-1}-\frac{1}{2k}\right)\right)
\end{align}
and plugging this quantity back into \cref{eq:stirling-approx} completes the proof of \cref{thm:binomial_approx1}. \end{proof}

\section{Testing by Learning} \label{sec:learning->testing}

We require a testing-by-learning reduction that is slightly nonstandard, because the learning algorithm
is not proper, i.e. for a hypothesis class $\cF$ it does not necessarily output a function in $\cF$.
Given a domain $\cX$ and two functions $f,g \colon \cX \to \{\pm 1\}$, let $\dist(f,g) = \pr_{x \sim \cX}[h(x) \neq f(x)]$. The following lemma is specialized to the uniform distribution, which is sufficient for our purposes. The observation is not new; it appears in \cite{CFSS17} and possibly in other places.

\begin{lemma} \label{lem:learning->testing} Let $\cX$ be any finite domain and let $\cF \colon \cX \to \{\pm 1\}$ be a class of functions. Suppose that for every $\epsilon \in (0,1)$ there exists a learning algorithm for $\cF$, under the uniform distribution on $\cX$, using $q(\epsilon)$ queries and $m(\epsilon)$ samples. Then for every $\epsilon \in (0,1)$ there is an $\epsilon$-tester for $\cF$, under the uniform distribution, using $q(\epsilon/4)$ queries and $m(\epsilon/4) + O(1/\epsilon)$ samples.
\end{lemma}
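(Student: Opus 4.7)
My plan is to adapt the classical testing-by-learning reduction of \cite{GGR98} to the improper setting by interposing a ``projection onto $\cF$'' step between the learner and the distance-check. Since the lemma only tracks sample and query complexity, this projection is purely computational: it performs no queries to the unknown $f$ and uses no fresh random samples, so it does not count against the stated budget.

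The tester will first amplify the given learner $A$ to success probability $\geq 5/6$ by a constant number of independent repetitions (absorbed into the constants hidden in $q$ and $m$), run the amplified $A$ with accuracy parameter $\epsilon/4$ on $f$ to obtain a hypothesis $h \colon \cX \to \{\pm 1\}$, and then compute $h' \in \arg\min_{g \in \cF} \dist(h, g)$, breaking ties arbitrarily. It then draws $N$ fresh uniform samples $x_1, \ldots, x_N$, queries $f(x_i)$ for each, and forms $\hat d = \frac{1}{N}\sum_{i=1}^N \mathbf 1[h'(x_i) \neq f(x_i)]$. The output is \emph{accept} iff $\hat d \leq 3\epsilon/4$. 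If $f \in \cF$, then $\dist(f, h) \leq \epsilon/4$ with probability $\geq 5/6$ and, since $f$ itself is a valid candidate for the projection, $\dist(h, h') \leq \dist(h, f) \leq \epsilon/4$, so by the triangle inequality $\dist(f, h') \leq \epsilon/2$. If $f$ is $\epsilon$-far from $\cF$, then $h' \in \cF$ deterministically forces $\dist(f, h') > \epsilon$. Thus the remaining task is to distinguish $\dist(f, h') \leq \epsilon/2$ from $\dist(f, h') > \epsilon$ using $\hat d$.

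The main quantitative subtlety --- and the reason the sample overhead is $O(1/\epsilon)$ and not the naive $O(1/\epsilon^2)$ --- is that the estimate must be analyzed via a multiplicative Chernoff bound rather than additive Hoeffding: the Bernoulli variables $\mathbf 1[h'(x_i) \neq f(x_i)]$ have mean $\Omega(\epsilon)$ (or exactly zero, in which case $\hat d \equiv 0$), and the threshold $3\epsilon/4$ is a multiplicative constant factor away from both $\epsilon/2$ and $\epsilon$. A standard application of multiplicative Chernoff shows that $N = O(1/\epsilon)$ suffices to push both the false-accept and false-reject probabilities below $1/6$, which combined with the learner's $5/6$ success probability yields overall two-sided error $\leq 1/3$. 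The only conceptual subtlety is the improperness of $A$, which is handled once and for all by the projection onto $\cF$; after projection the analysis is identical to the standard proper-learner reduction.
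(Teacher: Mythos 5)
Your proposal is correct and follows essentially the same route as the paper: both reduce to the proper-learner case by projecting the learner's hypothesis onto $\cF$ (possible since $\cX$ is finite and this costs no queries or samples) and then applying the triangle inequality to get a proper learner with error $\epsilon/2$. Your write-up is somewhat more explicit than the paper's sketch --- in particular, you spell out the distance-estimation step via multiplicative Chernoff to justify the $O(1/\epsilon)$ overhead, which the paper leaves implicit.
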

\begin{proof}[Proof sketch]
Since $\cX$ is finite, we may transform any learning algorithm with error $\epsilon/4$ into a proper learning algorithm with error $\epsilon/2$ as follows. On input $f \in \cF$, let $h : \cX \to \{\pm 1\}$ be the output from the learning algorithm, which (with probability at least $2/3$) satisfies $\dist(f,h) \leq \epsilon/4$. Then since $\cX$ is finite, we can find $g \in \cF$ minimizing $\dist(g,h)$ without using queries or samples. This satisfies $\dist(g,h) \leq \dist(f,h)$ by definition, so we have $\dist(g,f) \leq \dist(g,h) + \dist(f,h) \leq  2\dist(f,h) \leq \epsilon/2$, as desired.
\end{proof}

\end{document}